\newcounter{thmItem}
\newenvironment{thmList}{\begin{list}%
{\rm \roman{thmItem})}{\usecounter{thmItem}
\setlength{\labelwidth}{2em}
\setlength{\itemindent}{2em}
\setlength{\leftmargin}{0pt}
\setlength{\listparindent}{0pt}
\setlength{\parsep}{0pt}
\setlength{\partopsep}{0pt}
\setlength{\itemsep}{\medskipamount}
\setlength{\topsep}{\medskipamount}
}}{\end{list}}
\let\@secnumfont\bfseries
\def\section{\@startsection{section}{1}%
  \z@{4\linespacing\@plus\linespacing}{\linespacing}%
  {\bfseries\centering}}
\def\introsection{\@startsection{section}{1}%
  \z@{3\linespacing\@plus\linespacing}{\linespacing}%
  {\bfseries\centering}}
\def\subsection{\@startsection{subsection}{2}%
   \z@{1.25\linespacing\@plus.7\linespacing}{.5\linespacing}%
   {\normalfont\bfseries}}
\def\subsectionsinline{\def\subsection{\@startsection{subsection}{2}%
  \z@{1\linespacing\@plus.7\linespacing}{-.5em}%
  {\normalfont\bfseries}}}
\def\@makefnmark{%
  \leavevmode
  \raise.9ex\hbox{\fontsize\sf@size\z@\normalfont\tiny\@thefnmark}} 
\def\footref#1{$\textsuperscript{\tiny\ref{#1}}$}
\theoremstyle{definition}
\newtheorem{definition}[equation]{Definition}
\newtheorem{example}[equation]{Example}
\newtheorem{ansatz}[equation]{Ansatz}
\newtheorem*{definition*}{Definition}
\newtheorem*{example*}{Example}
\newtheorem*{problem*}{Problem}
\newtheorem*{exercise*}{Exercise}
\newtheorem*{question*}{\color{blue}Question}
\newtheorem*{construction*}{Construction}
\theoremstyle{remark}
\newtheorem{remark}[equation]{Remark}
\newtheorem*{note*}{Note}
\newtheorem*{notation*}{Notation}
\newtheorem*{remark*}{Remark}
\newtheorem*{data*}{Data}
\theoremstyle{plain}
\newtheorem{theorem}[equation]{Theorem}
\newtheorem{corollary}[equation]{Corollary}
\newtheorem{lemma}[equation]{Lemma}
\newtheorem{proposition}[equation]{Proposition}
\newtheorem{conjecture}[equation]{Conjecture}
\newtheorem*{theorem*}{Theorem}
\newtheorem*{corollary*}{Corollary}
\newtheorem*{lemma*}{Lemma}
\newtheorem*{proposition*}{Proposition}
\newtheorem*{conjecture*}{Conjecture}
\newtheorem*{claim*}{Claim}
\newtheorem*{proposal*}{Proposal}
\newtheorem*{conclusion*}{Conclusion}
\newtheorem*{hypothesis*}{Hypothesis}
\newtheorem*{assumption*}{Assumption}
\numberwithin{equation}{section}
\definecolor{refkey}{rgb}{0,.6,.4}
\newcommand{\bmuu}{\mbox{$\raisebox{-.07em}{\rotatebox{9.9}
  {\tiny {\bf /}
  }}\hspace{-0.53em}\mu\hspace{-0.88em}\raisebox{-0.98ex}{\scalebox{2} 
  {$\color{white}\phantom{.}$}}\hspace{-0.416em}\raisebox{+0.88ex}
  {$\color{white}\phantom{.}$}\hspace{0.46em}$}} 
\newcommand{\bmut}{\bmu 2}
\newcommand{\bmu}[1]{\bmuu _{#1}}
\renewcommand{\:}{\colon}
\renewcommand{\AA}{{\mathbb A}}
\DeclareMathOperator{\Aut}{Aut}
\newcommand{\CC}{{\mathbb C}}
\newcommand{\EE}{\mathbb E}
\DeclareMathOperator{\End}{End}
\newcommand{\HH}{{\mathbb H}}
\DeclareMathOperator{\Hom}{Hom}
\DeclareMathOperator{\id}{id}
\DeclareMathOperator{\Map}{Map}
\DeclareMathOperator{\pt}{pt}
\newcommand{\RP}{{\mathbb R\mathbb P}}
\newcommand{\RR}{{\mathbb R}}
\newcommand{\TT}{\mathbb T}
\DeclareMathOperator{\Spin}{Spin}
\DeclareMathOperator{\tr}{tr}
\newcommand{\ZZ}{{\mathbb Z}}
\newcommand{\chiup}{\raise.5ex\hbox{$\chi$}}
\newcommand{\cir}{S^1}
\DeclareMathOperator{\coker}{coker}
\newcommand{\dbar}{{\bar\partial}}
\newcommand{\inv}{^{-1}}
\DeclareRobustCommand{\mstrut}{^{\vphantom{1*\prime y\vee M}}}
\newcommand{\res}[1]{\negmedspace\bigm|\mstrut_{#1}}
\newcommand{\temsquare}{\raise3.5pt\hbox{\boxed{ }}}
\newcommand{\zmod}[1]{\ZZ/#1\ZZ}
\newcommand{\zt}{\zmod2}
\DeclareMathOperator{\SO}{SO}
\let\O\relax
\DeclareMathOperator{\U}{U}
\renewcommand{\cir}{\ensuremath{S^1}}
 \definecolor{refkey}{rgb}{0,.8,.2}\definecolor{labelkey}{rgb}{1,0,0} 
\theoremstyle{remark}
\newtheorem*{lnote*}{Literature Note}
\newcommand{\bhath}[2]{B\widehat{H}_{#1}^{(#2)}}
\DeclareMathOperator{\pr}{pr}
\DeclareMathOperator*{\colim}{colim}
\DeclareMathOperator{\Bord}{Bord}
\DeclareMathOperator{\Cat}{Cat}
\DeclareMathOperator{\Cliff}{Cliff}
\DeclareMathOperator{\Euc}{Euc}
\DeclareMathOperator{\Euler}{Euler}
\DeclareMathOperator{\Ext}{Ext}
\DeclareMathOperator{\Line}{Line}
\DeclareMathOperator{\Pfaff}{Pfaff}
\DeclareMathOperator{\Pin}{Pin}
\DeclareMathOperator{\Sign}{Sign}
\DeclareMathOperator{\Thom}{Thom}
\DeclareMathOperator{\Trace}{Trace}
\DeclareMathOperator{\Vect}{Vect}
\DeclareMathOperator{\cliff}{Cliff}
\DeclareMathOperator{\diag}{diag}
\DeclareMathOperator{\ext}{Ext}
\DeclareMathOperator{\grass}{Gr}
\DeclareMathOperator{\haut}{hAut}
\DeclareMathOperator{\ho}{ho}
\DeclareMathOperator{\image}{image}
\DeclareMathOperator{\inde}{index}
\DeclareMathOperator{\ko}{ko}
\DeclareMathOperator{\pin}{pin}
\DeclareMathOperator{\spinc}{Spin^{c}}
\DeclareMathOperator{\spin}{Spin}
\DeclareMathOperator{\sq}{Sq}
\DeclareMathOperator{\thom}{Thom}
\newcommand{\Clm}[1]{\Cliff_{-#1}}
\newcommand{\Clp}[1]{\Cliff_{+#1}}
\newcommand{\Cxu}{\underline{\CC}^\times}
\newcommand{\Cx}{\CC^\times }
\newcommand{\Dirac}{D\hskip-.65em /} 
\newcommand{\FFF}[2]{FF_{#1}(#2)}
\newcommand{\GG}[2]{H_{#1}(#2)}
\newcommand{\Gm}{G^-}
\newcommand{\Gp}{G^+}
\newcommand{\HCn}{H\CC^{\npn}}
\newcommand{\HRon}{\HRo^{\npn}}
\newcommand{\HRo}{H\RR(1)}
\newcommand{\HR}{H\RR}
\newcommand{\HRp}{H\Rp}
\newcommand{\HM}{\sH\mstrut _{1,n-1}}
\newcommand{\Hm}{\mathbb{E}^n_-}
\newcommand{\Hp}{\mathbb{E}^n_+}
\newcommand{\ICxn}{(\ICx)^{\npn}}
\newcommand{\ITn}{\IT^{\npn}}
\newcommand{\ICx}{I\Cx} 
\newcommand{\IMt}{\IM^{\uparrow}}
\newcommand{\IM}{\sI _{1,n-1}}
\newcommand{\IT}{I\TT}
\newcommand{\IZn}{\IZ^\gamma }
\newcommand{\IZ}{I\Zo}
\newcommand{\K}{\mathbf{P}}
\newcommand{\MH}{H\mstrut _{1,n-1}}
\newcommand{\MSpin}{M\!\Spin}
\newcommand{\OMmt}{O_{1,n-1}^{\downarrow}}
\newcommand{\OMt}{O_{1,n-1}^{\uparrow}}
\newcommand{\Pcm}{\Pin^{\tc-}}
\newcommand{\Pcpm}{\Pin^{\tc\pm}}
\newcommand{\Pcp}{\Pin^{\tc+}}
\newcommand{\Pc}{\Pin^c}
\newcommand{\Pmp}{\Pin^{\mp}}
\newcommand{\Pm}{\Pin^{-}}
\newcommand{\Ppm}{\Pin^{\pm}}
\newcommand{\Ppn}{\Pin^{+}_n}
\newcommand{\Pp}{\Pin^{+}}
\newcommand{\RC}{R_{\CC}}
\newcommand{\RM}{\RR^{1,n-1}}
\newcommand{\RMm}{\RR^{1,n-2}}
\newcommand{\Rp}{\RR^{>0}}
\newcommand{\R}{\RR}
\newcommand{\SCxg}{\Sigma ^n\ICxn}
\newcommand{\SCx}{\Sigma ^n\ICx}
\newcommand{\SC}{\SS_{\CC}}
\newcommand{\SIZn}{\Sigma ^{n+1}\IZn}
\newcommand{\SIZ}{\Sigma ^{n+1}\IZ}
\newcommand{\SOmt}{SO_{1,n-1}^{\downarrow}}
\newcommand{\Si}{\relax}
\newcommand{\Snizn}{\Sigma ^{n+1}\IZn}
\newcommand{\Sniz}{\Sigma ^{n+1}\IZ}
\newcommand{\TP}[2]{TP_{#1}(#2)}
\newcommand{\Up}{U^{\perp}}
\newcommand{\VectPos}{\fVC^{\text{pos}}}
\newcommand{\WC}{W_{\CC}}
\newcommand{\Zo}{\ZZ(1)}
\newcommand{\Z}{\mathbb{Z}}
\newcommand{\ab}{^{\alpha /\beta }}
\newcommand{\aone}{\mathbf{A}_{1}}
\newcommand{\bB}{\beta _{\sB}}
\newcommand{\bC}{\beta _{\sC}}
\newcommand{\bX}{\partial{X}}
\newcommand{\be}{\Bord_n(H_n)}
\newcommand{\bgspectra}[1]{\spectra^{h#1}}
\newcommand{\bne}{\Bord_{\langle n-1,n\rangle}(H_n)}
\newcommand{\bone}{\mathbf{1}}
\newcommand{\bz}{\overline{\zeta }}
\newcommand{\cat}[1]{\mathcal #1}
\newcommand{\dspectra}{\ho\uspectra}
\newcommand{\dual}{^\vee}
\newcommand{\fVC}{f\!\Vect_{\CC}}
\newcommand{\fieldSymbol}{\mathscr I}
\newcommand{\fieldsHermd}[2]{\fieldSymbol^\delta _{#2}(#1)_{\textnormal{Hermitian}}}
\newcommand{\fieldsHerm}[2]{\fieldSymbol_{#2}(#1)_{\textnormal{Hermitian}}}
\newcommand{\fieldsPosd}[2]{\fieldSymbol^\delta _{#2}(#1)_{\textnormal{positive}}}
\newcommand{\fieldsPos}[2]{\fieldSymbol_{#2}(#1)_{\textnormal{positive}}}
\newcommand{\fieldsReal}[2]{\fieldSymbol^{\mathbb R}_{#2}(#1)}
\newcommand{\fieldsReflPosd}[2]{\fieldSymbol^\delta_{#2}(#1)_{\substack{\textnormal{reflection}\\\textnormal{positive}}}} 
\newcommand{\fieldsReflPos}[2]{\fieldSymbol_{#2}(#1)_{\substack{\textnormal{reflection}\\\textnormal{positive}}}}
\newcommand{\fieldsRefld}[2]{\fieldSymbol^\delta _{#2}(#1)_{\textnormal{reflection}}}
\newcommand{\fieldsRefl}[2]{\fieldSymbol_{#2}(#1)_{\textnormal{reflection}}}
\newcommand{\fieldsStable}[2]{\fieldSymbol_{#2}(#1)_{\textnormal{stable}}}
\newcommand{\fields}[2]{\fieldSymbol_{#2}(#1)}
\newcommand{\gMap}[1]{\Map^{#1}}
\newcommand{\hH}{\widehat{H}}
\newcommand{\hPin}{\widehat{\Pin}}
\newcommand{\hSH}{\widehat{SH}}
\newcommand{\ha}{\hat{\alpha }}
\newcommand{\hin}{\hat\imath_n}
\newcommand{\hhat}{\hat{h}}
\newcommand{\hh}[2]{H_{#1}(#2)}
\newcommand{\hsH}{\widehat{\sH}}
\newcommand{\hs}{\hat\sigma }
\newcommand{\infn}{(\infty ,n)}
\newcommand{\mth}[1]{\Sigma ^{#1}MTH_{#1}}
\newcommand{\noo}{_{n-1,1}}
\newcommand{\npn}{\nu '_0}
\newcommand{\onoa}{\ono^{\alpha\phantom{\beta }}}
\newcommand{\ono}{_{1,n-1}}
\newcommand{\op}{^{\textnormal{op}}}
\newcommand{\pinm}{\pin^{-}}
\newcommand{\pinp}{\pin^{+}}
\newcommand{\pmo}{\{\pm1\}}
\newcommand{\ppm}{\pin^{\pm}}
\newcommand{\sA}{\mathscr{A}}
\newcommand{\sBO}{\mathscr{B}_O}
\newcommand{\sB}{\mathscr{B}}
\newcommand{\sCx}{\sC^\times }
\newcommand{\sC}{\mathscr{C}}
\newcommand{\sD}{\mathscr{D}}
\newcommand{\sG}{\mathscr{G}}
\newcommand{\sH}{\mathscr{H}}
\newcommand{\sI}{\mathcal{I}}
\newcommand{\sLC}{s\!\Line_{\CC}}
\newcommand{\sLine}{s\!\Line_{\CC}}
\newcommand{\sL}{\mathscr{L}}
\newcommand{\sO}{\mathcal{O}}
\newcommand{\sQQ}{\mathscr{Q}}
\newcommand{\sQ}{P}
\newcommand{\sS}{\mathscr{S}}
\newcommand{\sT}{\mathscr{T}}
\newcommand{\sVect}{s\!\Vect_{\CC}}
\newcommand{\sX}{\mathscr{X}}
\newcommand{\slot}{\,-\,}
\newcommand{\spaces}{\mathcal T}
\newcommand{\spectra}{\EuScript S}
\newcommand{\sqmo}{\sqrt{-1}}
\newcommand{\tF}{\widetilde{F}}
\newcommand{\tH}{\widetilde{H}}
\newcommand{\tI}{\widetilde{I}}
\newcommand{\tO}{\widetilde{\hbox to 11pt{O}}}
\newcommand{\tSH}{\widetilde{SH}}
\newcommand{\tSO}{\widetilde{\SO}}
\newcommand{\tc}{\tilde{c}}
\newcommand{\te}{f}
\newcommand{\tih}{\tilde{h}}
\newcommand{\tors}{\mstrut _{\textnormal{tor}}}
\newcommand{\tphi}{\tilde{\varphi }}
\newcommand{\triv}[1]{\underline{\RR^{#1}}}
\newcommand{\ts}{\tilde\sigma }
\newcommand{\uH}{\underline{H}}
\newcommand{\uPp}{\underline{\Pin}^+}
\newcommand{\uSpin}{\underline{\Spin}}
\newcommand{\uhH}{\widehat{\underline{H}}}
\newcommand{\uspaces}{\underline{\spaces}}
\newcommand{\uspectra}{\underline{\spectra}}
\newcommand{\ztmap}{\gMap{\Z/2}}
\renewcommand{\O}{\mathcal O}
\renewcommand{\SS}{\mathbb{S}}
\newcommand{\hrn}{\hat{\rho }_n}
\begin{document}

\abovedisplayskip18pt plus4.5pt minus9pt
\belowdisplayskip \abovedisplayskip
\abovedisplayshortskip0pt plus4.5pt
\belowdisplayshortskip10.5pt plus4.5pt minus6pt
\baselineskip=15 truept
\marginparwidth=55pt

\makeatletter
\renewcommand{\tocsection}[3]{%
  \indentlabel{\@ifempty{#2}{\hskip1.5em}{\ignorespaces#1 #2.\;\;}}#3}
\renewcommand{\tocsubsection}[3]{%
  \indentlabel{\@ifempty{#2}{\hskip 2.5em}{\hskip 2.5em\ignorespaces#1%
    #2.\;\;}}#3} 
\renewcommand{\tocsubsubsection}[3]{%
  \indentlabel{\@ifempty{#2}{\hskip 5.3em}{\hskip 5.3em\ignorespaces#1%
    #2.\;\;}}#3} 
\makeatother

\setcounter{tocdepth}{2}



 \title[Reflection Positivity and Invertible Topological Phases]{Reflection Positivity and Invertible Topological Phases} 
 \author[D. S. Freed]{Daniel S.~Freed}
 \address{Department of Mathematics \\ University of Texas \\ Austin, TX
78712} 
 \email{dafr@math.utexas.edu}
 \author[M. J. Hopkins]{Michael J.~Hopkins}
 \address{Department of Mathematics \\ Harvard University \\ Cambridge, MA
02138} 
 \email{mjh@math.harvard.edu}
 \thanks{This material is based upon work supported by the National Science
Foundation under Grant Numbers DMS-1207817, DMS-1160461, DMS-1510417, and
DMS-1158983.  Any opinions, findings, and conclusions or recommendations
expressed in this material are those of the authors and do not necessarily
reflect the views of the National Science Foundation.  We thank the Aspen
Center for Physics for hosting our very productive working group.  The first
author also thanks the Institute for Advanced Study through the Wolfensohn
Fund for providing a stimulating environment for parts of this work.}
 \date{December 18, 2022}
 \begin{abstract} 
 We implement an extended version of reflection positivity (Wick-rotated
unitarity) for invertible topological quantum field theories and compute the
abelian group of deformation classes using stable homotopy theory. We apply
these field theory considerations to lattice systems, assuming the existence
and validity of low energy effective field theory approximations, and thereby
produce a general formula for the group of Symmetry Protected Topological
(SPT) phases in terms of Thom's bordism spectra; the only input is the
dimension and symmetry type.  We provide computations for fermionic systems
in physically relevant dimensions.  Other topics include symmetry in quantum
field theories, a relativistic 10-fold way, the homotopy theory of
relativistic free fermions, and a topological spin-statistics theorem.
 \end{abstract}
\maketitle

{\small
\def\reftext{References}
\renewcommand{\tocsection}[3]{%
  \begingroup 
   \def\tmp{#3}%
   \ifx\tmp\reftext#3%
  \else\indentlabel{\ignorespaces#1 #2.\;\;}#3%
  \fi\endgroup}
\tableofcontents}

   \section{Introduction}\label{sec:1}

The moduli space, or stack, of a geometric object with fixed discrete
invariants is a central object of interest in geometry.  A typical example is
the moduli stack of Riemann surfaces of fixed genus.  Here the underlying
topological space is connected, but moving up to complex dimension two the
moduli stack of complex surfaces of general type with fixed Euler number and
signature is not necessarily connected.  It has finitely many
components~\cite{Ca}, so there are finitely many \emph{deformation types}.
If singular objects are permitted, then sometimes connectivity can be
restored.  For example, Reid~\cite{Re} speculates that the moduli stack of
three-dimensional Calabi-Yau varieties is connected if one allows certain
singularities.  To illustrate further, consider the moduli stack of
one-dimensional Riemannian manifolds.  If we allow simple singularities, such
as the figure eight, then we can connect a single circle to two circles by a
path (standard Morse function on a two-dimensional torus).  We can also
connect one circle to two circles if we allow noncompact smooth manifolds:
elongate a circle to an ellipse to two lines and then each line to a circle.
On the other hand, the set of path components of the moduli stack of smooth
closed Riemannian 1-manifolds is isomorphic to~$\ZZ^{\ge0}$; the isomorphism
maps a 1-manifold to the cardinality of~$\pi _0$.

In theoretical physics one contemplates moduli stacks of quantum systems with
fixed discrete invariants, such as dimension and symmetry type.  If we remove
the singular locus of \emph{phase transitions}, then path components of the
moduli stack are identified with \emph{phases} of the quantum
system.\footnote{There is a tight analogy with the example of Riemannian
1-manifolds above: a figure eight corresponds to a first-order phase
transition, while a noncompact manifold corresponds to a higher-order phase
transition.}  In condensed matter physics the quantum systems are modeled
discretely, using lattices, and the classification of phases is an active
topic of current interest.  As far as we know there is not a robust
mathematical theory of lattice systems and their moduli which leads to
rigorous computations of sets of phases.  Quantum field theories also exhibit
phases and phase transitions, and those too are topical.  Physicists often
pass back and forth between lattice models and field theories using various
mechanisms.  In this paper we envision passing from a lattice system to an
effective low-energy field theory using two heuristic principles to argue
that the set of phases is conserved:

\smallskip
      \begin{enumerate}[{\textnormal(}i{\textnormal)}]

 \item the deformation class of a quantum system is determined by its low
energy behavior;
	
 \item the low energy physics of a \emph{gapped}\footnote{A quantum
mechanical system is gapped if its minimum energy is an eigenvalue of finite
multiplicity of the Hamiltonian, assumed bounded below, and is an isolated
point of the spectrum.  For quantum field theory `spectrum' means the
spectrum of representations of the translation group of Minkowski spacetime.
For lattice systems the spectral gap must be bounded below independent of the
lattice size.} system is well-approximated by a
\emph{topological}\footnote{We allow a topological field theory tensored with
a non-topological invertible field theory; see~\S\ref{subsec:4.4}.  A field
theory is \emph{topological} if it does not depend on any continuously
varying (background) fields, such as a metric or conformal structure.  We
give a precise definition of a topological field theory
in~\S\ref{subsec:12.2}.} field theory.

      \end{enumerate}
\smallskip

\noindent
 A stronger version of~(i) asserts that the entire homotopy type of the
moduli stack is determined by the low energy behavior.  These two principles
are applied by physicists to quantum systems of all kinds: condensed matter
systems, quantum field theories, string theories.  For discrete lattice
systems we also assume an emergent low energy relativistic symmetry.  We
remark that fracton models~\cite{NH} are thought not to satisfy~(ii), nor to
have any sort of emergent relativistic symmetry, but those are not relevant
here.  The lattice models that motivate this paper belong to a special class,
often called \emph{short-range entangled}, for which the long-range effective
topological field theory is \emph{invertible}.  In particular, there is a
unique ground state for the lattice model on any compact manifold.  Early
discussions of this property may be found in~\cite{CGW,K1}.  (Now
`invertible' is used in place of `short-range entangled' to describe the
lattice model.)

One reason to pass to continuum models is that there \emph{is} a mathematical
Axiom System for Wick-rotated quantum field theory; it encodes the structural
properties of correlation functions and linear spaces of quantum states.  It
was first introduced in the mid 1980's for scale-independent theories: by
Segal~\cite{Se1} for 2-dimensional conformal field theories and later by
Atiyah~\cite{A1} for topological field theories.  With
modifications these axioms are now believed to be relevant to scale-dependent
theories as well.  In this framework a quantum field theory is a linear
representation of a \emph{bordism category}.  The latter categorifies Thom's
bordism groups~\cite{T}, and a field theory categorifies integer-valued
bordism invariants, such as the signature of a compact oriented manifold.

The twin pillars of quantum field theory are \emph{locality} and
\emph{unitarity}.  These fundamental properties persist after Wick rotation:
locality manifests as factorization laws for correlation functions and
unitarity manifests as reflection positivity.  Locality is encoded in the
Axiom System using composition of morphisms: gluing bordisms along
codimension one submanifolds.  In the early 1990's, especially motivated by
3-dimensional Chern-Simons theory, an \emph{extended} notion of locality was
introduced by gluing bordisms with corners along higher codimension
submanifolds, and this led naturally to formulations involving higher
categories; see~\cite{F1,La,BD,L}, for example.  Extended locality is a
characteristic feature of both physical and mathematical applications of
field theory, whereas unitarity is often not present in purely mathematical
contexts.  Unitarity in field theory, or rather its Wick-rotated
manifestation---reflection positivity---is the first main subject of this
paper.  It is straightforward to implement reflection positivity in the
non-extended Axiom System.  A natural question arises: What is the
\emph{extended} notion of reflection positivity that goes with extended
locality?  We offer a solution in a very special case: \emph{invertible
topological} field theories.  These theories can be studied using stable
homotopy theory~\cite{FHT1}, and indeed we define\footnote{A better starting
point is the topological version of the Axiom System, and then
Theorem~\ref{thm:54} brings us to stable homotopy theory.  But as the
literature is still in flux we opt for Ansatz~\ref{thm:145} instead; see the
remarks following Theorem~\ref{thm:54}.} a theory of this type as a map of
spectra.  Spectra are the main characters in stable homotopy theory, a
mathematical field that partly grew out of Thom's work.  The domain of an
invertible topological field theory is a Madsen-Tillmann bordism spectrum,
and our main result tells that extended reflection positivity brings us full
circle to the bordism spectra introduced by Thom in his thesis~\cite{T}.

  \begin{theorem}[]\label{thm:110}
 There is a 1:1 correspondence 
  \begin{equation}\label{eq:128}
     \left\{\normalfont \vcenter{\hbox{deformation classes of reflection
     positive 
     }\hbox{invertible $n$-dimensional extended topological}\hbox{field
     theories with 
     symmetry group~$H_n$}} \right\} \cong [MTH,\SIZ]\tors.  
  \end{equation}
  \end{theorem}

\noindent 
 The right hand side is the torsion subgroup of homotopy classes of maps from
a Thom spectrum to a shift of the Anderson dual to the sphere spectrum.
There are standard computational techniques which we employ in the latter
part of this paper to illustrate the efficacy of the theorem.  Often field
theories are classified by enumerating lagrangians with specified background
and fluctuating fields that are consistent with a given symmetry group.  By
contrast, Theorem~\ref{thm:110} is a direct \emph{quantum} classification of
correlation functions and state spaces, as encoded by the Axiom System.  The
only inputs are the discrete invariants: the spacetime dimension~$n$ and the
Wick-rotated vector symmetry group\footnote{The basic case is~$H_n=SO_n$.  In
general there is a homomorphism $\rho _n\:H_n\to O_n$ whose image
includes~$SO_n$; the kernel consists of internal global symmetries.  There is
a unique associated stable symmetry group~$H$ independent of dimension, as we
prove in Theorem~\ref{thm:6}.}~$H_n$.  We prove Theorem~\ref{thm:110}
in~\S\ref{sec:7} as a corollary of a more general result
(Theorem~\ref{thm:184}).  There is a related assertion which remains
conjectural in this paper: the abelian group of deformation classes of
\emph{all} reflection positive invertible field theories, including those
that are not topological, is obtained by simply omitting `tor' on the right
hand side of~\eqref{eq:128}.  We make some comments about this generalization
in~\S\ref{subsec:4.4} and Remark~\ref{thm:144}; we use it in the computations
of~\S\ref{sec:8}.  More to the point, we introduce ``continuous invertible
topological field theories'' as a substitute for invertible non-topological
theories, and prove theorems for those.\footnote{We thank Peter Teichner for
his encouragement to adopt this point of view.}  We remark that for general
reasons nontorsion only arises if the spacetime dimension~$n$ is odd.

We apply Theorem~\ref{thm:110} to compute the abelian group of phases of
invertible lattice systems with fixed dimension and symmetry type.  This
implicitly assumes that every possible deformation class of invertible
topological theory can be realized by a lattice model, something not implied
by the heuristic principles~(i) and~(ii) above.  We emphasize the algorithmic
nature of our classification: given a spacetime dimension~$n$ and a symmetry
group~$H_n$ the right hand side of~\eqref{eq:128} is the group of topological
phases and is computable.  We provide concrete evidence for this application
of Theorem~\ref{thm:110}: in~\S\ref{subsec:8.2} we undertake detailed
computations for some fermionic systems and compare to results in the physics
literature, the latter derived by means of physical arguments.  Some readers
may wish to examine our tables of computations before tackling the more
theoretical parts of the paper.  In unpublished work Kitaev~\cite{K1,K2,K3}
develops a classification of invertible phases based on microscopic
considerations, and he too is led to stable homotopy theory and results
consonant with our effective field theory classification.
Kapustin~\cite{Ka1} initiated computations of topological phases via
character groups of bordism groups, and he used them and subsequent
computations, for example~\cite{KTTW}, as phenomenological evidence for a
general classification along these lines.  Gaiotto-Kapustin~\cite{GK},
following on Gu-Wen~\cite{GW}, show that some invertible fermionic phases
defined by lattice models are characterized by spin bordism groups; see also
Brumfiel-Morgan~\cite{BrMo}.  Campbell~\cite{C} and
Guo-Putrov-Wang~\cite{GPW} carry out computations for other bosonic and
fermionic cases of interest, providing further affirmative checks against the
condensed matter literature.

A second subject of this paper, after extended reflection positivity, is the
study of symmetry groups in relativistic quantum field theory, and that is
where we begin in~\S\ref{sec:12}.  Our starting point is a theory on
$n$-dimensional Minkowski spacetime with global symmetry group~$\MH$, after
dividing out by translations.  The analytic continuations of correlation
functions, which exist as a consequence of positivity of energy, are
invariant under the complex Lie group~$H_n(\CC)$, and the entire Wick-rotated
theory is symmetric under the compact real form~$H_n\subset H_n(\CC)$ that
appears on the left hand side of~\eqref{eq:128}.  In an
appendix~\S\ref{subsec:a3.2} we discuss Wick rotation and the CRT
theorem\footnote{There is a subtlety concerning double covers of the Lorentz
signature isometry group, uncovered in~\cite{GT}, which we explicate in the
context of Wightman quantum field theory for general symmetry types;
see~\S\ref{subsec:10.1}.} for general symmetry types.  We use the rigidity of
compact Lie groups to constrain possible symmetry groups
(Theorem~\ref{thm:5}) \emph{\`a la} Coleman-Mandula~\cite{CM}.  One key
result in this section (Theorem~\ref{thm:6}) is the existence and uniqueness
of a stabilization~$H$, which is the group in the Thom spectrum on the right
hand side of~\eqref{eq:128}.  When we move to curved Riemannian
manifolds---i.e., couple the theory to background gravity---the symmetry
becomes infinitesimal in the sense of Cartan: an $H_n$-structure on the
tangent bundle.  In~\S\ref{sec:2} we formulate reflection symmetry in terms
of a group extension 
  \begin{equation}\label{eq:282}
     1\longrightarrow
     H_n\longrightarrow\hH_n\longrightarrow\pmo\longrightarrow 1;  
  \end{equation}
elements in ~$\hH_n\setminus H_n$ are a Wick-rotated analog of anti-unitary
symmetries in quantum mechanics.  (We give a topological account of~$\hH_n$
and the group extension~\eqref{eq:282} in Appendix~\ref{sec:16}.)  We use
this extension in~\S\ref{subsec:3.1} to define an involution on the bordism
category of $H_n$-manifolds.  In the basic case $H_n=SO_n$ the involution is
orientation-reversal; our uniform treatment gives analogs for any symmetry
group.  For example, fermionic theories with time-reversal symmetry (and no
other symmetry) have~$H_n=\Ppm_n$: the involution takes a pin structure to
its ``$w_1$-flipped'' pin structure.  Topological field theories are
independent of the Riemannian metric, so we can replace~$H_n$ by a noncompact
analog, which we construct in Appendix~\ref{sec:14}.

Three basic lessons we learned about reflection positivity: (i)~`reflection'
and `positivity' are distinct; (ii)~`reflection' is a structure whereas
`positivity' is a condition; and (iii)~`extended positivity' is a structure,
not a condition.  In the Axiom System a field theory is defined to be a
homomorphism---a symmetric monoidal functor---
  \begin{equation}\label{eq:158}
     F\:\bne\longrightarrow \Vect_{\CC} 
  \end{equation}
from the bordism category to the category of complex vector spaces and linear
maps.  A \emph{reflection structure}~(\S\ref{subsec:3.3}) is equivariance
data for~$F$ with respect to the generalized orientation-reversal involution
on~$\bne$ and the involution of complex conjugation on~$\Vect_{\CC}$.  (We
briefly review involutions on categories and equivariant functors in
Appendix~\ref{sec:11}.)  A reflection structure induces a hermitian metric on
the vector space of states attached to an $(n-1)$-manifold, and
\emph{positivity} is the condition that these hermitian structures be
positive definite.  Analogous to reflection positivity in Euclidean
space~(\S\ref{subsec:2.2}) we see that the partition function of the
\emph{double} of a manifold with boundary must be positive in order that a
reflection structure be positive.  Our treatment of this material using
general symmetry groups means it applies to all theories, including those
with time-reversal symmetry and fermions which, after Wick rotation, involve
nonorientable manifolds with pin structure.
 
To proceed to \emph{extended} field theories we specialize in~\S\ref{sec:4}
to the invertible case.  (Invertible field theories were first singled out
in~\cite{FM2} in an application to string theory.)  In~\S\ref{subsec:4.2} we
review how invertibility catalyzes a transition to stable homotopy theory:
the analog of~\eqref{eq:158} for an invertible topological field theory is a
map of spectra 
  \begin{equation}\label{eq:281}
     F\:\mth n\longrightarrow \sI. 
  \end{equation}
The domain is the invertible quotient of a higher bordism category, a
Madsen-Tillmann spectrum.  There is freedom to choose the codomain spectrum,
and in~\S\ref{subsec:4.3} we introduce two universal choices.  The first is
(a shift of)~$\ICx$, a ``character dual'' to the sphere spectrum, which is
used to track topological theories on the nose: theories with unequal
partition functions are distinct.  The second universal target spectrum is (a
shift of) the \emph{Anderson dual}~$\IZ$ to the sphere spectrum.  It tracks
deformation classes of invertible theories rather than individual theories.
Significantly, in the spirit of ``derived geometry'', maps into~$\IZ$
classify deformation classes of invertible theories that are not necessarily
topological; the topological theories have finite order in the abelian group
of homotopy classes of maps.  For the application to topological phases one
should include the non-topological theories, as they incorporate nonzero
thermal Hall response.  An example is Kitaev's $E_8$~phase~\cite{K5}.
See~\S\ref{subsec:4.4} for a general discussion, including an interpretation
of maps into~$\IZ$ as a continuous invertible topological field theory.  In
this paper we only use non-topological field theories heuristically and posit
that their deformation classes are encoded in continuous topological field
theories, which we treat rigorously.
 
The main arguments about \emph{extended positivity} occur
in~\S\ref{sec:5}--\S\ref{sec:7}.  Madsen-Tillmann spectra filter Thom
spectra, which leads to a notion of a \emph{stable} invertible topological
field theory: a map out of a Thom spectrum.  For invertible theories a
reflection structure is a lift of~\eqref{eq:281} to an equivariant map of
$\Z/2$-equivariant spectra.  Section~\ref{sec:5} begins with a brief
exposition of spectra and Borel equivariant stable homotopy theory,
sufficient for the considerations in this paper.  The involution on the
domain that models generalized orientation-reversal is straightforward to
construct from the group extension~\eqref{eq:281}.  On the other hand, it is
not clear \emph{a priori} how to model complex conjugation on the codomain,
so in~\S\ref{sec:mssreal-structures} we give an extended discussion
motivating our choice, Definition~\ref{def:2a}.  We conclude~\S\ref{sec:5} by
introducing spectra and spaces of ``higher super lines'', including Hermitian
structures and a higher notion of positivity (Definition~\ref{thm:178},
Definition~\ref{thm:179}).  There is a basic link between \emph{non-extended}
positivity and stability, which we establish in Theorem~\ref{thm:78} and
Theorem~\ref{thm:82} using obstruction theory arguments.  This results in an
intermediate classification (Corollary~\ref{thm:86}) of invertible
topological theories with reflection structure satisfying non-extended
positivity.  We undertake a more systematic study in~\S\ref{sec:8}.  There we
define extended positivity for invertible field theories in terms of higher
super lines and their embellishments.  We give an intuitive construction of
the space of invertible reflection positive theories, and then we identify
its homotopy type in Theorem~\ref{thm:184}, whose proof occupies the second
half of~\S\ref{sec:5}.  Theorem~\ref{thm:110} is a corollary.
 
The third main subject of this paper is what might be called the homotopy
theory of relativistic free fermions.\footnote{A free fermion field theory is
neither topological nor invertible, but it has an associated invertible field
theory.}  There are two distinct scenarios in which a free fermion field
theory gives rise to a deformation class of $n$-dimensional reflection
positive invertible theories.  First scenario: an $(n-1)$-dimensional free
fermion theory has an associated $n$-dimensional invertible anomaly theory,
which is not necessarily topological; our concern here is its deformation
class.\footnote{The anomaly theory lies in \emph{differential} $KO$-theory,
whereas its deformation class lies in topological $KO$-theory.}  Second
scenario: an $n$-dimensional \emph{massive} free fermion theory has a
long-range effective invertible topological field theory approximation,
according to the general principle~(ii) invoked above, applied to a quantum
field theory rather than a lattice system.  We sketch the first scenario in
some detail in~\S\ref{subsec:8.4}, culminating in a formula
(Conjecture~\ref{thm:124}) for the deformation class of the anomaly theory.
Since massive free fermions have trivial anomaly, the starting point is the
group of free fermionic data under direct sum modulo massive free fermionic
data.  The existence of a mass term has a meaning in terms of Clifford
modules (Lemma~\ref{thm:119}), and this produces an identification of the
quotient as a homotopy group of the $KO$-theory spectrum
(Theorem~\ref{thm:120}).  The formula for the deformation class of the
associated anomaly theory is, conjecturally, a product of the
Atiyah-Bott-Shapiro map~\cite{ABS} with the $KO$-theory class of the spinor
data, followed by a Pfaffian map (Conjecture~\ref{thm:124}).  In this paper
we provide a detailed sketch of these ideas; we hope to give a thorough
mathematical treatment in the future.  There is a huge literature on
relativistic free fermion field theories and associated anomalies; the recent
paper~\cite{W1}, which describes several particular cases in detail, provided
motivation and guidance for the general story here.  By contrast, we only
comment briefly (\S\ref{subsubsec:8.4.6}) on the second scenario, beginning
from a massive $n$-dimensional free fermion theory, enough to show that the
starting and ending data match those in the first scenario.  In fact, it is
this second scenario that is relevant to this paper, and in particular the
conjecture~\eqref{eq:150} about its low energy effective field theory is used
in the computations which follow.

To enable detailed comparisons with the physics literature we carry out the
discussion of relativistic free fermions for 10~cases simultaneously.  To
enumerate them we resume group theoretical arguments in~\S\ref{subsec:8.1} to
classify relativistic symmetry groups whose internal subgroup is the unit
reals~$\pmo$, unit complexes~$\TT$, or unit quaternions~$SU_2$.  Restricting
to fermionic theories in which $(-1)^F$~embeds in this internal
subgroup---which implements the ``spin/charge relation''~\cite{SeWi}---we
obtain the 10~groups in question.  They include~$\Spin$, $\Ppm$, and
semidirect products with the various unit scalars.  This ``relativistic
10-fold way'' is a variation on the nonrelativistic case, which is described
in many works: a sample includes~\cite{D,AZ,HHZ,K6,SRFL,FM1,KZ,WS}.
Remark~\ref{thm:136} provides a link to this condensed matter literature: we
compute a group~$I$ of symmetries that preserve points of \emph{space} in a
nonrelativistic setting.  It is this group~$I$ which acts at each lattice
site in a discrete model, and it can be used to compare to the ubiquitous
symmetry tables for fermion lattice systems.  Our uniform treatment is based
on Lemma~\ref{thm:90}, which embeds each symmetry group in a Clifford
algebra.  Usual constructions with Clifford modules---the
Atiyah-Bott-Shapiro-Thom class, Dirac operators and their indices---then
generalize easily.  There is a purely geometric application that we do not
pursue here: index theory on pin and $\textnormal{pin}^c$~manifolds is
straightforward using this embedding.
 
The results of the homotopy theory computations are reported
in~\S\ref{subsec:8.2}.  We provide a table for each of the 10~fermionic
symmetry groups.  In each spacetime dimension~$n\le5$ we compute the group of
free fermion theories (Theorem~\ref{thm:120}), the group of deformation
classes of interacting theories (Theorem~\ref{thm:110}), and the map between
them (Conjecture~\ref{thm:124}).  We make comparisons with the condensed
matter literature where available and find almost total agreement; in the few
cases with a discrepancy we motivate a reexamination of the physics
assertions.  In~\S\ref{sec:13} we outline how the calculations are done and
supply Ext charts that encode the $E_2$-term of the relevant Adams spectral
sequences.  The Ext charts also encode the map to $KO$-theory; in fact, one
of the main tasks in this section is to rewrite the ``twisted''
Atiyah-Bott-Shapiro maps in a more accessible form.  We provide more
explanation of the charts in Appendix~D.  In that appendix we also illustrate
the use of Margolis homology to derive information from the Adams spectral
sequence.  Papers by Campbell~\cite{C} and Beaudry-Campbell~\cite{BeC} give
pedagogical introductions to the Adams spectral sequence and flesh out the
details of our computations.  Notice that whereas Theorem~\ref{thm:110}
computes the group of interacting phases for any symmetry type, the
10~fermionic symmetry types are special in that there is a notion of a free
fermionic phase which does not exist in general.  This leads to a richer
application of homotopy theory and a more stringent test against the
condensed matter literature.

The sections of the paper not yet mentioned contain complements or background
material.  An analog of the spin-statistics theorem in relativistic quantum
field theory holds for reflection positive invertible topological theories,
as we explain in~\S\ref{sec:9}.  Section~\ref{subsec:a3.1} contains a review
of pin groups and Clifford algebras, background for the discussion of the CRT
theorem later in Appendix~\ref{sec:10} and for some of the material
in~\S\ref{sec:8}. 

Beyond the immediate relevance to the study of topological phases, the
successful application of bordism computations to quantum systems is
evidence---perhaps the first substantial test against physics---that the
sparse Axiom System initiated by Segal and Atiyah captures essential features
of quantum field theory.

The lecture series~\cite{F4} provides additional background and discussion on
many of the topics treated here.

\bigskip
 We warmly thank David Ben-Zvi, Jonathan Campbell, Jacques Distler, Mike
Freedman, Davide Gaiotto, Zheng-Cheng Gu, Meng Guo, Matt Hastings, Andre
Henriques, Anton Kapustin, Alexei Kitaev, Max Metlitski, Greg Moore, Andy
Neitzke, Graeme Segal, Nathan Seiberg, Peter Teichner, Constantin Teleman,
Ulrike Tillmann, Senthil Todadri, Kevin Walker, Xiao-Gang Wen, Edward Witten,
and the anonymous referees for many illuminating conversations,
correspondence, comments, and feedback on the first version of this paper.

   \section{Symmetry groups in relativistic quantum field theory}\label{sec:12}

The analytic extension of correlation functions, a consequence of positivity
of energy, provides a powerful constraint on symmetry groups.  We explore the
general structure in~\S\ref{subsec:12.1} from the Wick-rotated point of view.
The rigidity of compact Lie groups is the key idea that underlies our proofs
of structure theorems, such as Theorem~\ref{thm:5}.  One important result is
Theorem~\ref{thm:6}, which constructs a stable group~$H$ from an
$n$-dimensional symmetry group~$H_n$, assuming the spacetime dimension
satisfies~$n\ge3$.  In the expository~\S\ref{subsec:12.2} we recall the
axiomatization of a field theory as a categorified bordism invariant.  We
accommodate general symmetry groups on curved manifolds using reductions of
frame bundles, an analog of the passage from Klein's Erlangen
Programm~\cite{BB} to Cartan's $H$-structures~\cite{S}.

  \subsection{Stabilization of Wick-rotated symmetry groups}\label{subsec:12.1}

The Poincar\'e group is the connected double cover of the identity component
of the isometry group~$\IM$ of $n$-dimensional Minkowski spacetime~$M^n$.
Minkowski spacetime~$M^n$ is assumed equipped with a time orientation, a
choice of component of timelike vectors in the inner product space~$\RM$ of
translations.  Let $\IMt\subset \sI \mstrut _{1,n-1}$ denote the subgroup of
isometries that preserve the time orientation.  Assume~$n\ge2$.  Many
treatments of quantum field theory, for example those based on S-matrix
theory, begin with the assumption that the Poincar\'e group is a
\emph{subgroup} of the (unbroken) global symmetry group~$\HM$ of the theory.
Then the Coleman-Mandula theorem~\cite{CM} asserts that on the level of Lie
algebras there is a splitting as a direct sum of the Lie algebra of
Poincar\'e with the Lie algebra of a \emph{compact} Lie group~$K$.  We find
it more natural to posit from the beginning a homomorphism~$\rho
_n\:\HM\to\IMt$.  After all, $g\in \HM$ acts on the operators in the theory,
and so on the supports of those operators.  For a single point operator, or
local operator, that action is~$\rho _n(g)$.  The relativistic invariance of
the theory is the hypothesis that the image of~$\rho _n$ contains the
identity component of~$\IMt$.  Therefore, the image is either the identity
component or the entire two-component group~$\IMt$.  The kernel of~$\rho _n$
is the group~$K$ of {\it internal\/} symmetries---symmetries that fix the
points of spacetime.  Note that $K$~contains the central element of the
Lorentz group~$\Spin_{1,n-1}$ if that element acts effectively, which by the
spin-statistics theorem happens if and only if the theory contains fermionic
states.  (That element is often denoted~`$(-1)^F$'.  Below we deduce in
general a central element~$k_0\in K$ with $(k_0)^2=1$, and it is identified
with either the central element of~$\Spin$ or the identity element.)  The
internal symmetry group~ $K$ is assumed to be a \emph{compact} Lie
group.\footnote{The global symmetry group of a ``noncompact field theory'',
such as for a free massless $\RR$-valued scalar field theory, may be
noncompact.  Our discussion does not include supersymmetries or higher
symmetries.\label{SYM}}

Assume the translation subgroup~$\RR^{1,n-1}\subset \IMt$ lifts to a normal
subgroup of~$\HM$; see~\cite[Remark~2.13]{FM1} for a justification of this
hypothesis.  Let $H\mstrut _{1,n-1}$~denote the quotient of~$\HM$ by this normal
subgroup of translations.  There is a short exact sequence\footnote{We
overload the symbol~`$\rho _n$'.  Here it denotes the homomorphism induced
from the previous~$\rho _n$ after modding out translations.  Below we use it
for the complexification, restriction to the Euclidean real form, and various
lifts.}
  \begin{equation}\label{eq:12}
     1\longrightarrow K\longrightarrow H\mstrut _{1,n-1}\xrightarrow{\;\;\rho
     _n\;\;} \OMt 
  \end{equation}
where the image of~$\rho _n$ contains the identity component of~$\OMt\subset
O\mstrut _{1,n-1}$, by the relativistic invariance of the theory.  The CRT
theorem, reviewed in~\S\ref{subsec:a3.2}, gives a larger symmetry group.  A
fundamental consequence of the positivity of energy\footnote{The dual to the
cone of forward timelike vectors determines the notion of positive energy.}
in quantum field theory, also reviewed in~\S\ref{subsec:a3.2}, is a
holomorphic extension\footnote{See~\cite{KS} for a geometric version on
curved manifolds.} of correlation functions on which the
complexification~$H_n(\CC)$ of~$H_{1,n-1}$ acts as symmetries.  There is an
exact sequence
  \begin{equation}\label{eq:13}
     1\longrightarrow K(\CC)\longrightarrow H_n(\CC)\xrightarrow{\;\;\rho
     _n\;\;} O_n(\CC)
  \end{equation}
of complex Lie groups.  The \emph{Wick-rotated theory} has a \emph{compact}
real form~$H_n$ of~$H_n(\CC)$ as symmetry group such that $H_n$~ fits
into the exact sequence
  \begin{equation}\label{eq:14}
     1\longrightarrow K\longrightarrow H_n\xrightarrow{\;\;\rho _n\;\;}
     O_n
  \end{equation}
of compact Lie groups with the \emph{same} compact kernel~$K$ as
in~\eqref{eq:12}.  The image of this~$\rho _n$ is either~$O_n$ or~$SO_n$,
depending on whether the relativistic theory has spatial reflections or not;
equivalently, by the CRT theorem, whether it has time-reversal symmetry or
not.  

  \begin{definition}[]\label{thm:153}
 The \emph{symmetry type} of a quantum field theory is a pair~$(H_n,\rho _n)$
of a compact Lie group~$H_n$ and a homomorphism $\rho _n\:H_n\to O_n$ whose
image contains~$SO_n\subset O_n$.  The kernel~$K$ of~$\rho _n$ is called the
\emph{group of internal symmetries}.  We require that the anti-Wick rotation
to Minkowski spacetime has a Lorentzian real form~\eqref{eq:12} with compact
internal symmetry group~$K=\ker\rho _n$.
  \end{definition}

\noindent
 The caveats in footnote~\footref{SYM} apply.  See Remark~\ref{thm:32} below
for an example of a pair~$(H_n,\rho _n)$ that does not satisfy the anti-Wick
rotation condition.  The symmetry type is a basic structure in a quantum
field theory, useful to articulate explicitly in any example.

Define $SH_n=\rho _n\inv (SO_n)$ and let $\tSH_n$~be the double cover
of~$SH_n$ constructed from the spin double cover of~$SO_n$.  These compact
Lie groups are usefully encoded in the pullback diagram
  \begin{equation}\label{eq:15}
     \begin{gathered}
     \xymatrix{
     1\ar[r]&K\ar[r]\ar@{=}[d]&\tSH_n\ar@{->>}[d]^{2:1}\ar[r]^{\rho _n}&
     \Spin_n\ar@{->>}[d]^{2:1} \ar[r] & 1\\ 
     1\ar[r]&K\ar[r]\ar@{=}[d]&SH_n\ar@{^{(}->}[d]^{1:2}\ar[r]^{\rho _n}&
     SO_n\ar@{^{(}->}[d]^{1:2} \ar[r] & 1\\ 1\ar[r] &K \ar[r]
     &H_n\ar[r]^{\rho _n} &O_n} 
     \end{gathered} 
  \end{equation}
If $\rho _n\:H_n\to O_n$ is surjective, define $\tH_n$ as the
pullback\footnote{See~\S\ref{subsec:a3.1} for a review of pin groups.}
  \begin{equation}\label{eq:19}
     \begin{gathered} \xymatrix{
     1\ar[r]&K\ar[r]\ar@{=}[d]&\tH_n\ar@{->>}[d]\ar[r]^{\rho _n}&
     \Pp_n\ar@{->>}[d] \ar[r] & 1\\
     1\ar[r]&K\ar[r]&H_n\ar[r]^{\rho _n}&
     O_n\ar[r] & 1} \end{gathered} 
  \end{equation}
The restriction of~$\tH_n$ over $\Spin_n\subset \Pp_n$ is~$\tSH_n$.  Let
$\mathfrak{k},\mathfrak{h}_n,\mathfrak{o}_n$ denote the Lie algebras of
$K,H_n,O_n$, respectively.  The following theorem makes precise the sense in
which the entire symmetry group is nearly the product of (Wick-rotated)
spacetime symmetries and internal symmetries.  In our approach to symmetry it
plays the role of the Coleman-Mandula theorem.

  \begin{theorem}[]\label{thm:5}
 \ \begin{enumerate}[{\textnormal(}1{\textnormal)}]

 \item There is a splitting $\mathfrak{h}_n\cong \mathfrak{o}'_n\oplus
\mathfrak{k}$, and $\rho _n$~induces an isomorphism of Lie algebras
$\mathfrak{o}_n'\xrightarrow{\;\cong \;} \mathfrak{o}_n$.

 \item If~$n\ge3$ there is an isomorphism $\tSH_n\cong \Spin_n\times K$.
Hence there exists a central element~$k_0\in K$ with $(k_0)^2=1$ and an
isomorphism
  \begin{equation}\label{eq:16}
     SH_n\cong \Spin_n\times K\bigm / \langle(-1,k_0)\rangle,
  \end{equation}
where $\langle (-1,k_0) \rangle$ is the cyclic group generated by~$(-1,k_0)$.

 \item If $n\ge3$ and $\rho _n\:H_n\to O_n$ is surjective, then there exists
a group extension 
  \begin{equation}\label{eq:225}
     1\longrightarrow K\longrightarrow J\longrightarrow \pmo\longrightarrow 1
  \end{equation}
and a pullback diagram of group extensions 
  \begin{equation}\label{eq:226}
     \begin{gathered} \xymatrix{
     1\ar[r]&K\ar[r]\ar@{=}[d]&\tH_n\ar@{->>}[d]\ar[r]^{\rho _n}&
     \Pp_n\ar@{->>}[d] \ar[r] & 1\\ 1\ar[r]&K\ar[r]&J\ar[r]&
     \pmo\ar[r] & 1} \end{gathered} 
  \end{equation}
There is an isomorphism
  \begin{equation}\label{eq:227}
     H_n\cong \tH_n \bigm / \langle(-1,k_0)\rangle. 
  \end{equation}

 \end{enumerate} 
  \end{theorem}

\noindent
 The pullback~\eqref{eq:226} shows that the failure of~$\tH_n$ to be a
product is encoded in the group extension~\eqref{eq:225}, which is
independent of~$n$.

  \begin{corollary}[]\label{thm:113}
 There is a canonical homomorphism $\Spin_n\to H_n$ under which the image of
the central element~$-1\in \Spin_n$ is $k_0\in K$. 
  \end{corollary}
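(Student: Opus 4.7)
The plan is to derive this corollary directly from part~(2) of Proposition~\ref{thm:5}. Starting from the isomorphism $\tSH_n\cong\Spin_n\times K$, I would construct the desired map as the composition
\[
\Spin_n\longhookrightarrow \Spin_n\times K\xrightarrow{\;\cong\;}\tSH_n\xrightarrow{\;2:1\;}SH_n\longhookrightarrow H_n,
\]
where the first arrow is $x\mapsto(x,1)$ and the remaining maps come from the pullback diagram~\eqref{eq:15}. To verify the image of $-1\in\Spin_n$, chase it through: it becomes $(-1,1)\in\tSH_n$, and in the quotient $SH_n=\tSH_n/\langle(-1,k_0)\rangle$ I would use $k_0^2=1$ to compute
\[
[(-1,1)]=[(-1,1)\cdot(-1,k_0)]=[(1,k_0)],
\]
so the image in $SH_n$ is $k_0\in K$, and this is preserved by $SH_n\hookrightarrow H_n$.

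The step requiring the most care is canonicity: the embedding $\Spin_n\hookrightarrow\tSH_n$ depends a priori on the chosen isomorphism in Proposition~\ref{thm:5}(2), and one must show the resulting homomorphism to~$H_n$ does not. I would pin the $\Spin_n$-factor down at the Lie-algebra level: by Proposition~\ref{thm:5}(1) the subalgebra $\mathfrak{o}'_n\subset\mathfrak{h}_n$ is specified by the splitting, and for $n\ge3$ it is semisimple and isomorphic to~$\mathfrak{so}_n$. This subalgebra exponentiates inside the simply-connected (in the $\Spin_n$-direction) group $\tSH_n\cong\Spin_n\times K$ to a unique closed subgroup isomorphic to~$\Spin_n$, namely $\Spin_n\times\{1\}$. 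Since $\Spin_n$ is perfect for $n\ge3$, any other Lie subgroup of $\Spin_n\times K$ isomorphic to~$\Spin_n$ with the same image in the Lie algebra is conjugate by an element of $K$ (a central operation on~$H_n$), so the resulting homomorphism $\Spin_n\to H_n$ is independent of choices.

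The only real content beyond bookkeeping is verifying that $\mathfrak{o}'_n$ in Proposition~\ref{thm:5}(1) is intrinsically determined by $\mathfrak{h}_n$ and $\mathfrak{k}$; I expect this to follow from the Levi--Mal\!'tsev decomposition together with the fact that for $n\ge3$ the semisimple ideal $\mathfrak{o}'_n\subset\mathfrak{h}_n$ is characterized as the unique semisimple ideal mapping isomorphically to $\mathfrak{o}_n$ under~$\rho_n$. Once that uniqueness is in hand, the homomorphism produced above is the canonical one, and the image of~$-1$ is $k_0$ as computed.
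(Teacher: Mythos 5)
Your proposal is correct and follows the route the paper leaves implicit: Corollary~\ref{thm:113} is stated without a separate proof as an immediate consequence of Proposition~\ref{thm:5}(2), and your composite $\Spin_n\hookrightarrow\Spin_n\times K\cong\tSH_n\to SH_n\hookrightarrow H_n$, chasing $-1$ to $[(1,k_0)]$ in $SH_n$, is exactly that argument spelled out. Two small corrections to your canonicity discussion, which is a reasonable concern the paper does not address explicitly. First, the conjugacy argument is unnecessary: a connected Lie subgroup is determined by its Lie algebra, so once $\mathfrak{o}'_n$ is pinned down as the Killing orthogonal complement of $\mathfrak{k}\cap[\mathfrak{h}_n,\mathfrak{h}_n]$ in $[\mathfrak{h}_n,\mathfrak{h}_n]$ (as in Proposition~\ref{thm:5}(1)), the subgroup $S=\exp\mathfrak{o}'_n\subset\tSH_n$ is unique---there is nothing to conjugate---and the isomorphism $\Spin_n\xrightarrow{\cong}S$ is canonical because $\rho_n|_S\colon S\to SO_n$ is a connected double cover and $S$ is simply connected for $n\ge 3$. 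Second, the parenthetical assertion that conjugation by $K$ is ``a central operation on~$H_n$'' is not right in general, since $K$ need not be central in $H_n$ when $H_n$ has two components; the relevant and true fact, established in the paper's proof of Proposition~\ref{thm:5}, is that $K$ centralizes the subgroup~$S$. Also, your invocation of $k_0^2=1$ in the step $[(-1,1)]=[(-1,1)\cdot(-1,k_0)]=[(1,k_0)]$ is a red herring: that computation only uses $(-1)^2=1$ in $\Spin_n$.
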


\noindent 
 This homomorphism anti-Wick rotates back to a homomorphism of the
Poincar\'e group into the total symmetry group~$\HM$ of the relativistic
theory, the traditional starting point for discussions of symmetry in quantum
field theory.

  \begin{remark}[]\label{thm:32}
 For~$n=2$ we can only conclude that $\tSH_2$~is isomorphic to a semidirect
product of~$\Spin_2$ and~$K$.  An example is $SH_2=SO_2\ltimes O_2$, where a
rotation~$R\in SO_2$ acts on~$O_2$ by the automorphism that is the identity
on~$SO_2\subset O_2$ and composes a reflection with~$R$.  Alternatively,
$SH_2\cong \zt\ltimes(\TT\times \TT)$ where the involution on~$\TT\times \TT$
is $(\lambda\mstrut _1,\lambda\mstrut _2)\mapsto(\lambda \mstrut _1,\lambda
_1\inv \lambda _2\inv )$.
  \end{remark}

  \begin{proof}[Proof of Theorem~\ref{thm:5}]
 Split the Lie algebra $\mathfrak{h}_n=[\mathfrak{h}_n,\mathfrak{h}_n]\oplus
\mathfrak{z}$, where $\mathfrak{z}\subset \mathfrak{h}_n$ is the center, and
let $\mathfrak{o}_n'$ be the orthogonal complement of the
ideal~$\mathfrak{k}\cap [\mathfrak{h}_n,\mathfrak{h}_n]\subset
[\mathfrak{h}_n,\mathfrak{h}_n]$ with respect to the nondegenerate Killing
form on the semisimple Lie algebra~$[\mathfrak{h}_n,\mathfrak{h}_n]$.  Then
$\rho _n$~induces an isomorphism $\mathfrak{o}'_n\to\mathfrak{o}_n$, which
proves~(1).  The exponential of~$\mathfrak{o}'_n$ is a closed Lie subgroup
$S\subset \tSH_n$ which locally projects diffeomorphically onto~$\Spin_n$
under~$\rho _n$, so is isomorphic to~$\Spin_n$.  It follows that $\tSH_n\cong
S\ltimes K$.

We claim this semidirect product is a direct product if~$n\ge 3$.  To see
this observe that conjugation by~$s\in S$ induces an automorphism~$\alpha
(s)$ of~$K$ which is the identity on the identity component $K^0\subset K$,
since the Lie algebra of~$S$ commutes with the Lie algebra of~$K$.  Since
$S$~is connected, the induced automorphism of~$\pi _0K$ is also trivial.
Hence on each component of~$K$ the automorphism~$\alpha (s)$ is left
multiplication by an element~$z(s)\in Z^0$ in the center of~$K^0$.  (Proof:
Write~$\alpha =\alpha (s)$ and suppose $\alpha (k)=zk$ for some~$k$ in that
component and $z\in K^0$.  Any other element of that component has the
form~$kk_0$ for~$k_0\in K^0$, and $\alpha (kk_0)=z(kk_0)$.  But we can also
write any element in the component as~$k_0'k$ for some $k_0'\in K^0$, and
$\alpha (k_0'k)=k_0'zk=(k_0'z{k_0'}\inv )(k_0'k)$ from which $k_0'z{k_0'}\inv
=z$.  This holds for every~$k_0'\in K^0$, from which we deduce $z\in Z^0$.)
Next, $\Spin_n$~acts trivially on~$Z^0$; this follows since the outer
automorphism group of a compact Lie group is discrete, every inner
automorphism of the abelian group~$Z^0$ is trivial, and $\Spin_n$~is
connected.  Hence the map $s\mapsto z(s)$ is a homomorphism $S\to Z^0$.  But
if~$n\ge 3$ the Lie group $S\cong \Spin_n$ has no nontrivial homomorphisms to
an abelian Lie group.
 
Assume $\rho _n\:H_n\to O_n$ is surjective.  We claim $\Spin_n\subset
\tSH_n\subset \tH_n$ is a normal subgroup.  Fix~$\tih\in \tH_n$ such that
$\rho _n(\tih)=e_2\in \Pp_n$.  Conjugation by~$e_2$ induces an involution
$\alpha :\Spin_n\to\Spin_n$.  It lifts to an automorphism of~$\tSH_n\cong
\Spin_n\times K$ defined as conjugation by~$\tih$, so there is an induced
automorphism $\beta \:K\to K$ and a homomorphism $\gamma \:\Spin_n\to K$.

  \begin{lemma}[]\label{thm:7}
 If~$n\ge3$, then the homomorphism~$\gamma $ is trivial. 
  \end{lemma}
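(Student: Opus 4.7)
The plan is to mirror the endgame of the proof of Proposition~\ref{thm:5}: show that $\gamma$ necessarily lands in an abelian subgroup of~$K$ and then invoke the rigidity fact that $\Spin_n$ admits no nontrivial homomorphism to an abelian Lie group once $n\ge 3$.

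First I would make the form of the conjugation automorphism $\varphi\: x\mapsto \tih x\tih\inv$ of $\tSH_n\cong \Spin_n\times K$ explicit.  Since $K$ is normal in~$\tH_n$ (it is the kernel of $\rho_n$), $\varphi$ preserves the factor $\{1\}\times K$ and restricts there to $\beta$.  On the quotient $\Spin_n$ the automorphism induced by $\varphi$ is conjugation by $\rho_n(\tih)=e_1$, namely $\alpha$.  Defining $\gamma(s)\in K$ to be the $K$-component of $\varphi(s,1)$ then yields a smooth map $\gamma\:\Spin_n\to K$ with $\gamma(1)=1$ and
\begin{equation*}
  \varphi(s,k)=\bigl(\alpha(s),\;\gamma(s)\,\beta(k)\bigr).
\end{equation*}

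Next I would exploit the fact that $\varphi$ is a group homomorphism.  Comparing the $K$-components of $\varphi(s_1 s_2,\,k_1 k_2)$ and $\varphi(s_1,k_1)\varphi(s_2,k_2)$ gives, after cancelling~$\beta(k_2)$,
\begin{equation*}
  \gamma(s_1 s_2)=\gamma(s_1)\,\beta(k_1)\,\gamma(s_2)\,\beta(k_1)\inv
\end{equation*}
for all $s_i\in\Spin_n$ and $k_1\in K$.  Setting $s_1=1$ forces $\gamma(s_2)$ to commute with every~$\beta(k_1)$, hence with every element of~$K$ (as $\beta$ is surjective), so $\gamma$ factors through the center $Z(K)$; setting $k_1=1$ then collapses the relation to $\gamma(s_1 s_2)=\gamma(s_1)\gamma(s_2)$.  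Thus $\gamma\:\Spin_n\to Z(K)$ is a Lie group homomorphism into an abelian Lie group.

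Finally, for $n\ge 3$ the Lie algebra $\mathfrak{so}_n$ is semisimple, so $\Spin_n$ is connected and perfect; any continuous homomorphism from $\Spin_n$ to an abelian Lie group must therefore be trivial, giving $\gamma\equiv 1$.  The only nonroutine step is the central-valuedness of~$\gamma$; once that is in hand the rigidity argument is identical to the one used at the close of the proof of Proposition~\ref{thm:5}.
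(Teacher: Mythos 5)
Your proof is correct, but it takes a genuinely different route from the paper's. The paper's argument is the anti-Wick-rotation argument: complexify $\gamma$ to a map $\Spin_n(\CC)\to K(\CC)$, restrict to the noncompact Lorentzian real form $\Spin_{1,n-1}$, and derive a contradiction from the Killing form --- the compact Lie algebra $\mathfrak{k}$ cannot contain a copy of $\mathfrak{s}\mathfrak{l}_2\RR$. Your argument is internal to compact-group theory: the fact that $\gamma$ lands in $Z(K)$ --- which the paper never states --- falls out of the two factorizations $(s,k)=(s,1)(1,k)=(1,k)(s,1)$ together with the homomorphism property of $\varphi$, after which perfectness of $\Spin_n$ for $n\ge3$, the same rigidity that closes the proof of Proposition~\ref{thm:5}, finishes the job. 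Your route is simpler and avoids any appeal to the Lorentzian real form. You should, however, reconcile this with Remark~\ref{thm:8}, which asserts that Lemma~\ref{thm:7} fails without the anti-Wick rotation and offers a putative $n=3$ example with $\gamma\:\Spin_3\to SO_3$ surjective; that would contradict your conclusion, since $Z(SO_3)=\{1\}$. On inspection, the shearing $(g_1,g_2)\mapsto(g_1,g_1g_2)$ quoted there is not actually a group automorphism of $SO_3\times SO_3$ (applying it to a product $(g_1,g_2)(h_1,h_2)$ forces $g_2h_1=h_1g_2$, which is false in general), so the semidirect product in that remark is not well-defined as written. I see no gap in your central-valuedness step, so your proof appears sound and the tension seems to lie with the remark; still, given the paper's explicit claim that the anti-Wick rotation is essential, it is worth re-examining that step with particular care.
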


  \begin{proof}
 Define $\tH_n(\CC)$ by pulling back as in~\eqref{eq:19} using the
complexified groups~\eqref{eq:13}; pullback over the Lorentzian real forms to
obtain the first of the pair of real forms $\tH_{1,n-1}\subset
\tH_n(\CC)\supset \tH_n$.  Note that $\tih$~lies in each of these groups, and
conjugation by~$\tih$ preserves both real forms.  Thus we obtain a
homomorphism $\Spin_n(\CC)\to K(\CC)$ that restricts to $\gamma \:\Spin_n\to
K$ and to a homomorphism $\Spin_{1,n-1}\to K$.  Now if $\gamma $~is
nontrivial, then so is the induced map on Lie algebras, and since
$\mathfrak{o}_n$~is simple, $\dot\gamma \:\mathfrak{o}_n\to\mathfrak{k}$ is
injective.  It follows that the Lie algebra map $\mathfrak{o}_{1,n-1}\to
\mathfrak{k}$ is also injective.  Hence $\mathfrak{k}$~contains a subalgebra
isomorphic to~$\mathfrak{o}_{1,2}\cong \mathfrak{s}\mathfrak{l}_2\RR$.  The
Killing form on~$\mathfrak{k}$ induces a nonzero semidefinite invariant
symmetric bilinear form on the simple Lie
algebra~$\mathfrak{s}\mathfrak{l}_2\RR$, which is impossible since every
invariant symmetric form on~$\mathfrak{s}\mathfrak{l}_2\RR$ is a multiple of
the Killing form, which is indefinite and nondegenerate.
  \end{proof}

It follows that $\Spin_n\subset \tH_n$ is a normal subgroup.  Set
$J=\tH_n/\Spin_n$.  Then \eqref{eq:226}~follows from~\eqref{eq:19} and
\eqref{eq:227}~follows from the fact that the kernel of $\tH_n\to H_n$ equals
the kernel of $\tSH_n\to SH_n$.  This completes the proof of
Theorem~\ref{thm:5}.
  \end{proof}

  \begin{remark}[]\label{thm:8}
 Lemma~\ref{thm:7} is not true without using the anti-Wick rotation back to
Lorentzian signature.  Namely, let~$n=3$ and $H_3=\zt\ltimes(SO_3\times
SO_3)$, where the nontrivial element of~$\zt$ acts by shearing
$(g_1,g_2)\mapsto (g_1,g_1g_2)$; the homomorphism~$\rho _3$ that kills the
last factor~$K=SO_3$ maps $H_3\to O_3$ and sends the generator of~$\zt$ to
the central element $-1\in O_3$.  The reader can check that $\gamma
\:\Spin_3\to SO_3$ is surjective.  But $H_3$~is not a possible symmetry group
because of the anti-Wick rotation, as in the proof of Lemma~\ref{thm:7}.
  \end{remark}

If we restrict the internal symmetry group to only include the image of the
central element $-1\in \Spin_n$ under $\Spin_n\to H_n$, then there are five
possibilities.  In these cases $K$~is trivial or~$K\cong \pmo$.  Let $\mu
_4=\{\pm1,\pm\sqrt{-1}\}$ be the multiplicative group of fourth roots of
unity, and define~$E_n\subset O_n\times \mu _4$ as the subgroup
of~$(A,\lambda )$ such that $\det A=\lambda ^2$.

  \begin{proposition}[]\label{thm:65}
 Assume~$n\ge3$.  If the internal symmetry group~$K$ is trivial, then
$H_n\cong SO_n$ or $H_n\cong O_n$.  If $K\cong \pmo$ is cyclic of order two,
then there are six possibilities for~$H_n$ up to isomorphism: $SO_n\times
\pmo$, $\Spin_n$, $O_n\times \pmo$, $E_n$, $\Pp_n$, and~$\Pm_n$.
  \end{proposition}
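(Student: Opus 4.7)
The plan is to combine Proposition~\ref{thm:5} with a classification of central extensions. Since $K$ is trivial or $\pmo$, the group $\Aut(K)$ is trivial, so the conjugation action of $H_n$ on $K$ is trivial and $K\subset H_n$ is central. By~\eqref{eq:14}, $H_n$ is thus a central extension of the image of $\rho_n$ by $K$, and this image equals either $SO_n$ (no time-reversal) or $O_n$ (time-reversal).

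If $K$ is trivial, then $\rho_n$ is injective and $H_n\cong SO_n$ or $H_n\cong O_n$. If $K\cong\pmo$, Proposition~\ref{thm:5}(2) produces a central element $k_0\in K$ with $(k_0)^2=1$ and an isomorphism $SH_n\cong\Spin_n\times K/\langle(-1,k_0)\rangle$. Since $|K|=2$, either $k_0=1$---forcing $SH_n\cong SO_n\times K$---or $k_0\ne 1$, in which case $(s,\epsilon)\mapsto\epsilon s$ descends to an isomorphism $SH_n\cong\Spin_n$ identifying $-1\in\Spin_n$ with the nontrivial element of $K$. In the no-time-reversal subcase we have $H_n=SH_n$, which yields exactly the two entries $SO_n\times\pmo$ and $\Spin_n$ on the list.

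In the time-reversal subcase I would classify central extensions $1\to\pmo\to H_n\to O_n\to 1$ by $H^2(BO_n;\mathbb{Z}/2)$, which for $n\ge 3$ is a rank-two $\mathbb{F}_2$-vector space spanned by $w_1^2$ and $w_2$, giving four candidates. Restriction along $BSO_n\to BO_n$ kills $w_1$, so the classes $0$ and $w_1^2$ restrict to the split extension of $SO_n$ (matching $k_0=1$) and are realized by $O_n\times\pmo$ and $E_n$, while $w_2$ and $w_1^2+w_2$ restrict to the spin extension of $SO_n$ (matching $k_0\ne 1$) and are realized by $\Pp_n$ and $\Pm_n$. Pairing up these extensions accounts for the remaining four entries on the list.

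The main obstacle is confirming that $E_n$, $\Pp_n$, and $\Pm_n$ really do realize the claimed nontrivial cohomology classes---equivalently, that the four candidate groups are pairwise nonisomorphic as central extensions of $O_n$ by $\pmo$. This reduces to direct 2-cocycle computations: for $E_n$ one selects a set-theoretic section $A\mapsto(A,\lambda(A))$ with $\lambda(A)=\sqrt{-1}$ when $\det A=-1$ and checks that the associated cocycle is $w_1(A)\,w_1(B)$, representing $w_1^2$; for $\Pp_n$ and $\Pm_n$ the standard Clifford presentations show that reflection lifts square to $+1$ and $-1$ respectively, distinguishing the two spin-extending classes.
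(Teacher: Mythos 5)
Your proposal is correct and follows essentially the same route as the paper: both reduce the classification to computing the groups of central extensions $H^2(BSO_n;\pmo)\cong\zt$ and $H^2(BO_n;\pmo)\cong\zt\times\zt$ and then matching cohomology classes to the listed groups. The main difference is cosmetic: you invoke Proposition~\ref{thm:5}(2) directly for the $SO_n$ case and fill in explicit cocycle identifications that the paper leaves to the reader.
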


  \begin{proof}
 The first statement is clear from the fact that the image of~$\rho _n$
in~\eqref{eq:14} is either~$SO_n$ or~$O_n$.  The group extensions by~$\pmo$
are central and are classified up to isomorphism by the cohomology group
$H^2(BSO_n;\pmo)\cong \zt$ or $H^2(BO_n;\pmo)\cong \zt\times \zt$, depending
on the image of~$\rho _n$, and it is not difficult to work out what the
groups~$H_n$ are.
  \end{proof}

\noindent
 The non-identity element of~$K$ in $SO_n\times \pmo$, $O_n\times \pmo$,
and~$E_n$ is not the image of the central element~$-1\in \Spin_n$.  This
leaves the five basic symmetry types listed in the following table:
  \begin{equation}\label{eq:10}
     \begin{tabular}{ c@{\hspace{2em}} c@{\hspace{2em}} c@{\hspace{2em}}
     c@{\hspace{2em}} } 
     \toprule 
     states/symmetry&$H_n$&$K$&$\phantom{j}k_0$\\ \midrule \\[-8pt] 
     bosons only & $SO_n$&$\{1\}$&$\phantom{+}1$\\
     fermions allowed & $\Spin_n$&$\pmo$&$-1$\\
     bosons, time-reversal~($T$) & $O_n$&$\{1\}$&$\phantom{+}1$\\
     fermions, $T^2=(-1)^F$ & $\Pp_n$&$\pmo$&$-1$\\
     fermions, $T^2=\id$ & $\Pm_n$&$\pmo$&$-1$\\ \bottomrule
     \end{tabular} 
  \end{equation}
Appendix~\ref{sec:10} reviews the pin groups and justifies the Wick rotation
of time-reversal that leads to the last three lines in the first column of
the table.
 
The main result in this section is a stabilization of~$H_n$ for increasing
dimensions, as needed in Theorem~\ref{thm:110}.  Throughout this
paper for $k<\ell $ we use the embedding 
  \begin{equation}\label{eq:51}
     \begin{aligned} O_k&\longrightarrow O_\ell \\ A&\longmapsto
      \begin{pmatrix} I_{\ell-k }\\&A \end{pmatrix}\end{aligned} 
  \end{equation}
of orthogonal groups, where $I$~denotes the identity matrix.

  \begin{theorem}[]\label{thm:6}
 Assume~$n\ge 3$.  There exist compact Lie groups~$H_m$, $m>n$, and
homomorphisms~$i_n,\rho _n$ which fit into the commutative diagram
  \begin{equation}\label{eq:17}
     \begin{gathered} \xymatrix{H_n\ar@{^{(}->}[r]^{i_n}\ar[d]^{\rho _n}
     &H_{n+1}\ar@{^{(}->}[r]^{i_{n+1}}\ar[d]^{\rho _{n+1}}
     &H_{n+2}\ar@{^{(}->}[r]\ar[d]^{\rho _{n+2}} & \dots \\ O_n
     \ar@{^{(}->}[r]& O_{n+1} \ar@{^{(}->}[r]& O_{n+2} \ar@{^{(}->}[r]&
     \dots} \end{gathered} 
  \end{equation}
in which squares are pullbacks.  
  \end{theorem}

\noindent
  The stabilization is usually apparent, even when~$n=2$ and
Theorem~\ref{thm:6} does not apply.  For example, if $H_n=\Pp_n\ltimes
\TT\bigm/\langle (-1,-1) \rangle$, where $\Pp_n$~acts on~$\TT=U_1$ through
its components by conjugation, then $H_m=\Pp_m\ltimes \TT\bigm/\langle
(-1,-1) \rangle$.  (We encounter this and related groups in~\S\ref{sec:8}.)

  \begin{remark}[]\label{thm:30}
 For $m<n$, define $H_m$ and the homomorphism $\rho _m\:H_m\to O_m$ by a
pullback square:  
  \begin{equation}\label{eq:50}
     \begin{gathered} \xymatrix{H_m\ar@{-->}[r]^{} \ar@{-->}[d]_{\rho _m} &
     H_n\ar[d]^{\rho _n} \\ O_m\ar@{^{(}->}[r] & O_n} \end{gathered} 
  \end{equation}
  \end{remark}

  \begin{remark}[]\label{thm:33}
 The pullback diagram~\eqref{eq:17} and the fact that $\rho _{m+1}(H_{m+1})$
acts transitively on the $m$-sphere imply diffeomorphisms
  \begin{equation}\label{eq:52}
     H_{m+1}/H_m\cong O_{m+1}/O_m\cong S^m 
  \end{equation}
  \end{remark}

  \begin{proof}[Proof of Theorem~\ref{thm:6}]
 In view of~\eqref{eq:16}, for~$m>n$ define $SH_m:=\Spin_m\times K\bigm/
\langle (-1,k_0) \rangle$, and so obtain a stabilization over~$SO_m$.  If
$\rho _n(H_n)=SO_n$ this completes the proof.  If not, define~$\tH_m$ as the
pullback 
  \begin{equation}\label{eq:228}
     \begin{gathered} \xymatrix{
     1\ar[r]&K\ar[r]\ar@{=}[d]&\tH_m\ar@{->>}[d]\ar[r]&
     \Pp_m\ar@{->>}[d] \ar[r] & 1\\ 1\ar[r]&K\ar[r]&J\ar[r]& \pmo\ar[r] & 1}
     \end{gathered} 
  \end{equation}
and 
  \begin{equation}\label{eq:229}
     H_m\cong \tH_m \bigm / \langle(-1,k_0)\rangle. 
  \end{equation}
\vskip-2.8em
 \end{proof} 

\bigskip

Theorem~\ref{thm:6} allows us to speak about symmetry types in quantum field
theory independent of dimension.  Set 
  \begin{equation}\label{eq:136}
     H=\colim\limits_{n\to\infty }H_n. 
  \end{equation}
For $H_n=SO_n$ we obtain $H=SO_\infty =SO$.  Thus we can speak of `oriented
theories'=`$SO$ theories', `Spin theories', `$\Pp$ theories', etc.  The
colimit of~\eqref{eq:17} is a homomorphism 
  \begin{equation}\label{eq:278}
     \rho \:H\longrightarrow  O. 
  \end{equation}
The \emph{symmetry type} of a theory (Definition~\ref{thm:153}) can be taken
to be the pair~$(H,\rho )$ in place of~$(H_n,\rho _n)$.

  \subsection{Curved manifolds and bordism categories with $H_n$-structure}\label{subsec:12.2}

Fix an $n$-dimensional relativistic quantum field theory with symmetry
type~$(H_n,\rho _n)$.  A ``coupling to background gravity'' means that we
define the theory on each $n$-dimensional smooth Riemannian manifold~$X$.
The $H_n$-symmetry is no longer global; it is tangential and encoded in a
reduction of the orthonormal frame bundle to~$H_n$.  Let $\sBO(X)\to X$
denote the principal $O_n$-bundle of frames: a point of~$\sBO(X)$ is an
orthonormal basis of the tangent space at a point of~$X$.  If $P\to X$ is a
principal $H_n$-bundle, define the principal $O_n$-bundle $\rho _n(P)=P\times
\mstrut _{H_n}O_n\to X$ via mixing: $[ph,g]=[p,\rho _n(h)g]$ for all $p\in
P$, $g\in O_n$, and $h\in H_n$.

  \begin{definition}[]\label{thm:45}
 An \emph{$H_n$-structure} is a pair $(P,\theta )$ consisting of a principal
$H_n$-bundle $P\to X$ equipped with an isomorphism of principal $O_n$-bundles
$\sBO(X)\xrightarrow{\;\theta \;}\rho _n(P)$.  An \emph{$H_n$-manifold} is a
Riemannian $n$-manifold endowed with an $H_n$-structure.  A
\emph{differential $H_n$-structure} is a connection~$\Theta $ on $P\to X$
with the property that $\theta $~maps the Levi-Civita connection to~$\rho
_n(\Theta )$. 
  \end{definition}

\noindent
 It also makes sense to have an $H_n$-structure on a Riemannian manifold of
dimension~$\ell >n$, via the composition $H_n\xrightarrow{\rho
_n}O_n\hookrightarrow O_{\ell }$, and on a manifold of dimension~$k<n$ by
stabilizing the $O_k$-frame bundle to a principal $O_n$-bundle via the
inclusion $O_k\hookrightarrow O_n$.  The Stability Theorem~\ref{thm:6}
implies that an $H_n$-manifold has an induced $H_m$-structure for all~$m\ge
n$.  The same applies to the differential refinements.

  \begin{example}[]\label{thm:9}
 In bosonic theories of electromagnetism, $K=\TT$ is the group~$U_1$ of unit
norm complex numbers, at least in the absence of further global symmetries.
If there is no time-reversal symmetry, then $H_n=SO_n\times \TT$.  Thus $P\to
X$ is the fiber product of the frame bundle with a principal $\TT$-bundle,
which is usually equipped with a connection, or gauge field.  In theories of
electromagnetism with fermions we still have~$K=\TT$, but now the center
$-1\in \Spin_n$ of the spin group is identified\footnote{This assumes the
spin/charge relation that particles of even electromagnetic charge are bosons
while those of odd electromagnetic charge are fermions; see~\cite{SeWi} for
more discussion.} with~$-1\in \TT$ and so
  \begin{equation}\label{eq:134}
     H_n=\Spin^c_n = \Spin_n\times \TT\bigm/\pmo
  \end{equation}
is the group introduced in~\cite{ABS}.  In other words, the Riemannian
manifold~$X$ has a $\Spin^c$-structure.  If, in addition, there is
time-reversal symmetry, then there are several different extensions,
including the Atiyah-Bott-Singer group $\Pin^c_n$; see
Proposition~\ref{thm:60} for the complete classification.
  \end{example}

  \begin{example}[]\label{thm:141}
 For $H_n=O_n\times K$ an $H_n$-structure on a Riemannian manifold is an
auxiliary principal $K$-bundle, and a \emph{differential} $H_n$-structure is
a connection on that bundle.  For $H_n=\Spin^c_n$ the differential structure
is usually called a \emph{$\textnormal{spin}^c$~connection}.
  \end{example}

The basic properties of Wick-rotated correlation functions on all
\emph{compact} manifolds simultaneously are encoded in the powerful framework
of \emph{bordism categories}, following the fundamental work of
Segal~\cite{Se1} and Atiyah~\cite{A1}.  Topological field theories do not
depend on the metric, nor do they require differential structures, and for
the most part we focus on topological theories and so on topological bordism
categories.  The geometric case is used as motivation; we make some comments
in Remark~\ref{thm:142}.

For the topological bordism category~$\Bord_{\langle n-1,n \rangle}(H_n)$
defined in the next paragraph, we drop the connection.  We can also drop the
Riemannian metric, as just mentioned, and to do so we would replace the
compact Lie group~$H_n$ and homomorphism $\rho _n\:H_n\to O_n$ with a
canonically associated noncompact real Lie group~$\uH_n$ and homomorphism
$\uH_n\to GL_n\RR$.  We give the construction in Appendix~\ref{sec:14}.  Our
field theories are \emph{discrete} in the sense that the partition function
in $\CC$-valued and $\CC$~has the discrete topology.  Hence the theories
factor through the topological bordism category built with $\uH_n$-manifolds
in place of $H_n$-manifolds.  So we follow standard usage (``spin theories'',
etc.) and use the compact Lie group~$H_n$, but no connections.

Define a \emph{topological bordism category} $\Bord_{\langle n-1,n
\rangle}(H_n)$ as follows.  An object is a compact $(n-1)$-manifold~$Y$
without boundary, equipped with an $H_n$-structure $Q\to Y$ and an ``arrow of
time''.  To make sense of an $H_n$-structure on an $(n-1)$-manifold we
stabilize the tangent bundle of~$Y$ to a rank~$n$ bundle
$\underline{\RR}\oplus TY\to Y$ by summing with a trivial line bundle,
thought of as a normal direction into $n$~dimensions.  In this topological
setting the Riemannian metric is not present; in the geometric setting of
Remark~\ref{thm:142}, an object in a geometric bordism category is an
$(n-1)$-manifold with a germ of an embedding in an $n$-manifold.  The arrow
of time is a normal orientation.  In the topological setting only the
tangential information is relevant---we can drop the germ---and the arrow of
time is an orientation of the trivial subbundle $\underline{\RR}\to Y$ of
$\underline{\RR}\oplus TY\to Y$.  Nonetheless, even in this topological case
it is illuminating to use the product germ $(-\epsilon ,\epsilon )\times Y$
for some $\epsilon >0$ and replace $\underline{\RR}\oplus TY\to Y$ by the
tangent bundle to the germ.  A morphism $X\:Y_0\to Y_1$ is an equivalence
class of compact $n$-manifolds~$X$ with $H_n$-structure $P\to X$ and an
isomorphism of the boundary $\partial X\xrightarrow{\cong }Y_0\amalg Y_1$
with the disjoint union of the incoming~$Y_0$ and the outgoing~$Y_1$; the
equivalence relation is diffeomorphism commuting with all of the data.  The
isomorphisms include the $H_n$-structures and under those isomorphisms the
orientation of the trivial bundle~$\underline{\RR}\to Y_i$ must line up with
the incoming normal to the boundary for~$i=0$ and with the outgoing normal to
the boundary for~$i=1$.  In other words, the arrow of time is used to
distinguish incoming and outgoing boundary components of morphisms.
Composition of morphisms is gluing of bordisms.  There is a additional
commutative composition law on the category---disjoint union---and with this
structure $\bne$~is a \emph{symmetric monoidal category}.  See~\cite{L,CS}
for detailed accounts.

A Wick-rotated field theory is a linear representation of a bordism category.

  \begin{definition}[]\label{thm:42}
 A \emph{topological field theory with Wick-rotated vector symmetry
group~$H_n$} is a symmetric monoidal functor
  \begin{equation}\label{eq:25}
     F\:\bne\longrightarrow \Vect_{\CC} 
  \end{equation}
to the symmetric monoidal category of complex vector spaces under tensor
product.   
  \end{definition}

\noindent
 Much has been written about this definition, and we defer to previous
accounts---such as the original~\cite{A1} and the recent
survey~\cite[\S\S2--4]{F2} and the references therein---for more exposition.
Here we simply make the connection to point operators\footnote{These are
usually called `local operators' in the physical literature, but we use
`point' rather than `local' to distinguish point operators from line
operators and higher dimensional analogs, since those too are local.} and
their correlation functions.

  \begin{remark}[Vector spaces of point operators]\label{thm:40}
 The sphere~ $S^{n-1}$ is the link of a point in $n$~dimensions, i.e., it is
the boundary of a small ball about the point.  Therefore, the vector
space~$V:=F(S^{n-1})$ is the space of point operators in a topological field
theory; in a geometric theory we take a limit as the radius of the sphere
shrinks to zero.  If the theory has total symmetry group~$H_n$, then the
sphere has an $H_n$-structure and the vector space of point operators depends
on it.  If $H_n=SO_n\times K$ or $H_n=O_n\times K$, the extra data is a
principal $K$-bundle ~$Q\to S^{n-1}$ (with connection).  So there is a vector
space~$V_Q$ of point operators for each~$Q$.  The group~$\Aut Q$ of global
gauge transformations acts on~$V_Q$.  For the trivial $K$-bundle this is the
familiar representation of the global symmetry group~$K$ on local operators.
If $K$~is finite, then the ``twist operators'' for $Q\to\cir$ nontrivial are
familiar in~$n=2$.  They are also familiar when $H_2=\Spin_2$, in which case
the operators associated to the nonbounding spin circle create a defect at
the excised point which changes the spin structure on the punctured surface.
In $n=3$~dimensions, if $H_3$~is a Cartesian product of~ $SO_3$ and~ $K=\TT$,
then the twist operators in some sense create a magnetically charged
instanton for the global symmetry group~$K$; the $\ZZ$-grading from the
action of~$K$ on the point operators measures the electric charge.
  \end{remark}

  \begin{remark}[Correlation functions of point operators]\label{thm:41}
 Let $M$~be a closed $n$-manifold. Fix points $x_1,\dots ,x_k$ of~$M$ at
which we place local operators.  Let $X$~be the compact manifold with
boundary obtained from~$M$ by removing small open balls about each~$x_i$;
regard~$X$ as a bordism  
  \begin{equation}\label{eq:135}
     X\:\bigsqcup\limits_i S^{n-1}(x_i)\longrightarrow \emptyset ^{n-1} 
  \end{equation}
from the disjoint union of the $k$~boundary spheres to the empty manifold.
Equip the manifold~$X$ with an $H_n$-structure~$P$, and let $Q_i$~denote
its restriction to the $i^{\textnormal{th}}$~sphere.  Applying the
theory~\eqref{eq:25} we obtain a homomorphism
  \begin{equation}\label{eq:26}
     F(X;P)\:\underbrace{V_{Q_1}\otimes \cdots\otimes
     V_{Q_k}}_{\textnormal{$k$~times}}\longrightarrow \CC  
  \end{equation}
which, evaluated on operators~$\mathcal{O}_1,\dots ,\mathcal{O}_k$, is usually
written $\langle \mathcal{O}_1(x_1)\cdots \mathcal{O}_k(x_k)  \rangle\mstrut
_M$.   
  \end{remark}

  \begin{figure}[ht]
  \centering
  \includegraphics[scale=1.1]{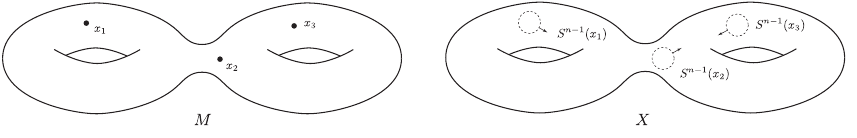}
  \caption{Correlation functions}\label{fig:6}
  \end{figure}

  \begin{remark}[Remark about non-topological theories]\label{thm:142}
 Wick-rotated field theories which are not topological can also be formulated
as functors on bordism categories, but now the objects and morphisms have a
geometric structure.  The references~\cite{Se2,KS,ST} develop this idea in
various directions.  We confine ourselves here to a few heuristic formal
remarks.  Analogous to the topological bordism category~ $\Bord_{\langle
n-1,n \rangle}(H_n)$ we envision a \emph{geometric bordism category}
$\Bord^{\nabla }_{\langle n-1,n \rangle}(H_n)$ whose objects and morphisms
are smooth manifolds with \emph{differential} $H_n$-structures
(Definition~\ref{thm:45}).  An object is a closed $(n-1)$-manifold equipped
with an infinite jet of an embedding into an $n$-dimensional manifold with
differential $H_n$-structure and an arrow of time.  A morphism is a compact
$n$-manifold with differential $H_n$-structure together with a partition of
the boundary and boundary isomorphisms as in the topological case.  As in the
topological case~\eqref{eq:25}, a field theory is a functor with domain
$\Bord^{\nabla }_{\langle n-1,n \rangle}(H_n)$ and codomain a suitable
symmetric monoidal category of topological vector spaces.  We want the
correlation functions and vector spaces to vary smoothly in smooth families,
so the whole structure must be ``sheafified'' over the category of smooth
manifolds and smooth maps~\cite[\S2]{ST}.
  \end{remark}

   \section{Unitarity and Wick rotation}\label{sec:2}

We recall in~\S\ref{subsec:2.1} how positivity of energy leads to Wick
rotation in quantum mechanics, and describe reflection positivity in that
context.  The usual quantum mechanical context for reflection positivity is
recollected in~\S\ref{subsec:2.2}, with attention paid to nontrivial internal
symmetry groups.  These preliminaries are motivation for~\S\ref{subsec:2.3},
where we encode the reflection structure in a novel way via a coextension of
the Wick-rotated vector symmetry group to a $\zt$-graded group, constructed
from a hyperplane reflection.  (We give a topological account of the
construction in Appendix~\ref{sec:16}.)  The new components act antilinearly
on the Hilbert space of states.  It is this formulation that we use in the
rest of the paper.

  \subsection{Wick rotation in quantum mechanics}\label{subsec:2.1}

A quantum mechanical system, according to basic axioms, consists of a complex
separable Hilbert space~$\sH$ equipped with a self-adjoint operator~$H$, the
Hamiltonian.  The group~$\RR$ of time translations is represented unitarily
on~$\sH$:  
  \begin{equation}\label{eq:1}
     \begin{aligned} \RR&\longrightarrow U(\sH) \\ t&\longmapsto
      e^{-itH/\hbar},\end{aligned} 
  \end{equation}
where $i$~is a choice of complex number such that $i^2=-1$.  If we assume
positivity of energy---that $H$~is a nonnegative self-adjoint operator---then
real time evolution~\eqref{eq:1} is the boundary value of a holomorphic
semigroup of bounded operators defined on the lower half plane $\sT=\RR -
\sqmo\, \RR^{>0}\subset \CC$.  The semigroup of \emph{imaginary time
evolution} is the restriction to $-\sqmo\,\RR^{>0}$, which is the semigroup
  \begin{equation}\label{eq:3}
     \tau \longmapsto e^{-\tau H/\hbar},\qquad \tau >0. 
  \end{equation}
The transition from~\eqref{eq:1} to~\eqref{eq:3} is called \emph{Wick
rotation}.   
 
The unitarity of time evolution manifests in the reality of the
semigroup~\eqref{eq:3}.

  \begin{example}[Particle on the circle]\label{thm:1}
 Let $\AA^1$~denote the affine\footnote{We (pedantically) distinguish the
affine time line~$\AA^1$ from the group~$\RR$ of translations of time, which
appears in~\eqref{eq:1}: after all, a 1-hour seminar and a seminar ending
at~1:00 can be quite different.} time line.  The trajectory of a particle on
the circle is a function $\lambda (s)=e^{ix(s)},\;s\in \AA^1$; the lagrangian
density is $L=\frac 12 \dot x^2\,|ds|$.  The ensuing quantum mechanical
system has Hilbert space $\sH=L^2(S^1;\CC)$, Hamiltonian the Laplace
operator~$H=\Delta $ (up to a constant), and imaginary time evolution the
heat operator $\tau \mapsto e^{-\tau \Delta }$.

It is illuminating to add a ``$\theta $-angle'' to this system;
see~\cite[Appendix~D]{GKKS}, for example.
Orient~$\cir$ and fix $\omega \in \Omega ^1(\cir)$ with $\int_{\cir}\omega
=1$.  Then for a fixed constant~$\theta \in \RR$ define the lagrangian
  \begin{equation}\label{eq:4}
     L = \frac 12\dot x^2\,|ds| - \theta \lambda ^*(\omega ). 
  \end{equation}
In this classical theory we must orient time in order to integrate~$L$;
time-reversal exchanges the theories labeled by~$\theta $ and~$-\theta $.
Upon quantization we obtain the Hilbert space $\sH=L^2(\cir;\sL_{e^{i\theta
}})$ of sections of the complex line bundle~$\sL_{e^{i\theta }}$ with
holonomy~$e^{i\theta }$.  The Hamiltonian is the Laplace operator on this
space, and imaginary time evolution is by the associated heat operator.  Now
time-reversal ($\theta \mapsto -\theta $) acts as complex conjugation:
  \begin{align}
     \sH&\longmapsto \overline{\sH}\label{eq:5}\\
     e^{-\tau \Delta }&\longmapsto \overline{e^{-\tau
      \Delta }}\label{eq:8}
  \end{align} 
  \end{example}

We encode the formal structure in terms of oriented compact Riemannian
1-manifolds, as described in~\S\ref{subsec:12.2}, though we emphasize that
this is not a topological theory.  The interval of length~$\tau >0$ maps to
the imaginary time evolution $e^{-\tau H/\hbar}\:\sH\to \sH$.  The semigroup
law is manifest by gluing intervals.  The circle of length~$\tau $ maps to
$\Trace(e^{-\tau H/\hbar})\in \CC$.  We interpret these oriented Riemannian
1-manifolds as morphisms in a geometric bordism category whose objects are,
roughly, compact oriented 0-manifolds.  More precisely, they are 0-manifolds
embedded in the germ of an oriented Riemannian 1-manifold, and there is an
arrow of time, or orientation of the normal bundle.  The simplest object is a
single point, which we can view as~$0\in \RR$ embedded in a small
interval~$(-\epsilon ,\epsilon )$ with its standard orientation; in the
quantum mechanics it maps to the Hilbert space~$\sH$.  According
to~\eqref{eq:5} we have
  \begin{equation}\label{eq:6}
     \textnormal{orientation-reversal}\longmapsto \textnormal{complex
     conjugation} 
  \end{equation}
More precisely, the orientation-reversal on objects in the geometric bordism
category reverses the orientation and reverses the arrow of time.  This is
the `reflection' part of `reflection positivity'; the positivity is the
positive definiteness of the Hilbert space~$\sH$.

  \subsection{Reflection positivity in Euclidean quantum field theory}\label{subsec:2.2}

Positivity of energy in a relativistic quantum field theory also results in
an analytic continuation and restriction to Euclidean space, as we review
in~\S\ref{subsec:a3.2}.  Here we focus on the Wick rotation of correlation
functions and the Wick rotation of unitarity as manifested in reflection
positivity.  (See~\cite[\S6]{GJ}, \cite[\S2.2]{Kaz} for an account.)  Let
$n$~be the spacetime dimension and $\EE^n$~Euclidean $n$-space.  In this
subsection we restrict to the basic symmetry type~$H_n=SO_n$; we take up
general symmetry types in the next subsection (see Remark~\ref{thm:154}).
Fix an affine hyperplane~$\Pi \subset \EE^n$ and let $\sigma $~denote
(affine) reflection about~$\Pi $.  Let $\sO$~denote an operator, or product
of operators, in the quantum theory which is supported in the open
half-space~$\Hp$ on one side of~$\Pi $; the reflected operator~$\sigma (\sO)$
has support in the complementary half-space~$\Hm$.  Let $\langle \sO
\rangle\mstrut _{\Hp}\in \sH$ denote the half-space correlation function,
which is a vector in the Hilbert space of the theory.  In a lagrangian field
theory it is the functional integral over the half-space~$\Hp$.  Then the
\emph{reflection} part of `reflection positivity' is
  \begin{equation}\label{eq:7}
     \langle \sigma (\sO) \rangle\mstrut _{\Hm} = \overline{\langle \sO
     \rangle\mstrut _{\Hp}} 
  \end{equation}

  \begin{figure}[ht]
  \centering
  \includegraphics[scale=.75]{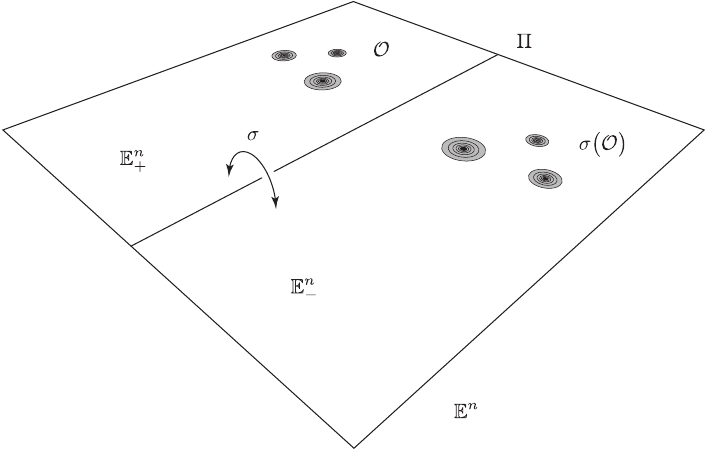}
  \caption{Reflection positivity in Euclidean space}\label{fig:2}
  \end{figure}

\noindent
 in accordance with~\eqref{eq:6}; see~\eqref{eq:8} for the analog in quantum
mechanics.  The Hilbert space~$\sH$ is associated to~$(\Pi ,\mathfrak{o})$,
where $\mathfrak{o}$~is an orientation of the normal line to~$\Pi $, the
arrow of time in~\S\ref{subsec:12.2}.  The reflection~$\sigma $
reverses~$\mathfrak{o}$, and the Hilbert space associated to~$(\Pi
,-\mathfrak{o})$ is the complex conjugate
  \begin{equation}\label{eq:42}
     \sH_{(\Pi ,-\mathfrak{o})} \xrightarrow{\;\;\cong \;\;}
     \overline{\sH_{(\Pi ,\mathfrak{o})}}, 
  \end{equation}
according to the dictum~\eqref{eq:6}; cf.~\eqref{eq:5}.  Therefore, $\langle
\sigma (\sO) \rangle\mstrut _{\Hm} \!\!\in \overline{\sH}$ and \eqref{eq:7}~
is an equation in the complex conjugate Hilbert space~$\overline{\sH}$.  The
\emph{positivity} part of `reflection positivity' is the positive
definiteness of~$\sH$, which implies that the norm square of the
vector~$\langle \sO \rangle\mstrut _{\Hp}$ is nonnegative:
  \begin{equation}\label{eq:9}
     \langle \sigma (\sO)\,\sO \rangle\mstrut _{\EE^n}\ge0 
  \end{equation}
A theorem of Osterwalder-Schrader~\cite{OS} reconstructs the relativistic
theory in Minkowski spacetime from the Euclidean theory; reflection
positivity is an important ingredient. 

  \begin{remark}[]\label{thm:160}
 In theories with fermionic states the Hilbert space~$\sH$ is $\zt$-graded.
The norm square of an odd vector is then purely imaginary~\cite[\S4.4]{DM}
and positive definiteness requires a sign choice; see Example~\ref{thm:159}
for details in the invertible case.
  \end{remark}

  \begin{remark}[Internal symmetry and reflection positivity]\label{thm:43}
 Suppose the full Wick-rotated vector symmetry group~$H_n$ has a nontrivial
internal symmetry group~$K$, and for simplicity take $H_n=SO_n\times K$.  Let
$X$~be Euclidean space with an open neighborhood of the support of the
operators~$\sO$, $\sigma (\sO)$ removed.  Let $Y=\bX\cap \HH_+$ and assume
$\sigma (Y)=\bX\cap\HH_-$.  In general there are twist operators that are
defined by a principal $K$-bundle $\sQ\to X$, as in Remark~\ref{thm:40}.  The
reflection~$\sigma $ must account for the $K$-bundle, and it might seem at
first that $\sigma $~should ``reverse'' it by an involution on~$K$.  But that
does not happen; rather $\sigma $ lifts to $\sQ\to X$.  We give three
arguments.

 \begin{enumerate}

\item If $\sO$~is a point operator, then $Y$~is a sphere.  Identifying
$\sigma (Y)$~with $Y$ via a translation, $\sigma $~acts on~$Y$ as reflection
in the equatorial plane parallel to~$\Pi $.  If we one-point compactify~$X$
to~$S^n$ minus the two balls and assume $\sQ $~extends over the
compactification, then the restrictions of $\sQ $ to~$Y$ and~$\sigma (Y)$ are
isomorphic, since the compactification is diffeomorphic to~$[0,1]\times
S^{n-1}$.

 \item Continuing, suppose $\sQ \to X$ is the trivial bundle and $V$~is the
vector space of local operators attached to~$Y$.  (In a geometric theory we
take a limit as the radius of the removed ball shrinks to zero.)  The
automorphism group~$K$ of the trivial bundle over~$Y$ acts on~$V$, producing
$K$-multiplets of point operators.  The hyperplane reflection~$\sigma $
induces an isomorphism $V\to\overline{V}$ that commutes with the $K$-action,
since geometrically the lift of reflection to the trivial bundle commutes
with the global gauge transformations.  So a $K$-multiplet in~$V$ is mapped
to a $K$-multiplet in~$\overline{V}$ that transforms in the complex
conjugate representation.

 \item Let $n=1$ and $H_1=SO_1\times \zmod3$.  Let $\alpha \:\Bord_{\langle
0,1 \rangle}(H_1)\to\Vect_{\CC}$ be the invertible theory which attaches a
nontrivial character $\chi \:\zmod3\to\TT$ to the positively oriented point
with its trivial $\zmod3$~bundle.  (That object~$Y$ of the bordism category
has automorphism group~$\zmod3$, which then acts on the vector space $\alpha
(Y)$.)  This theory is unitary.  Now $\alpha (\sQ \to\cir)$~is $\chi $~applied
to the holonomy of the principal $\zmod3$-bundle $\sQ \to S^1$.  Reflection
reverses the orientation of~$\cir$, and if the bundle stays the same under
reflection, then the holonomy complex conjugates, which is precisely what it
should do in a reflection positive theory.

 \end{enumerate}
  \end{remark}

  \subsection{The extended symmetry group~$\hH_n$}\label{subsec:2.3}

Let $(H_n,\rho _n)$ be a symmetry type (Definition~\ref{thm:153}).  We use
reflection symmetry~\eqref{eq:7} to construct a larger symmetry group~$\hH_n$
from~$H_n$ by adjoining an involution.  In the special case~$H_n=\Spin_n$, we
define $\hH_n=\Pp_n$; the general case is a bootstrap from this, following
the proof of Theorem~\ref{thm:6}.  The arguments in Remark~\ref{thm:43}
motivate the triviality of the hyperplane reflection automorphism of~$K$ in
our construction.  We view~$\hH_n$ as a symmetry group of the Euclidean
quantum field theory; the action of an element in $\hH_n\setminus H_n$ on the
Hilbert space~$\sH$ is by an anti-unitary transformation.

  \begin{proposition}[]\label{thm:10}
 There exists a canonical group extension
  \begin{equation}\label{eq:30}
     1\longrightarrow H_n\xrightarrow{\;\;j_n\;\;} \hH_n\longrightarrow
     \pmo\longrightarrow 1 ,
  \end{equation}
split (noncanonically) by a choice of hyperplane reflection~$\sigma \in
O_n$, such that the splitting induces the automorphism of~$\tSH_n\cong
\Spin_n\times K$ that is the product of conjugation by~$\sigma $
on~$\Spin_n$ and the identity automorphism of~$K$.  There is a
homomorphism~$\hat\rho _n$ that fits into the pullback diagram
  \begin{equation}\label{eq:35}
     \begin{gathered} \xymatrix{H_n\ar[r]^{j_n} \ar[d]_{\rho _n} &
     \hH_n\ar[d]^{\hat\rho _n} \\ O_n\ar[r]^{} & \pmo\times O_n} \end{gathered} 
  \end{equation}
Finally, there are inclusions $\hat\imath_n\:\hH_n\to\hH_{n+1}$ which, together
with the inclusions $i_n\:H_n\to H_{n+1}$, induce a commutative diagram linking
~\eqref{eq:35} for varying~$n$.
  \end{proposition}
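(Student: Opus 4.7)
The plan is to mimic the bootstrap construction in the proof of Theorem~\ref{thm:6}, now adjoining a new hyperplane reflection to $H_n$ rather than stabilizing up one dimension. First I would unpack the desired induced automorphism. By Proposition~\ref{thm:5}, $\tSH_n \cong \Spin_n \times K$ and $SH_n = \tSH_n / \langle (-1, k_0)\rangle$. Fix a lift $\tilde\sigma \in \Pp_n$ of $\sigma \in O_n$ with $\tilde\sigma^2 = 1$; conjugation by $\tilde\sigma$ defines an involution $\alpha$ of $\Spin_n$, and $\alpha \times \id_K$ is an involution of $\tSH_n$ which fixes the central kernel $\langle(-1, k_0)\rangle$ and so descends to an involution of $SH_n$.

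In the case $\rho_n(H_n) = SO_n$, so $H_n = SH_n$, I would set
\[
\hH_n := (\Pp_n \times K)/\langle (-1, k_0)\rangle
\]
with the direct product group law, $j_n$ induced from $\Spin_n \hookrightarrow \Pp_n$, and $\hat\rho_n([p, k]) := (\deg p,\, \pi(p))$ where $\deg\: \Pp_n \to \pmo$ is the component map and $\pi\: \Pp_n \to O_n$ the standard projection. The splitting associated to $\sigma$ is $-1 \mapsto [\tilde\sigma, 1]$, and a direct conjugation computation in $\Pp_n \times K$ shows that it acts on $\tSH_n$ as $\alpha \times \id_K$.

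In the case $\rho_n(H_n) = O_n$, $H_n$ already contains a lift of a reflection, and I still need to adjoin a \emph{separate} copy of $\zt$. I would mimic~\eqref{eq:20}--\eqref{eq:21}: using the double cover $\tH_n$ of~\eqref{eq:19}, define
\[
\hat{\tH}_n := (\Pp_n \times K) \times_{(\Spin_n \times K)} \tH_n
\]
and equip it with a group law patterned on~\eqref{eq:21}, now with $\alpha$ as above, the trivial automorphism of $K$ in place of $\beta$, and $\tilde\sigma \in \Pp_n$ playing the role of $\tih$. Set $\hH_n := \hat{\tH}_n / \langle (-1, k_0)\rangle$, with $j_n$ and $\hat\rho_n$ defined analogously.

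The remaining verifications are associativity and the inverse formulas (parallel to~\eqref{eq:21}--\eqref{eq:22}); that~\eqref{eq:35} is a pullback, which comes down to identifying $H_n$ as the preimage of $\{1\} \times O_n$ under $\hat\rho_n$; and that the splitting induces the stated automorphism of $\tSH_n$, which is manifest from the construction. The stabilization $\hat\imath_n\: \hH_n \to \hH_{n+1}$ comes from $\Pp_n \hookrightarrow \Pp_{n+1}$ together with $i_n$ from Theorem~\ref{thm:6}, assembled via the same mixing construction, and the diagram linking~\eqref{eq:35} for varying $n$ commutes by naturality. The main obstacle is associativity in the second case, parallel to the check after~\eqref{eq:21}; the key technical input is a Lemma~\ref{thm:7}-style vanishing that ensures the original reflection already in $H_n$ and the newly adjoined one act compatibly on $K$.
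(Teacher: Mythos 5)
Your overall strategy matches the paper's: in the case $\rho_n(H_n)=SO_n$ set $\hH_n = (\Pp_n \times K)/\langle(-1, k_0)\rangle$, and in the surjective case run the mixing construction over $\Spin_n \times K$, reuse formulas~\eqref{eq:21}--\eqref{eq:22} for the group law, and then quotient by $\langle(-1, k_0)\rangle$. Your choice of $\tH_n$ rather than $H_n$ in the mixing is correct, and is in fact what one needs for the final quotient to be by a nontrivial subgroup of order two. The $\rho_n(H_n)=SO_n$ case is fine as written.

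However, your prescription for the group law in the surjective case is wrong, and the error is a conflation of two distinct reflections. You propose to replace $\beta$ by the identity automorphism of $K$ and $\tih$ by $\ts\in\Pp_n$; neither substitution is valid. The element $\tih$ belongs in the \emph{third} slot of the mixing: it is the fixed lift in $\tH_n$ of $e_1\in\Pp_n$ through the cover~\eqref{eq:19}, and it represents the reflection \emph{already present} in $H_n$. The \emph{new} reflection being adjoined sits in the \emph{first} slot, as $e_1\in\Pp_n$; the splitting is $-1\mapsto [e_1,1;1]$, and conjugation by $[e_1,1;1]$ fixes $K$ simply because $\Pp_n\times K$ is a direct product. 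That computation is what yields the proposition's assertion about the induced automorphism of $\tSH_n$ --- it is not a license to set $\beta = \id$. Indeed $\beta\colon K\to K$ is conjugation by $\tih$ and can be genuinely nontrivial; for $H_n = \Pcp_n$ it is inversion on $K=\TT$. With $\beta$ replaced by the identity, formulas~\eqref{eq:21} no longer compute the multiplication on the mixing set, and the result would not be $\hH_n$. The only change actually needed to~\eqref{eq:21} is that the $s_i$ now range over $\Pp_n$, so $\alpha$ must be extended from $\Spin_n$ to an involution of $\Pp_n$; the paper takes twisted conjugation by $e_1$, twisted by the character~\eqref{eq:32}, a choice you leave unspecified.

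Finally, no new Lemma~\ref{thm:7}-style vanishing is needed. Associativity of~\eqref{eq:21} rests on Lemma~\ref{thm:7} (already established for $\tih\in\tH_n$ in the proof of Theorem~\ref{thm:6}) together with the identity $\tih^2k\tih^{-2}=\beta^2(k)$, both of which carry over unchanged; the new reflection's triviality on $K$ is automatic from the direct-product structure of $\Pp_n\times K$, so there is no extra compatibility between the two reflections to verify.
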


\noindent
 A hyperplane reflection~$\sigma \in O_n$ induces an automorphism of~$SO_n$
by conjugation in~$O_n$, and it lifts uniquely to an automorphism
of~$\Spin_n$, which is realized as conjugation by~$\ts\in \Pp_n$, where
$\ts$~is a lift of~$\sigma $.  However, it is the \emph{twisted}
conjugation~\cite[\S3]{ABS} by~$\ts$ in~$\Pp_n$ that lifts conjugation
by~$\sigma$ in~$O_n$, where the twist is multiplication by the nontrivial
character
  \begin{equation}\label{eq:32}
     \Pp_n\longrightarrow \pi _0\Pp_n\xrightarrow{\;\;\cong \;\;}
     \pmo.
  \end{equation}
Note $\ts$~is only determined up to sign; the splitting of~\eqref{eq:30}
associated to~$\sigma $ is determined up to multiplication by~$k_0$.

See Appendix~\ref{sec:16} for an alternative approach to Theorem~\ref{thm:10}
using homotopy theory.

  \begin{proof}\hskip-4pt\footnote{Our original proof had an error.  We thank
Peter Teichner for bringing it to our attention.}  \;Define
  \begin{equation}\label{eq:31}
     \hSH_n = \Pp_n\times K\bigm/ \langle (-1,k_0) \rangle 
  \end{equation}
and project onto~$\pi _0\Pp_n$ to define the quotient map in the extension
  \begin{equation}\label{eq:33}
     1\longrightarrow SH_n\longrightarrow \hSH_n\longrightarrow
     \pmo\longrightarrow 1 
  \end{equation}
If $\rho _n(H_n)=SO_n$, then set $\hH_n=\hSH_n$.  In general, define~$\hH_n$
as the pullback
  \begin{equation}\label{eq:e12}
     \begin{gathered} \xymatrix{\hH_n\ar@{-->}[r]^{\phi _n} \ar@{-->}[d]_{\hrn} &
     H_{n+3}\ar[d]^{\rho _{n+3}} \\ \bmut\times O\mstrut _n\;\ar@{^{(}->}[r]^{}
     & \;O\mstrut _{n+3}} \end{gathered} 
  \end{equation}
where the bottom map is 
  \begin{equation}\label{eq:e13}
     (\epsilon ,A)\longmapsto \begin{pmatrix} A&0\\0&\epsilon I_3
     \end{pmatrix} 
  \end{equation}
in an $(n+3)\times (n+3)$ block decomposition.  Here $\bmu \ell \subset
\TT$~denotes the group of $\ell ^{\textnormal{th}}$~roots of unity.  Use this
embedding to define the subgroup $S\bigl(\bmut\times O_n \bigr)$ of
$\bmut\times O_n $ as the intersection
  \begin{equation}\label{eq:e23}
     S\bigl(\bmut\times O_n \bigr) := \bigl(\bmut\times O_n\bigr)\cap
     SO_{n+3}. 
  \end{equation}
Then define
  \begin{equation}\label{eq:e22}
     \hSH_n' := \hrn\inv \left(S\bigl(\bmut\times SO_n\bigr)\right)\,\subset
     \,\hH_n.  
  \end{equation}

Construct the homomorphism~$j_n$ by mapping into the pullback
square~\eqref{eq:e12}:
  \begin{equation}\label{eq:e14}
     \begin{gathered} \xymatrix{H_n\ar@/^1.7pc/[rr] 
     \ar@{-->}[r]^{j_n} \ar[d]_{\rho _n} &
     \hH_n\ar[r]^{\phi _n} \ar[d]_{\hrn} &
     H_{n+3}\ar[d]^{\rho _{n+3}} \\ O_n\ar@{^{(}->}[r]^{}&
     \bmut\times O\mstrut _n\;\ar@{^{(}->}[r]^{}
     & \;O\mstrut _{n+3}} \end{gathered} 
  \end{equation}
The bottom left map is $A\mapsto (1,A)$, and the top curved map is the
inclusion in~\eqref{eq:17}.  The group extension~\eqref{eq:30} can be read off
from~\eqref{eq:e14}.

We claim that $\hSH_n$, as defined in~\eqref{eq:31}, is isomorphic
to~$\hSH_n'$, as defined in~\eqref{eq:e22}.  To prove this, pull
back~\eqref{eq:e12} over the homomorphism $\Spin_{n+3}\to O_{n+3}$ into the
southeast corner to obtain
  \begin{equation}\label{eq:e17}
     \begin{gathered} \xymatrix{\Pp_n\times K\ar@{-->}[r]^{}
     \ar@{-->}[d]_{\pr_1} & 
     \widetilde{SH}_{n+3}\rlap{$=\Spin_{n+3}\times K$}\ar[d]^{\pr_1} \\
     \Pp_n\;\ar@{^{(}->}[r]^{} 
     & \;\Spin _{n+3}} \end{gathered} 
  \end{equation}
To check that the pullback has the indicated form, embed $\Pp_n\subset
\Cliff_n^+$ and $\Spin_{n+3}\subset \Cliff_{n+3}^+$ as usual.  Then
from~\eqref{eq:e13} we find that the bottom map sends~$e_i$ to~$\pm
e_ie_{n+1}e_{n+2}e_{n+3}\in \Spin_{n+3}$, and we observe that
$(e_ie_{n+1}e_{n+2}e_{n+3})^2=+1$, which explains why the southwest corner
is~$\Pp_n$ rather than~$\Pm_n$.  The northeast corner is computed by
Theorem~\ref{thm:5}(2).  Finally, \eqref{eq:e17}~is a pullback diagram, since
\eqref{eq:e12}~is, and this determines the northwest corner.  To recover
~\eqref{eq:31} from~\eqref{eq:e22}, observe that $\widehat{SH}_n'$~is the
northwest corner of the pullback of~\eqref{eq:e12} over the homomorphism
$\SO_{n+3}\to O_{n+3}$.  The isomorphism~\eqref{eq:16} implies that the
northeast corner of this pullback is obtain by dividing the northeast corner
of~\eqref{eq:e17} by the order~2 subgroup~$\langle (-1,k_0) \rangle$.  The
top homomorphism of~\eqref{eq:e17} is an injection, and now
\eqref{eq:31}~follows.
 
If $\sigma \in O_n$ is a hyperplane reflection, and $\xi \in \Pp_n$ is a
lift (which is determined up to a sign), then define the splitting
of~\eqref{eq:30} to send $-1\in \{\pm1\}$ to the element ~$\hhat\in \hH_n$
which in~\eqref{eq:e12} satisfies
  \begin{equation}\label{eq:e18}
     \begin{aligned} \hrn(\hhat)&=(-1,\sigma ) \\ \phi _n(\hhat)&=[\xi
      e_{n+1}e_{n+2}e_{n+3},1]\,\in SH_{n+3}=\Spin_{n+3}\times
     K\!\bigm/\!\langle (-1,k_0) 
      \rangle\end{aligned} 
  \end{equation}
Note that $\hhat^2=1$.  Conjugation by~$\hhat$ on~$\widetilde{SH}_n$ induces the
automorphism stated in Proposition~\ref{thm:10}.

Finally, define~$\hin$ using the definition of~$\hH_{n+1}$ as a pullback: 
  \begin{equation}\label{eq:e19}
     \begin{gathered} \xymatrix@C+1pc{&H_{n+3}\ar[dr]^{i_{n+3}}\\
     \hH_n\ar[ur]^{\phi 
     _n}\ar[d]_{\hrn}\ar@{-->}[r]^{\hin} & \hH_{n+1} \ar[r]^{\phi
     _{n+1}}\ar[d]^{\hat{\rho }_{n+1}} & H_{n+4}\ar[d]^{\rho _{n+4}} \\
     \bmut\times O\mstrut _n\;\ar@{^{(}->}[r]^{} &\bmut\times O\mstrut
     _{n+1}\;\ar@{^{(}->}[r]^{} &O\mstrut _{n+4}} \end{gathered} 
  \end{equation}
Recall our convention~\eqref{eq:51} for the embedding $O_n\hookrightarrow
O_{n+1}$.  This, together with~\eqref{eq:e13}, makes
  \begin{equation}\label{eq:e24}
     \begin{gathered} \xymatrix{\bmut\times O_n\ar[r]^{} \ar[d]_{} &
     O_{n+3}\ar[d]^{} \\ \bmut\times O_{n+1}\ar[r]^{} & O_{n+4}}
     \end{gathered} 
  \end{equation}
a commutative diagram, and this is the key observation necessary to
construct the commutative diagram indicated in the final sentence of
Proposition~\ref{thm:10}, whose proof is now complete.
  \end{proof}

  \begin{remark}[]\label{thm:154}
 Now we formulate reflection positivity on Euclidean space for a theory with
symmetry type~$(H_n,\rho _n)$.  Adjoining translations via the pullback
  \begin{equation}\label{eq:c97}
     \begin{gathered} \xymatrix{
     1\ar[r]&K\ar[r]\ar@{=}[d]&\sH_n\ar@{-->}[d]^{}\ar@{-->}[r]^{}&
     \Euc_n\ar[d]^{ 
     } \ar[r] & 1\\ 1\ar[r]&K\ar[r]&H_n\ar[r]^{\rho _n}& O_n\ar[r] & 1}
     \end{gathered} 
  \end{equation}
we obtain a larger group~$\sH_n$ and a homomorphism $\sH_n\to\Euc_n$ to the
Euclidean group.  The complex point observables form a vector bundle
$\sO\to\EE^n$, and the action of~$\Euc_n$ on~$\EE^n$ lifts to an action
of~$\sH_n$ on~$\sO$.  Proposition~\ref{thm:10} gives a coextension~$\hsH_n$
of~$\sH_n$ and a homomorphism $\hsH_n\to\pmo\times \Euc_n$.  As before fix a
hyperplane reflection~$\sigma $ and now fix a lift $\hs\in \hsH_n$ of
$(-1,\sigma )\in \pmo\times \Euc_n$.  Then part of the data of a reflection
structure is a lift of~$\hs$ to an antilinear map of the complex vector
bundle $\sO\to\EE^n$.  Therefore \eqref{eq:7}--\eqref{eq:9} apply, with $\hs$
replacing~$\sigma $.
  \end{remark}

  \begin{proposition}[]\label{thm:148}
 For each~$n\ge 1$ there is an inclusion of group extensions 
  \begin{equation}\label{eq:192}
     \begin{gathered} \xymatrix{1\ar[r]&H_n\ar[r]\ar@{^{(}->}[d]^{i_n}
     &\pmo\times H_{n}\ar[r]\ar@{^{(}->}[d]^{s_{n+1}\ast j_{n+1}i_n} & \pmo
     \ar@{=}[d]\ar[r] & 1\\ 1
     \ar[r]& H_{n+1} \ar[r]^{j_{n+1}}& \hH_{n+1} \ar[r]&
     \pmo\ar[r]&1} \end{gathered} 
  \end{equation}
in which $i_n$~is the inclusion in~\eqref{eq:17} and $j_n$~the inclusion
in~\eqref{eq:30}.  Furthermore, the inclusions~$i_n$ and~$\hat\imath_n$ induce
a commutative diagram linking~\eqref{eq:192} for varying~$n$. 
  \end{proposition}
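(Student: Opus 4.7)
The plan is to exhibit the middle vertical map of \eqref{eq:192} as the pointwise product
\[
\phi_n(\epsilon,h) \;=\; s_{n+1}(\epsilon)\cdot j_{n+1}(i_n(h))
\]
for a carefully chosen splitting $s_{n+1}$ of~\eqref{eq:30} at level $n+1$, and then to verify in turn that $\phi_n$ is a homomorphism, that it is injective, that both squares commute, and that the construction is compatible with further stabilization.

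First I would fix the splitting by choosing $\sigma_0\in O_{n+1}$ to be reflection in the first coordinate---the direction added by the stabilization \eqref{eq:51} of $O_n$ into $O_{n+1}$---so that $\sigma_0$ commutes pointwise with the subgroup $O_n\subset O_{n+1}$. By Proposition~\ref{thm:10}, this reflection determines a splitting $s_{n+1}\:\pmo\to\hH_{n+1}$ of~\eqref{eq:30}, and the induced automorphism of $\tSH_{n+1}\cong\Spin_{n+1}\times K$ is conjugation by $\sigma_0$ on $\Spin_{n+1}$ and the identity on $K$. Because $\sigma_0$ commutes with $SO_n\subset SO_{n+1}$, conjugation by its lift $e_0\in\Pp_{n+1}$ is trivial on $\Spin_n\subset\Spin_{n+1}$: in the Clifford picture $e_0$ anticommutes with each of $e_1,\dots,e_n$, hence commutes with every degree-two monomial $e_ie_j$, and these generate $\Spin_n$.

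The main step is verifying that $\phi_n$ is a homomorphism, equivalently that $s_{n+1}(-1)$ centralizes $j_{n+1}(i_n(H_n))$ in $\hH_{n+1}$. For the subgroup $SH_n\subset H_n$ this is immediate from the preceding paragraph together with Proposition~\ref{thm:10}. For the off-component of $H_n$ (present when $\rho_n$ is surjective onto $O_n$), one computes directly in the mixing-construction double cover of $\hH_{n+1}$, using multiplication formulas modeled on~\eqref{eq:21}: the naive Clifford commutator $e_0e_1e_0\inv e_1\inv=-1$ in $\Pp_{n+1}$ lies in the kernel $\langle(-1,k_0)\rangle$ of the cover onto $\hH_{n+1}$, and the sign is absorbed on passage to the quotient. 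Within the ambiguity allowed by Proposition~\ref{thm:10}, one selects the adjustment in $K$ that renders this commutator trivial in $\hH_{n+1}$ itself.

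Injectivity of $\phi_n$ is then automatic: $j_{n+1}\circ i_n$ is injective with image in the kernel of $\hH_{n+1}\twoheadrightarrow\pmo$, while $s_{n+1}(-1)$ projects to $-1$. The left square of~\eqref{eq:192} commutes by the very definition of $\phi_n$ on the $H_n$-factor, and the right square commutes because the projection $\hH_{n+1}\to\pmo$ sends $s_{n+1}(\epsilon)j_{n+1}(i_n(h))$ to $\epsilon$. For the compatibility statement, choosing the reflection $\sigma_0$ always to be reflection in the first coordinate produces a coherent family of splittings compatible with the inclusions $\hat\imath_n$ of Proposition~\ref{thm:10}; combined with the coherence of the $i_n$ established in the proof of Theorem~\ref{thm:6}, this yields the commutative prism relating~\eqref{eq:192} for consecutive $n$.

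The main obstacle I anticipate is the commutator check on the off-component of $H_n$: the Clifford anticommutation $e_0e_1=-e_1e_0$ in $\Pp_{n+1}$ naively obstructs commutativity of the splitting with the time-reversal part of $H_n$. Careful bookkeeping in the mixing construction and the quotient by $\langle(-1,k_0)\rangle$, together with a possibly refined choice of splitting within the ambiguity of Proposition~\ref{thm:10}, is needed to confirm that the apparent sign is annihilated in $\hH_{n+1}$.
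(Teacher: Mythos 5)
Your strategy---reduce to showing $s_{n+1}(-1)$ centralizes $j_{n+1}i_n(H_n)$ in $\hH_{n+1}$, then check $SH_n$ and the off-component separately---is the right one, and the $SH_n$ part (the new Clifford generator anticommutes with $e_1,\dots,e_n$, hence commutes with $\Spin_n$, and acts trivially on $K$ by Proposition~\ref{thm:10}, hence trivially on $\tSH_n$) is correct.  The paper's own proof calls everything after the definition of the $s_n$ a ``systematic verification'' and gives no details, so supplying them is the right instinct.

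The off-component verification, however, contains a real gap, and neither of your proposed fixes works in the cases that matter.  The sign $-1\in\Pp_{n+1}$ lies in $\langle(-1,k_0)\rangle$ only if $k_0=1$; in every fermionic case ($\Spin$, $\Ppm$, $\Spin^c$, $\Pcpm$, $G^\pm$, \dots) one has $k_0=-1$, so the image of $-1\in\Pp_{n+1}$ in $\hH_{n+1}$ is the \emph{nontrivial} internal element $k_0\in K$ and is not killed by the quotient.  Likewise, replacing the splitting element $\tau=s_{n+1}(-1)$ by $z\tau$ with $z\in K$, $z^2=1$, multiplies the offending commutator by $z\cdot\beta(z)\inv$, where $\beta$ is the conjugation by an off-component element on~$K$; this factor is trivial both when $K$ is central and when $\beta$ is inversion on $\TT$, so the ``refined splitting'' is a no-op.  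What actually happens is that there is no sign to absorb at all: the multiplication rule~\eqref{eq:21}, as imported into Proposition~\ref{thm:10} for the double cover of $\hH_{n+1}$, uses the \emph{twisted} conjugation $\alpha$ on $\Pp_{n+1}$, twisted by the nontrivial character $\Pp_{n+1}\to\pmo$ recalled around~\eqref{eq:32}.  Evaluating $\alpha$ on the Clifford generator defining the splitting produces an extra factor of $-1$ from the twist, which cancels the Clifford anticommutation sign you computed.  In normalized mixing coordinates one finds $[e_1,1;1]\cdot[p,k;\tih]\cdot[e_1,1;1]\inv = [\,e_1p\,\alpha(e_1),k;\tih\,]$, and this returns the original class $[p,k;\tih]$ on the nose---the commutator is already the identity upstairs in the double cover, with no appeal to the quotient or to a modified splitting.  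This is the piece of bookkeeping your argument is missing.
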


  \begin{proof}
First, define a particular splitting $s_n\:\bmut\to \hH_n$ for~$\sigma \in
O_n$ reflection in the hyperplane orthogonal to~$e_1$.  Namely, as
in~\eqref{eq:e18}, characterize the element $\hhat_n=s_n(-1)$ by
  \begin{equation}\label{eq:e20}
     \begin{aligned} \hrn(\hhat_n)&=(-1,\sigma ) \\ \phi
     _n(\hhat_n)&=[e_1e_2e_3e_4 
      ,1]\,\in SH_{n+3}=\Spin_{n+3}\times K\!\bigm/\!\langle (-1,k_0)
      \rangle\end{aligned} 
  \end{equation}
Then $\hhat_n$~ has order~2, and $\hhat_{n+1}\in \hH_{n+1}$ centralizes the image
of $H_n\xrightarrow{\;i_n\;}H_{n+1}\xrightarrow{\;j_{n+1}\;}\hH_{n+1}$.  It
is easy to verify that \eqref{eq:192}~commutes.
  \end{proof}

For the basic symmetry groups in~\eqref{eq:10} the extended symmetry groups
are listed here:
  \begin{equation}\label{eq:11}
     \begin{tabular}{ c@{\hspace{2em}} c@{\hspace{2em}} c@{\hspace{2em}} }
     \toprule 
     states/symmetry&$H_n$&$\hH_n$\\ \midrule \\[-8pt] bosons only &
     $SO_n$&$O_n$\\ 
     fermions allowed & $\Spin_n$ & $\Pp_n$\\ bosons, time-reversal~($T$) &
     $O_n$ & $\pmo\times O_n$\\ 
     fermions, $T^2=(-1)^F$ & $\Pp_n$&$\hPin^+_n$\\ fermions, $T^2=\id$ &
     $\Pm_n$&$\hPin^-_n$\\ 
     \bottomrule \end{tabular} 
  \end{equation}
The splitting of~$\widehat{O}_n$ is a consequence of the fact that hyperplane
reflections are inner in~${O}_n$.  As in Remark~\ref{thm:a17}, let $\ha$~be
the automorphism of $\Ppm_n$ that is the identity on~$\Spin_n$ and
multiplication by the central element~$-1$ on the complement; it covers the
identity automorphism of~$O_n$.  Then a computation of the
pullback~\eqref{eq:e12} yields isomorphisms
  \begin{equation}\label{eq:e21}
     \begin{aligned} \hPin^+_n&\cong (\bmu4\ltimes\Ppn)\!\bigm/\!\langle
      (-1,-1) \rangle \\ \hPin^-_n&\cong \;\,\bmut\ltimes\Pm_n\\ \end{aligned} 
  \end{equation}
where the generators of~$\bmu4,\bmut$ act on the pin group via~$\ha$.

   \section{Reflection symmetry on manifolds}\label{sec:3}

The enhanced symmetry group~$\hH_n$ produces an
involution~(\S\ref{subsec:3.1}) on $H_n$-manifolds that generalizes
orientation-reversal for~$H=SO$.  In the field theory context it induces an
involution on bordism categories that we call `bar'.  (See
Appendix~\ref{sec:11} for a general discussion of involutions on categories
and other relevant background.)  In~\S\ref{subsec:3.2} we prove that the dual
of an object in a bordism category is isomorphic to its bar.  The definitions
of \emph{reflection structure} and \emph{positive reflection structure} for
non-extended field theories are in~\S\ref{subsec:3.3}.  In a reflection
positive theory the partition function of any double is nonnegative, as we
prove in~\S\ref{subsec:3.4}.  We work as always with arbitrary symmetry
groups.\footnote{The definition of the double of a (s)pin manifold is
somewhat tricky, for example; the general setting is clarifying.}

Kevin Walker has introduced theories with more general reflection structures
in which, possibly, the group extension~\eqref{eq:30} that controls
anti-unitarity is not split.  In particular, he allows $\hH_n=\Pm_n$ when
$H_n=\Spin_n$.  This leads to exotic hermitian structures.  Our more
restrictive framework is based on Wick rotation of relativistic theories.

  \subsection{An involution on $H_n$-manifolds}\label{subsec:3.1}

Recall from~\S\ref{subsec:12.2} that an $H_n$-manifold is a Riemannian
$n$-manifold equipped with a reduction $(P,\theta )$ of its orthonormal frame
bundle $\sBO(X)\to X$ to~$H_n$.  Extend the principal $H_n$-bundle $P\to X$
to a principal $\hH_n$-bundle $j_n(P)\to X$, where $j_n$~is the inclusion of
groups in~\eqref{eq:30}.  Using~\eqref{eq:35} extend the isomorphism $\theta
\:\sBO(X)\to \rho _n(P)$ to an isomorphism $\hat\theta \:\pmo\times
\sBO(X)\to \hat\rho _n\bigl(j_n(P)\bigr)$.

  \begin{definition}[]\label{thm:11}
 The \emph{opposite $H_n$-structure}~$(P',\theta ')$ is the principal
$H_n$-bundle $P':=j_n(P)\setminus P\to X$ and the restriction~$\theta '$
of~$\hat\theta $ to $\{-1\}\times \sBO(X)$.
  \end{definition}

\noindent 
 Taking opposites is involutive: there is a canonical isomorphism $(P,\theta
)\xrightarrow{\;\cong \;}(P'',\theta '')$.

  \begin{remark}[]\label{thm:13}
 Let $\sigma \in O_n$ be a hyperplane reflection and $\phi _{\sigma }$~the
automorphism of~$H_n$ resulting from the splitting of~\eqref{eq:30}.  Then we
can identify the principal $H_n$-bundle $P'\to X$ as the projection $P\to X$
of manifolds with the original $H_n$-action on~$P$ precomposed with the
automorphism~$\phi _\sigma $.  For if $\ts \in \hH_n$ is the splitting
element, then we map $P\to j_n(P)\setminus P$ by $p\mapsto p\cdot \ts$. 
  \end{remark}

  \begin{example}[]\label{thm:12}
 An $SO_n$-structure is an orientation, and the opposite $SO_n$-structure is
the reverse orientation.  In this case $P\to X$~is the bundle of oriented
orthonormal frames, $j_n(P)\to X$ the bundle $\sBO(X)\to X$ of all
orthonormal frames, and $j_n(P)\setminus P\to X$ the bundle of oppositely
oriented orthonormal frames.
  \end{example}

  \begin{example}[]\label{thm:14}
 For simplicity, we sometimes abbreviate `$\Ppm_n$-structure' to `pin
structure', just as `$\Spin_n$-structure' is abbreviated to `spin structure'.
The opposite of a pin structure is obtained by tensoring with the orientation
double cover; see Definition~\ref{thm:a21}, Remark~\ref{thm:a17}, and the
text following~\eqref{eq:11}.  One motivation for our general study of
symmetry groups~(\S\ref{subsec:12.1}) and involutions~(\S\ref{subsec:2.2})
is to explain the appearance of this opposite pin structure in the
formulation of reflection positivity for Wick-rotated quantum field theories
with fermions and time-reversal symmetry.
  \end{example}

We use the involution in Definition~\ref{thm:11} to construct an involution
of categories 
  \begin{equation}\label{eq:37}
     \bB=\beta \:\bne\to\bne .
  \end{equation}
In Appendix~\ref{sec:11} we explain that an involution on a category~$\sB$ is
a functor~$\beta \:\sB\to\sB$ and a natural transformation of functors $\eta
\:\id_{\sB}\to \beta ^2$.  The objects and morphisms in~$\bne$ are Riemannian
manifolds with $H_n$-structure: the functor~$\beta $ fixes the underlying
Riemannian manifold and flips the $H_n$-structure to its opposite.  The
equivalence~$\eta $ implements the canonical isomorphism indicated after
Definition~\ref{thm:11}.  We emphasize that the ``bar involution''~$\beta $
is covariant: a morphism $X\:Y_0\to Y_1$ maps to a morphism $\beta X\:\beta
Y_0\to \beta Y_1$.  Put differently, the arrows of time on objects are
unchanged under~$\beta $.

  \begin{remark}[]\label{thm:62}
 One can envisage other involutions on the bordism category, and so other
notions of reflection structure (Definition~\ref{thm:23} below), especially
for mathematical applications.  The heuristics in Remark~\ref{thm:43} are
meant to illustrate why we feel the involution defined here correctly models
Wick-rotated unitarity in relativistic field theories. 
  \end{remark}

  \subsection{Duals and opposites}\label{subsec:3.2}

An object~$Y$ in a symmetric monoidal category, such as $\bne$, may have a
dual~$Y\dual$, which is equipped with duality data; see
Definition~\ref{thm:19} for a quick review.  In a topological bordism
category every object has a dual.  The underlying smooth manifold of the
dual~$Y\dual$ equals that of~$Y$, but the arrow of time is reversed.  This
reversal is evident in the coevaluation and evaluation duality data.  For
example, evaluation is the bordism
  \begin{equation}\label{eq:43}
     e\mstrut _Y=[0,1]\times Y \:\,Y\dual\amalg Y\longrightarrow \emptyset
     ^{n-1} 
  \end{equation}
with the entire boundary incoming.  The $H_n$-structure is the same at the
two ends, but the arrows of time are opposite.  If the boundary at~$0\in
[0,1]$ is the object~$Y$, with its arrow of time, then the boundary at~$1\in
[0,1]$ is the object~$Y\dual$.  See Figure~\ref{fig:3}, where the
coevaluation~$c\mstrut _Y$ and the ``S-diagram''~\eqref{eq:27} are also
depicted.

  \begin{figure}[ht]
  \centering
  \includegraphics[scale=1]{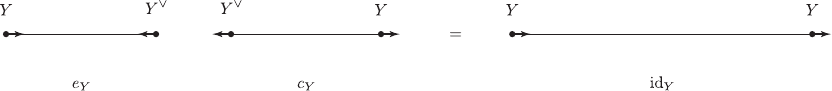}
  \caption{Evaluation, coevaluation, and the gluing to the
  identity}\label{fig:3} 
  \end{figure}
 
An object~$Y$ in a topological bordism category has a canonical product germ
(see~\S\ref{subsec:12.2}), namely the germ of~$\{0\}\times Y$ in
$X=(-\epsilon ,\epsilon )\times Y$, where we fix~$\epsilon >0$.  Let
$\sigma$~ be the diffeomorphism of~$X$ that reflects $t\mapsto -t$ and
fixes~$Y$.  The splitting in Proposition~\ref{thm:10} leads to an alternative
construction of the opposite $H_n$-structure and the following important
identification.

  \begin{proposition}[]\label{thm:44}
 For any object $Y$ in~$\bne$ there is an isomorphism
  \begin{equation}\label{eq:44}
     h\:\beta Y\xrightarrow{\;\;\cong \;\;} Y\dual  
  \end{equation}  
Also, $\beta h\dual=h$.
  \end{proposition}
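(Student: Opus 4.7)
The plan is to use the canonical product germ associated to $Y$ as the arena in which to compare $\beta Y$ and $Y\dual$, and to realize both as restrictions of a single $H_n$-structure on the $n$-dimensional germ, related by the reflection $\sigma\:(t,y)\mapsto(-t,y)$ of $X=(-\epsilon,\epsilon)\times Y$. The key observation is that $\sigma$ fixes $Y$ pointwise but reverses the arrow of time, so the pullback of an $H_n$-structure by $\sigma$ restricts on $Y$ to a new object with opposite normal orientation, and the identification of this pullback with the opposite $H_n$-structure of Definition~\ref{thm:11} is exactly what Proposition~\ref{thm:10} enables.

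More concretely, I would first record that an object $Y$ of $\bne$ is a triple consisting of the smooth $(n-1)$-manifold, an $H_n$-structure $(Q,\theta)$ on the rank-$n$ bundle $\triv{1}\oplus TY$, and the arrow of time given by the standard orientation on $\triv{1}$. The product germ canonically identifies $\triv{1}\oplus TY$ with $TX\res{Y}$, and hence extends $(Q,\theta)$ (essentially uniquely up to shrinking $\epsilon$) to an $H_n$-structure on a neighborhood of $Y$ in $X$. The diffeomorphism $\sigma$ satisfies $d\sigma\res{Y}=\textnormal{diag}(-1,I_{n-1})$ on $TX\res{Y}$, which is precisely the hyperplane reflection in the normal coordinate that splits~\eqref{eq:30}.

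Next I would pull back the extended $H_n$-structure by $\sigma$ and restrict to $Y$. By construction the resulting object has reversed arrow of time, so underlies the dual $Y\dual$ in the sense of~\eqref{eq:43}. On the other hand, by Remark~\ref{thm:13} the opposite $H_n$-structure $(P',\theta')=\beta Y$ is obtained from $(Q,\theta)$ by precomposing the $H_n$-action with the automorphism $\phi_\sigma$ coming from the splitting of~\eqref{eq:30} by the very same hyperplane reflection; equivalently, by twisting via the splitting element $\ts\in\hH_n$. Matching these two descriptions along $Y$ yields the canonical arrow $\beta Y\xrightarrow{\;\cong\;} Y\dual$ in $\bne$, and the construction is a geometric tautology once both sides have been formulated with respect to the same reflection $\sigma$.

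The main obstacle I expect is bookkeeping rather than anything genuinely hard: one must verify that the identification is well defined (independent of $\epsilon$ and of the identification of the product germ, and compatible with the equivalence $\eta\:\id\to\beta^2$ of the involution), and that the chosen hyperplane reflection enters symmetrically on the $\hH_n$ side and the $O_n$ side via the pullback square~\eqref{eq:35}. These compatibilities all ultimately reduce to the coherence properties of the splitting built into Proposition~\ref{thm:10}, together with the canonical isomorphism $(P,\theta)\cong(P'',\theta'')$ following Definition~\ref{thm:11}, so no new input is required beyond carefully tracing the splitting element $\ts$ through the definitions.
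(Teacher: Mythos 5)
Your proposal follows essentially the same route as the paper: both exploit the reflection $\sigma\:(t,y)\mapsto(-t,y)$ of the canonical product germ, identify the opposite $H_n$-structure with the one obtained by twisting by the splitting element of Proposition~\ref{thm:10} (via Remark~\ref{thm:13}), and match that against the time-reversed object $Y\dual$. The paper's version is more explicit — it writes out a diagram of principal bundles $Q'\hookrightarrow P'\hookrightarrow i(P)\hookleftarrow P\hookleftarrow Q\dual$ over $\sB_Y\hookrightarrow \sBO(X)\hookrightarrow \pmo\times\sBO(X)\hookleftarrow\sBO(X)\hookleftarrow\sB_Y\dual$ and exhibits the isomorphism concretely as right multiplication by $\hs_1=[e_1,1;1]\in\hH_n$ — but the geometric content and the role of the splitting are the same as in your sketch.
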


\noindent
 Reversing the $H_n$-structure ($\beta Y$) is equivalent to reversing the
arrow of time~($Y\dual$).  Or, in the language of Definition~\ref{thm:59},
every object in~$\bne$ carries a \emph{hermitian structure}.

  \begin{proof}
 Set $X=(-\epsilon ,\epsilon )\times Y$.  The reflection  
  \begin{equation}\label{eq:54}
     \begin{aligned} \sigma \:(-\epsilon ,\epsilon )\times Y&\longrightarrow
      (-\epsilon ,\epsilon )\times Y \\ (t,y)&\longmapsto
      (-t,y)\end{aligned} 
  \end{equation}
lifts to the frame bundle~$\sBO(X)$.  We now construct a diagram of
principal $K$-bundles: 
  \begin{equation}\label{eq:55}
     \begin{gathered} \xymatrix{  Q' 
     \ar@{^{(}->}[r] \ar[d] &P'\,\ar[d]_{\pi'} \ar@{^{(}->}[r]& j_n(P)\ar[d] &
     P\ar@{_{(}->}[l]\ar[d]^{\pi } & Q\dual\ar@{_{(}->}[l]\ar[d]\\ \sB_Y
     \ar@{^{(}->}[r]& \sBO(X)
     \ar@{^{(}->}[r]^<<<<<{\;-1\times \id} & \pmo\times \sBO(X) & \sBO(X)
     \ar@{_{(}->}[l]_<<<<<{\;1\times \id} &
     \sB_Y\dual\ar@{_{(}->}[l]} \end{gathered}  
  \end{equation}
Let $\sB_Y\subset \sBO(X)$ be the $O_{n-1}$-subbundle of frames with first
vector~$\pm \partial /\partial t$, the sign chosen to align with the arrow of
time of the object~$Y$.  Let $\sB_Y\dual $~be the compatible frames with the
opposite arrow of time.  Then $\sigma $~induces an isomorphism
$\sB\mstrut_Y\to \sB\dual _Y $ which is realized inside~$\sBO(X)$ as
multiplication by the hyperplane reflection~$\sigma _1\in O_n$ in the
orthogonal complement to the vector~$e_1\in \RR^n$.  (Observe that $\sigma
_1$~centralizes $O_{n-1}\subset O_n$.)  Let $P\xrightarrow{\pi } \sBO(X)\to
X$ be the $H_n$-structure: the composition is a principal $H_n$-bundle and
the first map is a principal $K$-bundle over its image.  Set $Q\dual = \pi
\inv (\sB_Y\dual)$; then $Q\dual\to X$ is a principal $H_{n-1}$-bundle.  Let
$j_n(P),P'$~be as in Definition~\ref{thm:11}, so that $P'\xrightarrow{\pi'
}\sBO(X)\to X$ is the opposite $H_n$-structure.  Set $Q'={\pi '}\inv
(\sB_Y)$, so that $Q'\to X$ is an $H_{n-1}$-bundle that encodes the opposite
$H_n$-structure.  Let $\hhat_n\in \hH_n$ be the lift of~$\sigma _1\in O_n$, as
defined in~\eqref{eq:e20}; then $\hhat_n$~centralizes $H_{n-1}$ and has order
two.  The action of multiplication by~$\hhat_n$ on~$j_n(P)$ restricts to an
isomorphism of $H_{n-1}$-bundles $Q'\to Q\dual$.  (It covers multiplication
by~$(-1,\sigma _1)\in \pmo\times O_n$ on~$\pmo\times \sBO(X)$, which
restricts to an isomorphism $\sB\mstrut _Y\to \sB\dual _Y$.)
 
$\beta h\dual$~is the inverse of the involution~$\hhat_n$ on~$j_n(P)$,
restricted to the bar dual bundles.  Since $\hhat_n$~is its own inverse, we
conclude~$\beta h\dual=h$.
  \end{proof}

  \begin{remark}[]\label{thm:34}
 In a geometric bordism category not every germ admits a reflection which is
an isometry.  It is only for germs which do admit such a reflection that we
expect the associated topological vector space of a field theory to have a
Hilbert space structure; see~\cite{KS}.  This is the case for the
(noncompact) affine hyperplane in Figure~\ref{fig:2}, consistent
with~\eqref{eq:42}.
  \end{remark}

  \subsection{Reflection structures and positivity}\label{subsec:3.3}

Let 
  \begin{equation}\label{eq:40}
     \bC=\beta \:\Vect_{\CC}\longrightarrow \Vect_{\CC} 
  \end{equation}
be the involution of complex conjugation (Example~\ref{thm:16}).
Recall~\eqref{eq:25} that a topological field theory is a symmetric monoidal
functor $F\:\bne\to\Vect_{\CC}$. 

  \begin{definition}[]\label{thm:23}
 A \emph{reflection structure} on~$F$ is equivariance data for the
involutions~$\beta _{\sB},\beta _{\sC}$.
  \end{definition}

\noindent
 Equivariance data is spelled out in Definition~\ref{thm:17}.  For every
closed $(n-1)$-manifold~$Y$ with $H_n$-structure we have an isomorphism of
vector spaces
  \begin{equation}\label{eq:41}
     F(\beta Y)\xrightarrow{\;\;\cong \;\;}\overline{F(Y)}, 
  \end{equation}
the curved space analog of~\eqref{eq:42}.  Combining with the
isomorphism~\eqref{eq:44}, we see that $F(e\mstrut _Y)$~is a hermitian form
  \begin{equation}\label{eq:45}
     h_Y\:F(Y\dual)\otimes F(Y)\cong F(\beta Y)\otimes F(Y)\cong
     \overline{F(Y)}\otimes F(Y)\longrightarrow \CC ,
  \end{equation}
which by the usual ``S-diagram'' argument (Figure~\ref{fig:3}) is
nondegenerate.  Sesquilinearity is a consequence of the isomorphism 
  \begin{equation}\label{eq:230} 
   \begin{aligned}
      e\mstrut _Y&\longrightarrow \beta (e\mstrut _Y) \\
      (t,y)&\longmapsto (1-t,y) 
   \end{aligned}
  \end{equation}
where recall as a manifold $e\mstrut _Y=[0,1]\times Y$.

  \begin{definition}[]\label{thm:24}
 A reflection structure is \emph{positive} if the induced hermitian
form~$h_Y$ is positive definite for all~$Y\in \bne$. 
  \end{definition}

  \begin{remark}[]\label{thm:25}
 In a non-extended field theory reflection is \emph{data} and positivity is a
\emph{condition}.  In the extended case considered later, both reflection and
positivity are data.
  \end{remark}

  \begin{remark}[]\label{thm:175}
 There is also a notion of positivity if the domain is the category of
\emph{super} vector spaces; see Example~\ref{thm:159}. 
  \end{remark}

  \begin{example}[]\label{thm:26}
 To avoid trivialities, suppose the spacetime dimension~$n$ is even.  Fix a
nonzero complex number~$\lambda \in \CC$.  There is a simple invertible field
theory of unoriented manifolds ($H_n=O_n$) whose partition function on a
closed $n$-manifold~$X$ is $\lambda ^{\Euler(X)}$, where $\Euler(X)$~is the
Euler number of~$X$.  The vector space~$F_\lambda (Y)$ attached to any closed
$(n-1)$-manifold~$Y$ is the trivial line~$\CC$: the Euler characteristic of a
compact manifold with boundary is a well-defined number.  In the bordism
category we can write the closed manifold~$S^n$ as the composition $\emptyset
^{n-1}\xrightarrow{D^{n}}S^{n-1}\xrightarrow{D^n}\emptyset ^{n-1}$ of two
closed balls.  Denote the first arrow as~$X$ and apply the theory~$F_\lambda
$:
  \begin{equation}\label{eq:46}
     \lambda ^2 = F_\lambda (S^n) = h_{S^{n-1}}(F_\lambda (X),F_\lambda (X)).
  \end{equation}
Therefore, a necessary condition for positivity is that $\lambda$ be real.
  \end{example}

 A reflection structure imposes a curved space analog of~\eqref{eq:7}, which
for an $n$-dimensional $H_n$-bordism~$X$,  asserts that
  \begin{equation}\label{eq:47}
     F(\beta X)= \overline{F(X)}. 
  \end{equation}
For example, if $H=SO_n$ then the partition function complex conjugates when
the orientation of spacetime is reversed.  For a theory of unoriented
manifolds ($H_n=O_n$), condition~\eqref{eq:47} implies that every partition
function is real.  For theories of pin manifolds ($H_n=\Ppm_n$) the partition
function of the $w_1$-twisted pin structure (Definition~\ref{thm:a21}) is the
complex conjugate of the original partition function.

  \subsection{Doubles}\label{subsec:3.4}

The reflection-conjugation equation~\eqref{eq:47} also applies to manifolds
with boundary.  We use it to derive a necessary condition for reflection
positivity.

  \begin{definition}[]\label{thm:28}
 Let $X$~be a compact $H_n$-manifold with boundary, viewed as a bordism
$\emptyset ^{n-1}\to \partial X$.  The \emph{double} of~$X$ is the closed
$H_n$-manifold
  \begin{equation}\label{eq:48}
     \Delta X = e\mstrut _{\partial X}(\beta X,X). 
  \end{equation}
  \end{definition}

\noindent
 The double is illustrated in Figure~\ref{fig:4}.  In that picture~$Y=\bX$.

  \begin{figure}[ht]
  \centering
  \includegraphics[scale=1.1]{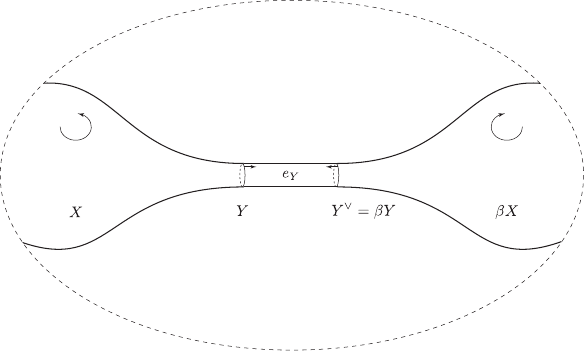}
  \caption{The double of~$X$}\label{fig:4}
  \end{figure}

  \begin{proposition}[]\label{thm:29}
 If a theory $F\:\bne\to\Vect_{\CC}$ admits a positive reflection structure,
then $F(\Delta X)\ge0$ for all compact $H_n$-manifolds~$X$ with boundary. 
  \end{proposition}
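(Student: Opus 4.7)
The plan is to unwind the definition of the double in the bordism category, push it through the functor~$F$, and observe that $F(\Delta X)$ is exactly the value of the hermitian form $h_{\partial X}$ from~\eqref{eq:45} on a diagonal pair of vectors, whence positivity forces nonnegativity.

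First I would interpret $X$ as a morphism $\emptyset^{n-1}\to\partial X$ in~$\bne$, so $F(X)$ is a linear map $\CC\to F(\partial X)$, i.e., a vector $v_X\in F(\partial X)$. The bar involution covariantly sends $X$ to a morphism $\beta X\:\emptyset ^{n-1}\to\beta(\partial X)$, and applying~$F$ gives a vector in $F(\beta(\partial X))$. The reflection structure (Definition~\ref{thm:23}) supplies an equivariance isomorphism $F(\beta(\partial X))\xrightarrow{\cong}\overline{F(\partial X)}$ as in~\eqref{eq:41}, and this is natural in the morphism $\beta X$, so the vector $F(\beta X)\cdot 1$ is identified with $\overline{v_X}\in \overline{F(\partial X)}$.

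Next I would read the definition $\Delta X=e\mstrut _{\partial X}(\beta X,X)$ as a composition of bordisms: the disjoint union $\beta X\amalg X\:\emptyset ^{n-1}\to\beta(\partial X)\amalg\partial X$ followed by the evaluation $e\mstrut _{\partial X}\:(\partial X)\dual\amalg\partial X\to\emptyset ^{n-1}$ after invoking the canonical isomorphism $\beta(\partial X)\xrightarrow{\cong}(\partial X)\dual$ from Proposition~\ref{thm:44}. Since $F$ is symmetric monoidal, $F(\beta X\amalg X)$ is the vector $F(\beta X)\otimes F(X)\in F(\beta(\partial X))\otimes F(\partial X)$, and $F(e\mstrut _{\partial X})$ is precisely the hermitian form $h\mstrut _{\partial X}$ of~\eqref{eq:45} once we compose with the two isomorphisms. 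Therefore
\begin{equation*}
F(\Delta X)\;=\;h\mstrut _{\partial X}\bigl(\overline{v_X},v_X\bigr).
\end{equation*}

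Finally, by Definition~\ref{thm:24} the hermitian form $h\mstrut _{\partial X}$ is positive definite, so the right-hand side is nonnegative (and strictly positive unless $v_X=0$). This yields $F(\Delta X)\ge0$.

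The only step that requires any real care is the compatibility of the reflection equivariance data with the duality isomorphism $\beta(\partial X)\cong(\partial X)\dual$ of Proposition~\ref{thm:44}: one must check that the identification $F(\beta X)\leftrightarrow\overline{v_X}$ coming from the reflection structure is the same identification that appears in writing $F(e\mstrut _{\partial X})$ as the hermitian form~$h\mstrut _{\partial X}$. This is built into the construction of~$h\mstrut _{\partial X}$ in~\eqref{eq:45}, so it is essentially a bookkeeping check once the equivariance data of Definition~\ref{thm:17} is spelled out; I expect no substantive obstacle.
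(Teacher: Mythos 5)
Your proof is correct and takes essentially the same route as the paper's one-line argument: read $\Delta X = e_{\partial X}(\beta X, X)$ through the functor~$F$, use the reflection equivariance (equation~\eqref{eq:47}, valid also for bordisms) to identify $F(\beta X)$ with $\overline{F(X)}$, recognize the result as $h_{\partial X}\bigl(\overline{F(X)},F(X)\bigr) = \|F(X)\|^2$, and invoke positivity. The compatibility check you flag at the end is indeed exactly what is packaged into the construction of $h_{\partial X}$ in~\eqref{eq:45}, so no gap remains.
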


\noindent
 Note that the value of a theory on a closed $n$-manifold does not depend on
the reflection structure.  The necessary condition for positivity in
Proposition~\ref{thm:29} is the compact manifold analog of the usual
reflection positivity statement~\eqref{eq:9} in Euclidean space.

  \begin{proof}
 From~\eqref{eq:48} and~\eqref{eq:47} we deduce 
  \begin{equation}\label{eq:49}
     F(\Delta X) = F(e\mstrut _{\partial X})\bigl(F(\beta X),F(X)\bigr) =
     h\mstrut _{\bX}\bigl(\overline{F(X)},F(X) \bigr) = \|F(X)\|^2_{F(\bX)}
     \ge 0.  
  \end{equation}
\vskip-2.8em
  \end{proof}
\bigskip 
 
The double construction is standard for unoriented and oriented manifolds.
It is a bit trickier for spin and pin manifolds, so we give a recognition
principle and illustrate with some examples.  Observe that the double has an
obvious (anti-)involution $\Delta X\xrightarrow{\;\sigma \;}\beta \Delta X$
with fixed point set $Y=\{1/2\}\times \bX$, and $\sigma $~induces
multiplication by~$-1$ on the normal bundle.  Set
$X'=X\cup_{\bX}[0,1/2]\times \bX$ and cut along~$Y=\bX'$ to write
  \begin{equation}\label{eq:53}
     \Delta X=\beta X'\cup_{\bX'}X', 
  \end{equation}
which is the typical description of a double.  But we must account for the
$H_n$-structure as well. 

  \begin{proposition}[]\label{thm:35}
 Let $X$~be a closed $H_n$-manifold, $\sigma \:X\to\beta X$ an
anti-involution with fixed point set~$Y$ such that

 \begin{enumerate}[{\textnormal(}i{\textnormal)}]

 \item There exists a submanifold~$N\subset X$ with boundary~$Y$ such
that $X$~is the union of~$N$ and~$\sigma N$ along~$Y$ and $\sigma $~induces a
diffeomorphism $\beta N\cong \sigma N$ of $H_n$-manifolds; and

 \item $\sigma \res Y$~induces the hyperplane reflection isomorphism of the
$H_n$-structure on~$Y$ to its opposite.

 \end{enumerate}

Then $X\cong \Delta N$ as $H_n$-manifolds 
  \end{proposition}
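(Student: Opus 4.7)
My plan is to construct an explicit isomorphism $\Phi\colon\Delta N\to X$ of $H_n$-manifolds by piecing together the identity on~$N$ and the diffeomorphism $\sigma\res{\beta N}\colon\beta N\to \sigma N$ supplied by hypothesis~(i), then to verify that the two halves glue compatibly along~$Y$. Recall that by Definition~\ref{thm:28} the underlying smooth manifold of~$\Delta N = e\mstrut_Y(\beta N, N)$ is $\beta N\cup\mstrut_Y N$, with the boundary $\partial(\beta N)$ identified to $\partial N=Y$ via the canonical isomorphism $\beta Y\xrightarrow{\cong} Y\dual$ of Proposition~\ref{thm:44} followed by the tautological identification $Y\dual$ with the boundary of~$N$ inside the evaluation bordism. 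By hypothesis~(i), the underlying smooth manifold of~$X$ is $\sigma N\cup\mstrut_Y N$. Therefore, after I set $\Phi=\id$ on the $N$-half and $\Phi=\sigma$ on the $\beta N$-half (using the $H_n$-diffeomorphism $\beta N\xrightarrow{\cong}\sigma N$ from~(i)), the only thing left to check is that $\Phi$ is consistent along~$Y$ and that this consistent pair assembles into a \emph{smooth} isomorphism of $H_n$-structures.

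For the compatibility along~$Y$, I would argue as follows. The gluing in $\Delta N$ at~$Y$ is governed by the canonical isomorphism $\beta Y\cong Y\dual$, which by the proof of Proposition~\ref{thm:44} is realized by multiplication by the hyperplane reflection lift $\hs_1\in\hH_n$; equivalently, it is the hyperplane-reflection isomorphism from the $H_n$-structure on~$Y$ to its opposite. Hypothesis~(ii) is exactly the statement that $\sigma\res Y$ realizes this same isomorphism. Hence $\Phi$ is well defined on the set-theoretic pushout and respects the $H_n$-structures on both halves including their boundaries.

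For smoothness across~$Y$ and matching of the $H_n$-structures in a neighborhood, the clean device is an equivariant collar. Since $\sigma$ is a smooth involution with fixed submanifold~$Y$ of codimension one and induces multiplication by~$-1$ on the normal bundle, there is a $\sigma$-invariant tubular neighborhood of~$Y$ in~$X$ of the form $(-\epsilon,\epsilon)\times Y$ on which $\sigma(t,y)=(-t,y)$, with the $N$-half corresponding to $t\ge0$ and the $\sigma N$-half to $t\le 0$. The canonical collar of~$Y$ in~$\Delta N$ is of the same form, with $\sigma$ replaced by the tautological reflection across $\{0\}\times Y$; under this reflection the $H_n$-structure on the $\beta N$-side is the opposite of the one on the $N$-side, glued precisely by the hyperplane-reflection isomorphism. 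By~(ii) the two collars carry the same $H_n$-data up to the evident identification, so $\Phi$ extends to a smooth $H_n$-diffeomorphism across~$Y$.

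The main obstacle is this last step: matching the two $H_n$-structures in the collar. The work is bookkeeping of the hyperplane reflection lift used both in Proposition~\ref{thm:44} (to identify $\beta Y$ with $Y\dual$) and in condition~(ii) (to describe $\sigma\res Y$), but the content of the proof is precisely that these two reflections are the same one, so no further choice or obstruction enters. Once that identification is in place, reading the definition of~$\Delta N$ and of the smooth structure on~$X$ through the equivariant collar gives the desired isomorphism $X\cong\Delta N$.
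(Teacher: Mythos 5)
Your proposal is correct and takes the same approach the paper does, though the paper compresses it to a single line (invoke the tubular neighborhood theorem to replace $Y$ with $[0,1]\times Y$ and construct the $H_n$-isomorphism). Your expanded version correctly identifies the content: the gluing of $\beta N$ to $N$ in $\Delta N$ and the gluing of $\sigma N$ to $N$ in $X$ are both governed by the hyperplane-reflection lift, and hypothesis~(ii) is precisely what makes the two agree, with an equivariant collar supplying the smooth matching.
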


\noindent
 The isomorphism in~(ii) is left multiplication by ~$\hhat_n\in \hH_{n}$;
see~\eqref{eq:e20}.

  \begin{proof}
 Use the tubular neighborhood theorem to replace~$Y$ with $[0,1]\times Y$ and
so construct the desired $H_n$-isomorphism. 
  \end{proof}

  \begin{corollary}[]\label{thm:37}
 The sphere~$S^n$ with $H_n$-structure $H_{n+1}\to H_{n+1}/H_n$ is a double 
  \end{corollary}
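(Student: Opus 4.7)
The plan is to apply Proposition~\ref{thm:35} with $N\subset S^n$ the closed upper hemisphere $\{x_1\ge 0\}$, a copy of $D^n$ with boundary the equator $Y=S^{n-1}=\{x_1=0\}\cap S^n$, so that $S^n=N\cup\sigma N$ meet along~$Y$. The required anti-involution $\sigma\:S^n\to\beta S^n$ will come from a hyperplane reflection lifted into $\hH_{n+1}$.

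First I would construct $\sigma$. Let $\ts=s_{n+1}(-1)\in\hH_{n+1}$ using the splitting of Proposition~\ref{thm:148}; then $\ts^2=1$ and $\hat\rho_{n+1}(\ts)=(-1,\sigma_0)$, where $\sigma_0\in O_{n+1}$ is the hyperplane reflection negating~$e_1$ and fixing~$e_2,\dots,e_{n+1}$. Left multiplication by $\ts$ on~$\hH_{n+1}$ carries $H_{n+1}$ bijectively onto the off-coset $\ts H_{n+1}$; modding out by the right $H_n$-action gives a diffeomorphism $H_{n+1}/H_n\to \ts H_{n+1}/H_n$. Identifying both coset spaces with the unit sphere $S^n\subset\RR^{n+1}$ through the $O_{n+1}$-component of $\hat\rho_{n+1}$, this descent is precisely $\sigma_0\res{S^n}$, an involution (since $\ts^2=1$) with fixed set~$Y=S^{n-1}$.

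The second and crucial step is to verify that this is an $H_n$-anti-involution $\sigma\:S^n\to\beta S^n$ rather than merely a self-diffeomorphism. By Remark~\ref{thm:13}, $\beta S^n$ has underlying $H_n$-bundle $H_{n+1}$ with right $H_n$-action precomposed by the automorphism $\phi_{\sigma_0}$ obtained by conjugation by~$\ts$. Now the $H_n$-bundle $\ts H_{n+1}\to \ts H_{n+1}/H_n$ is isomorphic to the standard $H_{n+1}\to H_{n+1}/H_n$ via the bijection $\ts g\leftrightarrow g$; however, the identification of $\ts H_{n+1}/H_n$ with $S^n$ via $\hat\rho_{n+1}$ differs from the identification via $\ts g\leftrightarrow g$ by the base map~$\sigma_0$. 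This discrepancy, relative to the original frame-bundle identification $\theta$ on $H_{n+1}/H_n$, is precisely the twist used to define the opposite $H_n$-structure in Definition~\ref{thm:11}. Hence $\sigma$ is an $H_n$-anti-involution. With this in hand, hypothesis~(i) of Proposition~\ref{thm:35} is immediate ($N$ and $\sigma N$ are the two closed hemispheres meeting on~$Y$, and $\sigma\res N$ is the restriction of the global anti-involution). Hypothesis~(ii) follows because, at any $p\in Y$, $d\sigma=\sigma_0\res{T_pS^n}$ fixes $T_pS^{n-1}$ and negates the normal direction in $T_pS^n$, i.e.~is a hyperplane reflection on the tangent space realizing the hyperplane reflection isomorphism from the $H_n$-structure on~$Y$ to its opposite, in the sense of the proof of Proposition~\ref{thm:44}. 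Proposition~\ref{thm:35} then yields $S^n\cong \Delta N$.

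The main obstacle is the middle step: one must carefully check that left multiplication by the specific off-component element $\ts\in\hH_{n+1}\setminus H_{n+1}$, after descent to the $H_n$-coset space and reinterpretation through $\hat\rho_{n+1}$, really does implement the abstract opposite $H_n$-structure of Definition~\ref{thm:11}. Both descriptions give $H_n$-bundles over $S^n$ twisted by $\phi_{\sigma_0}$, but the matching of the frame-bundle data $\theta$ on the two sides runs through different mixing constructions (one inside $\hH_{n+1}$, the other inside $\hH_n$) and requires a direct diagram chase to compare them; once this compatibility is established, the rest of the proof is formal.
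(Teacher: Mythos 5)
Your proposal takes essentially the same route as the paper's own (very terse) proof: both use the reflection $\sigma_0$ in the hyperplane perpendicular to $e_1$, with fixed set the equatorial $S^{n-1}$, and lift it to the bundle isomorphism given globally by left multiplication by $[e_1,1;1]=s_{n+1}(-1)\in\hH_{n+1}$, then invoke Proposition~\ref{thm:35}. You spell out more of the bookkeeping (the descent of left multiplication to the coset space, the comparison with Definition~\ref{thm:11} via Remark~\ref{thm:13}, and the pointwise check of hypothesis~(ii)) and are candid that the middle compatibility check is the real content, but the idea and the key element of $\hH_{n+1}$ are identical to the paper's.
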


\noindent 
 We note from Remark~\ref{thm:33} that the homogeneous space~$H_{n+1}/H_n$ is
diffeomorphic to~$S^n$.

  \begin{proof}
 Reflection~$\sigma $ in the hyperplane perpendicular to~$e_1$ is an
involution of~$S^n$ with fixed point set the equatorial~$S^{n-1}$
perpendicular to~$e_1$.  The reflection lifts to an isomorphism of the
principal $H_n$-bundle $H_{n+1}\to H_{n+1}/H_n$ with the pullback of its
opposite.  (The isomorphism is globally left multiplication
by~$[e_1,1;1]\subset \hH_{n+1}$.) 
  \end{proof}

  \begin{example}[]\label{thm:38}
 For $H_m=\Spin_m$ the circle $\Spin_2/\Spin_1$~has the \emph{bounding} spin
structure: the $\Spin_1$-bundle $\Spin_2\to \Spin_2/\Spin_1$ is the
nontrivial double cover of the circle.  The nonbounding spin circle is
\emph{not} a double.  Indeed, there is a reflection positive invertible
1-dimensional spin topological field theory~$\alpha $ into super vector
spaces that attaches the odd line to a positively oriented spin point; it
follows that $\alpha (\cir_{\textnormal{nonbounding}})=-1$.  This does not
violate Proposition~\ref{thm:29} since $\cir_{\textnormal{nonbounding}}$~is
not a double.  Turning this argument around, since the oriented circle
\emph{is} a double, the 1-dimensional \emph{oriented} topological field
theory into super vector spaces that attaches the odd line to a positively
oriented point does \emph{not} admit a positive reflection structure.
  \end{example}

  \begin{remark}[]\label{thm:39}
 The group~$H_{n+1}$ acts as symmetries of the $H_n$-sphere in
Corollary~\ref{thm:37}.  Topologically, then, there is a universal family of
$H_n$-spheres parametrized by the classifying space~$BH_{n+1}$.  Field
theories may be evaluated on families of manifolds and bordisms; this family
of spheres enters our analysis in~\S\ref{subsec:6.2}.
  \end{remark}

   \section{Invertible topological field theories and stable homotopy theory}\label{sec:4}

We first recall that to fully implement locality in field theory we need to
use a bordism multicategory that encodes gluing laws in arbitrary
codimension.  Next we recount how invertible topological field theories lie
in the framework of homotopy theory: invertibility moves the discussion from
abstract multicategories to topological spaces.  Finally, we specify the
universal target that tracks deformation classes of invertible topological
theories.  The main result is Theorem~\ref{thm:56}, which is our point of
departure for implementing reflection positivity in invertible topological
theories.  We conclude in~\S\ref{subsec:4.4} with a discussion of
invertible non-topological theories and their role in low energy
approximations of gapped quantum systems.
 
The material in this section is covered in much more expository detail in
many references, so we only recount essentials.

  \subsection{Extended field theories}\label{subsec:4.1}

There are several physics motivations for extending an $n$-dimensional
Wick-rotated field theory to lower dimensional manifolds, and these are
hardly restricted to the topological case of interest here.  First, the
vector space of physical states attached to an $(n-1)$-manifold~$Y$ depends
locally on~$Y$.  This is familiar in $n=2$~dimensions, where a theory not
only has a vector space attached to a circle, but also to an interval with
boundary conditions; the gluing laws for intervals lie in codimension two,
since intervals are glued along 0-manifolds in this 2-dimensional theory.
The result is sometimes called an \emph{open-closed
theory}~\cite{MS}.\footnote{There is a difference between an open-closed
theory and a fully extended 2-dimensional theory~\cite[\S4.2]{L}.}  The
labels on the boundary are objects in a category, so it is natural to
associate that category to the 0-manifold consisting of a single point.  As
we are doing quantum mechanics, the category is linear and indeed the vector
space associated to the interval with boundary labels~$\beta _0,\beta _1$
is~$\Hom(\beta _0,\beta _1)$ in the category.  The objects are boundary
conditions, or \emph{D-branes}.  Another common example is 3-dimensional
Chern-Simons theory, in which a unitary modular tensor category is associated
to the 1-manifold~$S^1$, which is a manifold of codimension two in this
theory.
 
Let $X^n$~be a Riemannian $n$-manifold on which a theory~$F$ is defined, and
fix $x\in X$.  We explained in Remark~\ref{thm:40} that the vector
space~$F(S^{n-1}_x)$ attached to a small sphere around~$x$, in the limit of
small radius, is the space of point operators at~$x$.  A field theory also
has \emph{extended operators}, whose support may be a submanifold~ $W\subset
X$ of dimension~$k>0$.  An extended operator with~ $k=1$ is called a
\emph{line operator}, with~ $k=2$ a \emph{surface operator}, etc.  The link
of~$W$ at any~$x\in W$ is a sphere $S^{n-k-1}_x$.  In an extended field
theory~$F$ there is an invariant $F(S^{n-k-1}_x)$ which is a $k$-category
whose objects are the operators on~$W$.  Thus the line operators in a theory
form a 1-category, the surface operators a 2-category, etc.; see~\cite{Ka2}
for a thorough account.

We believe that every field theory of physical relevance should be fully
extended.  The mathematical implementation is most developed in the
topological case: a sampling of references is~\cite{F1,La,BD,L,F2,AF}.
Invariants of manifolds of increasing codimension are encoded in a higher
categorical structure of increasing complexity.  The modern framework also
includes invariants for families of manifolds; see~\cite{ST} for a
non-topological version.  The domain of an $n$-dimensional topological field
theory with symmetry group~$H_n$ is the bordism multicategory~$\be$ whose
objects are 0-manifolds; 1-morphisms are bordisms of 0-manifolds, which are
1-manifolds with boundary; 2-morphisms are bordisms of bordisms, which are
2-manifolds with corners; and so on until we reach $n$-manifolds with
arbitrary corners.  At that point we continue to $(n+\ell )$-morphisms which
are roughly $\ell $-dimensional families of $n$-manifolds, where $\ell $~is
an arbitrary positive integer.  The entire structure is an \emph{$(\infty
,n)$-category}~\cite{BM,L,BS,Ng,CS,S-P}.

  \begin{definition}[]\label{thm:46}
 Let $\sC$~be a symmetric monoidal $(\infty ,n)$-category.  A \emph{fully
extended $n$-dimensional topological field theory with Wick-rotated vector
symmetry group~$H_n$ and target~$\sC$} is a symmetric monoidal functor
  \begin{equation}\label{eq:56}
     F\:\be\longrightarrow \sC. 
  \end{equation}
  \end{definition}

\noindent
 We typically shorten this to `topological field theory'.  In general there
is no preferred choice of target~$\sC$, and it is an open issue to construct
suitable general targets.  In the very special invertible case we study here
there are two preferred targets; see~\S\ref{subsec:4.3}.

  \subsection{Invertible topological field theories}\label{subsec:4.2}

 There is a natural superposition of quantum systems which does not introduce
interactions between them.  In the framework of Wick-rotated field theories
on compact manifolds this is implemented by tensoring theories together, and
that tensor product makes sense for fully extended theories too.  There is a
unit for the tensor product: the trivial theory~$\bone$ in which the vector
space attached to any $(n-1)$-manifold is~$\CC$, all correlation functions
equal~1, and a similar triviality in higher codimension.  A theory~$F$ is
\emph{invertible} if there exists~$F'$ such that $F\otimes F'\cong \bone$.

  \begin{example}[]\label{thm:48}
 An $n=1$~theory~$F$ with $H_1=SO_1$ is determined by the vector
space~$F(\pt_+)$ attached to a point with positive orientation; it is
invertible if and only if this vector space is one-dimensional.  (A
one-dimensional vector space is called a \emph{line}.  A vector space~$V$ is
invertible if and only if there exists~$V'$ such that $V\otimes V'\cong \CC$,
and this happens if and only if $V$~is a line.)  In an $n$-dimensional
invertible field theory, the vector space attached to any $(n-1)$-dimensional
manifold is a line and all correlation functions between nonzero operators
are nonzero. 
  \end{example}

We first explain the transition to stable homotopy theory in the non-extended
case, as in Example~\ref{thm:48}.  The codomain, or target, of a non-extended
topological field theory (Definition~\ref{thm:42}) is the ordinary
category~$\Vect_{\CC}$ whose objects are complex vector spaces and whose
morphisms are linear maps.  To accommodate theories with fermionic states, we
use instead the codomain category~$\sVect$ of super vector spaces.  An
invertible theory~$F$ factors through the subcategory~$\sLC$ whose objects
are complex super lines\footnote{A $\zt$-graded line is either even or odd,
which means the single quantum state is either bosonic or fermionic.}  and
whose morphisms are invertible linear maps:
  \begin{equation}\label{eq:57}
     \begin{gathered}
     \xymatrix{\bne\ar[rr]^<<<<<<<<<<<<<F\ar@{-->}[dr]&&s\!\Vect_{\CC} \\ 
     &s\!\Line_{\CC}\ar@{^{(}->}[ur]} \end{gathered} 
  \end{equation}
The category~$\sLC$ is a \emph{groupoid}: every morphism is invertible.  Even
more, it is a \emph{Picard groupoid}: every object is invertible under tensor
product.  The main point is that groupoids and Picard groupoids come from
topology, as we quickly review.
 
One of the first constructions in algebraic topology goes in the opposite
direction:
  \begin{equation}\label{eq:58}
     \textnormal{Spaces}\xrightarrow{\;\;\pi
     _{\le1}\;\;}\textnormal{Groupoids} 
  \end{equation}
To any topological space~$S$ is attached a groupoid~$\pi _{\le1}S$ whose
objects are the points of~$S$; the set~$(\pi \mstrut _{\le1}S)(s_0,s_1)$ of
morphisms from~$s_0\in S$ to~$s_1\in S$ is the set of homotopy classes of
paths from~$s_0$ to~$s_1$.  If the space has no higher homotopy
information---$S$~is a \emph{homotopy 1-type}---then $\pi _{\le1}S$ captures
the homotopy type of~$S$ completely.  There is an inverse construction
that takes a groupoid~$\sG$ (or a category) and attaches a homotopy
1-type~$\|\sG\|$, the \emph{classifying space}~\cite{Se3}.

  \begin{example}[]\label{thm:50}
 Let $S=\|\sLC\|$.  Then $\pi _0S\cong \zt$, since there are two isomorphism
classes of super line; and $\pi _1S\cong \Cx$, since the automorphism group of
any super line is the group~$\Cx$ of nonzero complex numbers under
multiplication.   
  \end{example}

  \begin{remark}[]\label{thm:51}
 In Example~\ref{thm:50} the groupoid~ $\sLC$ is discrete: there is no
topology on objects or morphisms.  If we use the standard topology on the
morphism spaces of linear maps, then the geometric realization~$\|\sLC\|$ is
a homotopy 2-type with $\pi _0\cong \zt$, $\pi _1=0$, and $\pi _2\cong \ZZ$.
In other words, whereas in Example~\ref{thm:50} the discrete group~$\Cx$ of
morphisms gives rise to~$\pi _1\cong \Cx$, with the usual topology the
group~$\Cx$ deformation retracts to the circle ($\pi _0=0$, $\pi _1\cong
\ZZ$), and so its homotopy groups show up one degree higher in the geometric
realization.
  \end{remark}

A symmetric monoidal structure on a groupoid goes over to an \emph{infinite
loop structure} on the classifying space~$S$.  That is, there exists a
sequence $\sX=\{S_0, S_1, S_2, \dots \}$ of pointed spaces equipped with
homotopy equivalences $S_q\simeq \Omega S_{q+1}$, where $S_0=S$ and $\Omega
S_{q+1}$ is the based loop space.  These satisfy the condition that $S_q$~is
$(q-1)$-connected.  We call~$\sX$ a \emph{spectrum} and we call~$S$ its
\emph{0-space}.  See~\S\ref{sec:5} for a review of spectra.

  \begin{example}[]\label{thm:52}
 The classifying space $\|\Line_{\CC}\|$ has only one nontrivial homotopy
group $\pi _1\cong \Cx$, so it is an Eilenberg-MacLane space~$K(\Cx,1)$.  The
corresponding Eilenberg-MacLane spectrum is denoted~$\Sigma H\Cx$: the
0-space of the spectrum~$H\Cx$ is a~$K(\Cx,0)$, for which a simple model is
the discrete group~$\Cx$, and the~`$\Sigma $' indicates a shift. 
  \end{example}

The functor~\eqref{eq:58} is the first in a sequence of functors $\{\pi
_0,\pi _{\le1},\pi _{\le2},\dots \}$ in which the zeroth maps a space to its
set of path components and the higher ones map to higher groupoids.  The
classifying space construction also works in this context, and it produces a
space with potentially nonzero homotopy groups in any degree.  

A symmetric monoidal $\infn$-category~$\sB$ has a higher Picard groupoid
quotient~$\overline{\sB}$, obtained by formally adjoining inverses for every
object and morphism.  Also, a symmetric monoidal $\infn$-category~$\sC$ has a
maximal Picard subgroupoid~$\sCx\hookrightarrow \sC$ constructed by removing
the non-invertible objects and morphisms from~$\sC$.

  \begin{definition}[]\label{thm:53}
 A fully extended field theory $F\:\be\to\sC$ is invertible if it admits a
factorization 
  \begin{equation}\label{eq:59}
     \xymatrix@C+1em{\be\ar[r]^(.6)F \ar@{->>}[d]&\sC\\ \overline{\be
     }\ar@{-->}[r]^(.6){\widetilde F }& \sC^\times \ar@{^{(}->}[u]} 
  \end{equation}
  \end{definition}

\noindent
 Passing to classifying spaces, $\widetilde F$~is equivalent to an infinite
loop map 
  \begin{equation}\label{eq:60}
     \|F\|\:\|\be\|\longrightarrow \|\sCx\|, 
  \end{equation}
or equivalently a map of spectra.  The homotopy type of the
domain is given by the following variation of the celebrated
Galatius-Madsen-Tillmann-Weiss~\cite{GMTW} Theorem.

  \begin{theorem}[]\label{thm:54}
 $\|\be\|$~is the 0-space of the Madsen-Tillmann spectrum $\Sigma ^nMTH_n$.
  \end{theorem}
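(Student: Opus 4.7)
The plan is to realize $\|\be\|$ as a ``moduli space'' of embedded $H_n$-manifolds in Euclidean space, and to identify that moduli space via a Pontryagin--Thom collapse with the $0$-space of $\Sigma^n MTH_n$.

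First, I would introduce the auxiliary space $\psi_n(\RR^N)$ whose points are closed smooth $n$-dimensional submanifolds of $\RR^N$ equipped with an $H_n$-structure on the tangent bundle (compatible with the normal framing, using the stabilization of Theorem~\ref{thm:6}), topologized so that convergence means $C^\infty$-convergence on compact sets. A classical Pontryagin--Thom argument then gives
\begin{equation*}
\colim_{N\to\infty}\,\psi_n(\RR^N)\;\simeq\;\Omega^{\infty-n}MTH_n,
\end{equation*}
the $0$-space of $\Sigma^n MTH_n$: a compactly supported family of embedded $H_n$-manifolds in~$\RR^N$ produces, by collapsing the complement of a tubular neighborhood onto the Thom space of the classifying map of the normal bundle, a map $S^N\to\Thom(-\gamma_n)$, where $\gamma_n$ is the pullback to $BH_n$ of the tautological $n$-plane bundle over $BO_n$. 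This half is essentially Thom transversality plus a standard scanning argument, and it does not involve the bordism category at all.

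Second, I would construct a natural comparison map $\alpha\colon \|\be\|\to\Omega^\infty\Sigma^n MTH_n$. After unwinding the $(\infty,n)$-categorical geometric realization, a nondegenerate $k$-simplex of $\|\be\|$ corresponds to a $\Delta^k$-parametrized family of $H_n$-bordisms with corners of the appropriate depth; embedding transversely in $\Delta^k\times\RR^N$ and applying Pontryagin--Thom yields a map $\Delta^k\to\psi_n(\RR^N)$, and these assemble in the colimit into~$\alpha$. The core claim is that $\alpha$ is a weak equivalence. In the nonextended $1$-categorical case this is precisely the theorem of GMTW, proved by introducing an intermediate \emph{sheaf of manifolds} model $\psi_n^{\mathrm{cat}}$ in which one additionally records a regular value of a height function, together with a group-completion argument that identifies the bar construction on this sheaf with $\|\be\|$ and its geometric realization with the $\psi_n$ above.

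The main obstacle, and the part requiring work strictly beyond the original GMTW paper, is handling the higher-codimension corners intrinsic to the full $(\infty,n)$-category~$\be$: one must iterate the Pontryagin--Thom construction at every codimension and verify compatibility with all the higher-categorical composition and coherence data. Following the strategy of Lurie's cobordism-hypothesis program (and its subsequent refinements), I would proceed by induction on~$n$: the base case~$n=1$ is the original GMTW statement, and the inductive step expresses $\|\be\|$, via codimension-one slicing, as the $1$-fold delooping of the classifying space of the $(\infty,n{-}1)$-category of $(n{-}1)$-dimensional bordisms. The stability of the $H_n$-structures across codimensions furnished by Theorem~\ref{thm:6} (and its refinement in Proposition~\ref{thm:148}) is exactly what makes the induction close with $\Sigma^n MTH_n$ on the right-hand side. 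The delicate technical content is transversality in families of stratified spaces and the verification that the Pontryagin--Thom map respects composition up to all the homotopies recorded by the $(\infty,n)$-structure.
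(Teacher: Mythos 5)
The paper does not actually prove Theorem~\ref{thm:54}: immediately after the statement it records that the result for topological categories is due to~\cite{GMTW}, that the variant needed here for $(\infty,n)$-categories is stated in~\cite[\S2.5]{L} as a corollary of the cobordism hypothesis, and that a forthcoming paper of Schommer-Pries gives a complete proof independent of the cobordism hypothesis. So there is no in-house argument to compare against; the theorem is imported.

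Your sketch is a reasonable outline of the circle of ideas in those references: the moduli-space-of-embedded-manifolds model $\psi_n(\RR^N)$ and its scanning/Pontryagin--Thom identification with $\Omega^{\infty-n}MTH_n$ is exactly the GMTW machinery, and you correctly locate the genuinely new difficulty in passing to the full $(\infty,n)$-bordism multicategory, namely the coherent treatment of higher-codimension corners. Two cautions. First, your inductive step as phrased --- ``$\|\be\|$, via codimension-one slicing, is the $1$-fold delooping of $\|\Bord_{n-1}\|$'' --- is not literally how either GMTW or the cobordism-hypothesis route proceeds, and naively iterating the 1-categorical GMTW statement does not yield $\Omega^{\infty-n}MTH_n$ without substantial extra work relating the successive Madsen--Tillmann spectra for the groups $H_k$ and tracking the $H_n$-structure through each slicing; this is where the cofibration sequences of Proposition~\ref{thm:73} enter, and they show that the relation between consecutive levels is a cofiber sequence, not a simple delooping. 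Second, the Pontryagin--Thom comparison map $\alpha$ must be verified to be a map of infinite loop spaces compatible with all the symmetric monoidal and higher-categorical coherences; this is the content of the Schommer-Pries work and is not a routine transversality argument. So your proposal correctly names the ingredients but should be read as a roadmap rather than a proof; the paper itself is content to cite the literature at this point.
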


\noindent 
 One version of this theorem is proved in~\cite{BM}, though it is only for
unoriented manifolds and is carried out for ``$n$-uple categories'' rather
than $(\infty ,n)$-categories.  Proofs of Theorem~\ref{thm:54} in the context
of $(\infty ,n)$-categories have appeared in preprint form.  The theorem is
stated in~\cite[\S2.5]{L} as a corollary of the cobordism hypothesis.  A
preprint of Ayala-Francis~\cite{AF} proves the cobordism hypothesis and
Theorem~\ref{thm:54} for framed manifolds.  A preprint by
Schommer-Pries~\cite{S-P} contains a complete proof of Theorem~\ref{thm:54}
independent of the cobordism hypothesis.  Nonetheless, because there is
currently no published proof, in this paper we only use Theorem~\ref{thm:54}
as motivation and formally define an invertible field theory as a map of
spectra (Ansatz~\ref{thm:145} below).

See~\S\ref{subsec:6.1} for a review of Madsen-Tillmann spectra.

  \subsection{Universal targets}\label{subsec:4.3}

There are two universal targets for invertible topological field theories,
corresponding to the discrete and continuous topologies on~$\Cx$.  These
targets are spectra; there is no need to define an $\infn$-category~$\sC$ with
non-invertible morphisms and objects as we only consider invertible theories.
 
The first target is constructed so that invertible $n$-dimensional field
theories with that target are determined by the partition function.  The
spectrum~$\ICx$ is characterized in the homotopy category of spectra by a
functorial isomorphism
  \begin{equation}\label{eq:61}
     \pi _0\:[\sB,I\Cx] \longrightarrow  \Hom(\pi _0\sB,\Cx)
  \end{equation}
from the abelian group of homotopy classes of spectrum maps $\sB\to\ICx$ to
the character group of~$\pi _0\sB$, for any spectrum~$\sB$.  The shift~$\SCx$
satisfies a similar universal property with~$\pi _0$ replaced by~$\pi _n$.
The spectrum~$\ICx$ is closely related to the Brown-Comenetz dual to the
sphere spectrum~\cite{BC}.  Combining with the discussion
in~\S\ref{subsec:4.2} we arrive at the following.

  \begin{ansatz}[]\label{thm:145}
 A \emph{discrete invertible $n$-dimensional extended topological field
theory with symmetry group~$H_n$} is a spectrum map
  \begin{equation}\label{eq:188}
     F\:\Sigma ^nMTH_n\longrightarrow \SCx. 
  \end{equation}
The space of theories of this type is~$\Map(\Sigma ^nMTH_n,\SCx)$. 
  \end{ansatz}

\noindent
 Here `$\Map$'~indicates the \emph{space} of maps between the indicated
spectra; see~\eqref{eq:283} below.  The word `discrete' is meant to evoke the
choice~$\SCx$ for the codomain: $\Cx$~has the discrete topology.

  \begin{remark}[]\label{thm:164}
 The choice of codomain spectrum~$\SCx$, which implements the dictum `the
partition function determines the theory', holds magic derived from the first
few stable homotopy groups of spheres.  For example, the truncation to~$\pi
_{\langle n-1,n \rangle}$ is a non-extended theory, and it takes values in a
groupoid equivalent to the groupoid $\sLC$ of super lines: the homotopy groups
of spheres ``knows about'' the bosonic/fermionic grading of quantum states.
The next~ $\zt$ in the stable stem also has an interpretation in terms of
statistics of particles; see~\cite{GK} where objects with nontrivial
$\zt$-grading are termed `Majorana'.
  \end{remark}

The spectrum~$\Sigma ^nI\Cx$ is appropriate for classifying \emph{isomorphism
classes} of topological theories, but we are interested instead in
\emph{deformation classes}: we want to identify two theories if there is a
continuous path of theories connecting them.  For example, as maps
into~$\Sigma ^nI\Cx$ the Euler theories~$F_{\lambda _0},F_{\lambda _1}$ in
Example~\ref{thm:26} are nonisomorphic if $\lambda _0\not= \lambda _1$,
whereas they are always deformation equivalent.  The \emph{Anderson
dual}~$\SIZ$ is the appropriate codomain to compute deformation
classes.\footnote{$\Zo=2\pi \sqmo\ZZ\subset \CC$ avoids the choice of a
particular~$\sqmo\in \CC$.}  Roughly speaking, it results from~$\SCx$ by
taking the continuous topology on~$\Cx$.  Its universal property is expressed
in the short exact sequence
  \begin{equation}\label{eq:62}
     0 \longrightarrow \Ext^1(\pi _n\sB,\Zo) \longrightarrow
     [\sB,\SIZ]\longrightarrow \Hom(\pi _{n+1}\sB,\Zo)\longrightarrow 0 
  \end{equation}
which is non-canonically split.  The kernel is the torsion subgroup: 
  \begin{equation}\label{eq:185}
     [\sB,\SIZ]\tors\cong \Ext^1(\pi _n\sB,\Zo). 
  \end{equation}
There is a map
  \begin{equation}\label{eq:63}
     \phi \:[\sB,\Sigma ^nI\Cx]\cong \Hom(\pi _n\sB,\Cx)\longrightarrow
     \Ext^1(\pi _n\sB,\Zo)  
  \end{equation}
onto the kernel of~\eqref{eq:62}.  It sends a homomorphism $\pi _n\sB\to\Cx$
to the pullback of the exponential group extension 
  \begin{equation}\label{eq:92}
     1\longrightarrow
     \Zo\longrightarrow\CC\xrightarrow{\;\;\exp\;\;}\Cx\longrightarrow1.  
  \end{equation}
If we give~$\Cx$ its usual topology, then $\phi $~may be regarded as mapping
the topological space $\Hom(\pi _n\sB,\Cx)$ to its group of path components.
 
Intuitively, to define the notion of deformation equivalence of
theories~\eqref{eq:188} we want to consider a second topology on
$\Map(\Sigma ^nMTH_n,\SCx)$ induced from the continuous topology on~$\Cx$,
and then compute~$\pi _0$.  Instead we make use of the fibration
  \begin{equation}\label{eq:231}
     H\CC\xrightarrow{\;\;\exp\;\;}\ICx\longrightarrow \Sigma \IZ 
  \end{equation}
induced from~\eqref{eq:92} as follows. 

  \begin{definition}[]\label{thm:155}
 Theories  $\alpha _0,\alpha _1\in \Map(\Sigma ^nMTH_n,\SCx)$ are
\emph{deformation equivalent} if there exists $\xi \in H^n(\mth n;\CC)$ whose
image under exp is the difference $[\alpha _1]-[\alpha _0]$ of the
isomorphism classes $[\alpha _0],[\alpha _1]\in [\mth n,\SCx]$. 
  \end{definition}

We immediately conclude the following.

  \begin{theorem}[]\label{thm:56}
 There is a 1:1 correspondence 
  \begin{equation}\label{eq:64}
     \left\{\normalfont\vcenter{\hbox{deformation classes of discrete
     invertible}\hbox{$n$-dimensional extended topological
     field}\hbox{theories with 
     symmetry group~$H_n$}} \right\} \cong [\Sigma ^nMTH_n,\SIZ]\tors. 
  \end{equation}
  \end{theorem}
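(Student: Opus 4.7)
The plan is to identify the space of theories via Ansatz~\ref{thm:145}, reduce the computation of its path components to a formal calculation with the exponential extension, and then invoke~\eqref{eq:185}. Set $\sB = \Sigma^n MTH_n$ throughout.

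By Ansatz~\ref{thm:145}, the space of $n$-dimensional invertible extended topological field theories with symmetry group~$H_n$ is $\Map_0(\sB, \SCx)$, with $\Cx$ carrying its standard Lie group topology. A deformation of theories is a path in this space, so the group of deformation classes is $\pi_0 \Map_0(\sB, \SCx)$. Since $\SCx$ has nontrivial homotopy concentrated in degree~$n$, and by~\eqref{eq:61} (refined to the continuous setting), this group is identified with $\pi_0$ of the topological space of continuous homomorphisms $\pi_n\sB \to \Cx$. Thus the theorem reduces to computing $\pi_0 \Hom(\pi_n\sB, \Cx_{\mathrm{cts}})$.

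Next I would apply $\Hom(\pi_n\sB, -)$ to the exponential short exact sequence~\eqref{eq:92}. Because $\CC$ is divisible, hence injective as an abelian group, the induced six-term $\Ext$-sequence terminates as
\begin{equation*}
  \Hom(\pi_n\sB, \CC)\longrightarrow \Hom(\pi_n\sB, \Cx) \xrightarrow{\;\;\phi\;\;} \Ext^1(\pi_n\sB, \ZZ)\longrightarrow 0,
\end{equation*}
with $\phi$ the map defined in~\eqref{eq:63}. The left-hand group $\Hom(\pi_n\sB, \CC)$ is a real vector space under pointwise addition, hence path connected, so its image lies in the identity component of $\Hom(\pi_n\sB, \Cx)$ for the continuous topology. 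Conversely, every continuous path of homomorphisms $\pi_n\sB \to \Cx$ starting at zero lifts to a continuous path into~$\CC$, by the covering homotopy property for $\CC \to \Cx$ (applicable because $\pi_n\sB$ is finitely generated for Madsen--Tillmann spectra of compact Lie groups). Therefore $\phi$ descends to an isomorphism $\pi_0 \Hom(\pi_n\sB, \Cx_{\mathrm{cts}}) \xrightarrow{\;\cong\;} \Ext^1(\pi_n\sB, \ZZ)$, and combining with the identification~\eqref{eq:185} of $[\sB, \SIZ]\tors$ with $\Ext^1(\pi_n\sB, \ZZ)$ yields the claimed bijection.

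The main obstacle is the first paragraph: justifying that the topology on $\Map_0(\sB, \SCx)$ arising from Ansatz~\ref{thm:145} genuinely corresponds to continuous deformation of field theories, and that $\pi_0$ of this mapping space is correctly modeled by $\pi_0$ of the topologized continuous Hom-group. This requires enriching the bordism multicategory over topological spaces and tracking continuous variation of the associated data through the categorical formalism of~\S\ref{subsec:4.1}. Once this foundation is in place, the subsequent homological computation with the exponential extension is formal.
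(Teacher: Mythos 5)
Your argument is correct and follows essentially the same route the paper sketches in \S\ref{subsec:4.3} leading up to the theorem statement (the paper itself defers the details to~\cite{FHT1}): isomorphism classes are $\Hom(\pi_n\sB,\Cx)$ by the Brown--Comenetz universal property~\eqref{eq:61}, and topologizing~$\Cx$, the group of path components is identified with $\Ext^1(\pi_n\sB,\ZZ)\cong[\sB,\SIZ]\tors$ via the exponential sequence~\eqref{eq:92}, divisibility of~$\CC$, and the map~$\phi$ of~\eqref{eq:63}. One small slip: $\SCx=\Sigma^n I\Cx$ does \emph{not} have homotopy concentrated in degree~$n$ --- $\pi_k I\Cx=\Hom(\pi_{-k}\SS,\Cx)$ is nonzero for many negative~$k$ --- but your argument never actually uses this claim, since~\eqref{eq:61} is the defining universal property of the Pontrjagin dual and holds unconditionally.
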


\noindent
 This appears, at least implicitly, in a joint paper~\cite{FHT1} of the
authors and Constantin Teleman; Theorem~\ref{thm:56} has been the basis of
many investigations since.  

It is natural to ask for a field theoretic interpretation of a map of spectra
$\Sigma ^nMTH_n\to\SIZ$ whose homotopy class is not torsion, so does not
factor through~$\SCx$.  We give one in the next subsection
(Ansatz~\ref{thm:156}).

  \subsection{Remarks on non-topological invertible theories and low energy
  approximations}\label{subsec:4.4} 

The main immediate application of Theorem~\ref{thm:110} in this paper is to
low-energy approximations of gapped unitary quantum systems in case that
approximation is invertible.  For the heuristic discussion in this section we
momentarily drop the invertibility hypothesis. 
 
A typical example of the phenomenon we wish to highlight is 3-dimensional
Yang-Mills theory with a Chern-Simons term.  The coupling constant of the
Chern-Simons term obeys an integrality constraint.  Then the low energy
effective theory is quantum ``topological'' Chern-Simons theory~\cite{W3}.
In fact, this low energy theory is \emph{not} topological; there is a mild
metric dependence~\cite{W2}.  One precise expression of the mildness is that
the energy-momentum tensor\footnote{The energy-momentum tensor is a multiple
of the Cotton tensor of the Riemannian 3-manifold.} is a multiple of the
identity operator, which is the only point operator in the theory anyhow.
(See the discussion in~\cite[\S1.1]{GK}.)  Witten observes that if one is
willing to introduce some sort of framing, then the long distance topological
Chern-Simons theory is the tensor product of a purely topological theory and
an invertible theory.  The invertible theory is analogous to a gravitational
Chern-Simons theory, but more precisely its partition function is the
exponential of the Atiyah-Patodi-Singer $\eta $-invariant.  The coupling
constant does not obey the usual integrality constraint, which is why the
framing is required for this global decomposition.  The full quantum
Yang-Mills theory with Chern-Simons term is a theory of oriented Riemannian
manifolds (the Wick rotated symmetry group is~$H_3=SO_3$), and so one expects
the same for the low-energy approximation.  That indeed holds; it is only to
make a global decomposition into topological $\times $ invertible that a
framing is introduced.
 
This example violates the physical principle~(ii) stated towards the
beginning of~\S\ref{sec:1}.  A more precise expectation is that the low
energy physics of a gapped system is well-approximated by a theory whose
energy-momentum tensor may depend on the the background fields, but as an
operator it is a multiple of the identity at each point.  Or, at least
locally we suppose the low energy theory is topological~$\times $~invertible.
If the low energy theory happens to be invertible, then we conclude that any
non-topological invertible theory can occur and that there is no shift of
symmetry group, e.g., no extra tangential structure is required.  We expect
that choices must be made in constructing the low energy effective theory, so
a potential `low energy approximation' map from gapped theories to theories
that are locally topological times invertible may only be defined up to
homotopy.  (See \cite[\S11.4]{F4} for another perspective on the
appearance of a possibly nontopological invertible theory.)
 
To illustrate the nature of the low energy approximation, we contemplate the
following three geometric objects associated to a smooth manifold~$M$: (a)~a
principal $\Cx$-bundle $P\to M$ with connection, (b)~a principal $\Cx$-bundle
$P\to M$ with flat connection, and (c)~a principal $\Cx$-bundle $P\to M$
(with no connection).  In particular, we track what information is induced on
the free loop space $LM=\Map(\cir,M)$ by integrating over the loop.  In~(a)
we obtain a smooth function $LM\to\Cx$, the holonomy, and if there is nonzero
curvature then it has nonzero derivative.  In~(b) the holonomy is a
\emph{locally constant} function $LM\to\Cx$, and therefore we can use the
\emph{discrete} topology on~$\Cx$: the holonomy represents a class
in~$H^0(LM;\Cx)$.  In~(c) there is no connection, so no holonomy, but
nonetheless we can extract a principal $\Zo$-bundle $E_P\to LM$, a fiber
bundle of $\Zo$-torsors.  Namely, an element~$\lambda \in \Cx$ determines a
$\Zo$-torsor $E_\lambda \subset \CC$ of all~$x\in \CC$ such that
$\exp(x)=\lambda $, and so the holonomy function $LM\to\Cx$ of a
connection~$\Theta \in \sA_P$ on $P\to M$ determines $E_{P,\Theta }\to LM$,
so a $\Zo$-torsor over $\sA_P\times LM$.  Since the affine space~$\sA_P$ of
connections is contractible, the principal $\Zo$-bundle over~$\sA_P\times LM$
descends to a principal $\Zo$-bundle $E_P\to LM$.  It may be regarded as the
homotopical information in a connection.  It determines a class in the sheaf
cohomology group~$H^0(LM;\Cxu)$ in which $\Cxu$~has the \emph{continuous}
topology.  Since $\Cxu$ is an Eilenberg-MacLane space with $\pi _1\cong \Zo$,
there is an isomorphism
  \begin{equation}\label{eq:232}
     H^0(LM;\Cxu)\xrightarrow{\;\;\cong \;\;} H^1(LM;\Zo) .
  \end{equation}
 
Returning to invertible field theories\footnote{Note that each of (a), (b),
and~(c) above determines the corresponding type of invertible 1-dimensional
field theory of oriented manifolds equipped with a map to~$M$.} we have the
following situations: (a)~a non-topological theory, as contemplated in
Remark~\ref{thm:142}; (b)~a discrete invertible topological theory, as in
Ansatz~\ref{thm:145}; and (c)~a topological field theory whose partition
``function'' is a $\Zo$-torsor rather than a complex number.  While (a)~and
(b)~have clear analogs for non-invertible field theories, it is unclear what a
non-invertible analog of~(c) would be.  In the invertible case we posit the
following definition of a type~(c) theory.  

  \begin{ansatz}[]\label{thm:156}
 A \emph{continuous invertible $n$-dimensional extended topological field
theory with symmetry group~$H_n$} is a spectrum map 
  \begin{equation}\label{eq:237}
     \varphi \:\Sigma ^nMTH_n\longrightarrow \Sniz. 
  \end{equation}
The space of theories of this type is~$\Map(\Sigma ^nMTH_n,\Sniz)$.  
  \end{ansatz}

  \begin{remark}[]\label{thm:157}
 In differential geometry a principal $\Cx$-bundle $P\to M$ has a
\emph{primary topological} invariant in~$H^2\bigl(M;\Zo \bigr)$, its Chern
class.  A connection gives a \emph{secondary geometric} invariant, its
holonomy.  If the connection is flat, the secondary invariant is also
topological (discrete), and in that case the Chern class lies in the torsion
subgroup of~$H^2\bigl(M;\Zo \bigr)$.  The \emph{stable} continuous invertible
field theories we encounter in~\S\ref{subsec:6.2} attach a primary
$\Zo$-valued invariant to closed $(n+1)$-manifolds. 
  \end{remark}

A discrete invertible topological field theory~$F$ (Ansatz~\ref{thm:145})
gives rise to a continuous invertible topological field theory~$\varphi $,
which retains the homotopical information in~$F$, in particular its
deformation class.  In this paper we do not develop the theory of
non-topological field theories, but in the invertible case we use instead
continuous topological theories, which represent the homotopical information
carried by a geometric theory. 

  \begin{remark}[]\label{thm:158}
 In the application to low energy approximations of gapped theories, we
expect that only this homotopical shadow of a geometric theory is
well-defined, due to the choices in constructing a low energy theory. 
  \end{remark}

   \section{Equivariant stable homotopy theory}\label{sec:5}

Reflection symmetry in invertible topological theories is expressed by a
$\Z/2$-action on the constituent spectra.  This requires working in
$\Z/2$-equivariant stable homotopy theory.  What we will use here is {\em
Borel equivariant} homotopy theory.  This is somewhat easier than the more
general theory, and at the moment is all that seems needed for our main
results.  There are many places to read about equivariant stable homotopy
theory.  The reader may wish to consult~\cite{Ad},~\cite{GM},
\cite[Chapter~2]{HHR},~\cite{Sch} and~\cite[Chapter~8]{tD}.

\subsection{Spectra}
\label{sec:mssspectra}

Let $\spaces$ be the category of pointed topological spaces, and for
$A,B\in \spaces$ write $\spaces(A,B)$ for the set of basepoint
preserving continuous functions from $A$ to $B$ and $\uspaces(A,B)$ for
the same set, regarded as a topological space with the compact open
topology.

A {\em spectrum} $X$ is a sequence $\{X_{0},X_{1},\dots \}$ of pointed
spaces, equipped with structure maps
$s_{n}:S^{1}\wedge X_{n}\to X_{n+1}$.  A map $X\to Y$ of spectra is a
sequence of maps $X_{n}\to Y_{n}$ making the diagrams
\[
\xymatrix{
S^{1}\wedge X_{n}  \ar[r]^{s^{X}_{n}}\ar[d]  & X_{n+1}
\ar[d] \\
S^{1}\wedge Y_{n}  \ar[r]_{s^{Y}_{n}}        & Y_{n+1}
}
\]
commute.   The set of spectrum maps from $X$ to $Y$ is a subset of 
\[
\prod_{n} \uspaces(X_{n},Y_{n})
\]
and so may be regarded as a topological space with the subspace
topology.   The space of maps between spectra $X$ and $Y$ will be
denoted $\uspectra(X,Y)$.

The {\em homotopy groups} $\pi_{n}X$ of a spectrum $X$ are
defined for $n\in\Z$ by 
\begin{equation}
\label{eq:mss12}
\pi_{n}(X) = \varinjlim_{k}\pi_{n+k}X_{n+k}
\end{equation}
in which the bonding maps are given by the suspension mapping 
\[
\pi_{n+k}X_{n+k}\xrightarrow{\Sigma}{} \pi_{n+k+1}\Sigma X_{n+k} \xrightarrow{s_{n+k}}{}\pi_{n+k+1}X_{n+k+1}.
\]
The group $\pi_{n+k}X_{n+k}$ is defined for any $n\in \Z$ as soon as
$k\ge -n$.  A map $X\to Y$ is a {\em weak equivalence} if it induces
an isomorphism of homotopy groups.  

Equipped with the weak equivalences, the category $\spectra$ of
spectra becomes a bona fide place for doing homotopy theory.  A
functor $\spectra\to \cat C$ to a category $\cat C$ is a {\em homotopy
functor} if it takes weak equivalences to isomorphisms.  There is a
universal homotopy functor $\spectra\to\ho\spectra$ characterized by
the property that the restriction mapping gives an equivalence between
the category of functors $\ho\spectra\to \cat C$ with the category of
homotopy functors $\spectra\to\cat C$.  The category $\ho\spectra$ is
the {\em homotopy category of spectra}, and the set (in fact abelian
group) $\ho\spectra(X,Y)$ is called the abelian group of {\em homotopy
classes of maps} from $X$ to $Y$.  We will use the common abbreviation
\[
[X,Y] = \ho\spectra(X,Y).
\]

\begin{example}
\label{eg:5}
The suspension spectrum $\Sigma^{\infty}Z$ of a space $Z$ is the
spectrum 
\[
\big(\Sigma^{\infty}Z\big)_{n}= S^{n}\wedge Z
\]
with the structure maps derived from the equivalence $S^{1}\wedge
S^{n}=S^{n+1}$.   When the context is clear it is customary to drop
the $\Sigma^{\infty}$ and not distinguish in notation between a space
nd its suspension spectrum.   
\end{example}

\begin{example}
\label{eg:4}
For a non-negative integer $k \ge 0$  let $S^{k}$ be the suspension
spectrum of the $k$-sphere and 
$S^{-k}$ be the spectrum
defined by 
\[
\big(S^{-k}\big)_{n} = \begin{cases} \ast & n < k \\ S^{n-k} & n\ge k\end{cases}.
\]
From the formula~\eqref{eq:mss12} one easily checks that for all $k\in\Z$ one
has an isomorphism
\[
[S^{k},X]\approx \pi_{k}X
\]
natural in $X$.   
\end{example}

\subsubsection{Smash product}

Suppose that $X=\{X_{n} \}$ is a spectrum and $Z$ is a space.
Define $X\wedge Z$ to be the spectrum with
\[ 
\big(X\wedge Z\big)_{n} = X_{n}\wedge Z
\]
and the structure maps derived from those of $X$.  This is the {\em
smash product} of the spectrum $X$ with the space $Z$.

\begin{example}
\label{eg:1}
The spectrum $S^{0}\wedge Z$ is the suspension spectrum of $Z$.
\end{example}

\begin{example}
\label{eg:6}
The spectrum $S^{-k}\wedge S^{k}$ consists of the spaces
\[
\big(S^{-k}\wedge S^{k}\big)_{m} = \begin{cases}
\ast &\quad m <k \\
S^{m} &\quad m\ge k.
\end{cases}
\]
There is an inclusion
\[
S^{-k}\wedge S^{k} \to S^{0}
\]
which is easily checked to be a weak equivalence.
\end{example}

For a spectrum $X=\{X_{n} \}$ there is a functorial weak equivalence 
\begin{equation}
\label{eq:mss1}
\ho\varinjlim S^{-n}\wedge X_{n} \xrightarrow{\approx}{} X.
\end{equation}
(See, for example~\cite[\S2.2.1]{HHR} where it is called the
{\em canonical homotopy presentation}.)

There is an enrichment $\dspectra$ of $\ho\spectra$ over the homotopy
category of spaces.   It is characterized by the existence of an isomorphism
\begin{equation}
\label{eq:mss2}
\ho\spaces(Z,\dspectra(X,Y)) \approx \ho\spectra(X\wedge Z,Y)
\end{equation}
functorial in CW complexes $Z$, and spectra $X$ and $Y$.   We will employ the
abbreviation   
  \begin{equation}\label{eq:283}
     \Map(X,Y)=\dspectra(X,Y). 
  \end{equation}
Taking $Z$  to be the {\em space} $S^{0}$ in~\eqref{eq:mss2} gives the isomorphism
\[
[X,Y] = \pi_{0}\Map(X,Y).
\]

When the spectrum $X=\{X_{n} \}$ has the property that each $X_{n}$ is
a CW complex and $Y$ has the property that each map 
\[
Y_{n}\to \Omega Y_{n+1}
\]
is a weak equivalence,  the homotopy type of $\Map(X,Y)$ is given by 
\begin{equation}
\label{eq:mss3}
\dspectra(X,Y) = \ho\varprojlim \mathcal M(X_{n},Y_{n}),
\end{equation}
with $\mathcal M(X_{n},Y_{n})$ is the homotopy limit of the diagram 
\[
\xymatrix@!C=2em{
\uspaces(X_{n},Y_{n})\ar[dr] && \ar[dl]^{\sim}
\uspaces(X_{n-1},Y_{n-1}) \ar[dr] && \ar[dl]^{\sim}
&\ar@{}[d]|{\dots}&\ar[dr] && \uspaces(X_{0},Y_{0})\ar[dl]^{\sim} \\ 
& \uspaces(X_{n-1},\Omega Y_{n})&& \uspaces(X_{n-2},\Omega Y_{n-1}) && &&
\uspaces(X_{0},\Omega Y_{1}) &
}
\]
in which the southeast arrows are given by  the compositions
\[
\uspaces(X_{m},Y_{m})\to \uspaces(S^{1}\wedge X_{m-1},Y_{m})\approx
\uspaces(X_{m-1}\Omega Y_{m}).
\]
Note that the projection map $\mathcal M(X_{n},Y_{n})\to
\uspaces(X_{n},Y_{n})$ is a weak equivalence, so that~\eqref{eq:mss3} can
heuristically be interpreted as giving a presentation of
$\dspectra(X,Y)$ as a homotopy inverse limit of the spaces
$\uspaces(X_{n},Y_{n})$.

A spectrum $Y$ with the property that for all $n$ the map $Y_{n}\to
\Omega Y_{n+1}$ is a weak equivalence is called an {\em $\Omega$-spectrum} (or a {\em loop
spectrum}).    Every spectrum $Y$ is naturally weakly equivalent to an
$\Omega$-spectrum.   Indeed, given $Y$ define $LY$ by 
\[
LY_{n}=\ho\varinjlim \Omega^{k}Y_{n+k}.
\]
Using the homeomorphism $\Omega(\Omega^{k}Y_{n+k})\approx
\Omega^{k}\Omega Y_{n+k}$ one sees that $LY$ has the structure
of an $\Omega$-spectrum and that the canonical map $Y\to LY$
is a weak equivalence.  

\subsubsection{Duality}
\label{sec:mssduality}

The operation $X\wedge Z$ extends to a symmetric monoidal
smash product on spectra.   In fact there is a unique extension having
the property that it commutes with
colimits in both variables, and for spaces $Z_{1}$ and $Z_{2}$ and
integers $k,\ell\ge 0$ one has
\[
\big(S^{-k}\wedge Z_{1}\big)\wedge \big(S^{-\ell}\wedge
Z_{2}\big) \simeq S^{-(k+\ell)}\wedge Z_{1}\wedge Z_{2}.
\]
The existence and uniqueness can be deduced from the canonical
homotopy presentation~\eqref{eq:mss1}.   

Equipped with the smash product the categories $\ho\spectra$ and
$\dspectra$ become symmetric monoidal categories.  By Example~\ref{eg:6} the
suspension spectra of spheres are dualizable (in fact invertible).  It
follows that the suspension spectrum of any finite CW complex is also
dualizable.

\subsubsection{Stability}
\label{sec:mssstability}

An easy check (or an appeal to the invertibility of spheres) shows
that for all $k$ and all $X$ the map
\[
\pi_{k}X\to \pi_{k+1}X\wedge S^{1}
\]
is an isomorphism.   This implies a map $A\to X$ gives rise to a long
exact sequence 
\[
\dots\to \pi_{k}A\to\pi_{k}X\to \pi_{k}X\cup CA \to \pi_{k-1}A\to\dots
\]
in which $X\cup CA$ is the spectrum 
\[
\big(X\cup CA \big)_{n} = X_{n}\cup CA_{n}
\]
with $CA=A\times [0,1]/A\times \{1 \}\cup \ast \times [0,1]$.   This, in turn, 
implies that the map from $A$ to the homotopy fiber of $X\to X\cup CA$
is a weak equivalence.

\subsubsection{Thom Spectra}
\label{sec:mssthom-spectra}

Let $X$ be a space.  Given a map $V:X\to BO$, define a sequence of
maps $V_{n}:X_{n}\to BO_{n}$ by the homotopy pullback squares
\begin{equation}
\label{eq:mss13}
\xymatrix{
X_{n}  \ar[r]\ar[d]_{V_{n}}  & X
\ar[d]^{V} \\
BO_{n}  \ar[r]        & BO\mathrlap{\ .}
}
\end{equation}
The map $V_{n}:X_{n}\to BO_{n}$ classifies a vector bundle of
rank $n$ over $X_{n}$ (which will also be denoted $V_{n}$).  By
construction, the pullback of $V_{n+1}\to X_{n+1}$ to $X_{n}$ comes equipped with
an isomorphism to $V_{n}\oplus\triv{}\to X_n$.  This give a map of Thom
spaces 
\[
\Sigma\thom(X_{n};V_{n})= \thom(X_{n};V_{n}\oplus 1)\to \thom(X_{n+1},V_{n+1})
\]
making the sequence of spaces $\{\thom(X_{n};V_{n}) \}$ into a
spectrum.   This is the {\em Thom spectrum } of $V$, denoted
$\thom(X;V)$.   The canonical homotopy presentation of $\thom(X;V)$
takes the form 
\[
\thom(X;V) = \ho\varinjlim S^{-n}\wedge \thom(X_{n};V_{n}).
\]

We will also encounter the Thom spectrum $\thom(X;-V)$ associated to a
map $V\:X\to BO$ by composing with the ``additive inverse'' map
$(-1):BO\to BO$ (see~\S\ref{subsec:6.1}).  With $X_{n}$ and $V_{n}$ defined as
in~\eqref{eq:mss13}, the isomorphism
\[
V_{n+1}\vert_{X_{n}} \approx V_{n}\oplus\triv{}
\]
becomes 
\[
-V_{n+1}\vert_{X_{n}} \approx -V_{n} -\triv{}.
\]
This leads to maps 
\[
\thom(X;-V_{n})\to S^{1}\wedge \thom(X_{n+1};-V_{n+1}),
\]
and an alternative presentation
  \begin{equation}\label{eq:thomlim}
  \thom(X;-V)=\ho\varinjlim S^{n}\wedge \thom(X_{n};-V_{n}).
  \end{equation}
If $V$ has virtual dimension $d$ then $V-\triv{d}$
has virtual dimension $0$ and one defines 
\[
\thom(X;V) = S^{d}\wedge\thom(X;V-\triv{d}).   
\]

The Thom spectrum construction is a functor on the category of spaces
over the classifying space $\Z\times BO$ of $KO$-theory.   
It is symmetric monoidal in the sense that for
$V:X\to \Z\times BO$ and $W:Y\to \Z\times BO$ there is a natural weak equivalence
\[
\thom(X\times Y;\pi_{X}^{\ast}V\oplus \pi_{Y}^{\ast}W)\approx \thom(X;V)\wedge \thom(Y;W),
\]
in which $\pi_{X}$ and $\pi_{Y}$ are the projections.

\subsection{Borel equivariant stable homotopy theory}
\label{sec:mssborel-equiv-stable}

Now suppose that $G$ is a compact Lie group (which in our case will be
$\Z/2$) and let $\bgspectra{G}$ be the category of spectra equipped with a
$G$ action, and equivariant maps.   An object of $\bgspectra{G}$
consists of a sequence $\{X_{n},s_{n} \}$ of left $G$-spaces $X_{n}$
and equivariant maps $S^{1}\wedge X_{n}\to X_{n+1}$ in which $S^{1}$
has the trivial $G$-action.   Sometimes what we are calling a
$G$-spectrum is called a {\em naive $G$-spectrum}.   

\begin{definition}
\label{def:1a}
A map $X\to Y$ in $\bgspectra{G}$ is a {\em Borel weak equivalence} if
it is a weak equivalence when regarded as a map in $\spectra$.   
\end{definition}

Equipped with the Borel weak equivalences, the category
$\bgspectra{G}$ becomes a category in which one can do homotopy
theory.  The homotopy category $\ho\bgspectra{G}$ is defined as the
target of the universal homotopy functor out of $\bgspectra{G}$.  
We will use the abbreviation 
\[
[X,Y]^{hG} = \ho\bgspectra{G}(X,Y).  
\]

The construction of the smash product goes through in a
straightforward way for the Borel equivariant spectra, and there is a
{\em derived equivariant mapping space} between two equivariant
spectra.  In fact, it follows from the expression~\eqref{eq:mss3} that
when $X$ and $Y$ are $G$-spectra, the space $\dspectra(X,Y)$ acquires
the homotopy type of a $G$-space.  The derived equivariant mapping
space works out to be homotopy fixed point space
\[
\Map^G(X,Y)=\Map(X,Y)^{hG},
\]
and the maps in the homotopy category of $G$-spectra are given by
\[
[X,Y]^{hG} = \pi_{0}\Map(X,Y)^{hG}.
\]

In Borel equivariant homotopy theory the suspension spectra of finite
$G$-sets (with a disjoint base point added) are self dual.  This
implies that the suspension spectra of finite $G$-CW-complexes are
dualizable and the suspension spectrum of the one point
compactification $S^{V}$ of a finite dimensional representation $V$ of
$G$ is invertible.  These facts are not quite immediate. If $X$ is a
finite $G$-set, then the evaluation map
\[
X_{+}\wedge X_{+}\to S^{0}
\]
is the map of suspension spectra induced by the map 
\[
X\times X\to S^{0}
\]
sending the diagonal to the non base point and the complement of the
diagonal to the base point.   It is not so straightforward to write
down the coevaluation map.   Nevertheless, for $G$-spectra $W$ and
$Z$, the composite
\[
\Map(Z,W\wedge X_{+})\to 
\Map(Z\wedge X_{+},W\wedge X_{+}\wedge X_{+})\to 
\Map(Z\wedge X_{+},W)
\]
is a $G$-equivariant map that is a weak equivalence of underlying
spaces, and so gives an equivalence
\begin{align*}
\Map(Z,W\wedge X_{+})^{hG} \approx \Map(Z\wedge X_{+},W)^{hG}
\end{align*}
and an isomorphism
\[
[Z,W\wedge X_{+}]^{hG} \approx [Z\wedge X_{+},W]^{hG}.
\]

Once one knows that the finite $G$-sets are dualizable it follows that
the suspension spectrum of any finite $G$-CW-complex is dualizable.  We
denote the dual of~$X$ as~$D(X)$.
This implies the invertibility of $S^{V}$ since the map
\[
D(S^{V})\wedge S^{V}\to S^{0}
\]
is a weak equivalence of underlying spectra.   It is customary to use
the notation
\[
S^{-V} = D S^{V}.
\]
For more on virtual representation spheres see Example~\ref{eg:3} of
\S\ref{sec:mssequiv-thom-spectra}.  

\subsubsection{Homotopy fixed points and homotopy orbits}

Regarding a non-equivariant spectrum as a $G$-spectrum with the trivial
action gives a functor
\[
\spectra\to \bgspectra{G}.
\]
This functor preserves weak equivalences  and so induces a functor 
on homotopy categories.   The homotopy orbit and fixed point functors
provide both a left and right adjoint to this induced functor.

Recall that the {\em homotopy orbit  space} of a pointed $G$-space $Z$ is the
space
\[
Z_{hG}=EG_{+}\underset{G}{\wedge}Z,
\]
and that the {\em homotopy fixed point space} is the space 
\[
Z^{hG} = \spaces(EG_{+},Z)^{G}
\]
of equivariant basepoint preserving maps from $EG_{+}$ to $Z$.   
These notions extend component-wise to equivariant spectra.  The {\em homotopy
orbit spectrum} of a $G$-spectrum $X=\{X_{n} \}$ is the spectrum
$X_{hG}=\{(X_{n})_{hG} \}$ and the {\em pre homotopy fixed point spectrum} is
the spectrum $X^{h'G}=\{(X_{n})^{hG}\}$.    

The functor $X_{hG}$ preserves weak equivalences and so directly
induces a functor  on homotopy categories.   The functor $X^{h'G}$
preserves weak equivalences between $\Omega$-spectra and so induces a
{\em homotopy fixed point}
functor 
\[
(\slot)^{hG}:\ho\bgspectra{G}\to \ho\spectra
\]
sending  $X$ to $(LX)^{h'G}$.   

These functors on the homotopy category
are adjoints to the inclusion 
\[
\ho\spectra\to \ho\bgspectra{G}
\]
in the sense that there are natural isomorphisms
\begin{align}
\label{eq:mss9}
[X,A]^{hG} &\approx [X_{hG},A] \\
\label{eq:mss10}
[A,Y]^{hG} &\approx [A,Y^{hG}] 
\end{align}
in which $X$ and $Y$ are $G$-spectra and $A$ is a spectrum with
trivial $G$-action.  Also, the fixed point spectrum~$A^{h\Z/2}$ is
computed as
 \begin{equation}\label{eq:126}
     \Map^{\Z/2}(S^{0},A)\;\simeq\; \Map(
     B\Z/2_+,A)\xleftarrow{\simeq}A\vee
     \Map(B\Z/2,A) \xrightarrow{\simeq}{} A\times
     \Map(B\Z/2,A),
  \end{equation}
in which the left pointing map  involves a choice of
a basepoint $x\in B\Z/2$ and is the sum of the map
\[
B\Z/2_{+} \to S^{0} 
\]
sending $B\Z/2$ to the non basepoint and the map
\[
B\Z/2_{+} \to B\Z/2 
\]
which is the identity map on $B\Z/2$ and sends the disjoint
base point on the left to the new basepoint on the right.

\subsubsection{Equivariant Thom spectra}
\label{sec:mssequiv-thom-spectra}

Suppose that $B$ is a space and $p:X\to B$ is a principal $G$-bundle.
A map $W:B\to BO$ leads, as above, to a sequence of maps
\[
\xymatrix{
B_{n}  \ar[r]\ar[d]^{W_{n}}  & B_{n+1}  \ar@{-->}[r]\ar[d]^{W_{n+1}}   &  B
\ar[d]^{W}\\ 
BO_{n}  \ar[r]        & BO_{n+1}  \ar@{-->}[r]         & BO
}
\]
and a Thom spectrum $\thom(B;W)=\big\{\thom(B_{n};W_{n})\big \}$.   Define
principal $G$-bundles $X_{n}\to B_{n}$ by the pullback square
\[
\xymatrix{
X_{n}  \ar[r]\ar[d]_{p_{n}}  & X
\ar[d]^{p} \\
B_{n}  \ar[r]        & B.
}
\]
The bundle $p_{n}^{\ast}W_{n}$ is a $G$-equivariant
vector bundle on $X_{n}$.  In fact, by descent,  the data of a $G$-equivariant
vector bundle on $X_{n}$ is equivalent to the data of a vector bundle
over $B_{n}$.   The $G$-action on $(X_{n},p^{\ast}W_{n})$ induces a $G$-action on the Thom
spectrum $\thom(X,p^{\ast} W) =\{\thom(X\mstrut _{n};p_{n}^{\ast} W_{n}) \}$
making it into an equivariant spectrum.  By construction the homotopy orbit
spectrum is given by
\begin{equation}
\label{eq:mss11}
\thom(X;p^{\ast}W)_{hG} = \thom(B;W).
\end{equation}

As in \S\ref{sec:mssthom-spectra}, equivariant Thom spectra for maps
$B\to \Z\times BO$ are defined by subtracting a suitable trivial
bundle and suspending the result.

\begin{example}[Representation spheres]
\label{eg:3}
An element $V\in KO^{0}(BG)$ is classified by a map 
\[
V:BG\to \Z\times BO
\]
and so gives rise to an equivariant Thom spectrum.   When $V$
corresponds to a representation of $G$ the equivariant Thom spectrum
is the spectrum $S^{V}$.   This construction sends sums of elements of
$KO^{0}(BG)$ to smash products of $G$-spectra.  Composing with the map 
\[
RO(G)\to KO^{0}(BG)
\]
gives a construction of a sphere $S^{V}$ associated to every virtual
representation $V$ of $G$.   This gives another approach to the
construction and invertibility of representation spheres in Borel
equivariant stable homotopy theory.
\end{example}

\subsubsection{The $\sigma$-sphere}
\label{sec:mssspecial-examples}

We now specialize to the case $G=\Z/2$, and write $\sigma$ for the
real sign representation.  The sphere $S^{\sigma}$ has an equivariant
cell decomposition with one non-basepoint fixed $0$-cell, and one free
$1$-cell as shown here.
  \begin{center}
\includegraphics[]{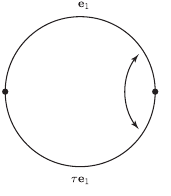}
  \end{center}
This gives a pushout square 
\[
\xymatrix{
\Z/2\times \partial D^{1}  \ar[r]\ar[d]  &\Z/2\times D^{1}
\ar[d] \\
S^{0}  \ar[r]        & S^{\sigma}
}
\]
leading to a cofibration sequence 
\begin{equation}
\label{eq:mss4}
\Z/2_{+}\to S^{0}\to S^{\sigma}
\end{equation}
of equivariant spectra.  Passing to duals and using the self-duality
of finite $G$-sets gives a cofibration sequence
\begin{equation}
\label{eq:mss5}
S^{-\sigma} \to S^{0}\to \Z/2_{+}.
\end{equation}
The map $S^{0}\to \Z/2_{+}$ is the {\em transfer map} and,
non-equivariantly, has degree $1$ on each summand of
$\Z/2_{+}=S^{0}\vee S^{0}$.

Write
\begin{align*}
\gamma &= 1- \sigma \\
\delta &= \sigma-1.
\end{align*}
For a $\Z/2$-spectrum $X$ we define 
\begin{equation}
\label{eq:mssh2}
\begin{aligned}
X^{\delta} &= S^{\delta}\wedge X \\
X^{\gamma} &= S^{\gamma}\wedge X.
\end{aligned}
\end{equation}

Smashing with~\eqref{eq:mss4} and~\eqref{eq:mss5} gives for any $X$, (co-)fibration sequences
  
\begin{align}\label{eq:mss189}
     X^{\delta}\to \Z/2_{+}&\wedge X\to X\quad\text{and}  \\
\label{eq:mss6}
X\to \Z/2_{+}&\wedge X\to X^{\gamma}.
\end{align}

\subsection{Real structures}
\label{sec:mssreal-structures}
Our next aim is to equip $\ICx$ and $\IZ$ with $\Z/2$-actions
corresponding to complex conjugation, in such a way that the
cofibration sequence (see~\eqref{eq:231})
  \begin{equation}
\IZ\to H\CC\xrightarrow{\exp} \ICx
  \end{equation}
is a cofibration sequence of $\Z/2$-equivariant spectra.  Though there no
mystery about the action on the abelian group-valued functor $[\slot,\ICx]$,
there are infinitely 
many refinements of this to an action on the spectrum $\ICx$.  Here we will
motivate a specific choice, and check it against three situations in which
there is a naturally occurring action.

\subsubsection{$\Z/2$-actions}\label{subsubsec:act} The space of
$\Z/2$-actions on a 
spectrum $X$ is the space of maps
\[
B\Z/2\to B\haut(X)
\]
from the classifying space of $\Z/2$ to the classifying space of the
monoid of self homotopy equivalences of $X$.    Smashing a map
$S^{0}\to S^{0}$ with the identity map of $X$ gives a map 
\[
B\haut(S^{0})\to B\haut(X).
\]
The maps $B\Z/2\to B\haut(S^{0})$ then correspond both to
(i)~$\Z/2$-actions on $S^{0}$ and (ii)~ $\Z/2$-actions on all spectra
which are 
natural in the sense that they commute with all maps and are homotopy
colimit preserving.  Put more succinctly, the ``natural'' $\Z/2$-actions are
homotopy colimit preserving sections of the forgetful functor
\begin{equation}
\label{eq:mss8}
\bgspectra{\Z/2} \to \spectra.
\end{equation}

Associating to a vector space its
one point compactification defines a map 
\[
BO\to B\haut(S^{0}),
\]
so that a virtual representation $V$ of $\Z/2$,
of virtual dimension $0$, determines a natural $\Z/2$-action via the composition
\[
B\Z/2\xrightarrow{V} BO\to B\haut(S^{0}).
\]
The corresponding section of~\eqref{eq:mss8} is the one sending a
spectrum $X$ to $S^{V}\wedge X$.   

\begin{remark}
\label{rem:3}   Because $S^{0}$ is the tensor unit in $\spectra$, the
space $B\haut(S^{0})$ is actually an infinite loop space.   The map
$BO\to B\haut(S^{0})$ also turns out to be an infinite loop map.
This means that ``natural'' $\Z/2$-actions may be composed, and that
the composition of actions corresponding to virtual representations
$V$ and $W$ is the natural action corresponding to $V\oplus W$.
\end{remark}

\begin{remark}
\label{rem:2a}
From the defining property of $\IZ$ one can check that the map
\begin{align*}
\Map(S^{0},S^{0}) &\to\Map(\IZ,\IZ)\\
f &\mapsto f\wedge \id
\end{align*}
is a weak equivalence.  Now the loop space of any component of the space of
maps $B\Z/2\to B\haut(S^{0})$ is the space of maps $B\Z/2\to
\haut(S^{0})$.  The homotopy type of this latter space falls within
the purview of the Segal conjecture, and consists of the path
components of $QB\Z/2_{+}\times QS^{0}$ whose first component is a
generator of
\[
\pi_{0}QB\Z/2_{+} \approx \Z.
\]
For this reason, one knows a lot about the space of actions of $\Z/2$
on $\IZ$, and in particular that there are infinitely many
inequivalent actions inducing the sign representation on $\pi_{0}\IZ$.	
\end{remark}

For the spectrum $H\CC$ one has $B\haut(H\CC)\approx K(\Aut(\CC),1)$,
in which $\Aut(\CC)$ is the group of abelian group automorphisms of
$\CC$.  In this case there is no difference between $\Z/2$-actions on
$H\CC$ and $\Z/2$-actions on $\CC$, and complex conjugation is
uniquely specified.

\subsubsection{Duality}
\label{sec:mssduality-1}

Spectra with no negative homotopy groups are modeled by (higher)
Picard groupoids.  Picard groupoids come equipped with a $\Z/2$-action
sending each object to its inverse.  This corresponds to a natural
$\Z/2$-action on spectra which we now determine. 

Let $\sC$ be a Picard category and  consider the category of pairs
$(x,y)$ equipped with an isomorphism $x\otimes y\to 1$.   The functor
$(x,y)\mapsto x$ is an equivalence of categories, so the $\Z/2$-action
sending $x$ to its inverse corresponds to the action on the category
of pairs sending 
\[
x\otimes y\to 1
\]
to 
\[
y\otimes x \to x\otimes y \to 1.   
\]
If $\sC$ corresponds to a
spectrum $X$ then the category of pairs corresponds to $X\vee X\approx
X\times X$, and the category of pairs $(x,y)$ equipped with an
isomorphism $x\otimes y\to 1$ is the homotopy fiber of the map 
\[
X\vee X\to X.
\]
Writing this in terms of equivariant spectra we are looking at the
homotopy fiber of 
\[
\Z/2_{+}\wedge X\to X,
\]
which by~\eqref{eq:mss189} is $ X^\delta $.

Summarizing, we have the following.

\begin{proposition}
\label{thm:mss1}
The natural $\Z/2$-action corresponding to ``duality'' is given by the
map 
\[
B\Z/2\xrightarrow{\delta}{}BO\to B\haut(S^{0})
\]
and associates to a spectrum $X$, the $\Z/2$-equivariant spectrum 
\[
X^{\delta} = S^{\delta}\wedge X = S^{\sigma-1}\wedge X.
\]
\end{proposition}

\subsubsection{Complex conjugation}
\label{sec:msscompex-conjugation}

A complex conjugation on $\IZ$ corresponds to a map 
\[
\nu:B\Z/2\to B\haut(\IZ)
\]
having at least the property that its effect on $\pi_{1}$ is the sign
representation of $\Z/2$ on $\Z(1)$.    Write 
\[
\spaces\bigl(B\Z/2,B\haut(\IZ)\bigr)_{c}
\]
 for the space of maps inducing this homomorphism on $\pi_{1}$.  The space
$\spaces\bigl(B\Z/2,B\haut(\IZ)\bigr)_{c}$ is a union of infinitely many path
components of $\spaces\bigl(B\Z/2,B\haut(\IZ)\bigr)$ (see
Remark~\ref{rem:2a}).

Similarly, complex conjugation on $\ICx$ corresponds to a map
\[
\nu':B\Z/2\to B\haut(\ICx),
\]
whose effect on $\pi_{1}$ corresponds to the action of $\Z/2$ by
complex conjugation on $\CC^{\times}$.   Write
$\spaces\bigl(B\Z/2,B\haut(\ICx)\bigr)_{c}$ for this space of maps.

Since the maps
\begin{align*}
\Map(\IZ,H\CC) &\to \Hom(\Z(1),\CC)\\
\Map(H\CC, \ICx) &\to \Hom(\CC,\CC^{\times})
\end{align*}
are weak equivalence, so are the maps 
\begin{align*}
\Map(\IZ,H\CC)^{h\Z/2} &\to \Hom(\Z(1),\CC)^{\Z/2} \\
\Map(H\CC, \ICx)^{h\Z/2} &\to \Hom(\CC,\CC^{\times})^{\Z/2}
\end{align*}
for any $\Z/2$-actions on~$\IZ$ and~$\ICx$.
It follows that any action $\nu$ as above extends uniquely to a
$\Z/2$-equivariant map 
\[
\IZ^{\nu}\to H\CC
\]
and so induces a $\Z/2$-action $\nu'$ on the cofiber $\ICx$.   Similarly an
action 
$\nu'$ as above induces a $\Z/2$-action $\nu$ on $\IZ$.   In this
way we have an equivalence 
\begin{equation}
\label{eq:mss7}
\spaces\bigl(B\Z/2,B\haut(\IZ)\bigr)_{c}\approx
\spaces\bigl(B\Z/2,B\haut(\ICx)\bigr)_{c}.
\end{equation}

The space of {\em real structures} on $\IZ$ and $\ICx$ will be defined to
be a single path component of the above spaces.   Before specifying
which one, we turn to a motivating example. 

  \begin{example}[Hermitian structures and positivity]\label{thm:177}
 Let $\fVC$ be the topological groupoid of finite dimensional complex vector
spaces and (complex) linear isomorphisms, endowed with the symmetric monoidal
structure of $\otimes$.  For $V\in\fVC$, let $V^{\ast}$ be the dual vector
space.  We define a {\em covariant} ``duality'' functor $V\mapsto V^{\vee}$
by
 \begin{align*}
V^{\vee} &= V^{\ast}\\
f^{\vee} &= \big(f^{\ast}\big)^{-1}.
\end{align*}
The canonical isomorphism $V^{\vee\vee}\approx V$ extends the functor
$V^{\vee}$ to a $\Z/2$-action on $\fVC$.  (See Appendix~\ref{sec:11}.)
There is another $\Z/2$-action 
\[
V\mapsto \overline{V}
\]
gotten by redefining scalar multiplication by $x\in \CC$ to be scalar
multiplication by $\bar{x}$.

Let $\VectPos$ be the topological groupoid of finite dimensional complex
vector spaces equipped with a positive definite Hermitian inner product, and
unitary transformations.  Since the inclusion $U(n)\subset GL_{n}(C)$ is a
homotopy equivalence, the functor \[ \VectPos\to \fVC \] is a weak
equivalence of topological categories.  On $\VectPos$ the Hermitian inner
product gives a natural isomorphism $\overline{V}^{*}\approx V$,
trivializing the composition ``bar star'' of the two $\Z/2$-actions defined
above.  This suggests that whatever complex conjugation is, on the categories
in which $\CC$ is regarded as having a topology, the combined action (in the
sense of Remark~\ref{rem:3}) of complex conjugation and duality should be
trivializable.  The trivialization is non-canonical, however.  One might have
chosen negative definite vector spaces, or, for each prime $p$ made a choice
of positive or negative definite Hermitian inner products on vector spaces of
dimension $p$ and then extend to all finite dimensional vector spaces by
tensoring.
  \end{example}

With Example~\ref{thm:177} as motivation, and in view of
Proposition~\ref{thm:mss1}, we propose the following.

\begin{definition}
\label{def:2a}
The space of {\em real structures} on $\IZ$ is the path component of the
space 
  \begin{equation}\label{eq:284}
     \spaces\bigl(B\Z/2,B\haut(\IZ)\bigr)_{c} 
  \end{equation}
containing the map $1-\sigma$.   The space of {\em real structures} on
$\ICx$ is the path component of the space
$\spaces\bigl(B\Z/2,B\haut(\ICx)\bigr)_{c}$ corresponding to the space of real
structures on $\IZ$ under the equivalence~\eqref{eq:mss7}.
\end{definition}

As above, we write $\IZ^{\nu}$ for the $\Z/2$-spectrum corresponding to a
real structure $\nu:B\Z/2\to B\haut(\IZ)$.  Any real structure fits
canonically into a cofibration sequence
 \begin{equation}\label{eq:mssexp}
 \IZ^{\nu}\longrightarrow 
  H\CC^{\nu '}\xrightarrow{\;\exp\;}{} (\ICx)^{\nu'}  
 \end{equation}
in which $\nu$ and $\nu'$ correspond under the equivalence~\eqref{eq:mss7};
the superscript on~$H\CC$ is the unique complex conjugation, explained at the
end of \S\ref{subsubsec:act}.  

\begin{remark} \label{rem:1a}  
 Since the space of real structures $\nu$ on $\IZ$ is connected, but not
contractible, any $\IZ^{\nu}$ is non-canonically equivariantly equivalent to
$\IZ^{\gamma}= S^{1-\sigma}\wedge \IZ$.
\end{remark}

  \begin{ansatz}[]\label{thm:182}
 We use the basepoint in~\eqref{eq:284} to fix once and for all~$\nu =\gamma
=1-\sigma $.  Under the equivalence~\eqref{eq:mss7} this determines a real
structure~$\npn$ on~$\ICx$.  Our choices render the cofibration
sequence~\eqref{eq:mssexp} as
 \begin{equation}\label{eq:mssexp2}
 \IZn\longrightarrow 
  \HCn\xrightarrow{\;\exp\;}{} \ICxn
 \end{equation}
  \end{ansatz}

  \begin{remark}[]\label{thm:187}
 The real structure~$\gamma $ on~$\IZ$ is the restriction of a natural action
of~$\Z/2$; the corresponding real structure~$\npn$ is not.  However, in
terms of the polar decomposition $\Cx=\TT\times \Rp$ we have 
  \begin{equation}\label{eq:308}
     \ICxn\approx \IT\wedge S^{1-\sigma }\;\vee\; H\Rp. 
  \end{equation}
The spectrum~$\IT$ is characterized in the homotopy category of spectra by a
functorial isomorphism 
  \begin{equation}\label{eq:310}
     [\sB,\IT]\xrightarrow{\;\;\cong \;\;} \Hom(\pi _0\sB,\TT) 
  \end{equation}
for all spectra~$\sB$, analogous to~\eqref{eq:61}.  The equivariant spectrum
$\IT^\gamma =\IT\wedge S^{1-\sigma }$ fits into a cofibration sequence
analogous to~\eqref{eq:mssexp2}:
  \begin{equation}\label{eq:309}
     \IZn\longrightarrow\HRon\xrightarrow{\;\exp\;}{} \ITn
  \end{equation}
  \end{remark}

  \begin{remark}[]\label{thm:mss146}
 This definition of real structure fits with the three cases in which
 one has an algebraic interpretation of $\IZ$ (see Remark~\ref{thm:164}).
The zeroth space of $\Sigma 
\IZ$ is modeled by the unit complex numbers with the usual topology;
that of $\Sigma ^2\IZ$ corresponds to the symmetric monoidal groupoid of
$\Z/2$-graded complex lines; and $\Sigma^{3}\IZ$ to the Brauer-Wall symmetric
monoidal 2-groupoid of $\Z/2$-graded simple algebras over $\CC$,
$\Z/2$-graded bimodules and intertwiners.  These three models come
equipped with natural real structures, coming from change of scalars.   
By direct computation one can show
that the homotopy fixed points of $\Sigma^{i}\IZ^\gamma $ is modeled by the
corresponding real versions of the three categories described above.
To check this it suffices to do so when $i=3$ as the other cases are
gotten from it by passing to loop spaces.   The real Brauer-Wall
category corresponds to a spectrum $B$ with the following 
homotopy groups
\begin{align*}
\pi_{i}B&=0  &&i\not\in [0,3] \\
\pi_{0}B &=\Z/8 &&\text{(the eight real Clifford algebras)} \\
\pi_{1}B &= \Z/2  &&\text{(the even and odd real line)} \\
\pi_{2}B &= \{\pm1 \} &&
\end{align*}
and has the property that the multiplication by $\eta$ maps 
\[
\pi_{0}B\to \pi_{1}B\to \pi_{2}B
\]
are non-zero.   A straightforward computation shows that any spectrum
$X$ with these properties is homotopy equivalent to $B$.   To verify
the claim it therefore suffices to show that the $(-1)$-connected
cover of $\big(\Sigma^{3}\IZn\big)^{h\Z/2}$ has these properties.   We
therefore need to know the groups 
\[
\pi_{i}\big(\Sigma^{3}\IZn\big)^{h\Z/2} \qquad i\ge 0
\]
and the effect of multiplication by $\eta$.   Now for the
real structure $\gamma = 1-\sigma $ one has
\begin{align*}
\Map(S^{0}, \Sigma^{3}\IZn)^{h\Z/2}  & \approx 
\Map(S^{0}, S^{(1-\sigma)}\wedge \Sigma^{3}\IZ)^{h\Z/2}  \\
&\approx \Map(S^{(\sigma-1)}, \Sigma^{3} \IZ)^{h\Z/2}  \\
&\approx \Map(S^{(\sigma-1)}_{h\Z/2}, \Sigma^{3} \IZ)  \\
&\approx \Map(\thom(B\Z/2;\sigma-1), S^{3}\wedge \IZ),
\end{align*}
by~\eqref{eq:mss9} and~\eqref{eq:mss11}.   We therefore need information about 
\[
[\Thom(B\Z/2;\sigma-1), S^{i}\wedge \IZ] \quad 1\le i\le 3
\]
or, from the defining property of $\IZ$, the character groups of 
\[
\pi_{i}\Thom(B\Z/2;\sigma-1)\quad 0\le i\le 2.
\]
As described in~\S\ref{sec:13}, these groups coincide with the same homotopy
groups of $MT\!\Pm$ and are shown in Figure~\ref{fig:m1} (the case $s=1$) to be 
the groups $\Z/2$, $\Z/2$, and $\Z/8$ with
both $\eta$-multiplications non-zero.
\end{remark}

\subsubsection{Terminology}
\label{subsubsec:6.3.4}
 
It will be convenient in the sequel to have names for the objects assigned to
closed manifolds of arbitrary codimension in an invertible field theory.  In
codimension~0 we have a complex number and in codimension~1 an object in the
category of complex $\zt$-graded lines with the monoidal structure of graded
tensor product and the Koszul sign in the symmetry.  We refer to such an
object as a `complex super line' or a `$\zt$-graded line'.  Hence in
codimension~$k$ we introduce the term `complex super
$k$-line'.\footnote{Kapranov~\cite[\S3.4]{Kap} suggests a higher use of super
based on the sphere spectrum.}

  \begin{definition}[]\label{thm:178}
 \ \begin{enumerate}[{\textnormal(}i{\textnormal)}]

 \item $\IZ$ is the spectrum of {\em higher complex super lines};

 \item $\bigl(\IZn\bigr)^{h\Z/2}$ is the spectrum of {\em higher real super
lines}; 

 \item $\IZ_{H} := (\IZn\wedge S^{\sigma-1})^{h\Z/2}$ is the spectrum of {\em
higher Hermitian super lines}; 
 
 \item $\ICx$ is the spectrum of {\em higher flat complex super lines};

 \item The $k^{\textnormal{th}}$~space in the spectrum~$\IZ$ is the space of
\emph{complex super~$k$-lines}.

 \end{enumerate}
  \end{definition}

\noindent
 Example~\ref{thm:177} is the motivation for~(iii).  There are analogs
of~(iv) and~(v) for real and Hermitian super lines.  For example, the fixed
point spectrum
  \begin{equation}\label{eq:288}
     \ICx_H:=(\ICxn\wedge S^{\sigma-1})^{h\Z/2} 
  \end{equation}
is the spectrum of \emph{higher flat Hermitian super lines}, and the
$k^{\textnormal{th}}$~space of that spectrum is the space of \emph{flat
Hermitian super $k$-lines}.  As for the fixed point spectrum in~(iii), since
$S^{1-\sigma }\wedge S^{\sigma -1}$ is the sphere spectrum with the trivial
$\Z/2$-action---the ``bar star'' involution---we deduce from~\eqref{eq:126} a
canonical identification
  \begin{equation}\label{eq:285}
     \IZ_{H} = \Map\bigl(B\Z/2_+,\IZ\bigr). 
  \end{equation}
Pulling back along $B\Z/2\to\pt$ we obtain a map 
  \begin{equation}\label{eq:286}
     I\Z(1)\longrightarrow I\Z(1)_{H}; 
  \end{equation}
the image is a summand, split by a choice of point in $B\Z/2$. 

  \begin{definition}[]\label{thm:179}
 The image~$\IZ_{\textnormal{pos}}$ of~\eqref{eq:286} is the \emph{spectrum
of higher positive definite Hermitian super lines}.
  \end{definition}

\noindent
 The $k^{\textnormal{th}}$~space in~$\IZ_{\textnormal{pos}}$ is the space of
\emph{positive definite Hermitian super $k$-lines}.  Define the spectrum of
\emph{higher \emph{flat} positive definite Hermitian super lines} as the
homotopy pullback
  \begin{equation}\label{eq:287}
  \begin{gathered}
     \xymatrix{ \ICx_{\text{pos}} \ar[r]\ar[d] & \Sigma
     \IZ_{\textnormal{pos}} \ar[d] \\ \ICx_{H} \ar[r] & \Sigma
     \IZ_{H}.  }  
  \end{gathered}
  \end{equation}

We examine this homotopy-theoretic definition of positivity by focusing on
the top piece, first in the ungraded case and then in the $\zt$-graded case.

  \begin{example}[Hermitian lines]\label{thm:i107} 
 Consider the spectrum $\Sigma ^2H\ZZ$.  Its zero-space represents the
ordinary groupoid of complex lines; morphisms have the continuous topology.
There is a contractible space of trivializable involutions, and we imagine a
point in it to represent bar star.  The analog of~\eqref{eq:285} implies that
the set of components of the fixed point spectrum of any such involution is
  \begin{equation}\label{eq:197}
     \pi_{0}\Map(\Si B\Z/2_{+},\Sigma ^2H\ZZ)= \pi_{0}\Sigma ^2H\ZZ\oplus
     \pi_{0}\Map(\Si B\Z/2,\Sigma ^2H\ZZ) = \{0\}\oplus \Z/2. 
  \end{equation}
The zero space of $\Map(\Si B\Z/2_{+},\Sigma ^2H\ZZ)$ represents the groupoid
of Hermitian lines, the~$\zt$ tracks the sign of the Hermitian form.  The
positive subspace, obtained by pulling back along $B\Z/2\to\pt$, picks out
the positive definite forms.
  \end{example}

  \begin{example}[super Hermitian lines]\label{thm:159}
  The zero-space of the spectrum~$\Sigma ^2\IZ$ represents the
groupoid of super lines~$L$ with continuous topology on morphisms.  We
compute the set of components of the fixed point spectrum of a trivializable
involution:
  \begin{equation}\label{eq:239}
     \pi_{0}\Map(\Si B\Z/2_{+},\Sigma ^2\IZ)= \pi_{0}\Sigma ^2\IZ\oplus
     \pi_{0}\Map(\Si B\Z/2,\Sigma ^2\IZ) = \Z/2\oplus \Z/2. 
  \end{equation}
This is the group of isomorphism classes of super Hermitian lines.  The
first~$\zt$ is the grading of the line, the second the ``sign'' of the form.
But the sesquilinearity condition 
  \begin{equation}\label{eq:330}
     \langle\bar\ell _1,\ell _2\rangle = (-1)^{|\ell _1||\ell
     _2|}\overline{\vphantom{M^M}\langle \bar\ell _2,\ell _1 \rangle},\qquad
     \ell _1,\ell _2\in L, 
  \end{equation}
implies that if $L$~is odd then $\langle \bar\ell ,\ell \rangle\in \sqmo\RR$
for all~$\ell \in L$.  (The form is a bilinear map $\overline{L}\times
L\to\CC$.)  The notion of positivity in this case chooses a ray
in~$\sqmo\RR$; there is no canonical choice.  In the literature,
e.g.~\cite[(4.4.2)]{DM}, an arbitrary choice is made.  In our homotopy
theoretic presentation, this choice lies in the identification of the space
of super Hermitian lines with the 0-space of~$\Sigma ^2\IZ$.  As we descend
deeper into extended field theories, there are further choices to be made;
see Remark~\ref{rem:2a}.
  \end{example}

   \section{Reflection structures and stability}\label{sec:6}

We begin in~\S\ref{subsec:6.1} by reviewing Madsen-Tillmann spectra;
see~\cite[\S3]{GMTW}.  They give a filtration~\eqref{eq:90} of Thom spectra,
which leads to an analysis of the obstructions to extending invertible field
theories to stable theories.  In~\S\ref{subsec:6.2} we develop the relation
between naive positivity and stability in two situations: non-equivariant
discrete theories and equivariant continuous theories.  In each case the only
obstruction in $n$~spacetime dimensions arises from the partition function of
the $n$-sphere.  But its positivity does not guarantee positive definite
metrics on the state spaces attached to arbitrary $(n-1)$-manifolds
(Proposition~\ref{thm:89}), consideration of which is deferred
until~\S\ref{sec:7}.  We conclude in~\S\ref{subsec:6.3} by analyzing the
obstruction to extending ``H-type'' theories to ``L-type'' theories.

  \subsection{Madsen-Tillmann and Thom spectra}\label{subsec:6.1}

The homomorphism $\rho _n\:H_n\to O_n$ in~\eqref{eq:14}, which defines the
symmetry type of a theory, produces a rank~$n$ vector bundle $V_n\to BH_n$
over the classifying space.  We refer to~ \S\ref{sec:mssthom-spectra} for the
general theory of Thom spectra.

  \begin{definition}[]\label{thm:71}
 The \emph{Madsen-Tillmann spectrum}~$MTH_n$ is the Thom spectrum of $-V_n\to
BH_n$. 
  \end{definition}

\noindent
 More natural for us is a suspension, the connective spectrum
  \begin{equation}\label{eq:88}
     \Sigma ^nMTH_n = \Thom(BH_n;\triv{n}-V_n) .
  \end{equation}
The general construction of Thom spectra is described
in~\S\ref{sec:mssthom-spectra}.  Here is a geometric description.  Let
$Gr_n(\RR^{n+q})$~denote the Grassmannian of $n$-dimensional subspaces
of~$\RR^{n+q}$.  It approximates~$BO_n$, and the pullback
  \begin{equation}\label{eq:89}
     \begin{gathered} \xymatrix{X_{n,n+q}\ar@{-->}[r]^{} \ar@{-->}[d]_{} &
     BH_n\ar[d]^{} \\ Gr_n(\RR^{n+q})\ar[r]^{} & BO_n} \end{gathered} 
  \end{equation}
is a finite dimensional approximation to~$BH_n$.  The
$q^{\textnormal{th}}$~space of the spectrum~\eqref{eq:88} can be taken to be
the Thom \emph{space}~$\Thom(X_{n,n+q};\,Q_q)$ of the vector bundle $Q_q\to
X_{n,n+q}$, which is the pullback of the rank~$q$ ``quotient bundle'' over
the Grassmannian: the fiber at a subspace~$W\subset \RR^{n+q}$ is~$W^\perp$.

  \begin{remark}[]\label{thm:75}
 The Pontrjagin-Thom construction provides the basic relationship to
$H_n$-manifolds.  If a map $S^{k+q}\to \Thom(X_{n,n+q};\,Q_q)$ is transverse to
the 0-section of $Q_q\to X_{n,n+q}$, then the inverse image of the 0-section
is a $k$-manifold $M\subset S^{k+q}$ whose stable tangent bundle is equipped
with an isomorphism to the pullback of the ``tautological
bundle''\footnote{The fiber of the tautological bundle at a point $W\subset
\RR^{n+q}$ in~$Gr_n(\RR^{n+q})$ is~$W$.} $V_n\to X_{n,n+q}$, which is
equipped with an $H_n$-structure.  Theorem~\ref{thm:54} implies that the
abelian group~$\pi _k\Sigma ^nMTH_n$ is generated by closed $k$-dimensional
$H_n$-manifolds under disjoint union.  The class of a closed manifold~$M^k$
is zero if and only if $M=\partial W$ where $W$~is a compact $(k+1)$-manifold
whose stable tangent bundle is isomorphic to a rank~ $n$ bundle with an
$H_n$-structure extending that of~$M$.  This bordism group was introduced by
Reinhart~\cite{R}; see also~\cite[Appendix]{E}.
  \end{remark}

  \begin{remark}[]\label{thm:72}
 Not every element of the homotopy group is represented by a manifold; group
completion of the semigroup of manifold classes is needed to obtain the
homotopy group.  For example, $\pi _0MTO_0\cong \ZZ$ but since a
0-dimensional manifold has a unique $O_0$-structure such manifolds only
realize the submonoid of nonnegative integers.  We also remark that the
sphere~$S^{2m}$ represents a nonzero element in $\pi _{2m}\Sigma
^{2m}MTSO_{2m}$, but is zero in the next group $\pi _{2m+1}\Sigma
^{2m+1}MTSO_{2m+1}$: the closed ball~$D^{2m+1}$ has nonzero Euler
characteristic so no $SO_{2m}$-structure.  As another illustration, the
2-sphere and the genus~2 surface represent opposite elements of $\pi _2\Sigma
^2MTSO_2$: a genus~2 handlebody with a 3-ball excised admits an
$SO_2$-structure.
  \end{remark}

The Stabilization Theorem~\ref{thm:6} provides a sequence of
spectra\footnote{That theorem supplies a \emph{stable tangential
structure}~$BH$ from which $BH_n$~is constructed by pullback;
recall~\eqref{eq:136}.}
  \begin{equation}\label{eq:90}
     \Sigma ^nMTH_n \longrightarrow \Sigma ^{n+1}MTH_{n+1} \longrightarrow
     \Sigma ^{n+2}MTH_{n+2} \longrightarrow \cdots 
  \end{equation}
whose colimit, denoted~$MTH$, is the Thom spectrum of the stable vector bundle 
  \begin{equation}\label{eq:277}
      -V\longrightarrow  BH 
  \end{equation}
which is the negative of the classifying map of~\eqref{eq:278}; see the
construction in~\S\ref{sec:mssthom-spectra}, especially the
presentation~\eqref{eq:thomlim} which is equivalent to~\eqref{eq:90}.  From the
geometric description in Remark~\ref{thm:75} the homotopy groups~$\pi
_k\Sigma ^nMTH_n$ stabilize once~$n>k$; then $\pi _kMTH$~is the bordism group
of $k$-dimensional manifolds with a \emph{stable tangential} $H$-structure.
We identify~$MTH$ with the Thom spectrum~$MH^{\perp}$ of the
perpendicular\footnote{\label{perp} The classifying space~$BH^{\perp}$ is the
pullback
  \begin{equation}\label{eq:233}
     \begin{gathered} \xymatrix{BH^\perp\ar[r]^{} \ar[d]_{} & BH\ar[d]^{} \\
     BO\ar[r]^{} & BO} \end{gathered} 
  \end{equation}
in which the bottom map classifies the negative of the universal bundle (of
rank zero).  There is a sequence of inclusions $\cdots
H_n^\perp\hookrightarrow H_{n+1}^\perp\hookrightarrow H_{n+2}^\perp\cdots$ of
compact Lie groups such that $BH^\perp$~is the colimit of~$BH_n^\perp$.
Namely, define $\tH_n^\perp$ as the pullback (see~\eqref{eq:226}) 
  \begin{equation}\label{eq:234}
     \begin{gathered} \xymatrix{
     1\ar[r]&K\ar[r]\ar@{=}[d]&\tH_n^\perp\ar@{->>}[d]\ar[r]& \Pm_n\ar@{->>}[d]
     \ar[r] & 1\\ 1\ar[r]&K\ar[r]&J\ar[r]& \pmo\ar[r] & 1} \end{gathered} 
  \end{equation}
and then set 
  \begin{equation}\label{eq:235}
     H_n^\perp\cong \tH_n^\perp \bigm / \langle(-1,k_0)\rangle. 
  \end{equation}
One checks that $BH_n^\perp$ is the pullback 
  \begin{equation}\label{eq:236}
     \begin{gathered} \xymatrix{BH_n^\perp\ar[r]^{} \ar[d]_{} & BH\ar[d]^{} \\
     BO_n\ar[r]^{} & BO} \end{gathered} 
  \end{equation}} 
 stable normal structure.  In many cases $H^\perp=H$; however, for example,
$(\Ppm)^{\perp}=\Pmp$.
 
Following Ansatz~\ref{thm:145} an invertible topological field theory is a
map with domain $\Sigma ^nMTH_n$.  To investigate extensions along the
sequence~\eqref{eq:90} we will use the following in~\S\ref{subsec:6.2}.

  \begin{proposition}[]\label{thm:73}
 The fiber of the map $\Sigma ^nMTH_n \longrightarrow \Sigma ^{n+1}MTH_{n+1}$
is $\Sigma ^n(BH_{n+1})_+$.  The map $\Sigma ^n(BH_{n+1})_+\to \Sigma
^nMTH_n$ is represented by the universal family $BH_n\to BH_{n+1}$ of
$H_n$-spheres.
  \end{proposition}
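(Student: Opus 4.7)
The plan is to present the map as a Thom-twisted version of the sphere-bundle cofibration over $BH_{n+1}$. First I would identify $BH_n$ with the unit sphere bundle $S(V_{n+1})$ of the tautological bundle $V_{n+1}\to BH_{n+1}$. By the Stabilization Theorem~\ref{thm:6}, the square~\eqref{eq:17} exhibits $BH_n$ as the pullback $BH_{n+1}\times_{BO_{n+1}}BO_n$. Now $BO_n=EO_{n+1}/O_n$ realizes $BO_n\to BO_{n+1}$ as the sphere bundle of the tautological bundle on $BO_{n+1}$: using the embedding convention~\eqref{eq:51}, the stabilizer in $O_{n+1}$ of the first basis vector $e_1\in S^n$ is precisely $O_n$. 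Pulling back along $BH_{n+1}\to BO_{n+1}$ then yields $BH_n\cong S(V_{n+1})$ over $BH_{n+1}$, consistent with Remark~\ref{thm:33}.

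Next I would apply the Thom-twisted sphere-bundle cofibration. For any rank-$r$ vector bundle $V\to X$, collapsing the unit sphere bundle in the disk bundle produces a cofiber sequence of pointed spaces over $X$,
\[
S(V)_+\longrightarrow X_+\longrightarrow X^V,
\]
and smashing fiberwise with the parametrized sphere $S^\eta$ associated to a virtual bundle $\eta$ on $X$ preserves the cofiber sequence, giving
\[
\Thom(S(V);\eta|_{S(V)})\longrightarrow \Thom(X;\eta)\longrightarrow \Thom(X;V+\eta).
\]
Specializing to $V=V_{n+1}$, $X=BH_{n+1}$, $\eta=\triv{n+1}-V_{n+1}$: the pullback square~\eqref{eq:17} forces $i_n^*V_{n+1}\cong V_n\oplus \triv{1}$, the trivial summand being the first coordinate direction in the embedding~\eqref{eq:51}, so $\eta|_{S(V_{n+1})}=\triv{n}-V_n$. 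The three Thom spectra then become $\Sigma^n MTH_n$, $\Sigma^{n+1}MTH_{n+1}$, and $\Thom(BH_{n+1};\triv{n+1})=\Sigma^{n+1}(BH_{n+1})_+$ respectively. Converting this cofiber sequence to its associated fiber sequence exhibits $\Sigma^n(BH_{n+1})_+$ as the fiber.

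For the second statement I would identify the connecting map $\Sigma^n(BH_{n+1})_+\to \Sigma^n MTH_n$ with the Pontrjagin--Thom / Becker--Gottlieb collapse of the universal family $\pi\:BH_n\to BH_{n+1}$ of $H_n$-spheres. For a smooth bundle $E\to B$ with compact fibers, the collapse is the stable map $B_+\to E^{-T_\pi}$ determined by the vertical tangent bundle $T_\pi$. In our case the fiber $H_{n+1}/H_n\cong S^n$ (Remark~\ref{thm:33}) has tangent space $\mathfrak{h}_{n+1}/\mathfrak{h}_n$ at the basepoint, which $\rho_{n+1}$ carries isomorphically onto the defining $H_n$-representation on $\RR^n$; hence $T_\pi\cong V_n$ over $BH_n$, and the collapse produces a map to $BH_n^{-V_n}=MTH_n$ that after $n$-fold suspension is the desired map.

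The main obstacle is matching this Pontrjagin--Thom collapse with the cofibration boundary constructed above. The identification is classical, but making it rigorous requires a specific model for the Becker--Gottlieb transfer. The cleanest route is to \emph{define} the transfer for sphere bundles via the sphere-bundle cofibration (as in Becker--Gottlieb's original construction), making the identification tautological; alternatively, one can verify agreement on homotopy groups by using the Pontrjagin--Thom bordism interpretation of both sides highlighted in Remark~\ref{thm:75}, where a class in $\pi_*\Sigma^n(BH_{n+1})_+$ represented by a map $M\to BH_{n+1}$ maps to the total space of the pulled-back $H_n$-sphere bundle over $M$ with its induced $H_n$-structure.
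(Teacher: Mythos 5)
Your argument is correct and is essentially the argument that the paper's cited references (GMTW \S3.1 and \cite[Lemma 3.1]{FHT1}) give; the paper itself does not supply an independent proof, so there is nothing to reconcile. The key ingredients --- identifying $BH_n\simeq S(V_{n+1})$ as the sphere bundle via the pullback square of Theorem~\ref{thm:6}, twisting the sphere-bundle cofibration $S(V)_+\to X_+\to X^V$ by $\eta=\triv{n+1}-V_{n+1}$ to get the cofiber sequence $\Sigma^nMTH_n\to\Sigma^{n+1}MTH_{n+1}\to\Sigma^{n+1}(BH_{n+1})_+$, and then recognizing the connecting map as the Becker--Gottlieb collapse of the universal $H_n$-sphere family --- are exactly what the references do. Two small points worth tightening if you write this out in full: (i) the twisted cofibration is cleanest as a statement about ex-spaces (parametrized retractive spaces) over $BH_{n+1}$, where $S(V)_{+X}\to S^0_X\to S^V_X$ is a fiberwise cofiber sequence preserved by fiberwise smashing with $S^\eta_X$ and by pushforward to a point, and this framework handles the virtual twist $\eta$ without any ad hoc desuspension; (ii) your identification $T_\pi\cong V_n$ is correct but is most transparently seen bundle-theoretically: on $S(V_{n+1})$ the pullback $\pi^*V_{n+1}$ splits canonically as the tautological line $L\cong\triv1$ plus the vertical tangent bundle $T_\pi$, and this splitting pulls back under $BH_n\cong S(V_{n+1})$ to $\triv1\oplus V_n$, matching the splitting you already invoked. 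Your closing remark about the two ways to identify the connecting map with the transfer is the right assessment of where the remaining (routine) work lies.
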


\noindent
 See~\cite[\S3.1]{GMTW}, \cite[Lemma 3.1]{FHT1} for a proof.  The universal
family of spheres was mentioned in Remark~\ref{thm:39}.  We remind that
spectra are built out of \emph{based} spaces; for a based space~$X$ the
spectrum ~$\Sigma ^nX_+$ is the one-point union of~$S^n$ and the suspension
spectrum~$\Sigma ^nX$, and the latter is $(n-1)$-connected if $X$~is connected.

Our final task in this section is to refine Ansatz~\ref{thm:145} and
Ansatz~\ref{thm:156}, which formulate invertible field theories as maps of
spectra, to include reflection structures.  Recall from ~\S\ref{sec:3} that
the reflection structure on the bordism category maps a manifold with
$H_n$-structure to the same manifold with the opposite $H_n$-structure, which
is defined using the group extension~\eqref{eq:30}.  Turning to bordism
spectra we observe that this group extension induces a $\Z/2$-action
on~$BH_{n}$ and makes the vector bundle $V_{n}\to BH_n$ into an equivariant
vector bundle $V_{n}^{\beta}\to BH_n^\beta $.  Applying the discussion
in~\S\ref{sec:mssequiv-thom-spectra} we refine the Thom
spectrum~\eqref{eq:88} to a $\Z/2$-equivariant spectrum we denote ~$\Sigma
^nMTH_n^\beta $.  There is an equivariant lift of~\eqref{eq:90}.  Recall the
involutions on~$\IZ,\ICx$ chosen after Remark~\ref {rem:1a}.

  \begin{ansatz}[]\label{thm:147}
 \ \begin{enumerate}[{\textnormal(}i{\textnormal)}]
 \item A \emph{\emph{discrete} invertible $n$-dimensional extended
topological field theory with symmetry group~$H_n$ and reflection structure}
is an equivariant map
  \begin{equation}\label{eq:191}
     F\:\Sigma ^nMTH_n^\beta \longrightarrow \Sigma ^n\ICxn,
  \end{equation}

 \item A \emph{\emph{continuous} invertible $n$-dimensional extended topological
field theory with symmetry group~$H_n$ and reflection structure} is an
equivariant map
  \begin{equation}\label{eq:aa238}
     \varphi \:\Sigma ^nMTH_n^\beta \longrightarrow \Snizn. 
  \end{equation}
The space of theories of this type is 
  \begin{equation}\label{eq:289}
     \fieldsRefl{H_n}{n} =\Map^{\Z/2}(\Sigma ^nMTH_n^\beta ,\Snizn). 
  \end{equation}
 \end{enumerate}

  \end{ansatz}

  \subsection{Naive positivity and stability}\label{subsec:6.2}

We first prove that the double of an $H_n$-manifold is null bordant through
an $H_{n+1}$-manifold.  Recall the evaluation bordism~\eqref{eq:43}, the
identification of duals and bars in Proposition~\ref{thm:44}, and
Definition~\ref{thm:28} of a double.

  \begin{proposition}[]\label{thm:76}
 Let $Y_0,Y_1$~be closed $(n-1)$-dimensional $H_n$-manifolds and $X\:Y_0\to
Y_1$ an $H_n$-bordism.  Then  
  \begin{equation}\label{eq:91}
     \beta X\amalg e\mstrut _{Y_1}\amalg X\:\beta Y_0\amalg
     Y_0\longrightarrow \emptyset ^{n-1} 
  \end{equation}
is $H_{n+1}$-bordant to $e\mstrut _{Y_0}$.
  \end{proposition}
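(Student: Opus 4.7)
The plan is to exhibit $W = X \times [0,1]$, equipped with the $H_{n+1}$-structure induced from the $H_n$-structure on $X$ via the stable inclusion $i_n \colon H_n \hookrightarrow H_{n+1}$ of Theorem~\ref{thm:6}, as the desired $H_{n+1}$-bordism. Both 1-bordisms in the statement go from $\beta Y_0 \amalg Y_0$ to $\emptyset^{n-1}$, so an $H_{n+1}$-bordism between them is a compact $H_{n+1}$-manifold with corners whose boundary is the gluing of the two along the common incoming $(n-1)$-boundary. First I would partition $\partial W$ into a source face $F_- = Y_0 \times [0,1]$ and a target face $F_+ = (X \times \{0\}) \cup (Y_1 \times [0,1]) \cup (X \times \{1\})$, meeting along the corner $Y_0 \times \{0,1\}$.

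Next I would identify $F_-$ with $e_{Y_0}$ and $F_+$ with $\beta X \amalg e_{Y_1} \amalg X$ as $H_n$-bordisms. The $H_{n+1}$-structure on $W$ has first frame vector $\partial/\partial s$ followed by an $H_n$-frame of~$X$. At $X \times \{1\}$ the outward normal is $+\partial/\partial s$, so the induced boundary $H_n$-structure agrees with the original one on~$X$; at $X \times \{0\}$ the outward normal is $-\partial/\partial s$, and converting to a boundary-adapted frame requires applying the hyperplane reflection $e_1 \mapsto -e_1$. By the construction of $\hH_n$ in Proposition~\ref{thm:10} together with the identification $\beta Y \cong Y^{\vee}$ of Proposition~\ref{thm:44}, this reflection precisely converts the $H_n$-structure to its $\beta$-opposite, so $X \times \{0\} \cong \beta X$ as $H_n$-manifolds. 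The same analysis at the codimension-two corners yields $Y_i \times \{0\} \cong \beta Y_i$ and $Y_i \times \{1\} \cong Y_i$, so $F_- \cong e_{Y_0}$ and the cylinder on $Y_1$ inside $F_+$ becomes $e_{Y_1}$.

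The hard part will be the coherence of these reflection identifications across codimensions: the hyperplane reflection used at the codimension-one face $X \times \{0\}$ must restrict compatibly to the one used at its codimension-two boundary $Y_0 \times \{0\}$, so that $\beta X$ really does have boundary $\beta Y_0$ rather than $Y_0$. This is exactly the content of Proposition~\ref{thm:148}, which pins down coherent inclusions of the extensions $\hH_m \hookrightarrow \hH_{m+1}$ as $m$ varies. Once this compatibility is checked, $W = X \times [0,1]$ manifestly realizes the desired $H_{n+1}$-bordism from $e_{Y_0}$ to $\beta X \amalg e_{Y_1} \amalg X$.
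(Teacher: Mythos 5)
Your proposal is correct and uses the identical bordism $W = X \times [0,1]$ as the paper's own one-line proof; the boundary decomposition and $H_n$-structure identifications you supply are precisely the verification the paper elides, conceding only in a footnote that for $Y_0 \neq \emptyset^{n-1}$ the manifold $W$ is technically a bordism of manifolds with boundary (a higher morphism) and that only the closed case $Y_0 = \emptyset^{n-1}$ of Corollary~\ref{thm:77} is actually used downstream. Your appeal to Proposition~\ref{thm:148} for the codimension-two coherence is the right lemma: the compatible splittings $\hH_m \hookrightarrow \hH_{m+1}$ there are exactly what guarantees that the hyperplane reflection identifying $X\times\{0\}$ with $\beta X$ restricts at the corner to the one identifying $Y_0\times\{0\}$ with $\beta Y_0$.
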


  \begin{proof}
 The bordism\footnote{It is a bordism of manifolds with boundary, or better a
higher morphism in a multi-bordism category.  We only use $Y_0=\emptyset
^{n-1}$, as in Corollary~\ref{thm:77}, in which case $[0,1]\times X$ is a
null bordism of a closed manifold.} is~$[0,1]\times X$.
  \end{proof}

  \begin{corollary}[]\label{thm:77}
 The double~$\Delta X$ of a compact $H_n$-manifold with boundary is null
bordant through an $H_{n+1}$-manifold.  
  \end{corollary}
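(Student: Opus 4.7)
The plan is to deduce the corollary directly from Proposition~\ref{thm:76} by specializing to the case $Y_0=\emptyset ^{n-1}$. View the compact $H_n$-manifold $X$ with boundary as a bordism $X\colon\emptyset ^{n-1}\to \partial X$, and set $Y_1=\partial X$ in Proposition~\ref{thm:76}. Then the morphism $\beta X\amalg e\mstrut _{Y_1}\amalg X\colon \beta Y_0\amalg Y_0\to \emptyset ^{n-1}$ becomes an endomorphism of $\emptyset ^{n-1}$, i.e., a closed $n$-dimensional $H_n$-manifold. By Definition~\ref{thm:28} this closed manifold is precisely the double
\[
   e\mstrut _{\partial X}(\beta X,X)=\Delta X.
\]
On the other side of the bordism produced by the proposition stands $e\mstrut _{Y_0}=e\mstrut _{\emptyset ^{n-1}}$, which is the empty $n$-manifold. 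Hence the $H_{n+1}$-bordism supplied by Proposition~\ref{thm:76}, namely the cylinder $[0,1]\times X$, is an $H_{n+1}$-nullbordism of $\Delta X$.

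The only point that requires any thought is the $H_{n+1}$-structure carried by $[0,1]\times X$: the $H_n$-structure on $X$ yields an $H_n$-structure on $[0,1]\times X$ via the composition $H_n\xrightarrow{\rho _n}O_n\hookrightarrow O_{n+1}$ applied to the trivialized normal direction, and then the stabilization $i_n\colon H_n\hookrightarrow H_{n+1}$ of Theorem~\ref{thm:6} promotes this to an $H_{n+1}$-structure. One must also check that this $H_{n+1}$-structure restricts on the boundary $\Delta X\subset [0,1]\times X$ (formed from $\{0\}\times X$, $\{1\}\times X$, and the corner-smoothed collar of $\partial X$) to the $H_n$-structure declared by Definition~\ref{thm:28}; but this compatibility is exactly what is being asserted (and used) in the proof of Proposition~\ref{thm:76}, so there is no extra obstacle here. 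Consequently the result follows with no further work, and there is no genuinely hard step—the content is entirely packaged in Proposition~\ref{thm:76}.
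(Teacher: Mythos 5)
Your proof is correct and is essentially the paper's own argument: specialize Proposition~\ref{thm:76} to the bordism $X\colon\emptyset^{n-1}\to\partial X$, identify the resulting morphism with $\Delta X$ and its bordism partner $e_{\emptyset^{n-1}}$ with the empty manifold, and note that $[0,1]\times X$ furnishes the $H_{n+1}$-nullbordism. Your remarks on the $H_{n+1}$-structure are a reasonable elaboration of what the paper leaves implicit (and its parenthetical about smoothing corners).
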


\noindent
 By Corollary~\ref{thm:37} this applies to~$S^n$ with its canonical
$H_n$-structure, and so every double is $H_{n+1}$-bordant to~$S^n$. 

  \begin{proof}
 Apply Proposition~\ref{thm:76} to $X\:\emptyset ^{n-1}\to \bX$ (and smooth
the corners of $[0,1]\times X$). 
  \end{proof}

  \begin{remark}[]\label{thm:80}
 If $X$~is the 2-dimensional disk, viewed as a bordism from the empty
1-manifold to the circle, then $\Delta X$~is the 2-dimensional sphere~$S^2$
and the null bordism~$[0,1]\times X$ is the 3-dimensional ball~$D^3$.  The
Euler characteristic obstructs the existence of an $H_2$-structure on~$D^3$
which restricts to the given $H_2$-structure on~$S^2$ (for any stable
tangential structure~$H$).
  \end{remark}

The sequence of bordism spectra~\eqref{eq:90} results in a special type of
invertible field theory.  The following applies to both discrete
(Ansatz~\ref{thm:145}) and continuous (Ansatz~\ref{thm:156}) invertible field
theories, possibly with reflection structure (Ansatz~\ref{thm:147}).

  \begin{definition}[]\label{thm:79}
 An $n$-dimensional invertible topological field theory with domain~$\Sigma
^nMTH_n$ is \emph{stable} if it is the restriction of a theory defined
on~$MTH$.  
  \end{definition}

Stability can be investigated one step at a time in the
sequence~\eqref{eq:90} using obstruction theory.  We first carry this out for
\emph{discrete} invertible topological field theories \emph{without}
reflection structure.  Recall that the sphere has a canonical $H_n$-structure
given by the principal bundle $H_{n+1}\to H_{n+1}/H_n$.

  \begin{theorem}[]\label{thm:78}
 A discrete invertible theory $F\:\Sigma ^nMTH_n\to\Sigma ^n\ICx$ is stable
if and only if $F(S^n)=1$.  The subspace of $\Map(\Sigma ^nMTH_n,\SCx)$
consisting of theories~$F$ with $F(S^n)=1$ is homotopy equivalent to the
mapping space~$\Map(MTH,\SCx)$.
  \end{theorem}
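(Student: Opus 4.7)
The plan is to apply obstruction theory to the cofiber sequence from Proposition~\ref{thm:73},
\begin{equation*}
\Sigma^n(BH_{n+1})_+ \longrightarrow \Sigma^n MTH_n \longrightarrow \Sigma^{n+1} MTH_{n+1},
\end{equation*}
and then iterate up the tower $\{\Sigma^{n+k}MTH_{n+k}\}$ whose colimit is $MTH$. Applying $\Map(-,\SCx)$ and passing to $0$-spaces produces a fiber sequence
\begin{equation*}
\Map_0(\Sigma^{n+1}MTH_{n+1},\SCx) \longrightarrow \Map_0(\Sigma^n MTH_n,\SCx) \xrightarrow{\;\varepsilon\;} \Map_0(\Sigma^n(BH_{n+1})_+,\SCx).
\end{equation*}
The universal property~\eqref{eq:61} of the Pontrjagin dual~$\ICx$, together with the vanishing of negative stable homotopy of a space, identifies the right-hand space as weakly equivalent to $\Cx$ concentrated in degree~$0$, with basepoint the identity $1\in\Cx$.

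The key step is to identify $\varepsilon(F)=F(S^n)$. By Proposition~\ref{thm:73}, the cofiber map carries the generator of $\pi_0^s(BH_{n+1})_+\cong\ZZ$ to the Pontrjagin--Thom class $[S^n]\in\pi_n\Sigma^n MTH_n$ of the canonical $H_n$-sphere, so postcomposition with $F$ reads off $F(S^n)\in\Cx$. Consequently the homotopy fiber of $\varepsilon$ over~$1$ is precisely the subspace of theories with $F(S^n)=1$, and this fiber coincides up to homotopy with $\Map_0(\Sigma^{n+1}MTH_{n+1},\SCx)$. This establishes the one-step extensibility statement.

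For iterated extensions $(k\ge 1)$, the analogous obstruction lives in $\Map_0(\Sigma^{n+k}(BH_{n+k+1})_+,\SCx)$, whose $\pi_m$ is $\Hom(\pi_{-k-m}^s(BH_{n+k+1})_+,\Cx)$ and vanishes for every $m\ge 0$ because $k+m\ge 1$ and a space has no stable homotopy in negative degrees. Each transition $\Map_0(\Sigma^{n+k+1}MTH_{n+k+1},\SCx)\to\Map_0(\Sigma^{n+k}MTH_{n+k},\SCx)$ is therefore a weak equivalence for $k\ge 1$, and a Milnor-style limit argument for this essentially constant tower produces a weak equivalence $\Map_0(MTH,\SCx)\simeq \Map_0(\Sigma^{n+1}MTH_{n+1},\SCx)$, i.e.\ with the subspace $\{F:F(S^n)=1\}$. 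Both assertions of the theorem follow.

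The main obstacle I anticipate is the careful identification of $\varepsilon$ on~$\pi_0$ with $F\mapsto F(S^n)$: this is where Proposition~\ref{thm:73} does all the geometric work, and one must be attentive to the group-law convention, namely that the null element of $[\Sigma^n(BH_{n+1})_+,\SCx]$ corresponds to the multiplicative identity $1\in\Cx=\pi_n\SCx$, not to $0$. The remaining ingredients---the Brown--Comenetz universal property of~$\ICx$, the vanishing of negative stable homotopy of a space, and the tower argument---are entirely formal.
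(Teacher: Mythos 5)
Your argument is correct and follows essentially the same route as the paper's: you use the cofibration sequence from Proposition~\ref{thm:73}, identify the first obstruction on $\pi_0$ of mapping spaces with $F(S^n)\in\Cx$ via the Brown--Comenetz universal property~\eqref{eq:61}, and observe that all higher obstructions vanish by the connectivity of $\Sigma^{n+k}(BH_{n+k+1})_+$ for $k\ge 1$. The only cosmetic difference is the final passage to $\Map_0(MTH,\SCx)$: you invoke a Milnor $\lim^1$ argument on the (eventually constant) tower, whereas the paper runs a parallel obstruction argument directly against the cofiber sequence $\Sigma^{n+1}MTH_{n+1}\to MTH\to C$; these are interchangeable here.
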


\noindent 
 By Corollary~\ref{thm:77} the condition is equivalent to $F(\Delta X)=1$ for
all compact~$X^n$ with boundary.

  \begin{proof}
 If $F$~is the restriction of $\tF\:MTH\to \Sigma ^n\ICx$, then
$F(S^n)=\tF(S^n)=1$ since $S^n$~is null bordant as an $H_{n+1}$-manifold.
Conversely, by Proposition~\ref{thm:73} the map~$F$ extends over~$\Sigma
^{n+1}MTH_{n+1}$ if and only if it evaluates trivially on the universal
family of $H_n$-spheres.  But that evaluation is the constant function
$BH_{n+1}\to\Cx$ with value~$F(S^n)$.  There is no further obstruction in the
sequence~\eqref{eq:90}, because the subsequent fibers have vanishing homotopy
groups in degrees~$\le n$ and $\pi _{q}\Sigma ^n\ICx=0$ for~$q>n$.

To analyze the space of discrete stable theories we note that the cofibration
sequence
  \begin{equation}\label{eq:209}
     \Sigma ^nMTH_n\longrightarrow \Sigma ^{n+1}MTH_{n+1}\longrightarrow
     \Sigma ^{n+1}(BH_{n+1})_+ 
  \end{equation}
of spectra induces a fibration sequence  
  \begin{multline}\label{eq:210}
     \Map\bigl(\Sigma ^{n+1}(BH_{n+1})_+,\SCx\bigr) \longrightarrow
     \Map(\Sigma ^{n+1}MTH_{n+1},\SCx)\\[3pt]\longrightarrow \Map(\Sigma
     ^{n}MTH_{n},\SCx) \longrightarrow \Map\bigl(\Sigma
     ^{n}(BH_{n+1})_+,\SCx\bigr) 
  \end{multline}
of mapping spaces.  The first space is contractible, since $\Sigma
^{n+1}(BH_{n+1})_+$~is $n$-connected.  The fiber of the last map is the
subspace indicated in the theorem, by the obstruction argument in the
previous paragraph.  To pass to stable maps make a similar argument with the
cofibration sequence
  \begin{equation}\label{eq:214}
     \Sigma ^{n+1}MTH_{n+1}\longrightarrow MTH\longrightarrow C 
  \end{equation}
and the induced fibration on mapping spaces.
 \end{proof}

  \begin{remark}[]\label{thm:81}
 If $X^n$ is a closed $H_n$-manifold, then $[0,1]\times X$ is a null bordism
of $\beta X\amalg X$.  Thus if $F$~is stable and has a reflection structure,
then $\|F(X)\|^2=1$.
  \end{remark}

Next, we turn to \emph{continuous} invertible field theories \emph{with}
reflection structure, which according to Ansatz~\ref{thm:147}(ii) are
$\zt$-equivariant maps
  \begin{equation}\label{eq:238}
     \varphi \:\Sigma ^nMTH_n^\beta \longrightarrow \Snizn . 
  \end{equation}
We investigate stability for these equivariant theories.

  \begin{remark}[]\label{thm:163}
 As explained after~\eqref{eq:232} a continuous invertible field theory
assigns a $\Zo$-torsor to a closed $H_n$-manifold, hence an equivariant
theory~\eqref{eq:238} assigns to a $\beta $-equivariant family $\sX\to S$ of
closed $H_n$-manifolds an equivariant $\Zo$-torsor over~$S$, where the action
on $\Zo$-torsors is that in Example~\ref{thm:83}; see also
Remark~\ref{thm:mss146}.  The universal model is the map $\exp\:\CC\to\Cx$,
equivariant for complex conjugation, with fibers $\Zo$-torsors.  Over the
fixed point set $\RR^\times =\RR^{>0}\amalg \RR^{<0}$ the fibers are
$\Zo$-torsors of Type~P and Type~N; see Example~\ref{thm:83}.  As discussed
in~\S\ref{subsec:4.4} a non-topological invertible field theory (type~(a) in
that discussion) has a homotopy class that is a continuous theory.  If we
have a reflection structure, then the partition function of a $\beta $-fixed
$H_n$-manifold is real, and if it is positive then the corresponding
$\Zo$-torsor has Type~P.  
  \end{remark}

  \begin{remark}[]\label{thm:176}
 A \emph{stable} continuous theory~$\tphi $ assigns an integer (better:
element of~$\Zo$) to a closed $(n+1)$-manifold.  The universal
property~\eqref{eq:62} of maps into the Anderson dual implies that the
topological field theory associated to~ $\tphi $ is determined by its
truncation to $n$-~and $(n+1)$-manifolds.
  \end{remark}

  \begin{theorem}[]\label{thm:82}
 An equivariant continuous invertible field theory $\varphi \:\Sigma
^nMTH_n^{\beta}\to\Sigma ^{n+1}\IZn$ is stable if and only if $\varphi
(S^n)$~has Type~P.  The subspace of $\Map^{\Z/2}(\Sigma ^nMTH_n^\beta
,\Snizn )$ consisting of equivariant continuous invertible field theories
with Type~P partition function on~$S^n$ is homotopy equivalent to the mapping
space~$\Map^{\Z/2}(MTH^\beta ,\Snizn )$.
  \end{theorem}

  \begin{proof}
 Since $S^n$~is diffeomorphic to~$\beta S^n$, the partition function $\varphi
(S^n)$ is a $\Zo$-torsor with involution.  The partition function of the
universal family of $n$-spheres is then a $\Zo$-torsor over~ $BH_{n+1}$ with
involution covering the trivial involution on the base.  It is classified by
a map $BH_{n+1}\to\RR^\times $ whose homotopy class
in~$H^0(BH_{n+1};\pmo)\cong \pmo$ encodes the Type (P or~N) of~$\varphi
(S^n)$.
 
Now use the stabilization sequence~\eqref{eq:90} as before.  If $\varphi $~is
stable, then it is trivial on the fiber~$\Sigma ^n(BH_{n+1})_+$ of the first
map, which is represented by the universal family of $n$-spheres.  The
argument in the preceding paragraph shows that $\varphi (S^n)$~has Type~P.
To prove the converse, if $\varphi (S^n)$~has Type~P then the first
obstruction vanishes, and so $\varphi $~is the restriction of a map $\Sigma
^{n+1}MTH_{n+1}^{\beta}\to \Sigma ^{n+1}\IZn$.  The obstruction at
the next stage is a map $\Sigma ^{n+1}(BH^{\beta}_{n+2})_+\to\Sigma
^{n+1}\IZn$.  But $\Sigma ^{n+1}(BH^{\beta}_{n+2})_+\simeq
S^{n+1}\vee \Sigma ^{n+1}BH^{\beta}_{n+2}$ with $\Z/2$~acting trivially on
the suspension~$S^{n+1}$ of the basepoint.  Since $\Sigma
^{n+1}BH^{\beta}_{n+2}$ is $(n+1)$-connected, the obstruction lies in
  \begin{equation}\label{eq:280}
  \begin{aligned}
  [S^{n+1},\Sigma ^{n+1}\IZn]^{\Z/2} &\cong 
  [S^{\sigma-1},
\IZ]^{\Z/2} \\ &\cong [E\Z/2_{+}\underset{\Z/2}{\wedge}S^{\sigma-1}, \IZ] \\
&\cong \Hom(\pi_{0}\,E\Z/2_{+}\underset{\Z/2}{\wedge}S^{\sigma-1},\ZZ(1)) =0,
  \end{aligned}
  \end{equation}
since 
\[
\pi_{0}\,E\Z/2_{+}\underset{\Z/2}{\wedge}S^{\sigma-1} =
\pi_{1}\RP^{\infty} = \Z/2.
\]  
 There are no further obstructions to extending to~$MTH$, because the fibers
have nonvanishing homotopy groups only in degrees greater than~$n+1$ and $\pi
_{q}\Sigma ^{n+1}\IZ=0$ for~$q>n+1$.

The equivariant version of~\eqref{eq:209} with the $\beta $-involution leads
to the fibration sequence 
  \begin{multline}\label{eq:211}
     \Map^{\Z/2}\bigl(\Sigma ^{n+1}(BH_{n+1}^\beta )_+,\Snizn \bigr)
     \longrightarrow 
     \Map^{\Z/2}(\Sigma ^{n+1}MTH_{n+1}^\beta ,\Snizn )\\[3pt]\longrightarrow
     \Map^{\Z/2}(\Sigma 
     ^{n}MTH_{n}^\beta ,\Snizn ) \longrightarrow \Map^{\Z/2}\bigl(\Sigma
     ^{n}(BH_{n+1}^\beta )_+,\Snizn \bigr) 
  \end{multline}
As in~\eqref{eq:210} the first space is contractible.  The obstruction
argument above identifies the fiber of the last map as equivariant continuous
theories with positive sphere partition function.  To pass to stable maps use
an equivariant version of~\eqref{eq:214}.
\end{proof}

  \begin{corollary}[]\label{thm:86}
 There is a 1:1 correspondence 
  \begin{equation}\label{eq:93}
     \left\{\normalfont \vcenter{\hbox{isomorphism classes of continuous
     invertible} 
     \hbox{$n$-dimensional extended topological
     field}\hbox{theories with \textnormal{(}i\textnormal{)}~symmetry
     group~$H_n$, }\hbox{\textnormal{(}ii\textnormal{)} reflection structure,
     and \textnormal{(}iii\textnormal{)} partition}\hbox{function on
     $S^n$ of Type~P}} \right\} \cong  
     [MTH^{\beta},\SIZn]^{\Z/2}.   
  \end{equation}
  \end{corollary}

  \begin{example}[]\label{thm:87}
 The restriction map\footnote{The involution
on $\pi _4MTSO$ and $\pi _4\Sigma ^3MTSO_3$ acts as~$-1$: both groups are
detected by the signature, which negates under orientation-reversal.}
  \begin{equation}\label{eq:95}
     [MTSO^\beta ,\Sigma ^4\IZn ]^{\Z/2}\longrightarrow [\Sigma
     ^3MTSO_3^\beta ,\Sigma ^4\IZn ]^{\Z/2}
  \end{equation}
is an index two inclusion of infinite cyclic groups.  It follows that there
exist continuous invertible 3-dimensional oriented theories~$\varphi $ with
reflection structure such that $\varphi (S^3)$~has Type~N.  In turn, this
suggests the existence of invertible non-topological theories with reflection
structure whose real-valued partition function on ~$S^3$ is negative;
see~\ref{subsec:4.4}.  Here is an explicit example.  The domain is the
geometric bordism category of oriented \emph{Riemannian} manifolds.  The
partition function is $F(X^3)=\exp(2\pi i\xi \mstrut _{X})$, where $\xi
\mstrut _X$~is the Atiyah-Patodi-Singer invariant~\cite{APS}.\footnote{of the
operator called `$B^{\textnormal{ev}}$' in their paper.}  To apply the
arguments in Theorem~\ref{thm:82} we need to use a Riemannian sphere that is
a double---the round sphere does nicely---in which case the spectrum of the
APS~operator is symmetric about zero and so the $\eta $-invariant vanishes.
The dimension of the kernel is one, $\xi \mstrut _X=1/2$, and so $F(S^3)=-1$.
We remark that the corresponding integer invariant of a closed oriented
4-manifold~$W$ is $(\Sign(W)\pm\Euler(W))/2$; either sign works.  Also, the
square of this theory, whose deformation class generates $[MTSO^\beta ,\Sigma
^4\IZn ]^{\Z/2} $, represents ``Kitaev's $E_8$-phase''~\cite{K5}.
  \end{example}

Let $F$~be a invertible topological $n$-dimensional theory, and suppose that
$F(S^n)>0$.  Then the hermitian form on~$F(S^{n-1})$ is positive definite;
see~\eqref{eq:49}.  The positivity holds for any null bordant
$(n-1)$-manifold, but on other manifolds there is no guarantee of positivity
(Definition~\ref{thm:24}), even for stable theories.

  \begin{proposition}[]\label{thm:89}
 Let $F$~be an invertible $n$-dimensional topological field theory of
$H_n$-manifolds with $F(S^n)>0$.  Suppose $F$~has a reflection structure.
Then the sign of the hermitian form~\eqref{eq:45} on a closed
$(n-1)$-manifold is a bordism invariant and determines a homomorphism
  \begin{equation}\label{eq:94}
     \pi _{n-1}\Sigma ^{n-1}MTH_{n-1}\longrightarrow \pmo. 
  \end{equation}
  \end{proposition}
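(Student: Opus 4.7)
The plan is to assign a sign $\epsilon(Y) \in \pmo$ to each closed $(n-1)$-dimensional stable $H$-manifold $Y$, verify that this sign is multiplicative under disjoint union and invariant under $H$-bordism, and thereby obtain the desired homomorphism $\pi_{n-1}MTH \to \pmo$.

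To define $\epsilon$, first observe that since $Z$ is invertible and extended, each state space $Z(Y)$ is a one-dimensional complex line. The reflection structure yields the isomorphism $Z(\beta Y) \cong \overline{Z(Y)}$ of~\eqref{eq:41}, and combined with the identification $\beta Y \cong Y^\vee$ from Proposition~\ref{thm:44}, evaluating $Z$ on the bordism $e_Y$ gives the nondegenerate hermitian form $h_Y$ on $Z(Y)$ of~\eqref{eq:45}. Any nondegenerate hermitian form on a complex line is either positive or negative definite, so I would take $\epsilon(Y)$ to be its sign. Multiplicativity $\epsilon(Y_0 \sqcup Y_1) = \epsilon(Y_0)\,\epsilon(Y_1)$ is then immediate from the symmetric monoidal character of $Z$ and of the reflection equivariance, which together give $h_{Y_0 \sqcup Y_1} = h_{Y_0} \otimes h_{Y_1}$.

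The heart of the argument is bordism invariance. Given an $H_n$-bordism $X\colon Y_0 \to Y_1$, I would invoke Proposition~\ref{thm:76}, which provides an $H_{n+1}$-bordism from $\beta X \sqcup e_{Y_1} \sqcup X$ to $e_{Y_0}$, viewed as morphisms $\beta Y_0 \sqcup Y_0 \to \emptyset^{n-1}$. By stability of $Z$ in the sense of Definition~\ref{thm:79}, the values of $Z$ on these two morphisms must agree. Unwinding the left-hand composition by functoriality, and substituting $Z(\beta X) = \overline{Z(X)}$ from the reflection equivariance, should yield the identity
\[
h_{Y_0}(\overline{v},w) \;=\; h_{Y_1}\bigl(\overline{Z(X)\,v},\,Z(X)\,w\bigr), \qquad v,w \in Z(Y_0),
\]
which exhibits $Z(X)$ as an isometry of hermitian lines and thereby forces $\epsilon(Y_0) = \epsilon(Y_1)$.

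Because $k=n-1$ lies in the stable range $k<n$ described in Remark~\ref{thm:75}, every class in $\pi_{n-1}MTH$ is realized by a single closed $(n-1)$-manifold with $H_n$-structure, and two such manifolds represent the same class precisely when they cobound an $H_n$-bordism; combining this with the two properties above, $\epsilon$ descends to the desired homomorphism $\pi_{n-1}MTH \to \pmo$. The main delicacy will be the careful bookkeeping of the monoidal, functorial, and reflection-equivariant structures on $Z$ needed to extract the displayed isometry identity from Proposition~\ref{thm:76}; once that identity is in hand, the remaining steps are formal.
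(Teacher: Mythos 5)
Your plan and the paper's proof share the same skeleton: assign a sign to each closed $(n-1)$-manifold via the hermitian line $Z(Y)$, check multiplicativity, establish bordism invariance from Proposition~\ref{thm:76}/Corollary~\ref{thm:77}, and cite the stabilization $\pi_{n-1}\Sigma^n MTH_n\cong\pi_{n-1}MTH$. But the step you flag as the "main delicacy" --- passing from the $H_{n+1}$-bordism of Proposition~\ref{thm:76} to the statement "by stability, $Z$ takes equal values on the two $n$-dimensional morphisms $\beta X\amalg e_{Y_1}\amalg X$ and $e_{Y_0}$" --- is a genuine gap, not just bookkeeping. When $Y_0\neq\emptyset$, that $H_{n+1}$-bordism is a \emph{bordism of manifolds with boundary} (a higher morphism in a multibordism category), as the footnote to Proposition~\ref{thm:76} warns. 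A non-extended stable theory, which is merely a functor out of $\Bord_{\langle n-1,n\rangle}(H_n)$ factoring through $MTH$ in the sense of Definition~\ref{thm:79}, has no mechanism to detect such $(n+1)$-dimensional corner data; stability only entitles you to conclude that a \emph{closed} $n$-manifold bounding an $H_{n+1}$-manifold has partition function $1$.

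The paper sidesteps this by a small re-parametrization: flip the arrow of time on the incoming boundary of $X$ to obtain $X'\colon\emptyset^{n-1}\to\beta Y_0\amalg Y_1$, so that Corollary~\ref{thm:77} applies and the double $\Delta X'$ is a \emph{closed} $n$-manifold that is $H_{n+1}$-null-bordant. Stability then gives $Z(\Delta X')=1$, and the norm-square computation~\eqref{eq:49} identifies $Z(\Delta X')=\|Z(X')\|^2_{Z(\partial X')}$ inside the hermitian line $\overline{Z(Y_0)}\otimes Z(Y_1)$. Positivity of this number shows the tensor product of hermitian lines is positive, i.e.\ the signs of $h_{Y_0}$ and $h_{Y_1}$ agree --- which is exactly the conclusion you want, obtained without ever having to compare $Z$-values on bordisms with nonempty boundary. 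Your isometry identity is in fact true for stable theories, and follows from this same closed-manifold computation, but you cannot get it directly from Definition~\ref{thm:79} without routing through the double.
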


  \begin{proof} 
 If $X\:Y_0\to Y_1$ is an $H_n$-bordism, then by reversing the arrow of time
on the incoming boundary we obtain $X'\:\emptyset ^{n-1}\to \beta Y_0\amalg
Y_1$.  Hence by Corollary~\ref{thm:77} and the remark which follows, we
deduce that the hermitian line $\overline{F(Y_0)}\otimes F(Y_1)$ is positive
definite.  Therefore, $F(Y_0)$~and $F(Y_1)$~are simultaneously positive or
simultaneously negative.
  \end{proof}

We conclude this section with a lemma we will use in~\S\ref{sec:7}. 

  \begin{lemma}[]\label{thm:152}
 The map $\Sigma ^nMTH_n\to MTH$ induces a surjection on $H_{n+1}(-;\RR)$. 
  \end{lemma}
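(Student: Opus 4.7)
Plan.  Since a map of spectra induces a surjection on $H_{n+1}(-;\RR)$ exactly when the induced map on $H^{n+1}(-;\RR)$ is injective, it suffices to show that $H^{n+1}(MTH;\RR) \to H^{n+1}(\Sigma^n MTH_n;\RR)$ is injective.  Because $\Sigma^n MTH_n = \Thom(BH_n;\triv{n} - V_n)$ is the Thom spectrum of a virtual bundle of rank zero and $MTH = \colim_m \Sigma^m MTH_m$, the rational Thom isomorphism identifies $H^*(\Sigma^n MTH_n;\RR) \cong H^*(BH_n;\RR_w)$ and $H^*(MTH;\RR) \cong H^*(BH;\RR_w)$, where $\RR_w$ is the orientation local system of the universal bundle $V$; this local system is trivial whenever $\rho\colon H \to O$ factors through $SO$.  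Under these identifications the map of interest becomes the restriction along the inclusion $BH_n \hookrightarrow BH$ supplied by the stabilization in~\eqref{eq:17}.

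Assume first that $\rho$ factors through $SO$, so $\RR_w$ is trivial.  By Proposition~\ref{thm:5} and Theorem~\ref{thm:6}, rationally $BH \simeq BSO \times BK$ with an analogous compatible decomposition for $BH_n$, where $K$ is the internal symmetry group.  By classical results of Borel, $H^*(BH;\QQ)$ is a polynomial algebra on even-degree characteristic classes: the Pontrjagin classes $p_i \in H^{4i}(BSO;\QQ)$ together with the generators of $H^*(BK;\QQ)$.  Any Pontrjagin class $p_i$ appearing as a factor in a class of total degree $n+1$ satisfies $4i \le n+1$, so $i \le \lfloor n/2\rfloor$ for all $n \ge 1$.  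This places $p_i$ in the stable range in which the restriction carries $p_i$ to the nonzero primitive element $p_i \in H^*(BH_n;\QQ)$; the $BK$-generators are preserved identically.  Since monomials in these stable characteristic classes are linearly independent in $H^*(BH_n;\QQ)$, the restriction is injective in degree~$n+1$.

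In the case $\rho$ surjects onto $O$, the twisted cohomology $H^*(BH;\RR_w)$ identifies with the $(-1)$-eigenspace of the $\pi_0(H) = \Z/2$-action on $H^*(BSH;\RR)$.  Since this action is trivial on the Pontrjagin classes, any antiinvariant class must draw its non-Pontrjagin factors from $H^*(BK;\QQ)$.  In the basic examples $H = O,\Ppm$ the relevant action is trivial and $H^*(BH;\QQ_w) = 0$ outright, so the statement is vacuous.  In the remaining cases the same polynomial-degree bookkeeping applied to the antiinvariant part yields the required injectivity.  The main obstacle is a careful identification of the antiinvariant characteristic classes through the stabilization and verifying that no hidden vanishing of generators occurs in degree~$n+1$.
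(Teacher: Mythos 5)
Your approach is genuinely different from the paper's and is in principle viable, but as written it has a real gap in the unoriented case, which you explicitly flag. Let me address both points.

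The paper's proof works directly with $\pi_{n+1}\otimes\RR$ (using the remark after the statement that $\pi_{n+1}(\sB)\otimes\RR\cong H_{n+1}(\sB;\RR)$). It arranges the cofibration sequences
\[
\Sigma^{m}(BH_{m+1})_+ \longrightarrow \Sigma^m MTH_m \longrightarrow \Sigma^{m+1}MTH_{m+1} \longrightarrow \Sigma^{m+1}(BH_{m+1})_+
\]
for $m=n,n+1$, identifies the boundary map on $\pi_{n+1}$ with the Euler characteristic of a closed $(n+1)$-manifold, and uses that $\chi(S^{n+1})$ is $0$ or $2$ according to parity. No knowledge of $H^*(BH;\QQ)$ is needed; the key input is $\pi_{n+1}\Sigma^{n+2}(BH_{n+2})_+=0$ and the interpretation of the two connecting maps. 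Your proof instead dualizes to injectivity on $H^{n+1}(-;\RR)$, applies the rational Thom isomorphism to rewrite both sides as twisted cohomology of classifying spaces, and uses Borel's computation. Both approaches are legitimate; yours is more algebraic, the paper's more geometric, and yours requires knowing the rational cohomology of $BG$ while the paper's requires only the connectivity estimate and the Euler characteristic of a sphere.

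The gap you identify in the unoriented case is exactly the missing ingredient, and your proposed workaround (``polynomial-degree bookkeeping applied to the antiinvariant part'') is not in itself a proof. The clean fix is this: let $SH_n\subset H_n$ be as in~\eqref{eq:15}. The relevant $\zt$ is not $\pi_0 H$ (which can be larger) but the cokernel $H/SH\cong\pmo$, and conjugation by any $\sigma\in H_n\setminus SH_n$ induces an action on $H^*(BSH_n;\RR)$ depending only on the nontrivial coset. Since $\sigma$ also defines an element of $H\setminus SH$, the $\zt$-actions on $H^*(BSH_n;\RR)$ and $H^*(BSH;\RR)$ are \emph{compatible} with the restriction map induced by $BSH_n\hookrightarrow BSH$. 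Therefore the map on anti-invariants $H^{n+1}(BSH;\RR)^{-}\to H^{n+1}(BSH_n;\RR)^{-}$ is the restriction to eigenspaces of an already-injective map, hence injective, and this is exactly the twisted cohomology you need. In other words, the unoriented case follows formally from the oriented case; no new characteristic class bookkeeping is required. With that observation supplied (together with the degree estimate you already give, which guarantees that the kernel of $H^{n+1}(BSH;\RR)\to H^{n+1}(BSH_n;\RR)$ is generated in degree $\ge 2(n+1)>n+1$), your argument closes.

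Two smaller corrections: replace ``$\pi_0(H)=\Z/2$'' by ``$H/SH\cong\pmo$,'' since the component group can be bigger; and the chain ``$4i\le n+1$ so $i\le\lfloor n/2\rfloor$'' should be phrased as: the kernel of the restriction is generated in degree $4(\lfloor n/2\rfloor+1)\ge 2n+2>n+1$ for $n\ge 3$, so it vanishes in degree $n+1$ regardless of which Pontrjagin classes appear.
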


\noindent
 We remark that $\pi _{n+1}(\sB)\otimes \RR\to H_{n+1}(\sB;\RR)$ is an
isomorphism for any spectrum~$\sB$. 

  \begin{proof}
 Arrange the stabilization~\eqref{eq:90} and cofibration
sequences~\eqref{eq:209} as follows:
  \begin{equation}\label{eq:220}
     \begin{gathered} \xymatrix{\Sigma ^n(BH_{n+1})_+\ar[d] & \Sigma
     ^{n+1}(BH_{n+2})_+\ar[d]^s\\ \Sigma ^nMTH_n\ar[r]^<<<<<<i\ar[d] & \Sigma
     ^{n+1}MTH_{n+1}\ar[r]^j\ar[d]^\chi  & \Sigma ^{n+2}MTH_{n+2}\ar[d]\\ \Sigma
     ^n(BH_n)_+ & \Sigma ^{n+1}(BH_{n+1})_+&\Sigma
     ^{n+2}(BH_{n+2})_+} \end{gathered}  
  \end{equation}
 The two compositions with shape
$\raisebox{18pt}{\xymatrix@R=3pt@C=3pt{\cdot\ar[d]\\\cdot\ar[r]&\cdot\ar[d]\\
&\cdot}}$ are cofibration sequences.  The map~ $s_*$ on~$\pi _{n+1}$ sends
the generator of the infinite cyclic group $\pi _{n+1}\Sigma
^{n+1}(BH_{n+2})_+$ to the class of $S^{n+1}$, and the map~ $\chi _*$ on~$\pi
_{n+1}$ sends the class of a closed $(n+1)$-manifold to its Euler number.
Also, $\pi _{n+1}\Sigma ^{n+2}(BH_{n+2})_+=0$.  It follows that $j_*$ on~$\pi
_{n+1}$ is surjective.  If $n$~is even, then $\chi _*=0$ on~$\pi _{n+1}$ and
by exactness $i_*$~is surjective.  If $n$~is odd, then $\chi _*\circ s_*$ is
multiplication by~2.  Working now on~$\pi _{n+1}\otimes \RR$ we can lift any
class in~$\pi _{n+1}\Sigma ^{n+2}MTH_{n+2}\otimes \RR$ through~$j_*$ to have
zero image under~$\chi _*$, hence by exactness to be in the image
of~$i_*\otimes \RR$.  In other words, $(j\circ i)_*\otimes \RR$~is
surjective.  Finally, the stabilization map $\Sigma ^{n+2}MTH_{n+2}\to MTH$
induces an isomorphism on~$\pi _{n+1}$.
  \end{proof}

  \subsection{H-type theories}\label{subsec:6.3}
 
Wen~\cite{Wen} and Morrison-Walker~\cite{MW} introduced the notion of
$n$-dimensional topological field theories defined only on $n$-manifolds with
an infinitesimal time direction.  These are of Hamiltonian type, or H-type,
and are the minimal expectation for the low energy effective theory
describing a Hamiltonian system.  In this paper we assume emergent
relativistic invariance, so do not engage with H-type theories in a serious
way.  Nonetheless, in this subsection we indicate briefly how to analyze
invertible theories of H-type.
 
The first issue is definitional: Do the $n$-manifolds in the bordism category
have (i)~an \emph{oriented} time direction or merely (ii)~a time direction?
In unoriented theories this means a reduction of~$O_n$ to either
(i)~$O_{n-1}$ or (ii)~$O_1\times O_{n-1}$.  We opt for~(i).  After all, a
Hamiltonian system does have a definite orientation of time, and even in
relativistic quantum field theory we assume a time orientation of Minkowski
spacetime~(\S\ref{subsec:12.1}).  Then a more general symmetry group~$H_n$ is
reduced to~$H_{n-1}$, and an invertible theory of H-type is a map out of the
spectrum $\Sigma ^{n-1}MTH_{n-1}$.
 
Now the extension question: Does an equivariant map $\varphi \:\Sigma
^{n-1}MTH_{n-1}^\beta \to\Snizn $ extend to an equivariant map $\Sigma
^nMTH_n^\beta \to\Snizn $?  (In Wen's language this is an extension from
H-type to L-type.)  The obstruction is the value of~$\varphi $ on the
universal family of $H_{n-1}$-spheres $S^{n-1}$ parametrized by~$BH_n$.
Without the equivariance the value\footnote{Parallel to the $\Zo$-torsors
attached to $n$-manifolds are graded gerbes attached to $(n-1)$-manifolds.
The construction of a line may depend on a choice of metric, for example, so
may be part of a non-topological theory.}  is a $\zt$-graded \emph{complex}
line bundle over~$BH_n$; the equivariance implies the value is a $\zt$-graded
\emph{real} line bundle.  (See Remark~\ref{thm:mss146} for the connective
cover of~$\Sigma ^2\IZ$ and its bar involution~$\gamma $.)  The first
obstruction is the grading: the single quantum state on~$S^{n-1}$ should be
bosonic.  If so, the remaining obstruction is a class in~$H^1(BH_n;\zt)\cong
\Hom(H_n,\zt)\cong \Hom(\pi _0H_n,\zt)$.  For example, if $H_n=O_n$ or
$H_n=\Ppm_n$, then a hyperplane reflection should act trivially on the
line~$\varphi (S^{n-1})$.

  \begin{example}[]\label{thm:88}
 Continuing Example~\ref{thm:87}, the restriction map 
  \begin{equation}\label{eq:96}
     [\Sigma ^3MTSO_3^\beta ,\Sigma ^4\IZn ]^{\Z/2} \longrightarrow [\Sigma
     ^2MTSO_2^\beta ,\Sigma ^4\IZn ] ^{\Z/2}
  \end{equation}
is an index two inclusion of infinite cyclic groups.  So there exists a
continuous invertible theory~$\varphi $ of H-type with reflection structure
that does not extend to all oriented 3-manifolds.  Here is an example
defined on the category of oriented Riemannian 2-manifolds: assign the
$\zt$-graded determinant line~$\varphi (Y)$ of the $\dbar$-operator to a
closed 2-manifold~$Y$.  Then $\inde\dbar_{S^2}=1$ implies that $\varphi
(Y)$~is odd.
  \end{example}

   \section{Positivity in extended invertible topological theories}\label{sec:7}

In this section we develop the theory of extended positivity in invertible
field theories.  We already introduced a homotopy-theoretic manifestation of
extended positivity for higher super lines in Definition~\ref{thm:178}.
Here, in~\S\ref{subsec:7.3}, we begin by introducing spaces of invertible
field theories leading up to the space of invertible reflection positive
theories.  Our main result, Theorem~\ref{thm:184}, identifies the homotopy
type of the space of invertible \emph{continuous} reflection positive
theories as the 0-space of the Anderson dual to a Thom spectrum.  The
homotopy type of the corresponding space in the \emph{discrete} case, worked
out in Theorem~\ref{thm:188}, is a corollary, as is Theorem~\ref{thm:110} in
the introduction.  The proof of Theorem~\ref{thm:184} appears
in~\S\ref{subsec:7.4} and~\S\ref{subsec:7.5}.

  \subsection{Spaces of invertible field theories, extended positivity, and
  stability}\label{subsec:7.3}  

\subsubsection{Preliminary: splitting off a reflection}\label{subsubsec:7.3.2} 
 Fix~$n>0$.  Recall that if $(H_n,\rho _n)$ is a symmetry type
(Definition~\ref{thm:153}), then we have a canonical
coextension~\eqref{eq:30} of~$H_n$ by~$\pmo$ to a group~$\hH_n$.  It is this
extension that determines the $\beta $-involution on the Madsen-Tillmann
spectrum~$MTH_n$, as in the discussion preceding Ansatz~\ref{thm:147}; the
homotopy quotient of~$MTH_n^\beta $ is~$MT\hH_n$.

The splitting of interest is contained in~\eqref{eq:192} (and is also
implicit in Proposition~\ref{thm:44}).  It exists whenever there is an
``auxiliary'' direction.  The middle vertical homomorphism
in~\eqref{eq:192} induces
 \[BH_{n-1}\times B\Z/2\to B\hat{H}_{n}, \]
which factors the projection 
\[
BH_{n-1}\times B\Z/2\to B\hat{H}_{n}\to B\Z/2.
\]
This, in turn, gives a sequence of equivariant maps 
\begin{equation}
\label{eq:p5}
\Sigma^{n-1}MTH_{n-1}\wedge S^{1-\sigma} \to 
\Sigma^{n}MTH_{n}^{\beta} \to MTH\wedge S^{1-\sigma}
\end{equation}
factoring the smash product of the identity map of $S^{1-\sigma}$
with the defining inclusion of $\Sigma^{n-1}MTH_{n-1}$ into $MTH$.

The stable form of the splitting implies the following.

  \begin{proposition}[]\label{thm:74}
 The $\Z/2$-equivariant spectra~$MTH^\beta $ and~$MTH^\gamma $ are
canonically equivariantly weakly equivalent. 
  \end{proposition}

\noindent
 We remind that, despite the similarity of notation, the $\beta $-involution
is defined by the group coextension whereas the $\gamma $-involution is
natural, obtained by smashing with~$S^{1-\sigma }$.

  \begin{proof}
 Take $n\to\infty $ in~\eqref{eq:p5}.  The colimit of the first term
is~$MTH\wedge S^{1-\sigma }$ and the composition is homotopic to the identity
map.
  \end{proof}

\subsubsection{Spaces of theories}\label{subsubsec:7.3.1}

Let $n>0$~be the spacetime dimension and fix a positive integer~$k\le n$.
Let $G$~be a Lie group equipped with a homomorphism $\rho:G\to O_{k}$.  The
map $\rho$ is used to form the Thom spectrum $MTG=\thom(BG;-\rho)$.  Define
the {\em space of continuous invertible $k$-truncated $n$-dimensional
topological field theories of symmetry type $(G,\rho)$} as\footnote{The
`$k$'~usually appears in the notation for~$G$, as in~\eqref{eq:p10} below, so
we do not adorn~`$\fieldSymbol$' with it.}
 \[
\fields{G}{n} =\fields{G,\rho }{n} = \Map(\Sigma^{k}MTG, \Sigma^{n+1}I\Z(1)).
\]
 Usually $\rho$ is understood in the notation.  A point of $\fields{G}{n}$ may
be thought of as a $k$-dimensional field theory that associates to a closed
$\ell $-manifold $M$, $\ell \le k$, a super $(n-\ell )$-line.

Different flavors of field theories are obtained by changing the target, as
in Definition~\ref{thm:178} and Definition~\ref{thm:179}.  We give the
definitions for continuous invertible theories; there are analogous
definitions for discrete invertible theories.

  \begin{definition}[]\label{thm:180}
  Fix integers $n>0$ and $k\le n$.

\begin{enumerate}[{\textnormal(}i{\textnormal)}]

 \item The space of \emph{continuous invertible $k$-truncated $n$-dimensional
{\em Hermitian} extended topological field theories with symmetry type
$(G,\rho)$} is
 \[ \fieldsHerm{G,\rho }{n} =\Map(\Sigma^{k}MTG,\Sigma^{n+1}I\Z(1)_{H}) \]

 \item The space of \emph{continuous invertible $k$-truncated $n$-dimensional
\emph{positive definite} extended topological field theories with symmetry
type $(G,\rho)$} is 
\[ \fieldsPos{G,\rho }{n} =
\Map(\Sigma^{k}MTG,\Sigma^{n+1}I\Z(1)_{\text{pos}}).  \]

 \end{enumerate}
  \end{definition}

\noindent
 Note that composition with the map $I\Z(1)_{\text{pos}}\to I\Z(1)_{H}$
induces a map 
  \begin{equation}\label{eq:305}
     \fieldsPos{G,\rho }{n} \longrightarrow \fieldsHerm{G,\rho }{n}. 
  \end{equation}

Assume the symmetry type is a pair~$(H_n,\rho _n)$ as in
Definition~\ref{thm:153}.  We recall the notation~\eqref{eq:289} for the
space of theories with reflection structure:
  \begin{equation}\label{eq:291}
     \fieldsRefl{H_n}{n} =\Map^{\Z/2}(\Sigma ^nMTH_n^\beta ,\Snizn). 
  \end{equation}
Composition with the first map in~\eqref{eq:p5} produces a map 
  \begin{equation}\label{eq:304}
     \fieldsRefl{H_n}n\longrightarrow \fieldsHerm{H_{n-1}}n. 
  \end{equation}
Therefore, the value of a theory with reflection structure on a closed
manifold of dimension~$\ell \le n-1$ is a \emph{Hermitian} super $(n-\ell
)$-line.  (The Hermitian line for~$\ell =n-1$ is described
in~\S\ref{subsec:3.3} for not necessarily invertible theories.)  Recall the
stabilization~$\rho \:H\to O$ in~\eqref{eq:278}, and define
  \begin{equation}\label{eq:296}
     \fieldsStable{H}{n} = \Map(MTH,\Sigma^{n+1}I\Z(1)),
  \end{equation}
the space of {\em stable} $n$-dimensional invertible topological field
theories of symmetry type $H$.  

We use the notations $\fieldsHermd{G}{n}$, $\fieldsPosd{G}{n}$,
$\fieldsRefld{H_n}{n} $ for the corresponding spaces of discrete field
theories, which are mapping spaces with codomain $\Sigma ^n\ICx_H$, $\Sigma
^n\ICx_{\textnormal{pos}}$, and $\SCxg$, respectively.  (See~\eqref{eq:288}
and~\eqref{eq:287}.)  

The main objects of interest are invertible reflection positive theories.  As
stated after~\eqref{eq:304}, an invertible theory with reflection structure
has values on closed manifolds of dimension~$\le (n-1)$ that are higher
Hermitian super lines.  The following definition uses~ \eqref{eq:305} to
impose positivity, which in dimension~$n-1$ is a \emph{condition}
(Definition~\ref{thm:24}) and in dimensions~$< (n-1)$ is a \emph{structure}.

  \begin{definition}[]\label{thm:p181}
 Fix $n>0$ and a symmetry type~$(H_n,\rho _n)$ in the sense of
Definition~\ref{thm:153}.  Define the spaces $\fieldsReflPos{H_n}{n}$ and
$\fieldsReflPosd{H_n}{n}$ of \emph{$n$-dimensional continuous
(resp.~discrete) invertible reflection positive topological field theories
with symmetry type $(H_{n},\rho _n)$} and maps out of these spaces so that
each square in the diagram
  \begin{equation}\label{eq:p10}
  \begin{gathered} \xymatrix{ \fieldsReflPosd{H_{n}}{n} \ar[r]\ar[d] &
\fieldsReflPos{H_{n}}{n} \ar[r]\ar[d] & \fieldsPos{H_{n-1}}{n} \ar[d] \\
 \fieldsRefld{H_{n}}{n} \ar[r] & \fieldsRefl{H_{n}}{n}
\ar[r]^>>>>>{\,\eqref{eq:304}\,\,} & 
\fieldsHerm{H_{n-1}}{n}
 }
  \end{gathered}
\end{equation}
is a homotopy pullback.
  \end{definition}

\noindent
 For the spaces of theories in the right hand column we use
Definition~\ref{thm:180} with~$k=n-1$, $G=H_{n-1}$, and $\rho =\rho _{n-1}$.
Our task is to determine the homotopy types of $\fieldsReflPos{H_{n}}{n}$ and
$\fieldsReflPosd{H_{n}}{n}$.

\subsubsection{Extended positivity structure}\label{subsubsec:7.3.4}

Definition~\ref{thm:p181} is natural given our homotopy-theoretic
implementation of higher positive definite Hermitian super lines in
Definition~\ref{thm:179}.  We now make a short digression to identify
extended positivity in an invertible $n$-dimensional field theory as a
structure that trivializes an associated invertible $(n-1)$-dimensional
field theory.  For this we need yet an additional space of invertible field
theories, based on the target spectrum of higher \emph{real} super lines
(Definition~\ref{thm:178}(ii)).

  \begin{definition}[]\label{thm:183}
 The space of \emph{continuous invertible $(n-1)$-dimensional \emph{real}
extended topological field theories with symmetry type~$(H_{n-1},\rho
_{n-1})$} is
  \begin{equation}\label{eq:294}
     \fieldsReal{H\mstrut _{n-1}}{n-1} = \Map\bigl(\Sigma ^{n-1}MTH_{n-1},(\Sigma
     ^n\IZ^\gamma)^{h\Z/2}\bigr). 
  \end{equation} 
  \end{definition}

\noindent
 The partition function on a closed $(n-1)$-manifold lies in~$\pmo$,
the value on a closed $(n-2)$-manifold is a real super line, etc.  (See
Remark~\ref{thm:mss146} for the top few homotopy groups of $\bigl(\IZ^\gamma
\bigr)^{h\Z/2}$.)

To begin, for any pointed space~$X$ there is an equivalence of spectra
$X_+\approx X\vee S^0$, which leads to a cofibration sequence 
  \begin{equation}\label{eq:292}
     X\longrightarrow X_+\longrightarrow S^0 .
  \end{equation}
Set $X=B\Z/2$, smash with~$\Sigma ^{n-1}MTH_{n-1}$, and apply
$\Map\bigl(-,\Sniz\bigr)$ to obtain the fibration sequence 
  \begin{equation}\label{eq:293}
     \fieldsPos{H_{n-1}}{n}\longrightarrow  \fieldsHerm{H_{n-1}}{n}
     \longrightarrow \fieldsReal{H\mstrut _{n-1}}{n-1} .
  \end{equation}
For the middle term use~\eqref{eq:285} and for the last the identification  
  \begin{align*}
  \Map(\Sigma^{n-1}MTH_{n-1}\wedge B\Z/2,\Sigma^{n+1}I\Z(1)) &\approx
\ztmap(\Sigma^{n}MTH_{n-1}\wedge S^{\sigma-1},\Sigma^{n+1}I\Z(1))\\ &\approx
\ztmap(\Sigma^{n}MTH_{n-1},\Sigma^{n+1}I\Z(1)^{\gamma}) \\ &\approx
\ztmap(\Sigma^{n-1}MTH_{n-1},\Sigma^{n}I\Z(1)^{\gamma})\\ &\approx
\Map(\Sigma^{n-1}MTH_{n-1},\Sigma^{n}(I\Z(1)^{\gamma})^{h\Z/2}).
  \end{align*}

Therefore, the space $\fieldsReflPos{H_{n}}{n}$ may also be defined as the
homotopy fiber of the composition
  \begin{equation}\label{eq:295}
     \kappa \:\fieldsRefl{H_{n}}{n} \longrightarrow
     \fieldsHerm{H_{n-1}}{n}\longrightarrow \fieldsReal{H\mstrut _{n-1}}{n-1}. 
  \end{equation}
This leads to the following definition. 
 
  \begin{definition}[]\label{thm:108}
 An \emph{\textnormal{(}extended\textnormal{)} positivity structure} on a
continuous $n$-dimensional field theory\newline $\varphi \in
\fieldsRefl{H_n}n$ is a trivialization of~$\kappa (\varphi )$.
  \end{definition}

\noindent
 That is, a positivity structure is a path from~$\kappa (\varphi )$ to the
basepoint in~$\fieldsReal{H\mstrut _{n-1}}{n-1}$.  This discussion identifies the
space of continuous reflection positive invertible field theories as the
space of continuous invertible field theories with both a reflection
structure and a positivity structure.

  \begin{remark}[]\label{thm:111}
 The partition function of the field theory~$\kappa (\varphi )\:\Sigma
^{n-1}MTH_{n-1}\to \Sigma^{n}(I\Z(1)^{\gamma})^{h\Z/2}$ is the homomorphism
  \begin{equation}\label{eq:131}
     \pi _{n-1}\Sigma ^{n-1}MTH_{n-1}\longrightarrow \pmo
  \end{equation}
induced on~$\pi _{n-1}$, and it agrees with the homomorphism~\eqref{eq:94}
which tracks the sign of the hermitian lines in the theory~$\varphi $.  The
highest piece of the positivity structure is therefore the standard
positivity \emph{constraint} in Definition~\ref{thm:24}.  The theory~$\kappa
(\varphi )$ assigns a real super line to a closed $(n-2)$-manifold and more
complicated objects in lower dimensions; their trivializations are
\emph{data}.  
  \end{remark}

\subsubsection{Main theorems}\label{subsubsec:7.3.3}

We apply the splitting of~\S\ref{subsubsec:7.3.2} to construct a map  
  \begin{equation}\label{eq:297}
     \fieldsStable Hn\longrightarrow \fieldsReflPos{H_{n}}{n} 
  \end{equation}
as follows.  (These spaces of invertible field theories are defined
in~\eqref{eq:296} and ~\eqref{eq:p10}.)   Map 
  \begin{equation}\label{eq:298}
     \Sigma ^{n-1}MTH_{n-1}\wedge B\Z/2_+\longrightarrow \Sigma
     ^{n-1}MTH_{n-1}\longrightarrow MTH 
  \end{equation}
into~$\Sniz$ to obtain a map of~$\fieldsStable Hn$ into the upper right
corner of~\eqref{eq:p10}.  Use equivariant maps of the sequence~\eqref{eq:p5}
into~$\Snizn$ to map~$\fieldsStable Hn$ into the middle of the bottom row
of~\eqref{eq:p10} .  The two compositions into the lower right corner are
canonically homotopic, so the fact that the right square in~\eqref{eq:p10} is
a homotopy pullback yields~\eqref{eq:297}.

  \begin{theorem}[]\label{thm:184}
 The map $ \fieldsStable Hn\longrightarrow \fieldsReflPos{H_{n}}{n} $
in~\eqref{eq:297} is a homotopy equivalence.
  \end{theorem}

\noindent
 We give the proof of Theorem~\ref{thm:184} in~\S\ref{subsec:7.4}
and~\S\ref{subsec:7.5}. 

  \begin{corollary}[]\label{thm:186}
 There is an isomorphism
  \begin{equation}\label{eq:300}
     \pi _0\,\fieldsReflPos{H_{n}}{n}\cong [MTH,\Sniz]. 
  \end{equation}
  \end{corollary}

Next, we turn to discrete invertible theories.  First, observe that the
$\Z/2$-action on~$\CC$ by complex conjugation is equivalent to the
$\Z/2$-action on $\Map(\Z/2,\RR)$, so for any $\Z/2$-spectrum~$X$ one has
  \begin{equation}\label{eq:311}
     \Map^{\Z/2}(X,\HCn)\approx  \Map(X,H\RR). 
  \end{equation}
The spectrum $\Map(X,\HR)$ carries a residual $\Z/2$-action, induced from the
$\Z/2$-action on~$X$; it splits as a wedge of the $(+1)$-~and
$(-1)$-eigenspaces. The exponential sequence~\eqref{eq:mssexp2} of
$\Z/2$-equivariant spectra implies that the left map in the bottom row
of~\eqref{eq:p10} extends to a fibration sequence
  \begin{equation}\label{eq:312}
   \begin{aligned}
     \Map^{\Z/2}(\mth n^\beta ,\SCxg)\longrightarrow &\Map^{\Z/2}(\mth
     n^\beta ,\Snizn)\\ &\qquad \longrightarrow \Map^{\Z/2}(\mth n^\beta
     ,\Sigma ^{n+1}\HCn).  
  \end{aligned}
  \end{equation}
Apply~\eqref{eq:311} to the last term and use the fact that the left hand
square in~\eqref{eq:p10} is a homotopy pullback to obtain a fibration
sequence
  \begin{equation}\label{eq:313}
     \fieldsReflPosd{H_{n}}{n}\longrightarrow
     \fieldsReflPos{H_{n}}{n}\longrightarrow \Map(\mth n,\Sigma
     ^{n+1}\HR). 
  \end{equation}

  \begin{proposition}[]\label{thm:185}
 The image of the homomorphism
  \begin{equation}\label{eq:301}
     \pi _0\,\fieldsReflPosd{H_{n}}{n}\longrightarrow \pi
     _0\,\fieldsReflPos{H_{n}}{n}  
  \end{equation}
is the torsion subgroup of $\pi _0\,\fieldsReflPos{H_{n}}{n}$.
  \end{proposition}

\noindent 
 Theorem~\ref{thm:110} in the introduction follows from
Proposition~\ref{thm:185} and~\eqref{eq:300}.  In Theorem~\ref{thm:188} below
we determine the homotopy type of the space of discrete invertible reflection
positive field theories.

  \begin{proof}
 Since \eqref{eq:313}~is a fibration sequence of spectra, applying~$\pi _0$
we obtain an exact sequence of abelian groups in which, after
applying~\eqref{eq:300}, the second map is\footnote{\label{eigen} The
map~\eqref{eq:314} is $\Z/2$-equivariant for the $\beta $-involution on~$MTH$
and~$\mth n$.  By Proposition~\ref{thm:74} the $\beta $-~and $\gamma
$-involutions on~$MTH$ agree, from which $\Z/2$-acts as~$-1$ on the domain.
It follows that the image is contained in the $(-1)$-eigenspace of the
codomain, which is why we write `$\HRo$' in place of~`$\HR$'.}
  \begin{equation}\label{eq:314}
     [MTH,\Sniz]\longrightarrow [\mth n,\Sigma ^{n+1}\HRo]. 
  \end{equation}
The construction following~\eqref{eq:298} implies that this map is pullback
along the defining inclusion of $\mth n$ into~$MTH$.  The proposition follows
if we prove \eqref{eq:314}~is injective after tensoring the domain
with~$\RR$.  This follows immediately from Lemma~\ref{thm:152}. 
  \end{proof}

We parlay~\eqref{eq:313} into a more useful expression for the homotopy type
of the space of discrete invertible reflection positive field theories.
Recall the spectrum~$\IT$ introduced in Remark~\ref{thm:187}.

  \begin{theorem}[]\label{thm:188}
 For $n$~odd there is a homotopy equivalence 
  \begin{equation}\label{eq:315}
     \Map(MTH,\Sigma ^n\IT)\xrightarrow{\;\;\approx \;\;}
     \fieldsReflPosd{H_n}n 
  \end{equation}
For $n$~even there is a fibration sequence 
  \begin{equation}\label{eq:316}
     \Map(MTH,\Sigma ^n\IT)\longrightarrow \fieldsReflPosd{H_n}n
     \xrightarrow{\;\;s\;\;} \Rp 
  \end{equation}
in which $\Rp$~has the discrete topology and $s$~maps a discrete theory~$F$
to~$F(S^n)$.  
  \end{theorem}

\noindent
 Compare with the more rigid Theorem~\ref{thm:78} in the absence of
reflection structures.  Also, note that for any $n$-manifold~$X$ the disjoint
union $\beta X\amalg X$~is null bordant, and so in a stable theory the
partition functions have unit norm, consistent with the appearance of~$\IT$
in~\eqref{eq:315} and~\eqref{eq:316}.  There is a canonical section of~$s$
given by Euler theories (Example~\ref{thm:26}): given $x\in \Rp$ define the
Euler theory as the composition
  \begin{equation}\label{eq:317}
     \mth n^\beta \longrightarrow \Sigma ^n(BH_n^\beta )\mstrut _+\longrightarrow
     \Sigma ^nS^0\xrightarrow{\;\;\sqrt x \;\;} \Sigma ^n\HRp\longrightarrow
     \SCxg  
  \end{equation}
The restriction to $\mth{n-1}^\beta $~is trivialized; using~\eqref{eq:p10} we
obtain a reflection positive theory.

  \begin{proof}
 For any pointed space~$C_n$ use the nonequivariant version of the
exponential sequence~\eqref{eq:309} and the fibration sequence~\eqref{eq:313}
to construct the diagram 
  \begin{equation}\label{eq:318}
     \begin{gathered} \xymatrix{\Map(MTH,\Sigma ^n\IT)\ar[r]\ar[d] &
     \Map(MTH,\Sniz)\ar[r]\ar[d] & \Map(MTH,\Sigma ^{n+1}\HRo)\ar[d]\\
     \fieldsReflPosd{H_{n}}{n}\ar[r]\ar[d] &
     \fieldsReflPos{H_{n}}{n}\ar[r]\ar[d] & \Map(\mth n,\Sigma
     ^{n+1}\HR)\ar[d] \\ \Omega C_n\ar[r]& \** \ar[r]& C_n} \end{gathered} 
  \end{equation}
in which the rows are fibration sequences, as is the middle column, by
Theorem~\ref{thm:184}.  We claim 
  \begin{equation}\label{eq:319}
     C_n=\begin{cases}
     *,&n~\textnormal{odd},\\K(\RR,1),&n~\textnormal{even},\end{cases} 
  \end{equation}
renders the last column a fibration sequence; it follows that the first
column is as well.  (Here $K(\RR,1)$~is an Eilenberg-MacLane space.)  There
is an exponential to pass from the third column to the first column
in~\eqref{eq:318}, and so naturally $\Omega C_n\approx \Rp$ with the discrete
topology.  
 
To prove the claim observe first that we can replace the upper right entry
of~\eqref{eq:318} with the homotopy equivalent space $\Map(\mth{n+2},\Sigma
^{n+1}\HRo)$, using arguments similar to those in~\S\ref{subsec:6.2}.  To
analyze the resulting right vertical map consider the composition 
  \begin{equation}\label{eq:320}
     \pi _q\mth n\otimes \RR\xrightarrow{\;\;i_*\;\;}\pi _q\mth{n+1}\otimes
     \RR\xrightarrow{\;\;j_*\;\;}\pi _q\mth{n+2}\otimes \RR. 
  \end{equation}
The composition~$j_*\circ i_*$ is an isomorphism for~$q<n$, and since we map
to~$\Sigma ^{n+1}\HR$ only~$q\le n+1$ is relevant.  Use~\eqref{eq:220} and
the exact sequence 
  \begin{equation}\label{eq:321}
     \pi _{m+1}\mth{m+1}\xrightarrow{\;\;\Euler\;\;}\ZZ
     \xrightarrow{\;\;[S^m]\;\;} \pi _m\mth m\longrightarrow \pi
     _m\mth{m+1}\longrightarrow 0 
  \end{equation}
to verify the following four assertions.  If $n$~is odd, then $j_*\circ i_*$
~is an isomorphism for~$q=n$ and~$q=n+1$.  If $n$~is even, then $j_*\circ
i_*$~is an isomorphism for $q=n+1$ and is surjective for~$q=n$ with kernel
generated by~$[S^n]$.  Observe that $[S^n]=[\hH_{n+1}/\hH_n]$~is fixed by the
$\beta $-involution.  It follows that the upper right arrow in~\eqref{eq:318}
is injective with image the $(-1)$-eigenspace of the $\beta $-involution; the
cokernel the $(+1)$-eigenspace generated by~$[S^n]$.  (Compare with the
discussion in footnote~\footref{eigen}.)  The claim, and so the theorem,
follows.
  \end{proof}

We conclude this subsection with a comment about our application of these
theorems to computations.  Namely, the considerations in~\S\ref{subsec:4.4}
lead to the following conjecture, which uses non-topological invertible
theories (for which we do not develop mathematical foundations in this
paper).

  \begin{conjecture}[]\label{thm:143}
 There is a 1:1 correspondence  
  \begin{equation}\label{eq:187}
     \left\{\vcenter{\normalfont\hbox{deformation classes of reflection positive
     }\hbox{invertible $n$-dimensional extended field}\hbox{theories with
     symmetry type~$(H_{n},\rho _n)$}} \right\} \cong [MTH,\SIZ]. 
  \end{equation}
  \end{conjecture}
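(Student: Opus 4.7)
The plan is to upgrade the proof of Theorem~\ref{thm:151} from the topological to the geometric setting of~\S\ref{subsec:4.4}, replacing the topological bordism category $\be$ by the geometric bordism $(\infty,n)$-category $\bde$ of manifolds with differential $H_n$-structures, and replacing the target $\SIZ$ by its Hopkins--Singer differential refinement, following~\cite{HS} and Remark~\ref{thm:144}. The starting heuristic is already articulated in~\S\ref{subsec:4.4}: the fibers of the forgetful map $\bde\to\be$ are contractible spaces of connections, so $\|\bde\|\simeq\|\be\|$, and by Theorem~\ref{thm:54} both are modeled by $\Sigma^nMTH_n$. The additional information carried by a non-topological theory is therefore encoded in the differential refinement of the \emph{target}, not in a refinement of the domain.

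First I would construct an equivariant differential spectrum $\widehat{\I}$ refining $\SIZ$, endowed with the $\gamma$-involution of~\S\ref{sec:5} interpreted as complex conjugation on connection and curvature data as well as on underlying topological classes. Then I would formulate a geometric analog of Ansatz~\ref{thm:147} declaring a reflection positive invertible theory to be an equivariant map from (a differential refinement of) $\Sigma^nMTH_n^\beta$ to $\widehat{\I}$, together with a positivity structure in the sense of Definition~\ref{thm:108}. Next I would rerun the stability and positivity arguments of~\S\ref{subsec:6.2}--\S\ref{subsec:7.2}: Propositions~\ref{thm:73}--\ref{thm:74} and Theorems~\ref{thm:78} and~\ref{thm:82} transfer verbatim, since they depend only on the underlying Thom spectrum and its $\gamma$-involution; Definition~\ref{thm:108} and Remark~\ref{thm:111} translate without change.

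The crucial modification occurs at the last step of the proof of Theorem~\ref{thm:151}. There the obstruction map $k\circ j$ in diagram~\eqref{eq:223} factored through $H^{n+1}(\Sigma^nMTH_n;\RR)$ and was injective after tensoring with $\RR$ via Lemma~\ref{thm:152}, forcing the image of $\Delta_\ast$ into $[MTH,\SIZ]\tors$. In the differential setting this obstruction is absorbed: the defining cofibration sequence of $\widehat{\I}$ replaces~\eqref{eq:222}, and its non-torsion cokernel term---a quotient of closed real-valued characteristic forms---is precisely the data that can be freely specified on a given family of $H_n$-manifolds. Consequently every class in $[MTH,\SIZ]$ is realized as the deformation class of some reflection positive differential invertible theory, and Lemma~\ref{thm:152} no longer restricts the image to torsion. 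The map $\Delta_\ast$ therefore becomes surjective onto the full abelian group $[MTH,\SIZ]$.

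The main obstacle is foundational. One needs a rigorous construction of $\bde$ as a symmetric monoidal $(\infty,n)$-category, a differential analog of the Galatius--Madsen--Tillmann--Weiss theorem identifying $\|\bde\|$ with the differential Madsen--Tillmann spectrum, and a $\Z/2$-equivariant Hopkins--Singer differential extension of $\IZ$ compatible with the $\gamma$-involution modeling complex conjugation of both topological classes and underlying geometric data. No conceptual obstruction is apparent, but the technical development is substantial and not presently in the literature. Granting these ingredients, the conjecture follows by the same equivariant obstruction theory already carried out here, with all $\Map(\RP^\infty,-)$ terms in~\eqref{eq:221} replaced by their appropriate differential analogs.
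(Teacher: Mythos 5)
The statement you are addressing is labeled a \emph{Conjecture} in the paper (Conjecture~\ref{thm:143}), and the paper does not prove it. What the paper does prove is the torsion statement, Theorem~\ref{thm:151} (equivalently Theorem~\ref{thm:110}), and it offers only a heuristic sketch of the nontorsion upgrade, in~\S\ref{subsec:4.4} and Remark~\ref{thm:144}. So there is no paper proof to compare against; what can be compared is your proof plan against the authors' own sketch of why they believe the conjecture.

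Your plan is consistent in spirit with the paper's heuristics, but it takes a somewhat different route. You propose to rerun the equivariant obstruction-theoretic classification of~\S\ref{subsec:6.2}--\S\ref{subsec:7.2} with the target replaced by a $\Z/2$-equivariant differential refinement $\widehat{\I}$ of $\SIZ$, arguing that the fibration sequence~\eqref{eq:222} is absorbed into the differential cofibration so that Lemma~\ref{thm:152} no longer cuts the image down to torsion. By contrast, the authors' sketch in Remark~\ref{thm:144} goes in the opposite direction: it starts from a class in $[MTH,\SIZ]\cong\IZ^{\tau+n+1}(BH)$, invokes the theorem of~\cite{FH} to get a \emph{unique} lift to $\widehat{\IZ}^{\tau+n+1}(B_\nabla H)$, and then integrates over manifolds with differential $H$-structure to \emph{construct} a representative theory. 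Your route is closer to a classification argument; theirs is closer to a realization argument. Both are incomplete and acknowledge the same foundational gaps: a rigorous $(\infty,n)$-categorical model of $\bde$, a differential Galatius--Madsen--Tillmann--Weiss theorem, and a $\Z/2$-equivariant Hopkins--Singer refinement compatible with the $\gamma$-involution. You are candid about these, which matches the paper's ``We hope to develop these incipient ideas elsewhere.''

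One point worth flagging in your plan: after observing $\|\bde\|\simeq\|\be\|$, you conclude that the extra information in non-topological theories lives entirely in the target refinement. That is essentially right, but the logic needs care. If both domain and target are treated purely topologically, the mapping space is literally unchanged; what changes is that the differential refinement of the target admits maps with nonzero ``curvature component'' on the geometric bordism category, and these project to the nontorsion part of $[MTH,\SIZ]$. In the paper this is precisely the role of the $H^{n+1}(\Sigma^nMTH_n;\RR)$ term in diagram~\eqref{eq:223} and Remark~\ref{thm:84}. Your phrase ``absorbed'' is the right intuition, but to turn it into a proof one would need a precise statement of the differential cofibration sequence and its compatibility with the equivariant mapping-space manipulations of~\eqref{eq:221}; as it stands that step is still a conjecture inside your conjecture.
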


\noindent
 We remark that since the rational cohomology of~$BH$ vanishes in odd degrees,
elements of infinite order in~\eqref{eq:187} occur only for $n$~odd.  

  \begin{remark}[]\label{thm:165}
 A restatement of Corollary~\ref{thm:186} is the 1:1 correspondence
  \begin{equation}\label{eq:248}
     \left\{ \vcenter{\hbox{isomorphism classes of reflection positive
     continuous}\hbox{invertible $n$-dimensional extended topological} 
     \hbox{field theories with symmetry type~$(H_{n},\rho _n)$}} \right\}
     \cong [MTH,\SIZ]. 
  \end{equation}
If we accept that the effective low energy theory of an invertible gapped
system is a continuous invertible topological field theory, as in
Remark~\ref{thm:158}, then we can apply~\eqref{eq:248} to the computations
in~\S\ref{sec:8} rather than~\eqref{eq:187}.  This has an advantage:
\eqref{eq:248}~is a theorem in the context of this paper.
  \end{remark}

  \begin{remark}[]\label{thm:144}
 A homotopy class of maps $MTH\to\Sniz$ leads to a canonical isomorphism
class of invertible field theories via the following sketch; the theories are
topological if and only if the homotopy class has finite order.  By the
twisted Thom isomorphism the homotopy classes are elements of~$\IZ^{\tau
+n+1}(BH)$, where $\tau $~is the canonical ``density twisting'': the pullback
to manifolds with tangential $H$-structure can be integrated.  According to
the main theorem in~\cite{FH1} there is a unique lift to the differential
cohomology group~$\widehat{\IZ}^{\tau +n+1}(B\mstrut _\nabla H)$.  Choose a
``cocycle'' representative.  Then on any manifold with a differential
$H$-structure we can integrate to construct an invariant, and these
invariants fit to an invertible field theory on~$\Bord_n^{\nabla }(H)$.
  \end{remark}

  \subsection{Proof of Theorem~\ref{thm:184}}\label{subsec:7.4}

We restate the theorem in the language of stable homotopy theory.

\begin{proposition}
\label{thm:p4}
The square
  \begin{equation}\label{eq:p1}
  \begin{gathered}
  \xymatrix{
\Map(MTH,\Sigma^{n+1}I\Z(1))  \ar[r]\ar[d]  & \Map(\Sigma^{n-1}MTH_{n-1},\Sigma^{n+1}I\Z(1))    
\ar[d] \\
\ztmap(\Sigma^{n}MTH_{n}^{\beta}, \Sigma^{n+1}I\Z(1)^{\gamma})   \ar[r]&
\ztmap(\Sigma^{n-1}MTH_{n-1}^{\gamma}, \Sigma^{n+1}I\Z(1)^{\gamma})
}
 \end{gathered}
 \end{equation}
is a homotopy pullback square of spaces.
\end{proposition}

\noindent
 The analysis of this square becomes cleaner if every term of the form
$\ztmap(X,\Sigma^{n+1}I\Z(1)^{\gamma })$ is replaced with $\Map((X\wedge
S^{\sigma-1})_{h\Z/2},\Sigma^{n+1}I\Z(1))$.  Doing so,
Proposition~\ref{thm:p4} becomes the assertion that the square
  \begin{equation}\label{eq:p11}
  \begin{gathered}
 \xymatrix{ \Sigma^{n-1}MTH_{n-1}\wedge B\Z/2_{+} \ar[r]\ar[d] &
\Sigma^{n}MT\hat{H}_{n}^{(\sigma-1)} \ar[d] \\ \Sigma^{n-1}MTH_{n-1} \ar[r] &
MTH}
 \end{gathered}
 \end{equation}
becomes a homotopy pullback square after applying $\Map(\slot,
\Sigma^{n+1}I\Z(1))$.  We use the notation
  \begin{equation}\label{eq:303}
     MT\hH_n^{(\sigma -1)} = \Thom(B\hH_n;-\hat\rho _n+\sigma -1). 
  \end{equation}
To clarify the argument we state this as as

\begin{proposition}
\label{thm:p6}
For any $m\ge n$, the square
\begin{equation}\label{eq:p20}
\begin{gathered}
\xymatrix{
\Sigma^{m-1}MTH_{m-1}\wedge B\Z/2_{+}  \ar[r]\ar[d]  &
\Sigma^{m}MT\hat{H}_{m}^{(\sigma-1)} 
\ar[d] \\
\Sigma^{m-1}MTH_{m-1}  \ar[r]        & MTH
}
\end{gathered}
\end{equation}
 becomes a homotopy pullback square after applying
$\Map(\slot,\Sigma^{n+1}I\Z(1))$.
\end{proposition}

The proof of Proposition~\ref{thm:p6} will make repeated use of the following
result, which follows from the universal property~\eqref{eq:62} of~$\IZ$.

\begin{lemma}
\label{thm:p5} Suppose $A$ is a spectrum having the property that
$\pi_{i}A=0$ for $i\le n$ and $\pi_{n+1}A$ is a torsion group.  If
$A\to X\to Y$ is a cofibration sequence then 
\[
\Map(Y,\Sigma^{n+1}I\Z(1)) \to \Map(X,\Sigma^{n+1}I\Z(1))
\]
is a weak equivalence of spaces.  
\end{lemma}

The proof of Proposition~\ref{thm:p6} is by decreasing induction on
$m$.  As $m\to\infty$ the square~\eqref{eq:p20} becomes
\[
  \begin{gathered}
\xymatrix{
MTH\wedge B\Z/2_{+}  \ar[r]\ar[d]  & MTH\wedge B\Z/2_{+}
\ar[d] \\
MTH  \ar[r]        & MTH
}
  \end{gathered}
\]
which is obviously a pushout.  On the other hand for $m>(n+2) $ the
maps 
\begin{gather*}
\Sigma^{m-1}MTH_{m-1}\to MTH \\
\Sigma^{m}MT\hat{H}^{(\sigma-1)}_{m} \to MTH\wedge B\Z/2_{+}
\end{gather*}
become equivalences after applying $\Map(\slot, \Sigma^{n+1}I\Z(1))$, so
the result is true for all $m>n+2$.   (Compare with the proof of
Theorem~\ref{thm:82}.) 

Since the homotopy fiber of the left vertical map in~\eqref{eq:p20} is
$\Sigma^{m-1}MTH_{m-1}\wedge B\Z/2$, Proposition~\ref{thm:p6} is
equivalent to the assertion that for all $m\ge n$, the sequence
\[
\Sigma^{m-1}MTH_{m-1}\wedge B\Z/2 \to
\Sigma^{m}MT\hat{H}^{(\sigma-1)}_{m} \to MTH
\]
becomes a fibration sequence after applying
$\Map(\slot,\Sigma^{n+1}I\Z(1))$.   The induction step therefore
follows from

\begin{proposition}
\label{thm:p8} For $m\ge n$, the square 
  \begin{equation}\label{eq:302}
  \begin{gathered}
     \xymatrix{ \Sigma^{m-1}MTH_{m-1}\wedge B\Z/2 \ar[r]\ar[d] &
     \Sigma^{m}MT\hat{H}^{(\sigma-1)}_{m} \ar[d] \\ \Sigma^{m}MTH_{m}\wedge
     B\Z/2\ar[r] & \Sigma^{m+1}MT\hat{H}^{(\sigma-1)}_{m+1} } 
  \end{gathered}\end{equation}
becomes a homotopy pullback square after applying
$\Map(\slot,\Sigma^{n+1}I\Z(1))$.
\end{proposition}

What is at stake in Proposition~\ref{thm:p8} is to prove that the induced map
 \begin{equation}\label{eq:p12}
\Sigma^{m-1}(BH_{m})_{+}\wedge B\Z/2 \to \Sigma^{m}\Thom(B\hat{H}_{m+1};\sigma-1)
 \end{equation}
of homotopy fibers of the vertical maps in~\eqref{eq:302} becomes a homotopy
equivalence after applying $\Map(-,\Sniz)$.  The following result will be
proved in~\S\ref{subsec:7.5}.

\begin{lemma} \label{thm:p7} 
 The map~\eqref{eq:p12} is the $(m-1)^{\textnormal{st}}$ suspension of the
map of Thom spectra \textnormal{(}of the bundle $(\sigma-1)$\textnormal{)}
associated to the map
  \begin{equation}\label{eq:p18}
BH_{m}\times B\Z/2\to B\hat{H}_{m+1}
\end{equation}
given by the choice of reflection in the last coordinate.  
\end{lemma}

Assuming Lemma~\ref{thm:p7} we can prove Proposition~\ref{thm:p8}.

\begin{proof}[Proof of Proposition~\ref{thm:p8}]
It suffices to show that the induced map~\eqref{eq:p12} becomes a weak
equivalence after applying $\Map(\slot,\Sigma^{n+1}I\Z(1))$.
The map~\eqref{eq:p18} fits
into a Cartesian square
\[
  \begin{gathered}
\xymatrix{
S^{m}  \ar@{=}[r]\ar[d]  & S^{m} \ar[d]   &  \\
BH_{m}  \ar[r]\ar[d]  & BH_{m}\times B\Z/2  \ar[r]\ar[d]   &  B\Z/2 \ar@{=}[d]\\
BH_{m+1}  \ar[r]        & B\hat{H}_{m+1}  \ar[r]         &
B\Z/2\mathrlap{\ ,}
}
  \end{gathered}
\]
so Lemma~\ref{thm:p7} implies that the cofiber of~\eqref{eq:p12} is
$2m$-connected.  Since $m\ge n\ge 1$, one has $2m\ge n $ and so
the cofiber is $n$-connected.  Both terms in~\eqref{eq:p12} are
rationally acyclic.  The result then follows from Lemma~\ref{thm:p5}.
\end{proof}

  \subsection{Transfers}\label{subsec:7.5} 
 
Suppose that $M\to X$ is a fiber bundle with fibers closed smooth
manifolds $M_{x}$ of dimension $n$.  Let $T_{M/X}$ be the vector
bundle over $M$ whose fiber at $a\in M_{x}$ is the tangent space
$T_{a}M_{x}$.  There is functorial stable map
\[
\Sigma^{\infty}X_{+}\to \thom(M,-T_{M/X})
\]
called the {\em transfer map}.   When there is an
embedding $M\subset X\times \R^{n}$ for some $n$ it can be constructed
from the Pontrjagin Thom collapse 
\[
\thom(X,\triv{n}) \to \thom(M,\triv{n}-T_{M/X})
\]
by passing to suspension spectra and desuspending $n$ times.   The
transfer map is constructed in the general case by passing to the
colimit over  the category of pairs
\begin{align*}
X_{\alpha}&\to X \\
i_{\alpha}:M_{\alpha}&\hookrightarrow
X_{\alpha}\times R^{N_{\alpha}}
\end{align*}
in which $M_{\alpha}\to X_{\alpha}$  is the pullback of $M\to X$ along
the map $X_{\alpha}\to X$. 

When there is an embedding $M\subset W$  over $B$, the Pontrjagin Thom
construction leads to a {\em twisted} transfer map 
\[
\thom(B;W) \to \thom(X;W-T_{M/X}).
\]
 The twisted transfer extends in the evident manner to the case of virtual
bundles $W$.

\begin{proposition}\label{thm:p3} 
 Suppose that $W$ is a vector bundle over $X$, that $f\:M\to
W$ is a map over $X$ transverse to the zero section and let $N$ be the
inverse image of $0$.  There is a commutative diagram
\[
\begin{gathered}
\xymatrix{
\thom(X;0)\ar[r]\ar[d]  & \thom(N;-T_{N/X})
\ar[d] \\
\thom(X;W)  \ar[r]        & \thom(M;W-T_{M/X})
}
\end{gathered}
\]
in which the left vertical map is derived from the zero section, and
the right is the natural map of Thom complexes coming from the
inclusion $N\subset M$ and the isomorphism 
\[
T_{M/X} \approx T_{N/X}\oplus W.
\]
\end{proposition}

\begin{proof}
It suffices to establish the case in which there is an embedding 
\[
\iota:M\hookrightarrow  \R^{n}.
\]
Applying the Pontrjagin-Thom constructions to the rows in the
transverse pullback square
\[
\begin{gathered}
\xymatrix{
N  \ar[r]\ar[d]  & X\times \R^{n}
\ar[d] \\
M  \ar[r]_-{(f,\iota)}        & W\times \R^{n} \mathrlap{\ .}
}
\end{gathered}
\]
gives a diagram
\[
\begin{gathered}
\xymatrix{
\thom(X;\triv{n})  \ar[r]\ar[d]  & \thom(N;\triv n-T_{N/X}) 
\ar[d] \\
\thom(X;W\oplus \triv{n})  \ar[r]        & \thom(M;W+\triv n-T_{M/X})
}
\end{gathered}
\]
in which the left vertical map is the inclusion of the zero section.
Desuspending, the claim follows easily from this.
\end{proof}

\begin{proof}[Proof of Lemma~\ref{thm:p7}]
The idea is to apply Proposition~\ref{thm:p3} to the left triangle in the
diagram  
  \begin{equation}\label{eq:307}
     \begin{gathered} \xymatrix{ S(\rho_{m})\times B\Z/2 \ar[r]\ar[dr] &
     S(\rho_{m}\oplus\sigma) \ar[r]\ar[d] & S(\hat{\rho}_{m+1}) \ar[d]\\ &
     BH_{m}\times B\Z/2 \ar[r] & B\hat{H}_{m+1} } \end{gathered} 
  \end{equation}
with 
\begin{align*}
X&=BH_{m}\times B\Z/2 \\
W &= \sigma\\
M& =S(\rho_{m}\oplus \sigma) \\
N&= S(\rho_{m})\times B\Z/2
\end{align*}
The diagram is written in order to clarify the relationship with
manifolds.  Note that there are equivalences
\begin{align*}
S(\hat{\rho}_{m+1})&\approx B\hat{H}_{m} \\
S(\rho_{m}) &\approx BH_{m-1}.
\end{align*}
Also, for a vector bundle $V\to X$ the relative tangent
bundle of $p:S(V)\to X$ is given by $T_{S(V)/X}\oplus \R =p^{\ast}V$.
Proposition~\ref{thm:p3} then gives the left square in the diagram 
\begin{equation}\label{eq:p21}
\begin{gathered}
\xymatrix{
\Sigma^{m-1}(BH_{m})_{+}\wedge B\Z/2_{+}  \ar[r]\ar[d]  & \Sigma^{m-1}(BH_{m})_{+}\wedge B\Z/2  \ar[r]\ar[d]   &  \Sigma^{m}\Thom(B\hat{H}_{m+1};\sigma-1) \ar[d]\\
\Sigma^{m-1}MTH_{m-1}\wedge B\Z/2_{+}  \ar[r]        & Y  \ar[r]         &   \Sigma^{m}MT\hat{H}_{m}^{(\sigma-1)}
}
\end{gathered}
\end{equation}
with 
\[
Y= \Sigma^{m}\thom(S(\rho_{m}\oplus \sigma);1-\rho_{m}-\sigma-1+\sigma);
\]
 the right square in~\eqref{eq:p21} is the pullback of transfer maps induced
from the pullback square in~\eqref{eq:307}.  The map~\eqref{eq:p12} is the
composition of
 \begin{equation}
\label{eq:p16}
\Sigma^{m-1}(BH_{m})_{+}\wedge B\Z/2\to \Sigma^{m-1}(BH_{m})_{+}\wedge
B\Z/2_{+} 
\end{equation}
with the top row of~\eqref{eq:p21}.  Lemma~\ref{thm:p7} now follows from
the fact that the composition
of~\eqref{eq:p16} with the left map in the top row of~\eqref{eq:p21} is
the identity.
\end{proof}

   \section{Fermionic theories with scalar internal symmetry group}\label{sec:8}

In this section we apply Theorem~\ref{thm:110} to some basic symmetry groups,
namely those whose subgroup~$K$ of internal symmetries is the group ~$O_1,
U_1, Sp_1$ of unit norm elements in the normed division algebras~$\RR, \CC,
\HH$, respectively.  (We use the names $\pmo, \TT, SU_2$ for these three
groups.)  The internal symmetry group~$K=\TT$ is the basic charge symmetry of
electromagnetism; in quantum mechanical models the presence of a so-called
particle-hole symmetry ``breaks''\footnote{We do not have any fundamental
understanding of this mechanism, especially the appearance of~$SU_2$.  In
\S\ref{subsec:8.1} we simply offer it as a storyline in relativistic theory
that matches the condensed matter literature.} it to either~$K=\pmo$ or
$K=SU_2$.  In~\S\ref{subsec:8.1} we classify the possible symmetry
groups~$H_n$ with these internal symmetries, and restricting to fermionic
symmetry groups we recover the 10-fold way; see Tables~\eqref{eq:101}
and~\eqref{eq:100}.  (Wang-Senthil~\cite{WS} list many of these groups---in a
nonrelativistic form~\eqref{eq:152}, \eqref{eq:153}---and the corresponding
``Cartan label''.  Metlitski~\cite{M} introduces the group~$\Pcp$, which
provided guidance for our treatment here.  This twisted form of~$\Pin^c$ also
appears implicitly in~\cite[\S A.4]{SeWi}.)  Lemma~\ref{thm:90} relates the
relativistic 10-fold way to the 10~real and complex Clifford algebras, thus
providing a link to other 10-fold ways.  

In~\S\ref{subsec:8.4} we sketch two ways in which a theory of free fermions
in Minkowski spacetime gives rise to a deformation class of reflection
positive invertible field theories, or to a reflection positive continuous
invertible topological field theory.  If one begins with an
$(n-1)$-dimensional free fermion theory, then there is an associated
$n$-dimensional invertible anomaly theory; if the original free fermion
theory admits a mass term, then the anomaly is trivializable.  In this paper
we do not attempt a complete treatment, so state the main result as a
conjecture, Conjecture~\ref{thm:124}.  It expresses the deformation class of
the anomaly theory as a composition of a twisted Atiyah-Bott-Shapiro map and
a Pfaffian map on real $K$-theory.  This $K$-theory interpretation depends on
Lemma~\ref{thm:119}, which expresses the existence of a mass in terms of
Clifford algebras.  

The second scenario is to begin with a massive free fermion theory in
$n$~dimensions, as we sketch in~\S\ref{subsubsec:8.4.6}.  The low energy
effective field theory is invertible, and \eqref{eq:148}~ is a formula for
its deformation class.  It is this scenario about gapped theories that is
relevant to this paper.

We carry out computations in low dimensions in~\S\ref{subsec:8.2}.  For each
of the 10~electron symmetry groups we list the groups of deformation classes
of reflection positive invertible topological theories and compute the map
from free fermions to it.  There is no further physical reasoning; we compute
directly from the results in Theorem~\ref{thm:110} and~\eqref{eq:148}.  The
techniques lie in stable homotopy theory, and in the next section we give
some details to illustrate how the computations are made.  As discussed
in~\S\ref{sec:1} these classification results apply to invertible topological
phases of condensed matter systems, often called SPT phases.  The fermionic
symmetry groups with~$K=\TT$ pertain to \emph{topological insulators}; those
with $K=\pmo$ and $K=SU_2$ pertain to \emph{topological superconductors}.

  \begin{remark}[]\label{thm:166}
 Most of the interacting groups we compute are torsion so are covered by
Theorem~\ref{thm:110}.  In the general case we interpret the computations as
theorems by using~\eqref{eq:248}, in which the interacting group is a group
of isomorphism classes of reflection positive \emph{continuous} invertible
topological field theories.  See~\S\ref{subsec:4.4} for a discussion of
expectations for low energy effective field theories.
  \end{remark}

In the theoretical discussions we assume~$n\ge 3$; in the computations we
apply the results to all~$n$.

  \subsection{Symmetry groups of fermionic systems}\label{subsec:8.1}

We already classified symmetry groups~$H_n$ with~$K=\pmo$ in
Proposition~\ref{thm:65}.  The \emph{fermionic} groups are the ones for which
$-1\in K$ is the distinguished element~$k_0$ of Theorem~\ref{thm:5} and
Corollary~\ref{thm:113}.\footnote{This implies the ``spin/charge relation''
of condensed matter physics, which is emphasized in~\cite{SeWi}: bosons have
even charge and fermions have odd charge.}  (The other possibility is
$k_0=1$, in which case the symmetry group is \emph{bosonic}.)  Those
fermionic groups are $\Spin_n$, $\Pp_n$, and $\Pm_n$.
 
Next, we classify symmetry groups with $K=\TT$.  These are group extensions 
  \begin{equation}\label{eq:97}
     1\longrightarrow \TT\longrightarrow SH_n\longrightarrow
     SO_n\longrightarrow 1 
  \end{equation}
if there is no time-reversal symmetry and
  \begin{equation}\label{eq:65}
     1\longrightarrow \TT\longrightarrow H_n\longrightarrow
     O_n\longrightarrow 1 
  \end{equation}
if there is time-reversal symmetry.  Recall the group~$E_n$ introduced before
Proposition~\ref{thm:65}.

  \begin{proposition}[$K=\TT$]\label{thm:60}
 Up to isomorphism there are two distinct group extensions~\eqref{eq:97}
with~$n\ge3$, and the groups~$SH_n$ that appear are $SO_n\times \TT$ and
$\Spin_n^c$.  Up to isomorphism there are six distinct group
extensions~\eqref{eq:65} with~$n\ge3$, and the groups~$H_n$ that appear are
mutually nonisomorphic.  Three of the groups have identity
component~$SO_n\times \TT$:
  \begin{align} 
  O_n&\times \TT\label{eq:71}\\
  O_n&\ltimes \TT\label{eq:72}\\
  E_n&\ltimes \TT\bigm/ \pmo \label{eq:73}
  \end{align}
The identity component of the remaining three groups is~$\Spin^c_n$:
  \begin{align} 
  \Pc_n\;\, &= \Pp_n\times \TT\bigm/ \pmo\label{eq:66}\\
  \Pcp_n &= \Pp_n\ltimes \TT\bigm/ \pmo\label{eq:67}\\
  \Pcm_n &= \Pm_n\ltimes \TT\bigm/ \pmo \label{eq:68}
  \end{align}
  \end{proposition}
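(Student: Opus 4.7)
The plan is to adapt the cohomological classification from Proposition~\ref{thm:65}. An extension $1 \to \TT \to H_n \to Q \to 1$ is classified up to isomorphism by the pair $(\alpha, c)$ consisting of the action $\alpha\colon Q \to \Aut(\TT) = \pmo$ and the extension class $c \in H^2(BQ; \TT_\alpha)$. For $Q = SO_n$ (connected, $n \geq 3$) only the trivial action arises; for $Q = O_n$ the action factors through $\pi_0 O_n = \pmo$, giving two cases.

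For $Q = SO_n$ I would use the exponential sequence $\ZZ \to \RR \to \TT$ together with the vanishing of $H^2(BSO_n; \RR)$ and $H^3(BSO_n; \RR)$ to identify $H^2(BSO_n; \TT) \cong H^3(BSO_n; \ZZ) \cong \zt$, generated by the integral Stiefel--Whitney class $W_3$. The two resulting extensions are $SO_n \times \TT$ (trivial) and $\Spin^c_n = (\Spin_n \times \TT)/\langle (-1,-1) \rangle$ (nontrivial). For $Q = O_n$ I would use the Lyndon--Hochschild--Serre spectral sequence for $BSO_n \to BO_n \to B\zt$ with $\TT_\alpha$ coefficients. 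The relevant entries on the $E_2$-page in total degree $2$ are $E_2^{0,2} = H^2(BSO_n; \TT)^{\zt} = \zt$ and $E_2^{2,0} = H^2(B\zt; \TT_\alpha)$, which vanishes for the trivial action and equals $\zt$ for the determinant action. One thereby obtains $H^2(BO_n; \TT_{\mathrm{triv}}) = \zt$ and $H^2(BO_n; \TT_{\det}) = \zt \oplus \zt$, for a total of $2 + 4 = 6$ extension classes.

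To identify each cohomology class with one of the named groups I would use three invariants of the extension: (i) the identity component $H_n^0$, which is one of $SO_n \times \TT$ or $\Spin^c_n$ (determined by restriction to $SO_n$); (ii) the action of $\pi_0 H_n$ on $\TT$; and (iii) the conjugacy class of $\widetilde{\sigma}^2 \in H_n^0$, where $\widetilde{\sigma}$ is a lift of a hyperplane reflection in $O_n$. For the $\Spin^c_n$ family these invariants pick out $\Pc_n$ (trivial action, $\widetilde{\sigma}^2 = 1$), $\Pcp_n$ (determinant action, $\widetilde{\sigma}^2 = 1$), and $\Pcm_n$ (determinant action, $\widetilde{\sigma}^2$ the central element of $\Spin_n$), using the classification of pin groups recalled in Appendix~\ref{sec:10}. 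For the $SO_n \times \TT$ family they pick out $O_n \times \TT$ (trivial action, $\widetilde{\sigma}^2 = 1$), $O_n \ltimes \TT$ (determinant action, $\widetilde{\sigma}^2 = 1$), and $E_n \ltimes \TT/\pmo$ (determinant action, $\widetilde{\sigma}^2$ a nontrivial element of $\TT$).

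The main obstacle will be the twisted coefficient computation of $H^2(B\zt; \TT_{\det})$ and verifying that the relevant $E_2$-differentials vanish; a secondary obstacle is showing that the six candidate groups are pairwise non-isomorphic, which follows from invariants (i)--(iii) but requires some care since isomorphisms of extensions need not act as the identity on $\TT$.
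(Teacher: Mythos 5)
Your approach is essentially the paper's: classify central (respectively, $\det$-twisted) extensions of $SO_n$ and $O_n$ by $\TT$ via $H^2$ of the classifying space with coefficients in $\underline{\TT}$ (untwisted or inversion-twisted), compute those groups, and match the classes with the named Lie groups. Where the paper simply records the cohomology groups~\eqref{eq:69}--\eqref{eq:70} and ``omits'' the identification, your proposal fills in the two missing ingredients in exactly the expected way: a Lyndon--Hochschild--Serre computation for the fibration $BSO_n \to BO_n \to B\zt$ to obtain~\eqref{eq:69}--\eqref{eq:70}, and the triple of invariants (identity component, $\pi_0$-action on $\TT$, conjugacy class of~$\tilde\sigma^2$) to sort the six extension classes into the six groups. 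Two small points worth tightening: you should note that $E_2^{1,1} = H^1\bigl(B\zt;H^1(BSO_n;\underline{\TT})\bigr) = 0$ (since $H^1(BSO_n;\underline{\TT})\cong H^2(BSO_n;\ZZ)=0$ for $n\ge 3$), and that the invariant $\tilde\sigma^2$ is well-defined only when the $\pi_0$-action on $\TT$ is by inversion---for the trivial action $(\lambda\tilde\sigma)^2 = \lambda^2\tilde\sigma^2$ sweeps out all of $\TT$---but this is harmless since the two trivial-action extensions are already separated by their identity components.
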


\noindent
 The group~$\Pc_n$ is also isomorphic to $\Pm_n\times \TT\bigm/\pmo$.  It
sits in the complex Clifford algebra generated by~$\RR^n$ with a
nondegenerate symmetric bilinear form~\cite{ABS}.  In~$\Pcpm_n$ the action
of~ $\Ppm_n$ on~$\TT$ factors through $\pi _0\Ppm_n$ and is via inversion
$\lambda \mapsto\lambda \inv $.  In each case we divide out by the diagonal
subgroup~$\pmo$.  The groups with identity component~$\Spin^c_n$ are
fermionic. 

  \begin{proof}
 The extension~\eqref{eq:97} is central, so up to isomorphism classified by
the cohomology group 
  \begin{equation}\label{eq:98}
     H^2(BSO_n;\underline{\TT})\cong H^3(BSO_n;\ZZ)\cong \zt. 
  \end{equation}
The underline indicates the sheaf cohomology of continuous functions
into~$\TT$ with the standard topology.  It is well-known that $\Spin^c_n$
corresponds to the nonzero element.

The only nontrivial automorphism of~$\TT$ is inversion, so in the
extension~\eqref{eq:65} either $O_n$~acts trivially or it acts through its
components with elements of determinant~$-1$ acting by inversion.  In each
case the group extensions are classified by a cohomology group of the
classifying space~$BO_n$:
  \begin{align}
      H^2(BO_n;\underline{\TT})&\cong H^3(BO_n;\ZZ)\cong \zt
      \label{eq:69}\\ H^2(BO_n;\widetilde{\underline{\TT}})&\cong
      H^3(BO_n;\widetilde{\ZZ})\cong \zt \times \zt\label{eq:70}
  \end{align}
The tilde indicates coefficients twisted by inversion.  The
product~\eqref{eq:71} and semi-direct product~\eqref{eq:72} account
for~\eqref{eq:69} and the remaining four groups for~\eqref{eq:68}, as can be
seen from cohomological computations we omit.
  \end{proof}

According to the arguments in Appendix~\ref{sec:10}, the anti-Wick rotation
of~ $\Pcp$ contains a time-reversal symmetry~$T$ with $T^2=(-1)^F$ and the
anti-Wick rotation of~$\Pcm$ contains a time-reversal symmetry~$T$ with
$T^2=1$.  More precisely, the groups~\eqref{eq:66} and~\eqref{eq:71} are Wick
rotations of relativistic symmetry groups that include CT~symmetry; the
remaining groups are Wick rotations of relativistic symmetry groups that
include T~symmetry.\footnote{This is our interpretation of~\cite[\S3.7]{W1}.
There are more general possibilities with larger internal symmetry group~$K$.
This occurs in~\cite[\S3]{SeWi}, for example, in a theory with both~T and
CT~symmetry.} 

Finally, we classify symmetry groups with~$K=SU_2$.  Now we have possible
extensions  
  \begin{equation}\label{eq:99}
     1\longrightarrow SU_2\longrightarrow SH_n\longrightarrow
     SO_n\longrightarrow 1 
  \end{equation}
and
  \begin{equation}\label{eq:79}
     1\longrightarrow SU_2\longrightarrow H_n\longrightarrow
     O_n\longrightarrow 1 
  \end{equation}

  \begin{proposition}[$K=SU_2$]\label{thm:66}
  Up to isomorphism there are two distinct group extensions~\eqref{eq:99}
with~$n\ge3$, and the groups~$SH_n$ that appear are $SO_n\times SU_2$ and 
  \begin{equation}\label{eq:240}
      G_0=\Spin_n\times \mstrut _{\pmo}SU_2. 
  \end{equation}
Up to isomorphism there are four distinct group extensions~\eqref{eq:79}
with~$n\ge3$, and the groups~$H_n$ that appear are mutually nonisomorphic.
Two of the groups have identity component $SO_n\times SU_2$:
  \begin{align} 
  O_n&\times SU_2\label{eq:80}\\
  E_n&\times \mstrut _{\pmo}SU_2\label{eq:81}
  \end{align}
The identity component of the remaining two groups is $G_0$:  
  \begin{align} 
  \Gp_n &= \Pp_n \times \mstrut _{\pmo} SU_2\label{eq:82}\\
  \Gm_n &= \Pm_n \times \mstrut _{\pmo} SU_2 \label{eq:83}
  \end{align}
  \end{proposition}

\noindent
 The symmetry groups with identity component $G^0$ are fermionic.

  \begin{proof}
 The classification of the identity component~$SH_n$ follows from
Theorem~\ref{thm:5}(2): there are two central elements~$k_0\in SU_2$ with
$k_0^2=1$.  To classify the two-component group~$H_n$ we apply a useful
general result \cite[Corollary~7.3]{FHT2}.  Namely, for any compact Lie
group~$H$, let $H^0$~denote the component of the identity element,
$Z^0\subset H^0$ its center, and $\pi =\pi _0H$~the abelian group of
components.  Then there exists a group~$L$ that fits into the diagram
  \begin{equation}\label{eq:86}
     \begin{gathered} \xymatrix{1 \ar[r] & Z^0\ar[r]\ar[d] & L \ar[r]\ar[d] &
     \pi \ar[r]\ar@{=}[d] & 1\\ 1 \ar[r] & H^0 \ar[r]& H \ar[r]& \pi \ar[r]&1}
     \end{gathered} 
  \end{equation}
of group extensions.  Furthermore, the group~$L$ acts on~$H^0$ by
conjugation---the action descends to an action of~$\pi $ since $Z^0$~is
central, but it depends on the choice of~$L$---and the group~$H$ is
reconstructed from~$H^0$ and~$L$ as a semidirect product 
  \begin{equation}\label{eq:87}
     H \cong L\ltimes\mstrut _{Z^0}H^0 = L\ltimes H^0\bigm/ Z^0.
  \end{equation}

By the Stabilization Theorem~\ref{thm:6} we may assume that $n$~is odd, since
for $n$~even $H_n$~is obtained by pullback, so the center of~$SO_n$ is
trivial and the center of~$\Spin_n$ is~$\pmo$.  First, assume
$H^0=SH_n=SO_n\times SU_2$, so that $Z^0=\pmo$.  There are two possibilities:
$L\cong \pmo^{\times 2}$ or $L\cong \mu _4$.  We can take the image of~$L$
in~$O_n$ to be the central subgroup~$\pmo$.  The conjugation action on~$SO_n$
is trivial, and as all automorphisms of~ $SU_2$ are inner we can take the
entire action on~$H^0$ to be trivial.  Then \eqref{eq:87}~(with a direct
product in place of a semidirect product) yields the two groups~\eqref{eq:80}
and~\eqref{eq:81}.  The argument for~$H^0=\Spin_n\times \mstrut _{\pmo}SU_2$
is similar; again $Z^0\cong \pmo$.
  \end{proof}

  \subsection{Free fermions and twisted Dirac operators}\label{subsec:8.4}
 
In this section we take up the homotopy theory of relativistic free
fermions.  We treat the 10~fermionic symmetry groups simultaneously via
embeddings into Clifford algebras (\S\ref{subsubsec:8.4.1}).  For each we
define a twisted Atiyah-Bott-Shapiro map~(\S\ref{subsubsec:8.4.2}) that
encodes the index of twisted Dirac operators (\S\ref{subsubsec:8.4.3}) on
compact Riemannian manifolds.  The relativistic story begins on Minkowski
spacetime in Lorentz signature, where a free fermion theory is specified by a
real Clifford module for a Lorentz signature Clifford
algebra~(\S\ref{subsubsec:8.4.4}).  We develop that algebraic theory for the
fermionic symmetry groups, and in particular determine those theories that
admit a nondegenerate mass term (Lemma~\ref{thm:119}).  A \emph{massless}
theory has an anomaly, which is an invertible field theory, and we conjecture
its deformation class in~\S\ref{subsubsec:8.4.5}.  A formally similar setup
(\S\ref{subsubsec:8.4.6}) attaches an invertible field theory to a
\emph{massive} free fermion theory, and we conjecture that its deformation
class is given by the same formula.  It is this formula that we use in the
computations in~\S\ref{subsec:8.2}.

\bigskip
 \subsubsection{A relativistic 10-fold way}\label{subsubsec:8.4.1}

Proposition~\ref{thm:65}, Proposition~\ref{thm:60}, and
Proposition~\ref{thm:66} combine to yield $3+4+3=10$ fermionic symmetry
groups, which we arrange into two tables:
  \begin{equation}\label{eq:101}
     \begin{tabular}{ c@{\hspace{2em}} l@{\hspace{2em}} c@{\hspace{2em}}
     c@{\hspace{2em}} c@{\hspace{2em}}} 
     \toprule 
     $s$&$\;\;H^c$&$K$&Cartan&$D$\\ \midrule \\[-8pt] 
     $0$&$\Spin^c$&$\TT$&A&$\CC$\\
     $1$&$\Pin^c$&$\TT$&AIII&$\Cliff^{\CC}_{-1}$\\
     \bottomrule \end{tabular} 
  \end{equation}
  \begin{equation}\label{eq:100}
     \begin{tabular}{ c@{\hspace{2em}} l@{\hspace{2em}} c@{\hspace{2em}}
     c@{\hspace{2em}} c@{\hspace{2em}}} 
     \toprule 
     $\phantom{-}s$&$\quad\quad\; H$&$K$&Cartan&$D$\\ \midrule \\[-8pt] 
     $\phantom{-}0$&$\Spin$&$\pmo$&D&$\RR$\\
     ${-}1$&$\Pin^+$&$\pmo$&DIII&$\Cliff_{-1}$\\
     ${-}2$&$\Pin^+\ltimes\mstrut _{\{\pm1\}}\,\TT$&$\TT$&AII&$\Cliff_{-2}$\\
     ${-}3$&$\Pin^-\times\mstrut _{\{\pm1\}}SU_2$&$SU_2$&CII&$\Cliff_{-3}$\\ 
     $\phantom{-}4$&$\Spin\,\times\mstrut _{\{\pm1\}}SU_2$&$SU_2$&C&$\HH$\\
     $\phantom{-}3$&$\Pin^+\times\mstrut _{\{\pm1\}}SU_2$&$SU_2$&CI&$\Cliff_{+3}$\\ 
     $\phantom{-}2$&$\Pin^-\ltimes\mstrut _{\{\pm1\}}\,\TT$&$\TT$&AI&$\Cliff_{+2}$\\
     $\phantom{-}1$&$\Pin^-$&$\pmo$&BDI&$\Cliff_{+1}$\\
     \bottomrule \end{tabular} 
  \end{equation}
In addition to the fermionic symmetry group~$H$ or~$H^c$ and its internal
group~$K$, we list the Cartan label, an integer~$s$ called the ``type'', and
a super division algebra~$D$.  The type is defined mod~2 in~\eqref{eq:101}
and mod~8 in~\eqref{eq:100}; we choose a convenient integer representative.
We use notations $H(s), H^c(s), K(s), D(s)$ when we make the type explicit.
The Cartan label is used in the condensed matter literature, where this
10-fold way has many incarnations: see~\cite{D,AZ,HHZ,K6,SRFL,FM1,KZ,WS}.  In
those references the \emph{particle-hole symmetry} determines the internal
symmetry group~$K$: in its absence $K=\TT$; if particle-hole symmetry is
present and squares to~$+1$, then $K=\pmo$; and if particle-hole symmetry is
present and squares to~$-1$, then $K=SU_2$.  The existence (and square) of
\emph{time-reversal symmetry} in the references above matches that in our
account except for the entry~AIII, which is usually listed as not having
time-reversal symmetry (but see~\cite[\S III]{WS}).  The super division
algebra~$D$ is the unique super division algebra in the Morita class of the
Clifford algebra\footnote{The Clifford algebra $\Cliff_{\pm|s|}$ is generated
by $e_1,\dots ,e_{|s|}$ subject to $e_ae_b+e_be_a=\pm2\delta _{ab}$;
see~\cite{ABS}.}  $\Cliff_{s}$.  The groups~$\Spin^c$ and $\Pin^c$ in the
first table~\eqref{eq:101} are distinguished as having a central subgroup
isomorphic to~$\TT$, so are called \emph{complex}; the center of the groups
in~\eqref{eq:100} is~$\pmo$, and so they are called \emph{real}.

  \begin{remark}[]\label{thm:138}
 We would have found it more natural from a mathematical point of view in
several places to define $H(4)=\Spin\times \mstrut _{\pmo}\Spin_4$ rather
than $\Spin\times \mstrut _{\pmo}\Spin_3$, but we lack a physics motivation
to do so.  
  \end{remark}

The following embedding allows a uniform treatment of these symmetry groups,
and it opens a path to relating this relativistic 10-fold way to other
10-fold ways in the literature.  Fix~$n\ge0$.

  \begin{lemma}[]\label{thm:90}
 Fix a real type~$s$ as in~\eqref{eq:100}, and let $\hh ns$~denote the
$n$-dimensional version of the group~$H(s)$ of type~$s$ in
Table~\eqref{eq:100}.  Write $A_n(s)=\Cliff_{+n}\otimes D(s)$.  Then there is
an embedding
  \begin{equation}\label{eq:102}
     \iota \:\hh ns\longrightarrow A_n(s) 
  \end{equation}
such that the natural map 
  \begin{equation}\label{eq:106}
     c\:\RR^n\times A_n(s)\longrightarrow  A_n(s) 
  \end{equation}
is $\hh ns$-equivariant and graded commutes with right multiplication
by~$A_n(s)$. 
  \end{lemma}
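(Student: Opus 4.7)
The plan is to construct the embedding $\iota$ case by case for the ten symmetry types in Table~\eqref{eq:100}, combining the classical embeddings of spin and pin groups into Clifford algebras with the realization of the internal symmetry group $K(s)$ inside the super division algebra $D(s)$. For the spin/pin part, I exploit the sign introduced by the super (graded) tensor product: if $v \in \RR^n$ is a unit vector in $\Cliff_{+n}$ and $u \in D(s)$ is odd, then $(v \otimes u)^2 = -v^2 u^2$ in $A(s) = \Cliff_{+n} \otimes D(s)$, so by choosing $u$ with appropriate square I can arrange $(v \otimes u)^2$ to equal $\pm 1$ as needed for $\Pin^\pm_n$.

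The detailed constructions go as follows. For $s = 0$, take the standard embedding $\Spin_n \hookrightarrow \Cliff_{+n}$. For $s = \pm 1$, take $u$ a generator of $D(s) = \Cliff_{\pm 1}$ (so $u^2 = \pm 1$), giving $(v \otimes u)^2 = \mp 1$ and yielding $\Pin^+_n$ for $s = -1$ and $\Pin^-_n$ for $s = +1$. For the cases with nontrivial internal symmetry, embed $K(s)$ inside the even part of $D(s)$ via its natural appearance: $\TT$ as the unit circle in $\CC \cong D(s)^{\text{even}}$ for $s = \pm 2$; $SU_2$ as the unit quaternions in $\HH \cong D(s)^{\text{even}}$ for $s = \pm 3$; and $SU_2 \subset \HH = D(4)$ for $s = 4$. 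For $s = 4$ embed $\Spin_n$ as $\Spin_n \otimes 1$ (no pin generator occurs since $H(4)$ has no reflections); $\Spin_n$ and $SU_2$ commute in $A(4)$, and the diagonal $\pmo$ is identified. For $s = \pm 2$, embed the pin generator as $v \otimes e_1$ with $e_1$ a generator of $D(s)$; the ungraded anticommutation of $e_1$ with the complex structure $e_1 e_2 \in D(s)^{\text{even}}$ realizes the inversion action of $\pi_0 \Pin^\pm$ on $\TT$, reproducing the semidirect product $\Pin^\pm_n \ltimes \TT$. For $s = \pm 3$, embed the pin generator as $v \otimes u$ with $u$ the central volume element of $D(s)$; centrality of $u$ in $D(s)$ ensures the pin and $SU_2$ generators commute ungradedly, reproducing the direct product $\Pin^\pm_n \times SU_2$. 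In every case the quotient by the diagonal $\pmo$ is automatic since $-1 \in \Spin_n$ or $\Pin^\pm_n$ and $-1 \in K$ both map to $-1 \in A(s)$.

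Equivariance of $c$ then reduces to two checks. For a pin generator $v' \in \RR^n$, a direct super tensor product computation yields $\iota(v') \cdot v \cdot \iota(v')^{-1} = -v' v v'$; using the Clifford identity $v' v v' = 2\langle v, v' \rangle v' - v$ this equals the hyperplane reflection $R_{v'}(v)$, which is precisely $\rho(v')(v)$. The internal symmetry $K$ acts trivially on $\RR^n$: since $\iota(K) \subset 1 \otimes D(s)^{\text{even}}$ lies in a separate even factor of the tensor product, its ungraded conjugation fixes $\RR^n \otimes 1 \subset A(s)$, matching $\rho(K) = 1$. The graded commutativity with right multiplication by $A(s)$ is automatic from the definition of $c$ as left multiplication by $v \otimes 1 \in A(s)$. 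The main obstacle is the sign bookkeeping across the ten cases and verifying, for each, the compatibility between the chosen $u \in D(s)^{\text{odd}}$ and the embedded $K(s)$: $u$ must commute with $\iota(K)$ ungradedly in the direct product cases (forcing the central volume element rather than a single generator for $s = \pm 3$) and anticommute in the semidirect product cases ($s = \pm 2$).
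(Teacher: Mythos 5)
Your proposal is correct and follows essentially the same approach as the paper's proof: the paper likewise defines $\iota$ by sending $g \in \Spin_n \mapsto g\otimes 1$, pin generators $g \mapsto g\otimes e$ (for $|s|=1,2$) or $g\otimes\omega$ (for $|s|=3$, with $\omega$ the volume form), and internal symmetries $\lambda \mapsto 1\otimes\lambda$ via $\Spin_{|s|}\subset\Cliff_{|s|}$, and then verifies equivariance of $c$ by the same Koszul-sign computation in the graded tensor product. Your additional remarks on the sign bookkeeping (the role of the central volume form for the direct-product cases versus a single generator for the semidirect cases, and the $(v\otimes u)^2 = -v^2u^2$ identification of $\Pin^\pm$) spell out the "case-by-case check" that the paper leaves to the reader.
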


\noindent
 Here $c$~is the extension of scalars of Clifford multiplication $\RR^n\times
\Cliff_{+n}\to \Cliff_{+n}$.  (Recall that $\RR^n\subset \Cliff_{+n}$.)  Note
that $A_n(s)$~is Morita equivalent to $\Clp {(n+s)}$; we specify a Morita
equivalence in~\S\ref{subsubsec:8.4.2}.  We regard~$\hh ns$ as an ungraded
group, and in fact $\iota (\hh ns)$~is contained in the even part of the
superalgebra~$A_n(s)$.  In the complex case~\eqref{eq:101} there is an
embedding $\iota ^{\CC}\:\Pin^c_n\hookrightarrow \Cliff^{\CC}_n\otimes
\Cliff^{\CC}_{-1}$ constructed using the same formulas as the real
case~$s=1$.  Of course, there is also the usual embedding $\iota
^{\CC}\:\Spin^c_n\hookrightarrow \Cliff^{\CC}_n$.

  \begin{proof}
 The case~$s=0$ requires no comment.  For~$s=4$ we use the fact that
$SU_2\cong Sp_1\subset \HH$.  The scalar~$-1$ passes between the factors in
the real tensor product $\Clp n\otimes \HH$, which explains the division
by~$\pmo$ in the group~$H$.  In the remaining six cases $D(s)$~is a Clifford
algebra on $|s|$~generators, and the group~$\Spin_{|s|} \subset \Cliff_s$ is
isomorphic to~$\pmo, \TT, SU_2$ for $|s| =1,2,3$, respectively.
For~$|s|=1,2$ fix a unit vector~$e\in \RR^{|s|}\subset D(s)$; for~$|s|=3$
define the volume form $\omega =e_1e_2e_3$ as the ordered product of the
generators of~$\Cliff_{|s|}$.  Define~$\iota $ by
  \begin{equation}\label{eq:103}
     \begin{aligned} g&\longmapsto g\otimes 1,\qquad &&g\in \Spin_n, \\
      g&\longmapsto \begin{cases} g\otimes e,&|s|=1,2,\\g\otimes
     \omega ,&|s|=3,\end{cases} 
     \qquad &&g\in \Pin^\pm_n\setminus \Spin_n, \\
      \lambda &\longmapsto 1\otimes \lambda ,\qquad &&\textnormal{$\lambda
      \in \TT$ or $SU_2$.}\end{aligned} 
  \end{equation}
A case-by-case check completes the proof.  To illustrate, we check the
equivariance of~$c$ for $g\in \Pin_n\setminus \Spin_n$ and $|s|=1,2$; it
suffices to take~$g=e_i$ for some standard basis element~$e_i\in \RR^n$.  For
$\xi \in \RR^n\subset \Clp n$, we have $e_i\cdot (\xi \otimes 1)=-e_i\xi
e_i\inv \otimes 1$.  For $\psi \in \Clp n$ homogeneous of parity~$|\psi |$
and $x\in D(s)$, we have $e_i\cdot (\psi \otimes x)=(-1)^{|\psi |}e_i\psi
\otimes ex$, since $e_i$~acts as left multiplication in~$A_n(s)$ by~$\iota
(e_i)$ and the Koszul sign rule applies in the superalgebra~$A_n(s)$.  Their
Clifford product is
  \begin{equation}\label{eq:104}
     -(-1)^{|\psi |}e_i\xi \psi \otimes ex = e_i\cdot (\xi \psi \otimes x), 
  \end{equation}
which proves the equivariance.  We leave the other checks to the reader.
  \end{proof}

  \begin{remark}[]\label{thm:136}
 In the condensed matter literature free fermion systems are often treated
non-relativistically and so are organized by nonrelativistic symmetry groups.
More specifically, they are organized by the subgroup~$I$ of internal vector
symmetries that fix the points of \emph{space}.  (The internal symmetry
group~$K$ in our account, which starts from a relativistic theory, is the
subgroup that fixes the points of \emph{spacetime}.)  We can easily compute
the group~$I_n$ in spacetime dimension~$n$ for a general group of symmetries,
as in~\S\ref{sec:1}.  Namely, let $\rho _n\:H_n\to O_n$ be a Wick-rotated
symmetry group.  Fix a splitting $\RR^n=\RR\times \RR^{n-1}$ of translations
of~$\EE^n$ into Wick-rotated-time translations cross spatial translations.
The subgroup $O_1\times O_{n-1}\subset O_n$ preserves that splitting, and
$O_1\times \{\id\}\subset O_1\times O_{n-1}$ is the vector subgroup of
transformations that fix space pointwise.  So for the symmetry group~$H_n$
we define the nonrelativistic internal subgroup~$I_n$ as the pullback
  \begin{equation}\label{eq:151}
     \begin{gathered} \xymatrix{I_n\ar@{^{(}->}[rr]\ar[d]_{} &&
     H_n\ar[d]^{\rho _n} \\ 
     O_1\times \{\id\}\ar@{^{(}->}[r]^{} & O_1\times O_{n-1}\ar@{^{(}->}[r] &O_n}
     \end{gathered} 
  \end{equation}
The inclusion $H_n\hookrightarrow H_{n+1}$ induces an
isomorphism~$I_n\xrightarrow{\;\cong \;}I_{n+1}$; denote the colimit of these
groups as~$I$.  We tabulate~$I$ for each of the ten fermionic symmetry groups
in Tables~\eqref{eq:101} and~\eqref{eq:100}: 
  \begin{equation}\label{eq:152}
     \begin{tabular}{ c@{\hspace{2em}} l@{\hspace{2em}} c@{\hspace{.5em}}
     l@{\hspace{2em}} c} 
     \toprule $s$&$\;\;H^c$&$I$&&Cartan\\
     \midrule \\[-8pt] $0$&$\Spin^c$&$\TT$&($\Spin^c_1$)&A\\
     $1$&$\Pin^c$&$\zt\times \TT$& ($\Pin^c_1$)&AIII\\ \bottomrule
     \end{tabular} 
  \end{equation}
  \begin{equation}\label{eq:153}
     \begin{tabular}{ c@{\hspace{2em}} l@{\hspace{2em}} c@{\hspace{.5em}}
     l@{\hspace{2em}} c} \toprule $\phantom{-}s$&$\quad\quad\;
     H$&$I$&&Cartan\\ \midrule \\[-8pt] $\phantom{-}0$&$\Spin$&$\pmo$&($\Spin_1$)&D\\
     $-1$&$\Pin^+$&$\zt\times \pmo$&($\Pp_1$)&DIII\\ $-2$&$\Pin^+\ltimes\mstrut
     _{\{\pm1\}}\,\TT$&$\zt\ltimes\TT$&($\Pp_2$)&AII\\ $-3$&$\Pin^-\times\mstrut
     _{\{\pm1\}}SU_2$&$\zmod4\times \mstrut _{\pmo}SU_2$&($\Pp_3$)&CII\\
     $\phantom{-}4$&$\Spin\,\times\mstrut 
     _{\{\pm1\}}SU_2$&$SU_2$&($\Spin_3$)&C\\ $\phantom{-}3$&$\Pin^+\times\mstrut
     _{\{\pm1\}}SU_2$&$\zt\times SU_2$&($\Pm_3$)&CI\\ $\phantom{-}2$&$\Pin^-\ltimes\mstrut
     _{\{\pm1\}}\,\TT$&$\zmod4\ltimes\mstrut _{\pmo}\TT$&($\Pm_2$)&AI\\[4pt]
     $\phantom{-}1$&$\Pin^-$&$\zmod4$&($\Pm_1$)&BDI\\ \bottomrule \end{tabular} 
  \end{equation}
In the physics literature a $\zt$ subgroup of~$I$ containing a time-reversal
symmetry, if it exists, is labeled~`$\zt_T$'.  The $\pmo$ subgroup is often
labeled~`$\zt_f$' where `$f$'~means `fermionic' since the nontrivial element
is the center of the spin group.  The groups in parentheses are abstractly
isomorphic to the group~$I$.
  \end{remark}

  \begin{remark}[]\label{thm:189}
 In the pullback~\eqref{eq:151} the group~$I_n$ has two extra pieces of
structure: the canonical central element $k_0\in K\subset I_n$ of order
dividing two (Theorem~\ref{thm:5}(2)) and a $\zt$-grading $\phi \:I_n\to
O_1=\pmo$ with $K=\ker\phi $.  In condensed matter models we are
given~$(I_n,k_0,\phi )$ and part of the determination of the low energy
effective field theory is the (re)construction of the symmetry
type~$(H_n,\rho _n)$.  We achieve this as follows.  If $\phi $~is
trivial then $I_n=K$, so set $\tSH_n\:=\Spin_n\times I_n$; then
define $H_n=SH_n$ by~\eqref{eq:16}.  If $\phi $~is surjective, consider the
commutative diagram  
  \begin{equation}\label{eq:329}
     \begin{gathered} \xymatrix{\Spin_1\ar[rr]\ar[dr]&&\Spin_n\ar[dr]\\
     &\tI_n\ar[rr]\ar[dl]\ar[dd]&&\tH_n\ar[dl]\ar[dr]\ar[dd]\\
     I_n\ar[dd]\ar[rr]&&H_n\ar[dd]&&J\ar[dd]\ar[dd]\\ 
     &\Pin_1^+\ar[dl]\ar[rr]&&\Pin_n^+\ar[dr]\\ O_1\ar[rr]&&O_n\ar[rr]&&\pmo}
     \end{gathered} 
  \end{equation}
in which every parallelogram is a pullback, the kernel of every vertical map
is~$K$, and the northeast diagonal composition is exact.
Given~$(I_n,k_0,\phi )$ define~$\tI_n$ by pullback, set~$K=\ker\phi $, set
$J=\tI_n/\Spin_1$, let~$\tH_n$ be the pullback~\eqref{eq:226}, and
define~$H_n$ using~\eqref{eq:227}. 
  \end{remark}

\bigskip 
\subsubsection{Twisted Atiyah-Bott-Shapiro map}\label{subsubsec:8.4.2}

Atiyah-Bott-Shapiro~\cite[\S11]{ABS} give a canonical construction of
$K$-theory elements on Thom complexes.  The universal
incarnation~\cite[\S6.1]{H} is a map of spectra 
  \begin{equation}\label{eq:137}
     \phi \:M\!\Spin\longrightarrow KO. 
  \end{equation}
Following their arguments we produce similar maps for the group~$H(s)$ of
type~s in Table~\eqref{eq:100}.  Fix a dimension~$n\in \ZZ^{\ge0}$.
 
As a first step we stipulate a Morita equivalence  
  \begin{equation}\label{eq:138}
     A_n(s)\;\underset{\textnormal{\tiny Morita}}{\approx }\Clp{(n+s)}. 
  \end{equation}
There is a sign at stake---for any Clifford algebra~$A$ the groupoid of
invertible $(A,A)$-bimodules is equivalent to the groupoid of $\zt$-graded
lines: the sign is the parity of the line.  Define the isomorphism
  \begin{equation}\label{eq:139}
     \Clp n\otimes \Clp s\xrightarrow{\;\;\;\cong
     \;\;\;}\Clp{(n+s)} 
  \end{equation}
as in \cite[(1.6)]{ABS}, and choose \cite[(6.9)]{ABS} a
$\Cliff_{\pm8}$-module $M=M^0\oplus M^1$ of dimension~$8|8$ such that the
volume form acts as~$+1$ on~$M^0$.  There result Morita
equivalences~\eqref{eq:138} for all cases except~$s=4$.  For that we fix a
\emph{quaternionic} $\Cliff_{\pm4}$-module $N=N^0\oplus N^1$ of quaternionic
dimension~$1|1$ such that the volume form acts as~$+1$ on~$N^0$.

Now to the twisted ABS construction.  Let $\pi \:V_n\to B\hh ns$ be the
universal bundle associated to $\rho _n\:\hh ns\to O_n$.  Define the spinor
bundle\footnote{Our choice of~$A\op$ in~\eqref{eq:122}, rather than~$A$, is
essentially a sign choice.  We use a geometric model~\cite{AS} in which a
class in~$KO^m(X)$ is represented by a $\zt$-graded vector bundle over~$X$
that is a \emph{left} module for~$\Cliff_m$ equipped with a family of
commuting odd \emph{skew}-adjoint (Fredholm) operators.\label{skewsign}}
  \begin{equation}\label{eq:122}
     \sS:=E\hh ns\times \mstrut _{\hh ns}A_n(s)\op\longrightarrow B\hh ns;
  \end{equation}
This is a vector bundle of right $A_n(s)\op$-modules, or equivalently of left
$A_n(s)$-modules.  Left Clifford multiplication~\eqref{eq:106} defines a
family of odd skew-adjoint endomorphisms of $\pi ^*\sS\to V_n$.  These
operators are invertible off the zero section, and they commute with the left
$A_n(s)$-module structure.  Therefore, using the Morita
equivalence~\eqref{eq:138}, they define an element in
$KO^{n+s}\bigl(\Thom(B\hh ns;{V_n})\bigr)$, where $\Thom(B\hh ns;{V_n})$ is
the Thom space of the universal bundle $\pi \:V_n\to B\hh ns$.  Take the
limit $n\to\infty $ after subtracting a trivial rank~$n$ bundle from~$V_n$ to
obtain
  \begin{equation}\label{eq:107}
     \phi \:MH(s)\longrightarrow \Sigma ^{s}KO 
  \end{equation}
out of the Thom spectrum associated to the stable \emph{normal}
structure~$H$.  For $s=0$ this is the Atiyah-Bott-Shapiro (ABS)
map~\cite[\S6.1]{H}.  We rewrite in terms of the stable \emph{tangential}
structure~$H$; see the comments following~\eqref{eq:90}.  That perp maneuver
exchanges~$\Pp$ and~$\Pm$, which in Table~\eqref{eq:100} exchanges
$s\leftrightarrow -s$.  Therefore, \eqref{eq:107} is a generalized ABS map
  \begin{equation}\label{eq:108}
     \phi \:MTH(s)\longrightarrow \Sigma ^{-s}KO. 
  \end{equation}
In the complex case we obtain a generalized ABS map 
  \begin{equation}\label{eq:184}
     \phi \:MTH^c(s)\longrightarrow \Sigma ^{-s}K. 
  \end{equation}

\bigskip 
\subsubsection{Twisted Dirac operators}\label{subsubsec:8.4.3}

Next, following~\cite[\S II.7]{LM}, we define twisted Dirac operators for
the structure groups in Table~\eqref{eq:100}.  Suppose $X$~is an
$n$-dimensional Riemannian manifold equipped with an $\hh ns$-structure $P\to
X$.  We assume given a connection on $P\to X$ compatible with the Levi-Civita
connection on the orthonormal frame bundle.  Use the embedding~\eqref{eq:102}
to form the $\zt$-graded spinor bundle
  \begin{equation}\label{eq:105}
      \sS':=P\times \mstrut _{\hh ns}A_n(s)\longrightarrow  X.
  \end{equation}
Clifford multiplication~\eqref{eq:106} defines a vector bundle map
$T^*X\otimes \sS'\longrightarrow \sS'$, and as usual the Dirac
operator~$\Dirac_X$ acts on smooth sections of~$\sS'$ as the covariant
derivative followed by Clifford multiplication.  The Dirac operator is odd
and skew-adjoint.  (See~\footref{skewsign} for our conventions.)  It commutes
with the {right} $A_n(s)$-module structure on~$\sS'$, or equivalently with
the {left} $A_n(s)^{\textnormal{op}}$-module structure.

There are topological and geometric indices of Dirac operators on compact
manifolds.  The topological index is defined using Fredholm
operators~\cite{AS}.  Namely, if $X$~is closed, then $\Dirac_X$~extends to a
Fredholm operator on Sobolev completions of the space of smooth sections
of~$\sS'$.  This construction works in families: from a fiber bundle $\sX\to
S$ of closed Riemannian $n$-manifolds with $\hh ns$-structure we obtain a
family of odd skew-adjoint Fredholm operators parametrized by~$S$.  Recalling
that $A_n(s)^{\textnormal{op}}$~is Morita equivalent to~$\Clm{(n+s)}$,
via~\eqref{eq:138}, we deduce that this family of operators has a
\emph{topological} index that lies in~$KO^{-(n+s)}(S)$.  For~$s=0$ this
reduces to the usual Clifford-linear Dirac operator definition of the
topological index.  The Atiyah-Singer index theorem equates this topological
index with an analytic index.  If $S$~is a smooth manifold and $\sX\to S$ a
smooth family of Riemannian manifolds with $\hh ns$-structure, then there is
a \emph{geometric} index that lies in the differential cohomology
group~$\widehat{KO}^{-(n+s)}(S)$; see~\cite{FL} for the differential complex
$K$-theory version as well as the Atiyah-Singer theorem in this differential
context.

  \begin{remark}[]\label{thm:137}
 For~$s=\pm1$ this discussion specializes to an effective approach to Dirac
operators and index theory on unoriented manifolds with a $\Ppm$-structure.
  \end{remark}

  \begin{remark}[]\label{thm:139}
 There is an analogous discussion in the complex case: replace~$H\to H^c$
and $KO\to K$. 
  \end{remark}

\bigskip 
\subsubsection{Free fermion theories on Minkowski
spacetime~$M^{n-1}$}\label{subsubsec:8.4.4}

As before we only treat the eight real fermionic symmetry groups.  Fix a
type~$s$ in Table~\eqref{eq:100}.  Let $\GG{1,n-2}s$~be the Lorentz signature
anti-Wick rotation of~$\hh{n-1}s$, as in~\eqref{eq:12}.  If $s=0$, which is
the basic case, then $\GG{1,n-2}s=\Spin_{1,n-2}$ is the Lorentz spin group.
The analog of~\eqref{eq:102} is an embedding (see~\eqref{eq:a9} for
$\Cliff_{p,q}$ conventions). 
  \begin{equation}\label{eq:140}
     \iota \:\GG{1,n-2}s\longrightarrow \Cliff_{n-2,1}\otimes\, D(s)=:B_{n-1}(s), 
  \end{equation}
and there is a Morita equivalence of superalgebras 
  \begin{equation}\label{eq:144}
     B_{n-1}(s)\;\underset{\textnormal{\tiny Morita}}{\approx }\Clp{(n-3+s)}. 
  \end{equation}
We use the conventions following~\eqref{eq:138} to define the Morita
equivalence.  The image of~$\iota $ lies in the even subalgebra $B_{n-1}(s)^0\subset
B_{n-1}(s)$.  A free fermionic field is specified by a real spinor representation
of~$\GG{1,n-2}s$, which by definition is an \emph{ungraded} real module~$\SS$
of~$B_{n-1}(s)^0$.  A spinor field is then a function $\psi \:M^{n-1}\to\SS$.

  \begin{remark}[]\label{thm:171}
 The CRT~theorem, which is reviewed in Appendix~\ref{sec:10}, implies that
the free fermion theory has a larger Lie group~$\GG{1,n-2}s^\beta \supset
\GG{1,n-2}s$ of symmetries; the non-identity component acts antilinearly on
the Hilbert space of states.  Proposition~\ref{thm:169}(3) implies that the
embedding~\eqref{eq:140} extends to $\GG{1,n-2}s^\beta $, and so
$\GG{1,n-2}s^\beta $~acts on the \emph{real} vector space~$\SS$, consistent
with Proposition~\ref{thm:170}(2).
  \end{remark}

We quickly summarize special facts about a real spinor representation~$\SS$
of the Lorentz spin group~$\Spin_{1,n-2}$; proofs may be found
in~\cite[\S6]{De}.  Fix a component~$C$ of timelike vectors~$\xi \in
\RMm$ with $|\xi|^2>0$.  The first special property is the existence of
\emph{symmetric} $\Spin_{1,n-2}$-invariant maps
  \begin{equation}\label{eq:141}
     \Gamma \:\SS\times \SS\longrightarrow \RMm .
  \end{equation}
If $\SS$~is irreducible, then $\Gamma $~is unique up to a real factor and
nonzero~$\Gamma $ are definite.  Choose~$\Gamma $ \emph{positive} definite in
the sense that $\Gamma (\psi ,\psi )\in \overline{C}$ for all~$\psi \in \SS$.
This fixes~$\Gamma $ up to a positive real factor.  There are two isomorphism
classes of real irreducible representations for $n-1\equiv 2,6\pmod8$ and a
unique irreducible in other cases.  Let~$S_1,S_2$ be representative
irreducibles (in dimensions with a unique irreducible, set $S_2=0$); let
$Z$~be the commutant of the spin action, so $Z=\RR$, $\CC$, or~$\HH$; and fix
positive definite~$\Gamma $ for ~$S_1,S_2$.  A general real spinor
representation~$\SS$ decomposes as
  \begin{equation}\label{eq:155}
     \SS\cong W_1\otimes _ZS_1\;\oplus \; W_2\otimes _Z S_2 
  \end{equation}
for right $Z$-modules $W_1,W_2$.  Then positive definite pairings~$\Gamma $
in~\eqref{eq:141} correspond to positive definite hermitian forms
on~$W_1,W_2$.  For each choice there is a unique compatible $\zt$-graded
$\Cliff_{n-2,1}$-module structure on~$\SS\oplus \SS^*$, where $\SS$~is in
even degree and $\SS^*$~in odd degree; in particular, the duality pairing
$\SS^*\otimes \SS\to\RR$ is $\Spin_{1,n-2}$-invariant.  Conversely, if
$\SS^0\oplus \SS^1$ is a $\Cliff_{{n-2,1}}$-module, then there is a duality
pairing $\SS^0\otimes \SS^1\to\RR$ that makes the resulting symmetric
form~\eqref{eq:141} positive definite.  (Deligne proves this for simple
modules in~\cite[(6.1)]{De}; any module is a sum of simples and the argument
applies to each summand.)  Observe that $\Gamma $~is a contractible choice.
 
The group~$\GG{1,n-2}s$ contains the spin group~$\Spin_{1,n-2}$ as a subgroup
and the quotient~$Q_{n-1}(s)$ is compact and independent of~$n$ up to
isomorphism.  An irreducible real representation of~$\GG{1,n-2}s$ decomposes
under the subgroup~$\Spin_{1,n-2}$ as~\eqref{eq:155}, and a central
extension~$\widehat{Q_{n-1}(s)}$ of~$Q_{n-1}(s)$ acts on each~$W_i$.  A
choice of $\widehat{Q_{n-1}(s)}$-invariant positive definite hermitian form
on~$W_i$ yields a $\GG{1,n-2}s$-invariant pairing~\eqref{eq:141}, and then a
$B_{n-1}(s)$-module $\SS\oplus \SS^*$.  Conversely, every $B_{n-1}(s)$-module
has this form.

  \begin{definition}[]\label{thm:140}
 The module~$\SS$ \emph{admits a mass term} if there is a
\emph{nondegenerate} skew-symmetric $\GG{1,n-2}s$-invariant bilinear form
  \begin{equation}\label{eq:142}
     m\:\SS\times \SS\longrightarrow \RR .
  \end{equation}
  \end{definition}

\noindent
 We call~$m$ the \emph{mass form}.  

  \begin{lemma}[]\label{thm:119}
 $\SS$~admits a mass term if and only if $\SS\oplus \SS^*$ extends to a
super module of the superalgebra~$B_{n-1}(s)[e]$, where $e$~is odd, $e^2=-1$, and
$e$~ (graded) commutes with the Clifford generators of~$B_{n-1}(s)$.
  \end{lemma}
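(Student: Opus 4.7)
The plan is to exhibit an explicit bijection between mass terms $m$ on $\SS$ and $B(s)[e]$-module extensions of the $B(s)$-module structure on $\SS\oplus \SS^*$ determined by $\Gamma$. Given such an extension, $e$ acts as an odd endomorphism of $\SS\oplus \SS^*$, so it interchanges the two summands. Writing $e|_{\SS}=A\colon \SS\to \SS^*$ and $e|_{\SS^*}=B\colon \SS^*\to \SS$, the relation $e^2=-1$ forces $BA=-\id_{\SS}$, hence $A$ is invertible and $B=-A\inv$. Using the canonical pairing $\langle -,-\rangle\colon \SS^*\otimes \SS\to \RR$, define $m(\psi _1,\psi _2):= \langle A\psi _1,\psi _2\rangle$. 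In the other direction, a skew, nondegenerate, $\GG{n-1}{s}$-invariant $m$ yields an isomorphism $A\colon \SS\to \SS^*$, $A(\psi )=m(\psi ,-)$, and $e=A\oplus(-A\inv)$ is an odd operator on $\SS\oplus \SS^*$ squaring to $-1$.

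Under this correspondence, nondegeneracy of $m$ is equivalent to invertibility of $A$, which is automatic from $e^2=-1$. Invariance of $m$ under $\GG{n-1}{s}$ is equivalent to $A$ intertwining the actions of the image of $\GG{n-1}{s}\hookrightarrow B(s)^0$, which is part of the graded commutation of $e$ with the full $B(s)$. So the essential content of the lemma is the equivalence of skewness of $m$ with graded anticommutation of $e$ with the odd part $B(s)^1$.

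For this last point, I would first treat $v\otimes 1$ for $v\in \RR^{n-1}\subset \Cliff_{n-2,1}$. Its action $\gamma ^v\colon \SS\to \SS^*$ is characterized by $\langle \gamma ^v\psi _1,\psi _2\rangle=\langle v,\Gamma (\psi _1,\psi _2)\rangle$, and the symmetry of $\Gamma $ gives $\langle \gamma ^v\psi _1,\psi _2\rangle=\langle \gamma ^v\psi _2,\psi _1\rangle$. A direct computation of $\{e,v\}$ on $\SS$, using $B=-A\inv$, rearranges to $A\beta ^vA=\gamma ^v$; pairing both sides with test vectors and comparing with the symmetry of $\gamma ^v$ is equivalent to the skew-symmetry $\langle A\psi _1,\psi _2\rangle+\langle A\psi _2,\psi _1\rangle=0$, i.e., $m(\psi _1,\psi _2)=-m(\psi _2,\psi _1)$. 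Anticommutation on $\SS^*$ then follows formally from anticommutation on $\SS$ and $e^2=-1$. For the generators $1\otimes e_i$ of $D(s)\subset B(s)$ when $s\ne 0$, the action on $\SS\oplus \SS^*$ is determined by the $B(s)^0$-structure combined with $\gamma ^v$, and the anticommutation with $e$ reduces to the $\GG{n-1}{s}$-equivariance of $m$ already established.

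The main obstacle I expect is the bookkeeping of Koszul signs inside the graded tensor product $\Cliff_{n-2,1}\otimes D(s)$ and in the induced action on $\SS\oplus \SS^*$, especially across the ten cases of Table~\eqref{eq:100}. To avoid case-by-case checks, I would run the argument uniformly through the identification $B(s)[e]\cong B(s)\otimes \Cliff_{-1}\cong \Cliff_{n-2,2}\otimes D(s)$: extending the $B(s)$-structure to a $B(s)[e]$-structure amounts to extending the symmetric form on $\RR^{n-1}$ by one direction with negative square, and the corresponding bilinear pairing on $\SS$ is forced to be skew precisely because the new generator squares to $-1$ rather than $+1$. This also makes the nondegeneracy and invariance clauses transparent and reduces the lemma to unwinding the ABS correspondence between modules and quadratic forms already used elsewhere in the paper.
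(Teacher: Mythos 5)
The forward direction and the construction of the bijection between mass forms $m$ and odd square-roots $e$ of $-1$ is exactly the paper's, so the only point of substance is your claimed derivation of skew-symmetry, and there you have a genuine gap. The identity you derive, $A\beta^v A=\gamma^v$, together with the symmetry $(\gamma^v)^T=\gamma^v$, does \emph{not} force $A^T=-A$. Indeed $(\beta^v)^T=\beta^v$, so transposing $A\beta^v A=\gamma^v$ gives $A^T\beta^v A^T=\gamma^v$; this is invariant under $A\mapsto -A$ and in fact is satisfied by both $A^T=A$ and $A^T=-A$. Put differently, the relation only pins down $A\inv A^T$ as a central element of the commutant, which tells you $m$ is symmetric or skew on a simple summand but not which. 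Your concluding heuristic (``forced to be skew precisely because the new generator squares to $-1$'') has the same problem: nothing in the Clifford-algebra identification alone distinguishes the symmetric and skew cases.

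What is actually needed, and what the paper uses, is the \emph{positive definiteness} of $\Gamma$, not merely its symmetry. The paper first reduces to a simple module (so that $m$ is symmetric or skew), then takes a forward-cone timelike generator $f$ with $f^2=-1$; positivity of $\Gamma$ makes $(s_1,s_2)_{\SS}=\langle\gamma^f s_1,s_2\rangle$ a genuine positive-definite inner product, and the anticommutation $ef=-fe$ gives $((\gamma^f)\inv E)^2=-\id_{\SS}$. Writing $A=(\gamma^f)\inv E$, one has $A^*=\pm A$ with $*$ the adjoint for $(\cdot,\cdot)_{\SS}$, and $A^*=A$ would make $A^*A=-\id$ a nonnegative operator, a contradiction; hence $A^*=-A$ and $m$ is skew. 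So the sign is decided by positivity, which your argument never invokes. To repair your proof you would need to replace ``symmetry of $\gamma^v$'' by ``definiteness of $\gamma^f$ for timelike $f$ in the chosen cone'' and then run essentially the paper's operator-positivity argument (or an equivalent one); without that, the skewness claim is unproved.
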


\noindent
 If $s=4$ the hypothesis is that $e$~commutes with~$D=\HH$.  As always, the
commutation with Clifford generators obeys the Koszul sign rule.

  \begin{proof} 
 Given a $B_{n-1}(s)[e]$-module structure on~$\SS\oplus \SS^*$,
define~$m$ by
  \begin{equation}\label{eq:156}
     m(s_1,s_2)=\langle Es_1,s_2 \rangle,\qquad s_1,s_2\in \SS, 
  \end{equation}
where $E\:\SS\to\SS^*$ is part of the action of~$e=\left(\begin{smallmatrix}
0&-E\inv \\E&0 \end{smallmatrix}\right)$ on~$\SS\oplus \SS^*$.
Since~$e^2=-1$, the form~$m$ is nondegenerate, and since $e$~(graded)
commutes with $B_{n-1}(s)$, the form~$m$ is $H_{1,n-2}(s)$-invariant.  We
must prove that $m$~ is skew-symmetric.  It suffices to assume that
$\SS\oplus \SS^*$ is a simple $B_{n-1}(s)[e]$-module, since any module is a
direct sum of simples.  Then $m$~is either symmetric or skew-symmetric.  Let
$f\in \RMm\subset \Cliff_{n-2,1}\subset B_{n-1}(s)$ be the Clifford generator
with $f^2=-1$.  So $f$~ is a timelike vector, and we choose it to lie in~$C$.
Write $f=\left(\begin{smallmatrix} 0&-F\inv \\F&0 \end{smallmatrix}\right)$
for its action on~$\SS\oplus \SS^*$.  The positive definiteness of~$\Gamma $
implies that
  \begin{equation}\label{eq:157}
     (s_1,s_2)_{\SS}:=\langle Fs_1,s_2 \rangle,\qquad s_1,s_2\in \SS, 
  \end{equation}
is a positive definite inner product on~$\SS$.  The mass form is
$m(s_1,s_2)=(F\inv Es_1,s_2)_{\SS}$.  Set $A=F\inv E\in \End(\SS)$.  Since
$m$~is either symmetric or skew-symmetric, either $A^*=A$ or~$A^*=-A$, where
$*$~is with respect to the inner product~\eqref{eq:157}.  But $ef=-fe$
implies $A^2=-\id_{\SS}$, which rules out~$A^*=A$ since $A^*A$~is a
nonnegative operator. 
 
Conversely, let $m$~be a mass form.  Using the inner product~\eqref{eq:157}
write 
  \begin{equation}\label{eq:324}
     m(s_1,s_2)=(Bs_1,s_2)_{\SS}, \qquad s_1,s_2\in \SS,
  \end{equation}
for an invertible skew-symmetric operator $B\:\SS\to\SS$.  Define
$P=\sqrt{B^*B}$ and $A=P\inv B=BP\inv $.  Then set~$E=FA$ and let $e\in
B_{n-1}(s)[e]$ act on~$\SS\oplus \SS^*$ via $\left(\begin{smallmatrix}
0&-E\inv \\E&0 \end{smallmatrix}\right)$, where as above $f\in B_{n-1}(s)[e]$
acts as $\left(\begin{smallmatrix} 0&-F\inv \\F&0 \end{smallmatrix}\right)$.
We must check that this determines a well-defined action of $B_{n-1}(s)[e]$.
It is easy to verify that $e^2=-\id_{\SS\oplus \SS^*}$, and $ef=-fe$ follows
from $F\inv E=-E\inv F$, which in turn follows from $A=-A\inv $.  For later
use we observe the commutation relation $PF\inv E=F\inv EP$.  Let\footnote{We
leave the reader to give the appropriate modification for~$s=4$.} $c\in
\RMm\oplus \RR^{|s|}\subset B_{n-1}(s)$ be a vector perpendicular to~$f$, and
write its action on the module~ $\SS\oplus \SS^*$ as
$\left(\begin{smallmatrix} 0&\pm C\inv \\C&0 \end{smallmatrix}\right)$, the
sign determined according as $c^2=\pm1$ in~$B_{n-1}(s)$.  It remains to show
that $ec=-ce$ as operators on $\SS\oplus \SS^*$, or equivalently that
  \begin{equation}\label{eq:325}
     (EC\inv )^2=\pm\id_{\SS}. 
  \end{equation}
First, we use \eqref{eq:156}--\eqref{eq:324} to write 
  \begin{equation}\label{eq:326}
     m(s_1,s_2) = \langle FBs_1,s_2 \rangle= \langle EPs_1,s_2 \rangle,\qquad
     s_1,s_2\in \SS. 
  \end{equation}
Since $cf=-fc$ in~$B_{n-1}(s)$ we have $C\inv F=\pm F\inv C$.  Next, $cf\in
H_{1,n-2}(s)\subset B_{n-1}(s)$ preserves the duality pairing $\SS^*\otimes
\SS\to\RR$, from which
  \begin{equation}\label{eq:327}
     \langle CF\inv s^*,C\inv Fs \rangle=\mp\langle s^*,s \rangle ,\qquad
     s^*\in \SS^*,\quad  s\in \SS. 
  \end{equation}
Now since $m$~is $H_{1,n-2}(s)$-invariant, 
  \begin{equation}\label{eq:328}
     m(C\inv Fs_1,C\inv Fs_2) = m(s_1,s_2),\qquad s_1,s_2\in \SS. 
  \end{equation}
Use the first expression in~\eqref{eq:326} together with the previous
identities to conclude that $C\inv FB=-BC\inv F$.  It follows that $C\inv
F$~\emph{commutes} with~$P$.  Then rewrite~\eqref{eq:328} using the second
expression in~\eqref{eq:326} to deduce $FC\inv EPC\inv F=\mp EP$.  Apply the
foregoing to arrive at~\eqref{eq:325}. 
  \end{proof}

There is an abelian group law on free fermion theories: direct sum of
Clifford modules~$\SS$.  The relationship~\cite[(11.4)]{ABS},
\cite[p.~383]{A2} between Clifford modules and $K$-theory yields the
following.

  \begin{theorem}[]\label{thm:120}
 The abelian group of relativistic free fermion field theories in
dimension~$n-1$ with type~$s$, modulo those that admit a mass term, is
isomorphic to
  \begin{equation}\label{eq:143}
     KO^{n-3+s}(\pt)\cong \pi \mstrut _{3-s-n}(KO). 
  \end{equation}
  \end{theorem}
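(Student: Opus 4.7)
The plan is to recast the statement as a cokernel of restriction maps between Grothendieck groups of $\zt$-graded Clifford modules, and then invoke Atiyah--Bott--Shapiro.

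First, I would identify the monoid of free fermion theories with a Grothendieck group. By the discussion preceding Lemma~\ref{thm:119}, a positive definite symmetric pairing $\Gamma$ as in~\eqref{eq:141} canonically promotes a real spinor representation $\SS$ of $\GG{n-1}{s}$ to a $\zt$-graded $B(s)$-module structure on $\SS\oplus\SS^*$ (with $\SS$ even); conversely, every $\zt$-graded $B(s)$-module arises this way, and the space of positive definite $\Gamma$'s is convex hence contractible. Thus isomorphism classes of free fermion theories are in bijection with isomorphism classes of $\zt$-graded $B(s)$-modules, and direct sum matches on both sides. Writing $M(A)$ for the Grothendieck group of $\zt$-graded modules over a $\zt$-graded algebra $A$, the monoid of free fermion theories is the monoid underlying $M(B(s))$.

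Second, I would use Lemma~\ref{thm:119} to identify theories admitting a mass term with the image of the restriction homomorphism
\[
i^{*}\colon M(B(s)[e])\longrightarrow M(B(s)),
\]
so that the group in Theorem~\ref{thm:120} equals the cokernel of $i^{*}$. Next I would pass to ordinary Clifford algebras via Morita equivalence. By~\eqref{eq:144}, $B(s)\approx_{\text{Morita}}\Cliff_{+(n-3+s)}$ as $\zt$-graded algebras. Adjoining an odd element $e$ with $e^{2}=-1$ that graded-commutes with the Clifford generators is the same as tensoring (graded) with $\Cliff_{-1}$, so $B(s)[e]\approx\Cliff_{n-3+s,\,1}$. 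Since Morita equivalence of $\zt$-graded algebras induces an isomorphism of Grothendieck groups of $\zt$-graded modules compatible with restriction, the cokernel of $i^{*}$ equals the cokernel of the analogous restriction between Clifford module groups.

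Third, I would invoke the Atiyah--Bott--Shapiro theorem~\cite[\S11]{ABS}. In the form stated by ABS, $M_{k}/i^{*}M_{k+1}\cong KO^{-k}(\pt)$ for $C_{k}=\Cliff_{0,k}$; combining this with the periodicities $\Cliff_{p+1,q+1}\approx\Cliff_{p,q}$ and $\Cliff_{p,q+8}\approx\Cliff_{p,q}$ (graded Morita) gives, for any $k$, an isomorphism between the cokernel of restriction from $\Cliff_{k,\,1}$-modules to $\Cliff_{k,\,0}$-modules and $KO^{k}(\pt)$. Applied to $k=n-3+s$, and using $KO^{k}(\pt)\cong\pi_{-k}(KO)$, this produces the claimed isomorphism with $\pi_{3-s-n}(KO)$. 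The complex cases in Table~\eqref{eq:101} are handled by the same argument with $KO$ replaced by $K$. The main obstacle is bookkeeping: one must confirm that the direction of the restriction map and the sign convention on $e^{2}$ line up with ABS so that the answer lands in $KO^{n-3+s}$ rather than a Bott-shifted degree, and that Morita equivalence of $\zt$-graded algebras (in particular the chosen equivalence~\eqref{eq:138}) genuinely transports the restriction $i^{*}$ to ABS's restriction.
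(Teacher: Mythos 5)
Your proposal is correct and essentially reconstructs the argument the paper compresses into a single citation: the bijection between free fermion data modulo mass terms and the ABS quotient $M_k/i^{*}M_{k+1}$ via Lemma~\ref{thm:119} and the Morita equivalence~\eqref{eq:144}, followed by~\cite[\S11]{ABS}. The bookkeeping caveats you flag (degree conventions, compatibility of the Morita equivalence with the restriction map) are exactly the right things to verify, and the degree comes out as $KO^{n-3+s}(\pt)$ since $B(s)\sim C_{(3-s-n)\bmod 8}$ and $B(s)[e]\sim C_{(3-s-n)+1\bmod 8}$ in ABS's indexing.
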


\noindent 
 Massive free fermions are anomaly-free; see~\cite[\S1.2]{W1} for a recent
exposition.  So the map from a free fermion theory to the isomorphism class
of its anomaly factors through the quotient~\eqref{eq:143}.

  \begin{remark}[]\label{thm:121}
 The nature of an irreducible real twisted spin representation~$\SS_0$
depends on the value of~$t=n-1+s\pmod8$.  We ask if it is self-conjugate---if
$\SS_0^*\cong \SS_0$---and if so whether the induced nondegenerate bilinear
form $\SS_0\otimes \SS_0\to\RR$ is symmetric ($\SS_0$~orthogonal) or
skew-symmetric ($\SS_0$~symplectic).  Also, the commutant is a real
division algebra, so is isomorphic to~$\RR$, $\CC$, or~$\HH$.  We list the
types.  If $t\equiv 3,4,7$, then $\SS_0$~is symplectic, and the commutant is
$\RR,\CC,\HH$, respectively.  If $t\equiv 0,1,5$, then $\SS_0$~is orthogonal
and the commutant is~$\CC,\RR,\HH$, respectively.  If $t\equiv 2,6$, then
there are two nonisomorphic irreducible spin representations that are each
others dual; the commutant is~$\RR,\HH$, respectively.  For $t\equiv 3,4,7$
the $K$-group~\eqref{eq:143} vanishes, as it must since there is always a
mass term.  For $t\equiv 0,1$ the $K$-group is isomorphic to~$\zt$---the
direct sum of two copies of the irreducible module admits a mass term---and
for~$t\equiv 5$ it vanishes.  For~$t\equiv 2,6$ the $K$-group is isomorphic
to~$\ZZ$.  These are the cases for which the anomaly theory is not
topological.
  \end{remark}

\bigskip 
\subsubsection{The anomaly theory and its deformation
class}\label{subsubsec:8.4.5} 

Our starting point is the $B_{n-1}(s)^0$-module~$\SS$ that defines a free
fermion theory on Minkowski spacetime~$M^{n-1}$ in $(n-1)$~dimensions, as in
~\S\ref{subsubsec:8.4.4}.  In this subsection we sketch the associated
$n$-dimensional anomaly theory, an invertible field theory in $n$~dimensions.
(See~\cite{F3}, \cite[\S11]{F4} for expositions of anomalies from this
viewpoint.)  The anomaly theory is not necessarily topological, but it has a
deformation class that is topological---or which can be regarded as a
continuous invertible topological theory---and we propose a general formula
for it.  See~\cite{W1} for a discussion of many special cases from a more
physical viewpoint.
 
First, the real representation~$\SS$ of~$\GG{1,n-2}s$ extends to a complex
representation~$\SS_{\CC}$ of the complexification~$\GG{1,n-2}s(\CC)$, which
then restricts to a complex representation of~$\hh{n-1}s$.  On a curved
Riemannian manifold~$X^{n-1}$ with differential $\hh{n-1}s$-structure $P\to
X$ there is an associated complex vector bundle $P\times
\mstrut_{\hh{n-1}s}\SC\to X$ whose sections are complex spinor fields.  There
is a Wick-rotated Dirac lagrangian, possibly with mass term, which is a
skew-symmetric form on the space of spinor fields.  If $X$~is closed, then
the fermionic functional integral over the space of spinor fields is the
pfaffian of the Dirac operator on~$X$.  In a smooth family $\sX\to S$ the
pfaffian is not a function, but rather is a section of the \emph{pfaffian
line bundle}
  \begin{equation}\label{eq:145}
     \Pfaff_{\sX/S}\longrightarrow S. 
  \end{equation}
The bundle $\Pfaff_{\sX/S}\to S$ carries a canonical hermitian metric and
compatible covariant derivative; it is $\zt$-graded by the mod~2 index.  It
is part of the anomaly theory associated to the module~$\SS$.
 
We now give a conjectural description of the entire anomaly theory.
Fix~$k\in \ZZ^{\ge 0}$, which is the codimension in the $n$-dimensional
theory.  Let $X^{n-k}$ be a closed $(n-k)$-dimensional Riemannian manifold
with differential $\hh{n-k}s$-structure.  The universal Dirac operator
(\S\ref{subsubsec:8.4.3}) acts on sections of a real vector bundle $\sS'\to
X$ of left $A_{n-k}(s)\op$-modules, where $A_{n-k}(s)=\Clp{(n-k)}\otimes
D(s)$ is Morita equivalent to $\Clp{(n-k+s)}$; see~\eqref{eq:138}.  Let
$\underline{\SS\oplus \SS^*}\to X$ be the constant vector bundle with fiber
$\SS\oplus \SS^*$.  Then $\sS'\otimes \mstrut _{\RR}(\underline{\SS\oplus
\SS^*})\to X$ is a real vector bundle of $\zt$-graded $A_{n-k}(s)\op\otimes
B_{n-1}(s)$-modules.  Our conventions in~\S\ref{subsubsec:8.4.2} give a
definite Morita equivalence $ A_{n-k}(s)\op\otimes
B_{n-1}(s)\underset{\textnormal{\tiny Morita}}{\approx }\Clm{(3-k)}$.  For a
family $\sX\to S$ the geometric index of the Dirac operator\footnote{Some
details of this construction appear in \cite[Appendix]{FH2}} with
coefficients in $\sS'\otimes \mstrut _{\RR}(\underline{\SS\oplus \SS^*})$
lies in the differential cohomology group~$\widehat{KO}^{-(3-k)}(S)$.  Notice
that it is independent of~$n$ and~$s$.  The anomaly picks off the lowest
piece of the index via the canonical \emph{Pfaffian homomorphism}
  \begin{equation}\label{eq:146}
     \Pfaff\:\widehat{KO}^{-(3-k)}(S)\longrightarrow \widehat{\IZ}^{1+k}(S). 
  \end{equation}
The invariants in differential~ $\IZ$ fit together into an invertible field
theory; see~\cite{HS}.

  \begin{example}[]\label{thm:122}
 For $k=0$, so $\sX\to S$ of relative dimension~$n$, there is an isomorphism
$\widehat{\IZ}^1(S)\cong \widehat{H}^1(S)\cong \Map(S,\TT)$.  The
corresponding lowest piece of the index is the partition function~$e^{2\pi
i(\xi /2)}$ of the anomaly theory on an $n$-manifold, where $\xi $~is the
Atiyah-Patodi-Singer invariant~\cite{APS}.  The division by~2 is due to the
skew-symmetry of the Dirac form, the same division by~2 that passes from
determinant to pfaffian.  The equality between the exponentiated $\xi
$-invariant and the integral in differential $K$-theory has only been proved
in a basic case~\cite{Klo,O,BuS,FL} as far as we know.
  \end{example}

  \begin{example}[]\label{thm:123}
 For $k=1$, so $\sX\to S$ of relative dimension~$n-1$, the group
$\widehat{\IZ}^2(S)$ is isomorphic to the group of isomorphism classes of
$\zt$-graded hermitian line bundles $L\to S$ with compatible covariant
derivative.  For the anomaly theory that element is the pfaffian line bundle
$\Pfaff_{\sX/S}\to S$.  The main theorem in~\cite{DF} is the gluing law in the
\emph{non-extended} invertible field theory in dimensions~$n-1,n$ with
partition function the exponentiated $\xi $-invariant. 
  \end{example}

The story continues to lower dimensional manifolds, on which the invariants
are graded gerbes~\cite{Lo,Bu} and higher analogs. 
 
The deformation class of an invertible field theory gotten from integration
in differential cohomology is the underlying topological cohomology theory.
In the background are techniques from~\cite{HS}, which lead to the
following.

  \begin{conjecture}[]\label{thm:124}
 Fix a type~$s$ in Table~\eqref{eq:100} and a dimension~$n$.  Fix an
isomorphism class of free fermion theories modulo those that admit a mass
term, i.e., an element $[\SS]\in \pi \mstrut _{3-s-n}(KO)$.  Then the
deformation class of the $n$-dimensional anomaly theory is the homotopy class
of the composition
  \begin{equation}\label{eq:148}
   MTH(s)\xrightarrow{\;\;\phi \wedge [\SS]\;\;} \Sigma ^{-s}KO\wedge \Sigma
   ^{-3+s+n}KO \xrightarrow{\;\;\mu \;\;}\Sigma
   ^{n-3}KO\xrightarrow{\;\;\Pfaff\;\;} \SIZ,
  \end{equation}
where $\phi $~is the Atiyah-Bott-Shapiro map~\eqref{eq:108}, $\mu $~is
multiplication in the ring spectrum~$KO$, and $\Pfaff$~is the topological
version of~\eqref{eq:146}. 
  \end{conjecture}

\noindent
 There is a similar conjecture in the complex case~\eqref{eq:101} with the
usual replacements $H\to H^c$ and $KO\to K$.  We hope to address this
conjecture in the future.  We use it in our computations below.

  \begin{remark}[]\label{thm:125}
 If the group~$\pi _{3-s-n}(KO)$ is finite, hence is isomorphic to~$\zt$,
then there is a reflection positive invertible \emph{topological} field
theory in the deformation class whose partition function is the mod~2 index.
If the group is free cyclic, hence isomorphic to~$\ZZ$, then the deformation
class is represented by a reflection positive invertible field theory whose
partition function is the exponentiated $\xi $-invariant of
Atiyah-Patodi-Singer, the secondary invariant for a $\ZZ$-valued topological
index in $n+1$~dimensions.  This is the case in which there are local
anomalies as well as global anomalies, and because of the shift~$s$ it
happens in both even and odd dimensions.
  \end{remark}

\bigskip \subsubsection{Massive free fermion theories}\label{subsubsec:8.4.6}

In~\S\ref{subsubsec:8.4.5} we explained how a free fermion theory in
$(n-1)$~dimensions has an associated $n$-dimensional invertible anomaly
theory, and Conjecture~\ref{thm:124} states its deformation class.  Here we
show that a second scenario leading to invertible $n$-dimensional theories
has the same starting data.  \emph{This} is the scenario we apply
in~\S\ref{subsec:8.2}.  Namely, begin with a \emph{massive} free fermion
theory in $n$~dimensions.  Because the theory has a mass gap its long-range
physics is described by a field theory, which naturally is also
$n$-dimensional.  As argued in~\S\ref{subsec:4.4} we expect that theory to
be, at least locally, the product of a topological theory and an invertible
theory.  But a massive free fermion theory has a unique vacuum on each
spatial manifold---the vacuum in the fermionic Fock space---so in fact the
long-range effective theory is invertible.

  \begin{remark}[]\label{thm:190}
 One must make choices to define the massive free fermion theory, and they
can be summarized as a trivialization of an anomaly; see~\cite[\S11]{F4} for
a general discussion.  There is a canonical choice for each fixed mass, and
it is implicitly used in the discussion below as well as
in~\S\ref{subsec:8.2}.  However, when the mass is a not necessarily constant
function then there is an anomaly; see~\cite{CFLS} for discussion and
details.
  \end{remark}

As in previous sections fix a type~$s$ in Table~\eqref{eq:100} and let $\GG
{1,n-1}s$~be the Lorentz signature anti-Wick rotation of the corresponding
group~$\hh ns$.  In the notation of~\eqref{eq:140} there is an embedding $\GG
{1,n-1}s\hookrightarrow B_{n-1}(s)[e']$, where $e'$~is an extra Clifford
generator with~$(e')^2=+1$.  By Lemma~\ref{thm:119} spinor representations
of~$\GG {1,n-1}s$ that admit a mass term are in bijection with super modules
over the superalgebra $B_{n-1}(s)[e',e]$, where $e$~is an extra Clifford
generator with~$e^2=-1$.  Observe that $B_{n-1}(s)[e',e]$~is Morita
equivalent to~$\Clp{(n-3+s)}$.  We speculate that
  \begin{equation}\label{eq:149}
     \textnormal{\vtop{\hbox{(i)~the resulting low energy theory is trivial
     if the 
     $B_{n-1}(s)[e',e]$-module is}\hbox{\quad\, extended to a module over the
     algebra 
     $B_{n-1}(s)[e',e,f]$ with $f^2=-1$.}}} 
  \end{equation}
The group of equivalence classes of $B_{n-1}(s)[e,f]$-modules modulo those
that extend is the $K$-group~\eqref{eq:143}.  The Morita equivalence to
massless theories in dimension~$n-1$ and the vanishing of the anomaly for
theories that admit a mass term are evidence in favor of~\eqref{eq:149}.
Furthermore, we speculate that
  \begin{equation}\label{eq:150}
     \textnormal{(ii)~the low energy theory is
     invertible and its deformation class
     is~\eqref{eq:148}.}  
  \end{equation}
As some evidence supporting~(ii) we point out that the partition function in
special cases is computed in~\cite[\S2.1.6, \S2.2.3, \S3.4, \S4.3, \S5]{W1}.
The universal part of the partition function of the low energy theory is an
exponentiated $\xi $-invariant, as in Example~\ref{thm:122}.

  \subsection{Phases of topological insulators and topological
  superconductors}\label{subsec:8.2}

We apply Conjecture~\ref{thm:143} to compute possible topological phases for
each of the 10~fermionic symmetry types~\eqref{eq:101} and~\eqref{eq:100}.
We remind that the fermionic symmetry groups with~$K=\TT$ pertain to
{topological insulators}; those with $K=\pmo$ and $K=SU_2$ pertain to
{topological superconductors}.  The abelian group of topological
phases---that is, the group of deformation classes of reflection positive
invertible topological field theories with symmetry group~$H$ in $n$
spacetime dimensions---is
  \begin{equation}\label{eq:111}
     \TP nH:=[MTH,\Sigma ^{n+1}\IZ].
  \end{equation}
It may be computed from the homotopy groups\footnote{These are Thom's bordism
groups, but for the perpendicular tangential structure on the stable normal
bundle (see footnote~\footref{perp}).  Note that $\Pp/\Pm$ and
$\Pcp/\Pcm$~exchange when passing from tangential to normal.}~$\pi _qMTH$;
see the universal property~\eqref{eq:62}.  As we are only interested in~$n\le
5$, we need only compute for~$q\le6$, and for~$q=6$ we only need to know
$\pi_6MTH/\textnormal{torsion}$, since that determines $\Hom(\pi _6MTH,\ZZ)$.
The abelian group~$\TP nH$ classifies deformation classes of
\emph{interacting} theories.  The abelian group of deformation classes of
massive (gapped) \emph{free} fermion theories in $n$~dimensions modulo those
with trivial long-range effective theory is given by Lemma~\ref{thm:119}
and~\eqref{eq:149}, at least conjecturally:
  \begin{equation}\label{eq:112}
      \FFF n{H(s)}:=\begin{cases} \pi \mstrut
      _{3-s-n}(K)\phantom{O}\cong [\Sigma 
      ^{-s}K,\SIZ], &\textnormal{$H^c(s)$ a
      \emph{complex} symmetry type}; \\ \pi \mstrut
      _{3-s-n}(KO)\cong  [\Sigma ^{-s}KO,\SIZ] 
      ,&\textnormal{$H(s)$ a \emph{real}  symmetry
      type},\end{cases}
  \end{equation}
where $s$~is the parameter in~\eqref{eq:101} or~\eqref{eq:100}.  (See
Remark~\ref{thm:121} for an enumeration of the $K$-theory groups in the real
case via the types of spin representation.)  According to \eqref{eq:150}
and~\eqref{eq:148}  the natural homomorphism  
  \begin{equation}\label{eq:110}
     \Phi \:\FFF nH\longrightarrow \TP nH
  \end{equation}
from the group of deformation classes of free fermion theories to the group
of all theories is the product with the ABS map~\eqref{eq:108}.  We
compute~$\Phi $ for each symmetry class.
 
The results are organized by internal symmetry group.  Some of the bordism
groups appear in the mathematics literature, whereas for the more exotic
symmetry groups the computations are new.  With the bordism groups in hand,
the classification of interacting theories is an immediate consequence of
Conjecture~\ref{thm:143} and the universal property expressed in the short
exact sequence~\eqref{eq:62}.  The free fermion computation
is~\eqref{eq:143}.  The map~ \eqref{eq:110} from massive free fermion phases
to interacting phases does not follow from the rest---it must also be
computed.  We give a uniform treatment based on Lemma~\ref{thm:90}
and~\S\ref{subsubsec:8.4.2}.  Manifold generators and formulas for partition
functions in 4~dimensions are worked out in~\cite{GPW}.

We check our computations against the condensed matter literature, where
groups of SPT phases are deduced using very different arguments.  There is
almost total agreement, and in the few places we differ we use the homotopy
computations to predict what should happen in the physics.  The computations
that we did not find in the physics literature should be considered
predictions.

 \bigskip
 \subsubsection{Internal symmetry group $K=\pmo$} 
 The symmetry groups are classified in Proposition~\ref{thm:65}.  The low
degree spin and pin bordism groups are described in a geometric way
in~\cite{KT1}.  The general structure of spin bordism is elucidated
in~\cite{ABP1}.  The computation of pin bordism groups in all degrees may be
found in~\cite{ABP2} and~\cite{KT2}.

  \begin{theorem}[]\label{thm:69}
  The low degree bordism groups for~$K=\pmo$ are:
  \begin{equation}\label{eq:84}
     \begin{tabular}{ c@{\hspace{2em}} c@{\hspace{2em}} c@{\hspace{2em}}
     c@{\hspace{2em}} } 
     \toprule q&$\pi _qMT\!\Spin$&$\pi _qMT\!\Pp$&$\pi _qMT\!\Pm$\\
     \midrule\\[-8pt] 
      $6$ &$0$&$0$ &$\zmod{16}$ \\ [3pt]
      $5$ &$0$&$0$ &$0$ \\ [3pt]
      $4$ &$\ZZ$&$\zmod{16}$ &$0$ \\ [3pt]
      $3$ &$0$&$\zt$ &$0$ \\ [3pt]
      $2$ &$\zt$&$\zmod2$ & $\zmod8$ \\ [3pt]
      $1$ &$\zt$&$0$ &$\zt$ \\ [3pt]
      $0$ &$\ZZ$&$\zt$ &$\zt$ \\ [3pt]
     \bottomrule \end{tabular} 
  \end{equation}
  \end{theorem}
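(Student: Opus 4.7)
The plan is to identify each Madsen--Tillmann spectrum with a classical Thom spectrum and then appeal to standard low-dimensional spin and pin bordism calculations, set up uniformly via the Adams spectral sequence machinery developed in~\S\ref{sec:13}.

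First I would use the identification $MTH \simeq MH^\perp$ recalled after~\eqref{eq:90}, together with $\Spin^\perp = \Spin$ and $(\Ppm)^\perp = \Pmp$, so that the three columns compute $\pi_q M\Spin$, $\pi_q M\Pm$, and $\pi_q M\Pp$ respectively --- equivalently, the bordism groups of manifolds with tangential $\Spin$-, $\Pp$-, and $\Pm$-structure. For the spin column the result is Milnor's, efficiently packaged by the Anderson--Brown--Peterson splitting: through dimension~$7$, $M\Spin$ is $2$-locally a wedge of suspensions of $ko$ with no odd-primary contributions in this range, and the entries are read off from $\pi_\ast ko = \ZZ, \zt, \zt, 0, \ZZ, 0, \ldots$.

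For the two pin columns I would apply the Adams spectral sequence at the prime~$2$ using the $\mathcal{A}(1)$-module structure on $H^\ast(M\Ppm;\FF_2)$, as in~\S\ref{sec:13}; in the range $q \le 5$ the $E_2$-term is small enough to be read off a short Ext chart and differentials are forced to vanish for degree reasons. Equivalently one may invoke the ABP structure theorems for $M\Ppm$, which split these spectra off spin-bordism with coefficients in stunted projective spectra over $B\ZZ/2$, and combine with the Kirby--Taylor geometric identification of generators (the Klein bottle for $\pi_2 MT\Pp$, $\rp^2$ for $\pi_2 MT\Pm$, $\rp^4$ for $\pi_4 MT\Pp$). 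Odd primes contribute nothing in this range because $M\Ppm\otimes\QQ \simeq MO\otimes\QQ = 0$ and the odd torsion of $MSO$ vanishes below degree~$8$.

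Finiteness of $\pi_5 MTH$ in all three cases follows because each spectrum has finite type and vanishing rationalization in degree~$5$: for $M\Spin$ because $\pi_\ast MSO\otimes\QQ$ is a polynomial algebra on generators in degrees divisible by~$4$, and for $M\Ppm$ because $M\Ppm\otimes\QQ = 0$. The main obstacle I anticipate is resolving the $\zmod{16}$ extension in $\pi_4 MT\Pp$: the Adams $E_2$-page bounds this group only by a $2$-group of order $16$, and exhibiting the extension as cyclic --- rather than, say, $\zt \oplus \zmod 8$ --- requires a geometric input, typically the Brown--Kervaire--Gilkey $\alpha$-invariant on $\rp^4$ (with its pin-plus structure), computed as the $\eta$-invariant of the twisted Dirac operator, which must be verified to take the value $1 \bmod 16$ on the generator.
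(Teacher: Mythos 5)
Your proposal is correct and follows essentially the same route as the paper, which establishes the theorem by citing the classical computations (Milnor and Anderson--Brown--Peterson for spin; Anderson--Brown--Peterson and Kirby--Taylor for pin) and then verifying them in~\S\ref{sec:13} via the $2$-complete Adams spectral sequence over $\mathcal{A}(1)$, using the identifications $MT\Spin \simeq M\spin$, $MT\Pp \simeq \Sigma\, M\spin\wedge MTO_{1}$, and $MT\Pm \simeq \Sigma^{-1}M\spin\wedge MO_{1}$ (equivalent to your $MTH = MH^{\perp}$ bookkeeping). One small refinement to your stated concern: the $\zmod{16}$ in $\pi_{4}MT\Pp$ is already forced by the Adams $E_{2}$-chart, since the relevant $\ext_{A_{1}}$ group in stem~$4$ is an $h_{0}$-tower of height~$4$ with no auxiliary summand, so Adams filtration alone resolves the extension; the $\eta$-invariant of $\RP^4$ then serves to identify a geometric generator, not to settle the group structure.
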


  \begin{corollary}[Symmetry class D]\label{thm:70}
   The groups of deformation classes of free fermion theories and of reflection
positive invertible theories with symmetry group~$\Spin$ are isomorphic to:
  \begin{equation}\label{eq:85}
     \begin{tabular}{ c@{\hspace{2.5em}} c@{\hspace{4.25em}} 
	c@{\hspace{4.6em}} c@{\hspace{2.75em}} c}
     \toprule n&\multicolumn{4}{l}{\hspace{-1.2em}$\ker\Phi \longrightarrow
     \FFF n{\Spin}\xrightarrow{\;\;\Phi \;\;} \TP n{\Spin}\longrightarrow
     \coker\Phi$} \\
     \midrule\\[-8pt] 
      $5$ &$0$&$0$&$0$ &$0$ \\ [3pt]
      $4$ &$0$&$0$&$0$ &$0$ \\ [3pt]
      $3$ &$0$&$\ZZ$&$\ZZ$ &$0$ \\ [3pt]
      $2$ &$0$&$\zt$&$\zt$ & $0$ \\ [3pt]
      $1$ &$0$ &$\zt$ &$\zt$ &$0$ \\ [3pt]
      $0$ &$0$&$0$&$0$ &$0$ \\ [3pt]
     \bottomrule \end{tabular} 
  \end{equation}
  \end{corollary}

  \begin{lnote*}[]\label{thm:127}
 The groups~$\TP1{\Spin}$ and~$\TP2{\Spin}$ were computed by the ``group
super-cohomology theory'' in~\cite{GW}; see Table~II.  That theory is a
2-stage Postnikov truncation of~$\IZ$, so in general only computes a subgroup
of topological phases; it is the entire group in very low dimensions.  The
interacting classification~$\TP n{\Spin}$ appears in~\cite{QHZ}: see~\S IIA
for~$n=3$, \S IID for~$n=2$, and \S IIE for~$n=1$.  The group~$\TP3{\Spin}$
is discussed in~\cite[\S V A]{LV}, but their restriction to ``non-chiral''
phases means that the $E_8$~phases that generate~$\TP3{\Spin}$ were not
accounted for.  All of the groups in the table, but not the map from free
fermions to interacting theories, appear in~\cite{KTTW}.  Those authors
conjecture a cobordism classification of interacting fermionic SPT phases.
  \end{lnote*}

  \begin{proof}
 That $\Phi $~is an isomorphism in low dimensions follows since the ABS map
$\MSpin\to KO$ induces an isomorphism on~$\pi _{\le 7}$.
  \end{proof}

In the next example we meet a nontrivial kernel of~$\Phi $, which is to say
free fermion phases that become trivial when interactions are allowed.

  \begin{corollary}[Symmetry class DIII]\label{thm:93}
   The groups of deformation classes of free fermion theories and of reflection
positive invertible theories with symmetry group~$\Pp$ are isomorphic to:  
  \begin{equation}\label{eq:113}
     \begin{tabular}{ c@{\hspace{2.5em}} c@{\hspace{3.75em}} 
	c@{\hspace{4.6em}} c@{\hspace{2.75em}} c}
     \toprule n&\multicolumn{4}{l}{\hspace{-.7em}$\ker\Phi \longrightarrow
     \FFF n{\Pp}\xrightarrow{\;\;\Phi \;\;} \TP n{\Pp}\longrightarrow
     \coker\Phi$} \\
     \midrule\\[-8pt] 
      $5$ &$0$&$0$&$0$ &$0$ \\ [3pt]
      $4$ &$16\ZZ$&$\ZZ$&$\zmod{16}$ &$0$ \\ [3pt]
      $3$ &$0$&$\zt$&$\zt$ &$0$ \\ [3pt]
      $2$ &$0$&$\zt$&$\zt$ & $0$ \\ [3pt]
      $1$ &$0$ &$0$ &$0$ &$0$ \\ [3pt]
      $0$ &$2\ZZ$&$\ZZ$&$\zt$ &$0$ \\ [3pt]
     \bottomrule \end{tabular} 
  \end{equation}
  \end{corollary}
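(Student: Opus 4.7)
My plan is to obtain each column of~\eqref{eq:113} from machinery already developed in the paper, then to identify $\Phi$ using the Atiyah-Bott-Shapiro map.

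First I would compute the third column.  By Conjecture~\ref{thm:143} and the universal short exact sequence~\eqref{eq:62} one has
\[
0 \to \Ext^1(\pi_n MT\Pp, \ZZ) \to \TP{n}{\Pp} \to \Hom(\pi_{n+1} MT\Pp, \ZZ) \to 0.
\]
Theorem~\ref{thm:69} records $\pi_q MT\Pp$ for $q\le 4$ and asserts $\pi_5 MT\Pp$ is finite, so for every $n\in\{0,1,2,3,4\}$ the quotient piece vanishes and $\TP{n}{\Pp} = \Ext^1(\pi_n MT\Pp, \ZZ)$; reading off the table gives $\zt,0,\zt,\zt,\zmod{16}$.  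The second column is immediate from~\eqref{eq:112}: for $\Pp$ the type is $s=-1$, so $\FFF{n}{\Pp} = \pi_{4-n}(KO)$, which by Bott periodicity is $\ZZ,0,\zt,\zt,\ZZ$.

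Next, by Conjecture~\ref{thm:124}, the map $\Phi$ is precomposition with the ABS map $\phi\:MT\Pp\to \Sigma KO$ constructed in~\S\ref{subsubsec:8.4.2}, and functorially $\phi^*$ fits into a morphism of Anderson-dual sequences whose Ext-piece is dual to $\phi_*\:\pi_n MT\Pp\to\pi_{n-1}KO$.  The case $n=1$ is trivial since both groups vanish.  For $n=2,3$ both groups are $\zt$, and I would verify that $\phi_*\:\pi_n MT\Pp\to \pi_{n-1}KO$ is an isomorphism of $\zt$'s.  At $n=2$ this is standard: the generator of $\Omega^{\Pp}_2$ is $T^2$ with its bounding pin structure (or equivalently the Klein bottle), and $\phi_*$ computes the Arf invariant, which generates $\pi_1 KO=\zt$.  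At $n=3$ the identification comes from Adams spectral sequence input for $M\!\Pm\simeq MT\Pp$ (as used in~\S\ref{sec:13}), or from the Anderson--Brown--Peterson--style splitting; either argument shows that $\pi_3 MT\Pp$ is detected by a $KO$-class.

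The remaining cases $n=0$ and $n=4$ are the only ones where $\FFF{n}{\Pp}=\ZZ$; here side-column information is insufficient and one must compute $\Phi$ on the generator.  For $n=4$ the generator of $\pi_4 KO=\ZZ$ dualizes to the Dirac index; pulling back along $\phi$ produces the mod-$16$ invariant on pin$^+$ 4-manifolds from the Atiyah-Patodi-Singer $\eta$-invariant of the twisted Dirac operator, which is the classical Atiyah-Bott-Shapiro--Brown generator of $\Ext^1(\zmod{16},\ZZ)=\zmod{16}$ (see~\cite{ABS,KT2}, and compare Example~\ref{thm:87}).  Hence $\Phi\:\ZZ\to\zmod{16}$ is surjective with kernel~$16\ZZ$.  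For $n=0$ an analogous but elementary calculation identifies $\Phi$ with the mod-$2$ reduction of the rank of the underlying Clifford module, with kernel~$2\ZZ$.  All cokernels are then forced, completing~\eqref{eq:113}.

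The main obstacle will be pinning down $\Phi(1)\in\zmod{16}$ as a generator rather than as an arbitrary multiple.  I expect the cleanest route is to lift the ABS map to differential $KO$-theory, invoke the Anderson self-duality equivalence $I_{\ZZ}KO\simeq\Sigma^{-4}KO$, and then identify the resulting secondary invariant on a $\Pp$-manifold with the $\eta$-invariant of the universal Dirac operator; alternatively, evaluating on $\RP^4$ with its two distinct $\Pp$-structures (which differ by the generator of $\zmod{16}$) gives a more concrete but manifold-dependent verification.
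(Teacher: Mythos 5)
Your decomposition into columns is sound and matches the paper: the second and third columns follow from~\eqref{eq:112} and~\eqref{eq:62} exactly as you describe, and you correctly observe that for $\Pp$ the type is $s=-1$ so $\FFF n{\Pp}\cong\pi_{4-n}(KO)$.  Your identification of the $\Ext^1$ piece of $\Phi$ with the Pontrjagin dual of $\phi_*\colon\pi_n MT\Pp\to\pi_{n-1}KO$ is also correct, as is the observation that for $n=0,4$ (where $\FFF n{\Pp}=\ZZ$ lives in the $\Hom$ factor rather than the $\Ext^1$ factor) a genuinely different argument is required.  Where you part company with the paper is in how the map $\Phi$ is actually computed: the paper does this uniformly via the Adams spectral sequence and the reductions in~\S\ref{sec:13} (the identification $\Sigma^{-1}MT\Pp\simeq M\Spin\wedge MTO_1$ from~\eqref{eq:m6} together with Proposition~\ref{thm:m3} pins down the twisted ABS map, and the $\ext_{A_1}$-charts in Figure~\ref{fig:m1} give all maps at once), whereas you propose to evaluate $\phi_*$ on explicit generators and, for the $\ZZ\to\zmod{16}$ entry, invoke the APS $\eta$-invariant.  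Your route is more geometric and gives the eta-invariant interpretation directly, which is genuinely useful; the paper's route gives a single computation covering all ten symmetry groups at once.

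There are two concrete issues.  First, your generator for $\pi_2 MT\Pp=\Omega^{\Pp}_2$ is wrong as stated: a bounding structure by definition represents $0$, so ``$T^2$ with its bounding pin structure'' cannot generate, and $T^2$ and the Klein bottle are not the same manifold.  The generator is either the Klein bottle with a suitable $\Pp$-structure, or (if $T^2$ is to appear) $T^2$ with the \emph{nonbounding} (Lie group) spin structure, and the detecting invariant is the mod-$2$ index of the $\Cliff_{-1}$-twisted Dirac operator --- which is not literally the Arf--Brown--Kervaire invariant relevant to $\Pm$.  Second, for $n=4$ and $n=0$ the heavy lifting (that the $\Pp$-twisted Dirac $\eta$-invariant of $\RP^4$ is a generator of $\zmod{16}$, and the analogous rank-mod-$2$ identification in degree $0$) is asserted rather than proved; these are classical facts traceable to~\cite{KT2} and Stolz, but since your whole approach consists of replacing the Adams SS by such explicit evaluations, those are exactly the steps that would need to be nailed down.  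As written your argument establishes the table modulo those citations, whereas the paper's Adams SS computation is self-contained.
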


  \begin{lnote*}[]\label{thm:126}
 There are many arguments in the physics literature that 16~copies of the
basic free fermion theory in 4~dimensions has a trivial phase once
interactions are allowed, and that this does not occur with fewer copies.
(As noted in Remark~\ref{thm:144}, the group~$\TP4{\Pp}$ is torsion, hence
\emph{a priori} some multiple of the free theory necessarily becomes trivial
once interactions are allowed.)  A sample includes \cite{K1,FCV,WS,MFCV,K4}
and \cite[\S4]{W1}.  The interacting case in 3~dimensions is investigated
in~\cite[\S3]{W1}, and various aspects of the invertible field theory are
described explicitly.  It is also discussed in~\cite[\S V B]{LV}, but the
nonzero element is missed within the ``$K$-formalism'' as the authors
explain.  The groups~$TP_n(\Pp)$ as computed here also appear
in~\cite[Table~2]{KTTW}.
  \end{lnote*}

  \begin{corollary}[Symmetry class BDI]\label{thm:94}
 The groups of deformation classes of free fermion theories and of reflection
positive invertible theories with symmetry group~$\Pm$ are isomorphic to:
  \begin{equation}\label{eq:114}
     \begin{tabular}{ c@{\hspace{2.5em}} c@{\hspace{4.25em}} 
	c@{\hspace{4.6em}} c@{\hspace{2.75em}} c}
     \toprule n&\multicolumn{4}{l}{\hspace{-1.2em}$\ker\Phi \longrightarrow
     \FFF n{\Pm}\xrightarrow{\;\;\Phi \;\;} \TP n{\Pm}\longrightarrow
     \coker\Phi$} \\
     \midrule\\[-8pt] 
      $5$ &$0$&$0$&$0$ &$0$ \\ [3pt]
      $4$ &$0$&$0$&$0$ &$0$ \\ [3pt]
      $3$ &$0$&$0$&$0$ &$0$ \\ [3pt]
      $2$ &$8\ZZ$&$\ZZ$&$\zmod8$ & $0$ \\ [3pt]
      $1$ &$0$ &$\zt$ &$\zt$ &$0$ \\ [3pt]
      $0$ &$0$&$\zt$&$\zt$ &$0$ \\ [3pt]
     \bottomrule \end{tabular} 
  \end{equation}
  \end{corollary}

  \begin{lnote*}[]\label{thm:128}
 The breaking of the $\ZZ$~classification of free fermions in 2~spacetime
dimensions to the $\zmod8$~classification of interacting fermions is treated
in~\cite{FK1,FK2,TPB,YWOX}, and~\cite[\S5]{W1}.  The groups~$\TP n{\Pm}$,
$n=1,2$, are computed by the group super-cohomology in~\cite[Table~II]{GW}.
The vanishing of~$\TP3{\Pm}$ is argued in~\cite[\S V B]{LV}.  The
groups~$TP_n(\Pm)$ as computed here also appear in~\cite[Table~2]{KTTW}.

  \end{lnote*}

 \bigskip
 \subsubsection{Internal symmetry group $K=\TT$} The symmetry groups are
classified in Proposition~\ref{thm:60}.  $\Spin^c$~bordism groups are
computed in~\cite{ABP1}; cf.~\cite[Chapter~XI]{Sto}.  $\Pin^c$~bordism groups
are computed in~\cite{BG}.  The twisted $\Pin^c$ bordism computations are
new. 

  \begin{theorem}[]\label{thm:63}
 The low degree bordism groups for $K=\TT$ are:
  \begin{equation}\label{eq:74}
     \begin{tabular}{ c@{\hspace{2em}} c@{\hspace{2em}} c@{\hspace{2em}}
     c@{\hspace{2em}} c@{\hspace{2em}}} 
     \toprule q&$\pi _qMT\!\Spin^c$&$\pi _qMT\!\Pc$&$\pi _qMT\!\Pcp$&$\pi
     _qMT\!\Pcm$\\ 
     \midrule\\[-8pt] 
      $6$ &$\ZZ^2$&$\zmod{16}\times \zmod4$ &$\ZZ^2\times \zt$&$\ZZ^2\times
     \zt$ \\ [3pt] 
      $5$ &$0$&$0$ &$0$&$0$ \\ [3pt]
      $4$ &$\ZZ^2$&$\zmod8\times \zt$ &$(\zt)^3$ &$\zt$\\ [3pt]
      $3$ &$0$&$0$ &$\zt$ &$0$\\ [3pt]
      $2$ &$\ZZ$&$\zmod4$ & $\ZZ$ &$\ZZ\times \zt$\\ [3pt]
      $1$ &$0$&$0$ &$0$ &$0$\\ [3pt]
      $0$ &$\ZZ$&$\zt$ &$\zt$ &$\zt$\\ [3pt]
     \bottomrule \end{tabular} 
  \end{equation}
  \end{theorem}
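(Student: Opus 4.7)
The plan is to compute each of the four columns separately and then derive the finiteness of $\pi _5MTH$ from a uniform rational argument. For $H=\Spin^c$ I would cite the classical computation of Anderson-Brown-Peterson~\cite{ABP1} (cf.~\cite[Chapter~XI]{Sto}): the 2-local splitting of $M\Spin^c$ as a wedge of~$BP$ and Eilenberg-MacLane summands, together with the rational splitting, gives the listed groups $\pi _qMT\Spin^c$ immediately in the range $q\le 4$ (e.g.\ $\pi _4\cong \ZZ^2$ is detected by the signature and by~$c_1^2$). For $H=\Pc$, I would use the Bahri-Gilkey computation~\cite{BG}, which is based on Smith-homomorphism relations with $\Spin^c$~bordism; alternatively, the fibration $B\Spin^c\to B\Pc\to B\zt$ and a short Adams spectral sequence argument recovers the same answer.

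The main technical work concerns $\Pcp$ and~$\Pcm$. I would run the Adams spectral sequence at the prime~2, and the key input is an identification of $H^*(MT\Pcpm;\FF_2)$ as a module over the mod-2 Steenrod algebra. Since $\Pcpm\cong \Ppm\ltimes\TT/\pmo$ has identity component $\Spin^c$ (as a subgroup of index two), the fibration $B\TT\to B\Pcpm\to B\Ppm$, combined with the twisted Thom isomorphism, reduces the computation of $H^*(MT\Pcpm;\FF_2)$ to twisted cohomology of Thom spectra over $B\Ppm$ in which the twist records the inversion action of $\pi _0\Ppm$ on~$\TT$. The relevant cohomology modules are small in the needed range and the resulting $E_2$~charts are displayed in~\S\ref{sec:13}.

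With the $E_2$~page in hand, the Adams differentials $d_r$ with $r\ge 2$ in bidegree $(t-s,s)$ with $t-s\le 5$ are forced to vanish on degree grounds, leaving only a handful of multiplicative extensions to resolve. These I would resolve by exhibiting explicit geometric representatives and invariants: the $\ZZ$ summands in $\pi _2MT\Pcp$ and $\pi _2MT\Pcm$ are detected by the first Chern number of the associated $\TT$-bundle; the $\zmod4$ class in $\pi _2MT\Pc$ is detected by a secondary Chern-Pontrjagin invariant; the $\zmod8$ factor in $\pi _4MT\Pc$ by a mod-8 index of the $\Pin^c$~Dirac operator constructed via~\S\ref{subsubsec:8.4.3}; and the various~$\zt$ factors in~$\pi _4$ by Stiefel-Whitney characteristic numbers.

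Finally, for the finiteness of $\pi _5MTH$, the Thom isomorphism gives a canonical isomorphism $\pi _*MTH\otimes\QQ\cong H_*(BH;\QQ)$, so it suffices to show $H_5(BH;\QQ)=0$. The identity components of the four groups are~$\Spin^c$ and its quotient $\Spin^c/\pmo$; in both cases the rational cohomology of the classifying space is a polynomial algebra on classes in even degrees (the first Chern class of the $\TT$-factor and the Pontrjagin classes), so all odd rational cohomology vanishes. For the full groups $\Pc,\Pcp,\Pcm$ the rational cohomology is the invariants under the finite component group~$\pi _0$, which remains concentrated in even degrees. This yields $H_5(BH;\QQ)=0$ uniformly. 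The hardest step will be step two: pinning down a clean Steenrod-module description of $H^*(MT\Pcpm;\FF_2)$, because the inversion twist on~$\TT$ introduces Stiefel-Whitney corrections to the twisted Thom isomorphism that must be tracked carefully before the Adams spectral sequence becomes tractable; this is precisely the bookkeeping deferred to~\S\ref{sec:13}.
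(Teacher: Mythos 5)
Your treatment of the $\Spin^c$ and $\Pin^c$ columns (citing Anderson--Brown--Peterson and Bahri--Gilkey) matches the paper, and the rational argument for finiteness of $\pi _5MTH$ is correct, though note that $\Pc$, $\Pcp$, $\Pcm$ all have identity component $\Spin^c_n$ itself (Proposition~\ref{thm:60}), not a quotient of it --- the conclusion that $H^*(BH;\QQ)$ is concentrated in even degrees survives, but the phrasing is off.

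The real gap is in the $\Pcpm$ columns. The fibration you invoke, $B\TT\to B\Pcpm\to B\Ppm$, does not exist: in $\Pcpm_n=\Ppm_n\ltimes\TT/\pmo$ the normal torus has quotient $O_n$, not $\Ppm_n$, and there is no surjection $\Pcpm_n\to\Ppm_n$ at all (the central $\pmo$ you divided by mixes the two factors). Even after correcting the base to $BO_n$ you would be left staring at a nontrivial Serre spectral sequence and a twisted Thom isomorphism with no visible way to bring the Anderson--Brown--Peterson splitting of $H^*\MSpin$ over $A_1$ to bear. What your outline is missing is the structural input the paper actually relies on: the pullback squares~\eqref{eq:m166}--\eqref{eq:m168}, repackaged as~\eqref{eq:m2}--\eqref{eq:m3}, exhibit $B\tH(\pm 2)\simeq B\Spin\times BO_2$ over $BO$, which immediately gives the smash-product decompositions $MT\Pcp\simeq\Sigma^{2}\MSpin\wedge MTO_2$ and $MT\Pcm\simeq\Sigma^{-2}\MSpin\wedge MO_2$ of~\eqref{eq:m5}--\eqref{eq:m6}. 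This is what collapses the Adams $E_2$-page in the range $t-s<8$ to $\ext_{A_1}^{s,t}$ of a small, explicit $A_1$-module built from $H^*MTO_2$ or $H^*MO_2$, making the charts in Figure~\ref{fig:m1} computable by hand. Without that identification your program stalls at the step you yourself flagged as the hardest; with it, the ``Stiefel--Whitney corrections'' you worry about are absorbed into the definition of the pullback and never need to be chased through a twisted Thom isomorphism over $B\Ppm$ or $BO$.
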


  \begin{corollary}[Symmetry class A]\label{thm:95}
 The groups of deformation classes of free fermion theories and of reflection
positive invertible theories with symmetry group~$\Spin^c$ are isomorphic to:
  \begin{equation}\label{eq:115}
    \begin{tabular}{ c@{\hspace{2.8em}} c@{\hspace{4.5em}} 
	c@{\hspace{6.6em}} c@{\hspace{4em}} c}
     \toprule n&\multicolumn{4}{l}{\hspace{-1.4em}$\ker\Phi \longrightarrow
     \FFF n{\Spin^c}\xrightarrow{\;\;\Phi \;\;} \TP n{\Spin^c}\longrightarrow
     \coker\Phi$} \\
     \midrule\\[-8pt] 
      $5$ &$0$&$\ZZ$&$\ZZ^2$ &$\ZZ$ \\ [3pt]
      $4$ &$0$&$0$&$0$ &$0$ \\ [3pt]
      $3$ &$0$&$\ZZ$&$\ZZ^2$ &$\ZZ$ \\ [3pt]
      $2$ &$0$&$0$&$0$ & $0$ \\ [3pt]
      $1$ &$0$ &$\ZZ$ &$\ZZ$ &$0$ \\ [3pt]
      $0$ &$0$&$0$&$0$ &$0$ \\ [3pt]
     \bottomrule \end{tabular} 
   \end{equation}
  \end{corollary}

  \begin{lnote*}[]\label{thm:129}
 The vanishing of the group~$\TP 4{\Spin^c}$ is mentioned in~\cite{WPS} at
the end of Appendix~F. 

  \end{lnote*}

  \begin{corollary}[Symmetry class AIII]\label{thm:96}
 The groups of deformation classes of free fermion theories and of reflection
positive invertible theories with symmetry group~$\Pin^c$ are isomorphic to: 
  \begin{equation}\label{eq:116}
    \begin{tabular}{ c@{\hspace{2.8em}} c@{\hspace{4.5em}} 
	c@{\hspace{3.75em}} c@{\hspace{1.25em}} c}
     \toprule n&\multicolumn{4}{l}{\hspace{-.9em}$\ker\Phi \longrightarrow
     \FFF n{\Pin^c}\xrightarrow{\;\;\Phi \;\;} \TP n{\Pin^c}\longrightarrow
     \coker\Phi$} \\
     \midrule\\[-8pt] 
      $5$ &$0$&$0$&$0$ &$0$ \\ [3pt]
      $4$ &$8\ZZ$&$\ZZ$&$\zmod{8}\times \zt$ &$\zt$ \\ [3pt]
      $3$ &$0$&$0$&$0$ &$0$ \\ [3pt]
      $2$ &$4\ZZ$&$\ZZ$&$\zmod4$ & $0$ \\ [3pt]
      $1$ &$0$ &$0$ &$0$ &$0$ \\ [3pt]
      $0$ &$2\ZZ$&$\ZZ$&$\zt$ &$0$ \\ [3pt]
     \bottomrule \end{tabular} 
  \end{equation}
  \end{corollary}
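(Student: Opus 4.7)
The plan proceeds in three steps: compute $\TP n{\Pin^c}$, compute $\FFF n{\Pin^c}$, and compute the map $\Phi$.

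First, by Conjecture~\ref{thm:143} and the universal property~\eqref{eq:62} applied to $\sB=MT\Pin^c$, there is a short exact sequence
$$
0 \longrightarrow \Ext^1(\pi_n MT\Pin^c,\ZZ)\longrightarrow \TP n{\Pin^c}\longrightarrow \Hom(\pi_{n+1}MT\Pin^c,\ZZ)\longrightarrow 0.
$$
Theorem~\ref{thm:63} shows $\pi_q MT\Pin^c$ is finite for all $q\le 5$, so every $\Hom(\pi_{n+1}MT\Pin^c,\ZZ)$ appearing in the range $n\le 4$ vanishes. Hence $\TP n{\Pin^c}\cong \Ext^1(\pi_n MT\Pin^c,\ZZ)\cong \pi_n MT\Pin^c$, reading off the values $\zt,\,0,\,\zmod4,\,0,\,\zmod8\times\zt$ for $n=0,1,2,3,4$. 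Second, by~\eqref{eq:112} with type $s=1$ from~\eqref{eq:101}, $\FFF n{\Pin^c}=\pi_{2-n}K$, which by Bott periodicity equals $\ZZ$ for $n$ even and $0$ for $n$ odd, as in the second column.

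For $\Phi$, Conjecture~\ref{thm:124} in its complex variant identifies $\Phi$ with pullback $\phi^*$ along the complex twisted ABS map $\phi\:MT\Pin^c\to \Sigma^{-1}K$ of~\eqref{eq:184}. When $n$ is odd, both $\FFF n{\Pin^c}$ and $\TP n{\Pin^c}$ vanish, so $\Phi=0$ with trivial kernel and cokernel. When $n$ is even, $\FFF n{\Pin^c}=\ZZ$ while $\TP n{\Pin^c}$ is finite, so $\Phi$ is determined by the order of the image of a generator. Under Anderson self-duality of $K$-theory, the generator of $\FFF n{\Pin^c}$ corresponds to a specific map $\Sigma^{-1}K\to \Sigma^{n+1}\IZ$, and $\Phi(1)$ is its composition with~$\phi$.

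The main obstacle is computing $\Phi(1)$ in even dimensions. Naturality of~\eqref{eq:62} only partially constrains $\Phi$: the induced map on $\Hom$ terms vanishes trivially, but the map landing in $\Ext^1(\pi_n MT\Pin^c,\ZZ)$ requires more information than the two spectral sequences at their $E_\infty$ pages. The plan is to follow the strategy of~\S\ref{sec:13}: at the prime~$2$, rewrite $\phi$ in a form compatible with the $E_2$ pages of the Adams spectral sequences for $MT\Pin^c$ and $\Sigma^{-1}K$, and read off the induced map on $\Ext^1$. The expected output is that $\Phi(1)$ has order $2, 4, 8$ in dimensions $n=0, 2, 4$ respectively, giving $\ker\Phi=2\ZZ, 4\ZZ, 8\ZZ$; in dimension~$4$ the image lies in the $\zmod8$ summand of $\zmod8\times\zt$, yielding $\coker\Phi=\zt$, while in dimensions $0$ and $2$ the map is surjective and $\coker\Phi=0$. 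As a geometric sanity check, one computes the index of the universal Dirac operator on explicit $\Pin^c$ bordism generators and compares to the known generators of the target via the ABS identification.
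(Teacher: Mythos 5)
Your overall structure matches the paper's exactly: you extract $\TP n{\Pin^c}$ from the short exact sequence~\eqref{eq:62} plus the bordism groups of Theorem~\ref{thm:63} (all finite in the relevant range, so $\TP n{\Pin^c}\cong \Ext^1(\pi_n MT\Pin^c,\ZZ)\cong \pi_n MT\Pin^c$), you get $\FFF n{\Pin^c}=\pi_{2-n}K$ via~\eqref{eq:112} with $s=1$ and Bott periodicity, and you identify $\Phi$ as composition with the complex twisted ABS map~\eqref{eq:184} followed by Anderson self-duality. Your observation that the interesting content of $\Phi$ lives entirely in the $\Ext^1$ term and is therefore invisible at the level of $E_\infty$ homotopy groups is the right point and is exactly why the paper resorts to the mod~$2$ Adams machinery of~\S\ref{sec:13}.

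The one place where your argument is an outline rather than a proof is the computation of $\Phi(1)$ in even dimensions: you state the expected orders $2,4,8$ and defer the verification to an Adams spectral sequence calculation you do not carry out. The paper does this via the identification $MTH^c(s)\simeq \Sigma^{-s-2}M\spin\wedge MU_1\wedge MO_s$ from~\eqref{eq:m160}/\eqref{eq:m183}, the change of rings to $\ext_{A_1}$, and the explicit description of the restricted map $MO_s\to K$ in Proposition~\ref{thm:m1}; the resulting $E_2$-chart is in Figure~\ref{fig:m1} (with the complex case declared "easier and left to the reader"). To turn your plan into a proof, you would need to actually run that machinery: write $MT\Pin^c\simeq \Sigma^{-3}M\spin\wedge MU_1\wedge MO_1$, identify the $M\spin$-module map to $\Sigma^{-1}K$ as induced from the Koszul-complex class on $MU_1\wedge MO_1$, and read off that the generator of $\pi_{2-n}K\cong[\Sigma^{-1}K,\Sigma^{n+1}\IZ]$ pulls back to an element of the stated order. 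Your geometric sanity check (index of the universal Dirac operator on explicit $\Pin^c$ bordism generators such as $\RP^2$ in dimension $2$ and $\RP^4$ in dimension $4$) is a sound independent cross-check that the paper does not spell out but which corroborates the chart computation.

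So: same route, same decomposition, same reliance on the Adams-SS evaluation of the twisted ABS map — but the crucial quantitative step ($\ker\Phi = 2\ZZ, 4\ZZ, 8\ZZ$ in dimensions $0,2,4$) is asserted rather than derived in your writeup, as it is asserted rather than fully written out in the paper as well.
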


  \begin{lnote*}[]\label{thm:130}
 The group~$\TP4{\Pin^c}$ and the map from free fermions is discussed
in~\cite[\S III]{WS}; see also~\cite[\S A.4]{SeWi} for the map from free
fermions.  The vanishing of the group~$\TP 3{\Pin^c}$ is discussed
in~\cite[\S V D]{LV} as well as in the last paragraph of~\cite[\S3.7]{W1}.
  \end{lnote*}

  \begin{corollary}[Symmetry class AII]\label{thm:97}
  The groups of deformation classes of free fermion theories and of reflection
positive invertible theories with symmetry group~$\Pcp$ are isomorphic to: 
  \begin{equation}\label{eq:117}
    \begin{tabular}{ c@{\hspace{2.8em}} c@{\hspace{4em}} 
	c@{\hspace{5em}} c@{\hspace{3em}} c}
     \toprule n&\multicolumn{4}{l}{\hspace{-1.2em}$\ker\Phi \longrightarrow
     \FFF n{\Pcp}\xrightarrow{\;\;\Phi \;\;} \TP n{\Pcp}\longrightarrow
     \coker\Phi$} \\
     \midrule\\[-8pt] 
      $5$ &$0$&$\ZZ$&$\ZZ^2$ &$\ZZ$ \\ [3pt]
      $4$ &$0$&$\zt$&$(\zt)^3$ &$(\zt)^2$ \\ [3pt]
      $3$ &$0$&$\zt$&$\zt$ &$0$ \\ [3pt]
      $2$ &$0$&$0$&$0$ & $0$ \\ [3pt]
      $1$ &$0$ &$\ZZ$ &$\ZZ$ &$0$ \\ [3pt]
      $0$ &$0$&$0$&$\zt$ &$\zt$ \\ [3pt]
     \bottomrule \end{tabular} 
  \end{equation}
  \end{corollary}

  \begin{lnote*}[]\label{thm:131}
 The $\zt$~invariant of free fermion systems in 3~and 4~spacetime dimensions
was introduced by Kane-Mele~\cite{KM} and Fu-Kane-Mele~\cite{FKM} and has
been further studied in many papers.  The interacting case in 4~dimensions is
investigated in~\cite{WPS} and in 3~dimensions in~\cite[\S3.7]{W1}; their
results agree with ours.  The initial computation in \cite[\S V C 2]{LV} of
$\TP3{\Pcp}\cong (\zt)^2$ was corrected in a subsequent erratum.  The
original argument in that paper asserts a $\zt$~subgroup of bosonic phases,
which would have symmetry group~$O\ltimes\TT$, as in~\eqref{eq:72}.  We
computed that $\pi _3\bigl(M(O\ltimes\TT) \bigr)\cong \zt$ and the natural
projection $\Pcp\to O\ltimes\TT$ induces the zero map on $\pi _3$~of the Thom
spectra.  This implies that the group of bosonic phases is~$\zt$, as claimed,
but that the lift of that bosonic phase to a fermionic phase is trivial.
This triviality of the pullback was not noticed initially; our homotopy
theoretic methods give a systematic approach, and we encounter this issue
again in Literature Note~\ref{thm:132}.  The physical results in 4~dimensions
were recounted in~\cite{M} at the end of~\S VI, where the question of
agreement with a bordism computation was raised.  This provided strong
motivation for the computations in this section.  We remark that the
description of the partition function of some phases in terms of
Stiefel-Whitney classes matches our bordism computations as well.  Also,
\S4.7~of \cite{W1} treats the invertible topological field theory in
4~dimensions defined by the free fermion theory, so only detects the image
of~$\Phi $ in~$\TP4{\Pcp}$.
  \end{lnote*}

  \begin{corollary}[Symmetry class AI]\label{thm:98}
 The groups of deformation classes of free fermion theories and of reflection
positive invertible theories with symmetry group~$\Pcm$ are isomorphic to: 
  \begin{equation}\label{eq:118}
    \begin{tabular}{ c@{\hspace{2.8em}} c@{\hspace{4em}} 
	c@{\hspace{5.6em}} c@{\hspace{3.3em}} c}
     \toprule n&\multicolumn{4}{l}{\hspace{-1.2em}$\ker\Phi \longrightarrow
     \FFF n{\Pcm}\xrightarrow{\;\;\Phi \;\;} \TP n{\Pcm}\longrightarrow
     \coker\Phi$} \\
     \midrule\\[-8pt] 
      $5$ &$0$&$\ZZ$&$\ZZ^2$ &$\ZZ$ \\ [3pt]
      $4$ &$0$&$0$&$\zt$ &$\zt$ \\ [3pt]
      $3$ &$0$&$0$&$0$ &$0$ \\ [3pt]
      $2$ &$0$&$0$&$\zt$ & $\zt$ \\ [3pt]
      $1$ &$0$ &$\ZZ$ &$\ZZ$ &$0$ \\ [3pt]
      $0$ &$0$&$\zt$&$\zt$ &$0$ \\ [3pt]
     \bottomrule \end{tabular} 
  \end{equation}
  \end{corollary}

  \begin{lnote*}[]\label{thm:132}
 The group~$\TP4{\Pcm}$ is discussed in detail in the erratum to~\cite{WS}.  
The group $\TP3{\Pcm}$ is asserted to be cyclic of order two in~\cite[\S V C
1]{LV} generated by a bosonic phase.  The bosonic phase is the same one
identified for the symmetry class~AII---see the Literature Note
following~\eqref{eq:117}---and again we compute that its lift to a fermionic
phase with symmetry group~$\Pcm$ vanishes, which explains the discrepancy.
  \end{lnote*}

\bigskip
 \subsubsection{Internal symmetry group $K=SU_2$} The symmetry groups~$G^0,
G^+, G^-$ are defined and classified in Proposition~\ref{thm:66}. 

  \begin{theorem}[]\label{thm:67}
  The low degree bordism groups for $K=SU_2$ are:
  \begin{equation}\label{eq:76}
     \begin{tabular}{ c@{\hspace{2em}} c@{\hspace{2em}} c@{\hspace{2em}}
     c@{\hspace{2em}} } 
     \toprule q&$\pi _qMTG^0$&$\pi
     _qMT\Gp$&$\pi _qMT\Gm$\\ 
     \midrule\\[-8pt] 
      $6$ &$\zt\times \zt$&$(\zt)^4$ &$\zt\times \zmod4\times \zmod{16}$ \\
     [3pt] 
      $5$ &$\zt\times \zt$&$\zt$ &$(\zt)^2$ \\ [3pt]
      $4$ &$\ZZ^2$ &$\zmod4\times \zt$ &$(\zt)^3$\\ [3pt]
      $3$ &$0$ &$0$ &$0$\\ [3pt]
      $2$ &$0$ & $\zt$ &$\zt$\\ [3pt]
      $1$ &$0$ &$0$ &$0$\\ [3pt]
      $0$ &$\ZZ$ &$\zt$ &$\zt$\\ [3pt]
     \bottomrule \end{tabular} 
  \end{equation}
  \end{theorem}

  \begin{corollary}[Symmetry class C]\label{thm:99}
 The groups of deformation classes of free fermion theories and of reflection
positive invertible theories with symmetry group~$G^0=\Spin\times \mstrut
_{\pmo}SU_2$ are isomorphic to:   
  \begin{equation}\label{eq:119}
    \begin{tabular}{ c@{\hspace{2.8em}} c@{\hspace{4em}} 
	c@{\hspace{3.6em}} c@{\hspace{2.75em}} c}
     \toprule n&\multicolumn{4}{l}{\hspace{-1.2em}$\ker\Phi \longrightarrow
     \FFF n{G^0}\xrightarrow{\quad\Phi \quad} \TP
     n{G^0}\xrightarrow{\qquad\;\;} \coker\Phi$} \\
     \midrule\\[-8pt] 
      $5$ &$0$&$\zt$&$\zt\times \zt$ &$\zt$ \\ [3pt]
      $4$ &$0$&$0$&$0$ &$0$ \\ [3pt]
      $3$ &$0$&$\ZZ$&$\ZZ^2$ &$\ZZ$ \\ [3pt]
      $2$ &$0$&$0$&$0$ & $0$ \\ [3pt]
      $1$ &$0$ &$0$ &$0$ &$0$ \\ [3pt]
      $0$ &$0$&$0$&$0$ &$0$ \\ [3pt]
     \bottomrule \end{tabular} 
  \end{equation}
  \end{corollary}

  \begin{lnote*}[]\label{thm:133}
 That $\TP4{G^0}=0$ was suggested in~\cite{WS} in the last paragraph
preceding \S V A.

  \end{lnote*}

  \begin{corollary}[Symmetry class CI]\label{thm:100}
 The groups of deformation classes of free fermion theories and of reflection
positive invertible theories with symmetry group~$G^+=\Pp\times \mstrut
_{\pmo}SU_2$ are isomorphic to:    
  \begin{equation}\label{eq:120}
    \begin{tabular}{ c@{\hspace{2.8em}} c@{\hspace{4.25em}} 
	c@{\hspace{3.3em}} c@{\hspace{1.6em}} c}
     \toprule n&\multicolumn{4}{l}{\hspace{-1em}$\ker\Phi \longrightarrow
     \FFF n{G^+}\xrightarrow{\;\;\Phi \;\;} \TP n{G^+}\longrightarrow
     \coker\Phi$} \\
     \midrule\\[-8pt] 
      $5$ &$0$&$0$&$\zt$ &$\zt$ \\ [3pt]
      $4$ &$4\ZZ$&$\ZZ$&$\zmod4\times \zt$ &$\zt$ \\ [3pt]
      $3$ &$0$&$0$&$0$ &$0$ \\ [3pt]
      $2$ &$0$&$0$&$\zt$ & $\zt$ \\ [3pt]
      $1$ &$0$ &$0$ &$0$ &$0$ \\ [3pt]
      $0$ &$2\ZZ$&$\ZZ$&$\zt$ &$0$ \\ [3pt]
     \bottomrule \end{tabular} 
  \end{equation}
  \end{corollary}

\noindent 
 Our computations prove $\Phi $~maps the generator of~$\FFF 4{G^+}$ to
an element of order~4 in~$\TP4{G^+}$.

  \begin{lnote*}[]\label{thm:134}
 Wang-Senthil~\cite[\S V]{WS} discusses the $n=4$~case and conjecture the
same group $\TP 4{G^+}\cong \zmod4\times \zt$ that we compute; the map from
free fermions also agrees.
  \end{lnote*}

  \begin{corollary}[Symmetry class CII]\label{thm:101}
 The groups of deformation classes of free fermion theories and of reflection
positive invertible theories with symmetry group~$G^-=\Pm\times \mstrut
_{\pmo}SU_2$ are isomorphic to:     
  \begin{equation}\label{eq:121}
    \begin{tabular}{ c@{\hspace{2.8em}} c@{\hspace{3.5em}} 
	c@{\hspace{3.8em}} c@{\hspace{2.25em}} c}
     \toprule n&\multicolumn{4}{l}{\hspace{-1em}$\ker\Phi \longrightarrow
     \FFF n{G^-}\xrightarrow{\;\;\Phi \;\;} \TP n{G^-}\longrightarrow
     \coker\Phi$} \\
     \midrule\\[-8pt] 
      $5$ &$0$&$\zt$&$(\zt)^2$ &$\zt$ \\ [3pt]
      $4$ &$0$&$\zt$&$(\zt)^3$ &$(\zt)^2$ \\ [3pt]
      $3$ &$0$&$0$&$0$ &$0$ \\ [3pt]
      $2$ &$2\ZZ$&$\ZZ$&$\zt$ & $0$ \\ [3pt]
      $1$ &$0$ &$0$ &$0$ &$0$ \\ [3pt]
      $0$ &$0$&$0$&$\zt$ &$\zt$ \\ [3pt]
     \bottomrule \end{tabular} 
  \end{equation}
  \end{corollary}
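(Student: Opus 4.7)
The plan is a three-step book-keeping computation: determine $\TP n{\Gm}$ from the bordism groups supplied by Theorem~\ref{thm:67}, determine $\FFF n{\Gm}$ from the homotopy groups of~$KO$, and then analyze the homomorphism~$\Phi$ using the twisted Atiyah-Bott-Shapiro map of type~$s=-3$.

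I would first compute $\TP n{\Gm}$ via Conjecture~\ref{thm:143} and the universal property~\eqref{eq:62}. Theorem~\ref{thm:67} shows that $\pi_q MT\Gm$ is finite for every $q\le 5$, so the $\Hom$ term in~\eqref{eq:62} vanishes in the range of interest and
\[
\TP n{\Gm} \;\cong\; \Ext^1(\pi_n MT\Gm,\ZZ) \;\cong\; \pi_n MT\Gm
\]
for $n\le 4$; this gives $\zt,0,\zt,0,(\zt)^3$ for $n=0,1,2,3,4$, which is the third column of~\eqref{eq:121}. Next, \eqref{eq:112} with $s=-3$ yields
\[
\FFF n{\Gm} \;\cong\; \pi_{6-n}(KO),
\]
which by Bott periodicity ($\pi_\ast KO = \ZZ,\zt,\zt,0,\ZZ,0,0,0$) is $0,0,\ZZ,0,\zt$ for $n=0,1,2,3,4$, which is the second column.

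The substantive content is the analysis of $\Phi$. By Conjecture~\ref{thm:124} and the Anderson self-duality identification~\eqref{eq:109}, $\Phi$ is precomposition with the twisted ABS map $\phi\colon MT\Gm\to \Sigma^{3}KO$ of~\eqref{eq:108}. The rows $n=1,3$ are vacuous, and $n=0$ is immediate because the source vanishes, forcing $\coker\Phi = \TP 0{\Gm} = \zt$. For $n=2$ I must show $\Phi\colon\ZZ\to\zt$ is surjective (so that $\ker\Phi=2\ZZ$ and $\coker\Phi=0$), and for $n=4$ I must show $\Phi\colon\zt\to(\zt)^3$ is injective (so that $\ker\Phi=0$ and $\coker\Phi=(\zt)^2$).

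The hard part is these last two nontriviality assertions. My plan is to follow the framework outlined in~\S\ref{sec:13}: use the embedding $\Gm_n\hookrightarrow \Cliff_{+n}\otimes D(-3)$ of Lemma~\ref{thm:90} to build an explicit $\mathcal{A}(1)$-module model for $MT\Gm$, compute the relevant low-degree cells of the Adams $E_2$-page, and read off the action of~$\phi$ on the appropriate filtration generators by direct comparison with the Adams spectral sequence for $\Sigma^3 KO$. The principal obstacle will be the $n=4$ case: one must identify which of the three $\zt$-summands of $\pi_4 MT\Gm$ receives the nonzero image from $\pi_4(\Sigma^3 KO)=\pi_1(KO)=\zt$, and rule out the pessimistic possibility that $\phi_\ast$ lands in a summand detected purely by a bosonic cobordism invariant. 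Geometrically this amounts to exhibiting an explicit $\Gm$-manifold, such as an appropriate $SU_2$-bundle over~$\RP^4$ or~$\RP^2\times \RP^2$, on which the mod-$2$ index of the twisted Dirac operator is nonzero and is separated from the remaining two $\zt$-summands which should correspond to Stiefel-Whitney-type bordism invariants orthogonal to the free fermion contribution.
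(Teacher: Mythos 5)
Your proposal is correct and follows essentially the same route the paper takes: extract $\TP n{\Gm}$ from the bordism groups of Theorem~\ref{thm:67} via the exact sequence~\eqref{eq:62} using finiteness through degree~$5$, read off $\FFF n{\Gm}=\pi_{6-n}KO$ from~\eqref{eq:112} at type $s=-3$, and then settle the two nontrivial assertions ($n=2$ surjectivity, $n=4$ injectivity of~$\Phi$) by the Adams spectral sequence machinery of~\S\ref{sec:13} applied to the twisted ABS map. The one small wrinkle worth flagging is your phrasing about which summand of $\pi_4MT\Gm$ ``receives'' a class from $\pi_1KO$: the underlying ABS map on homotopy runs $\pi_4MT\Gm\to\pi_1KO$, and $\Phi$ is its Anderson/Pontryagin dual $\zt\to(\zt)^3$, so what one actually verifies is that $\phi_\ast$ is nonzero on~$\pi_4$; the paper does this in the charts of Figure~\ref{fig:m1} after the reduction $\Sigma^{-3}MT\Gm\simeq M\spin\wedge MTO_3$ of~\eqref{eq:m6} and the $M\spin$-module factorization of Proposition~\ref{thm:m3}, which is the concrete realization of the $A_1$-module computation you sketch.
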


  \begin{lnote*}[]\label{thm:135}
 The 4-dimensional case is treated in~\cite[\S VI]{WS}; the answer they
obtain for~$\TP4{G^-}$ is $(\zt)^5$, which disagrees with the corresponding
entry in~\eqref{eq:121}, but it may be a different symmetry group they are
considering. In any case, in the note following Corollary~\ref{thm:97}, we
compute the group of bosonic phases with symmetry group $O\times \mstrut
_{\pmo}SU_2$ and find~$(\zt)^4$, but the lift to fermionic phases kills a
$(\zt)^2$ subgroup.  
  \end{lnote*}

   \section{Computations}\label{sec:13}

The computations in \S\ref{subsec:8.2} involve finitely generated
abelian groups having no odd torsion, so it suffices then to make
them after completing at $2$.  This can be done using the
Adams spectral sequence 
  \begin{equation}\label{eq:m159}
     \ext^{s,t}_{A}(H^{\ast}(MTH),\Z/2) \Rightarrow \pi_{t-s}MTH, 
  \end{equation}
where $A$ is the mod $2$ Steenrod algebra, and, though not indicated
in the notation,  the homotopy groups have been completed at $2$.   

What makes this approach tractable is an identification\footnote{Remark:
Corollary~\ref{thm:113} implies that for \emph{any} symmetry type~$(H,\rho
)$, the spectrum~$MTH$ is an $\MSpin$-module.} of the spectrum
$\Sigma^{s}MTH(s)$ with \begin{equation} \label{eq:m6} \begin{aligned}
&M\spin\wedge MTO_{|s|} &\qquad -3\le s \le 0& \\ &M\spin\wedge MO_{|s|}
&\qquad 0 \le s \le 3& \\ &\Sigma M\spin\wedge MSO_{3} &\qquad s =
4&\mathrlap{,}
\end{aligned}
\end{equation}
and in the complex case, of $\Sigma^{s}MTH^{c}(s)$ with  
  \begin{equation}\label{eq:m160}
     M\spinc \wedge MO_{s}\approx \Sigma^{-2}M\spin\wedge
     MU_{1}\wedge MO_{s}. 
  \end{equation}
Let $A_{1}\subset A$ be the sub algebra generated by $\sq^{1}$ and
$\sq^{2}$.  Anderson, Brown, and Peterson~\cite{ABP1} give
an isomorphism 
  \begin{equation}\label{eq:m161}
     H^{\ast}M\spin \approx A\underset{A_{1}}{\otimes} \{\Z/2\oplus M\} 
  \end{equation}
in which $M$ is a graded $A_{1}$-module with $M_{i}=0$ for
$i<8$.   This means that for $t-s <8$ one can identify the
$E_{2}$-term of the Adams spectral sequence for\footnote{Here only we use the
notation `$H(d)$' in place of~`$H(s)$' to avoid the conflict with Adams'
homological grading index~`$s$'.} $\pi_{\ast}MTH(d)$ with 
\begin{align*}
&\ext_{A_{1}}^{s,t}(H^{\ast-d}MTO_{|d|},\Z/2) & -3\le d\le 0& \\
&\ext_{A_{1}}^{s,t}(H^{\ast+d}MO_{|d|},\Z/2) & -0\le d \le 3& \\
&\ext_{A_{1}}^{s,t}(H^{\ast+3}MSO_{3},\Z/2) & d=4&\mathrlap{\ ,} \\
\intertext{and of $\pi_{\ast}MTH^{c}(d)$ with}
&\ext_{A_{1}}^{s,t}(H^{\ast+2+d}MU_{1}\wedge MO_{d},\Z/2) &
d=0,1&\mathrlap{\ .}
\end{align*}

These groups are computed by standard methods, and the computation, as
well as the spectral sequences (which collapse) are described
Figure~\ref{fig:m1} and give the results described in
tables~\eqref{eq:84},~\eqref{eq:74}, and~\eqref{eq:67}.

The relationship with the free fermion theories is given by maps of
spectra 
\begin{align}
MTH(s) &\to \Sigma^{-s}KO \\
MTH^{c}(s) &\to \Sigma^{-s}K 
\end{align}
or, under the above identifications, maps  
  \begin{equation}\label{eq:m162}
     \begin{aligned}      M\spin\wedge MTO_{|s|}&\to KO &\quad -3&\le s\le 0 \\
M\spin \wedge MO_{|s|}&\to KO &\quad 3&\ge s\ge 0 \\
\Sigma M\spin\wedge 
     MSO_{3} &\to KO &\quad &\;s=4 \\ M\spinc\wedge MO_{s} &\to K
     &&\;s=0,1\mathrlap{\ .}
     \end{aligned} 
  \end{equation}
These are all maps of $M\spin$ (or $M\spinc$) modules, in which $KO$
and $K$ are into $M\spin$ and $M\spinc$-modules using the
Atiyah-Bott-Shapiro orientation.  They are therefore determined by
their restrictions
\begin{equation}
\label{eq:m7}
\begin{aligned}
MTO_{|s|}&\to KO &\quad -3&\le s\le 0 \\
MO_{|s|}&\to KO &\quad 3&\ge s\ge 0 \\
\Sigma MSO_{3} &\to KO &\quad &\;s=4 \\
MO_{s} &\to K &\quad &\;s=0,1\mathrlap{\ .}
\end{aligned}
\end{equation}
These are described in Propositions~\ref{thm:m1}, \ref{thm:m3}, and~\ref{thm:m4}  below, and
using them, the assertions about the maps in
tables~\eqref{eq:85}, \eqref{eq:113}, \eqref{eq:114}, \eqref{eq:115}, \eqref{eq:116}, \eqref{eq:117}, \eqref{eq:118}, \eqref{eq:119}, \eqref{eq:120},
and~\eqref{eq:121}
can be verified.  The details
are summarized in the charts in Figure~\ref{fig:m1}.   The complex
case is easier and left to the reader.  See~\cite{C,BeC} for a detailed
account of the computations.

  \begin{figure}
  \includegraphics[width=\textwidth]{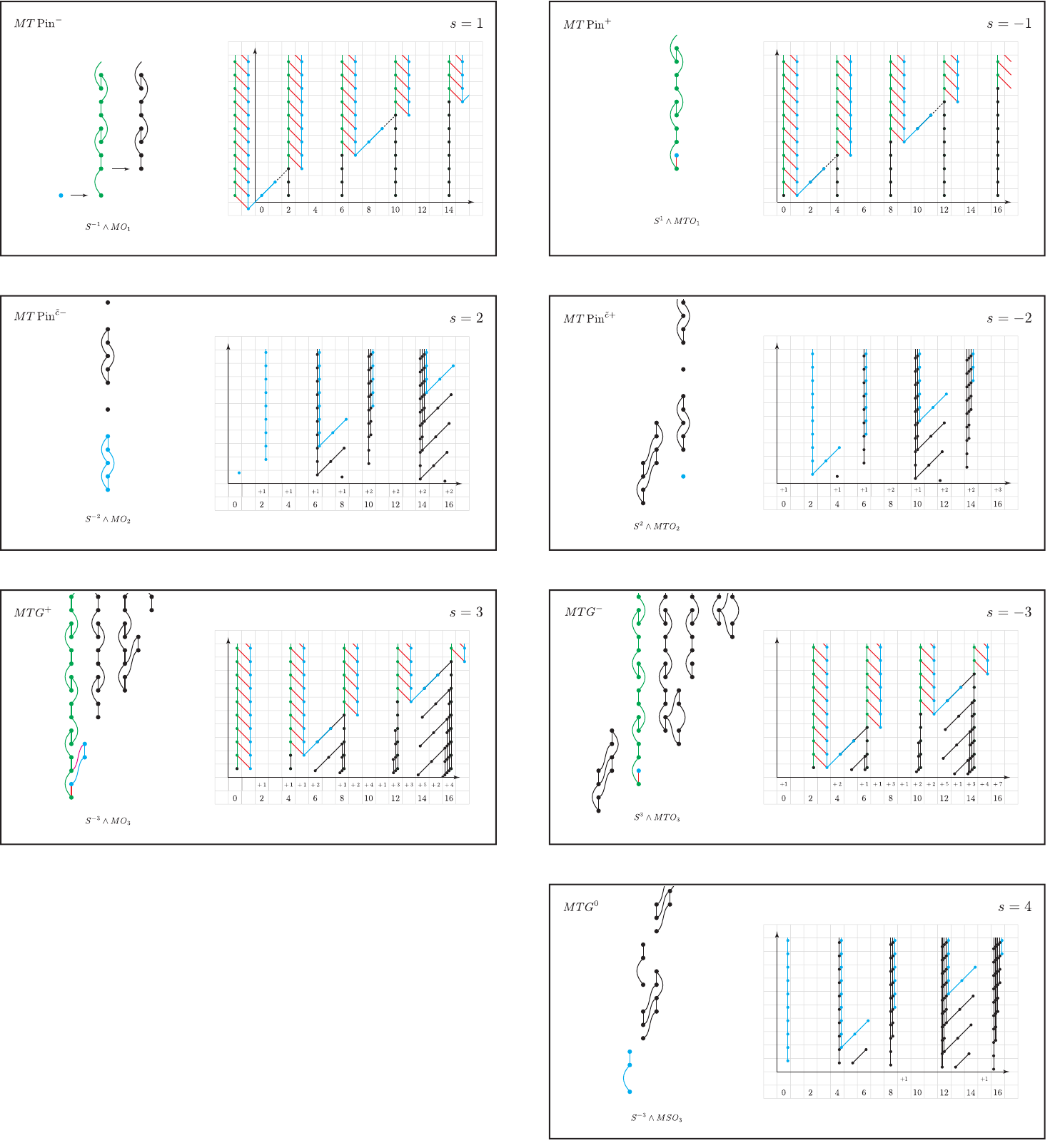}
  \caption{The Adams spectral sequences}
\label{fig:m1}
  \end{figure}

For the identifications~\eqref{eq:m6} and the maps~\eqref{eq:m7}
we begin with a uniform description of the groups $BH(\pm s)$ (for $s\ne 4$).
Write  
  \begin{equation}\label{eq:m163}
     \K=K(\Z/2,1)\times K(\Z/2,2) 
  \end{equation}
with the group structure  
  \begin{equation}\label{eq:m164}
     (x_{1},x_{2})*(y_{1},y_{2}) = (x_{1}+y_{1}, x_{2}+y_{2}+x_{1}y_{1}) 
  \end{equation}
in which $x_{i},y_{i}\in H^{i}(\slot,\Z/2)$.  With this choice the map  
  \begin{equation}\label{eq:m165}
     BO\xrightarrow{(w_{1},w_{2})}{} \K 
  \end{equation}
is a group homomorphism.

For $s\ge 0$ define a map $B\tH(s)\to BO$ by the homotopy pullback square 
  \begin{equation}\label{eq:m166}
     \xymatrix@C=54pt{ B\tH(s) \ar[r]\ar[d] & BO_{s} \ar[d]^{(w_{1},w_{2})} \\
     BO \ar[r]_-{(w_{1},w_{2}+w_{1}^{2})} & \K } 
  \end{equation}
and set $B\tH(-s)\to BO$ to be the composite  
  \begin{equation}\label{eq:m167}
     B\tH(s)\to BO \xrightarrow{-\id}{} BO. 
  \end{equation}
The space $B\tH(-s)\to BO$ fits into a homotopy pullback square 
  \begin{equation}\label{eq:m168}
     \xymatrix@C=54pt{ B\tH(-s) \ar[r]\ar[d] & BO_{s} \ar[d]^{(w_{1},w_{2})}
     \\ BO \ar[r]_-{(w_{1},w_{2})} & \K\mathrlap{\ .}  } 
  \end{equation}
For later reference we note 

\begin{remark}
\label{rem:m1}
The homotopy fiber of $B\tH(\pm s)\to BO$, being the same as the homotopy
fiber of $BO_{s}\to \K$ is  
  \begin{equation}\label{eq:m169}
     \begin{aligned} B\spin_{s} &\qquad s \ge 1 \\ 
     \Z/2\times B\Z/2 &\qquad s =0\mathrlap{\ .}  \end{aligned} 
  \end{equation}
For $-3\le s\le 3$ one may identify $B\tH(s)\to BO$ with $BH(s)\to BO$.
The map $BH(4)\to BO$ fits into a homotopy pullback diagram  
  \begin{equation}\label{eq:m170}
     \xymatrix{ BH(4) \ar[r]\ar[d] & BSO_{3} \ar[d]^{w_{2}} \\ BO
     \ar[r]_{(w_{1},w_{2})} & \K \mathrlap{\ .}} 
  \end{equation}
We leave the verification of these assertions to the reader.
\end{remark}

With $s\ge 0$, the maps $B\tH(\pm s)\to BO$ and $B\tH(\pm s)\to BO_{s}$ 
can also be expressed in terms of the diagrams of homotopy pullback squares
\begin{equation}
\label{eq:m2}
\xymatrix@C=54pt{
B\tH(s)  \ar[r]\ar[d]  & B\spin\ar[d]   & \\
BO\times BO_{s}  \ar[r]_-{-\id - (V_{s}-s)}        &  BO \ar[r]_-{(w_{1},w_{2})}         & \K
}
\end{equation}
and 
\begin{equation}
\label{eq:m3}
\xymatrix@C=54pt{
B\tH(-s)  \ar[r]\ar[d]  & B\spin\ar[d]    & \\
BO\times BO_{s}  \ar[r]_-{\id-(V_{s}-s)}        &  BO
\ar[r]_-{(w_{1},w_{2})}         & \K\mathrlap{\ .}
}
\end{equation}
A map $X\to B\tH(s)$ therefore classifies a pair $(V,V_{s})$ consisting of a
stable vector bundle $V$ (of virtual dimension $0$), a vector bundle
$V_{s}$ of dimension $s$ and a $\spin$ structure on $-V-(V_{s}-s)$.
Writing $W=-V-(V_{s}-s)$, so that $V=-W-(V_{s}-s)$, one sees that
$B\tH(s)$ classifies pairs $(W,V_{s})$ in which $W$ is a stable $\spin$
bundle of virtual dimension zero.  Thus $B\tH(s)\to BO$ may be
identified with the map
\begin{align*}
B\spin\times BO_{s} &\to BO \\
(W,V_{s})&\mapsto -W - (V_{s}-s).
\end{align*}
Similarly $B\tH(-s)\to BO$ may be identified with 
\begin{align*}
B\spin\times BO_{s} &\to BO \\
(W,V_{s})&\mapsto -W + (V_{s}-s),
\end{align*}
and  $BH(4)\to BO$  with 
\begin{align*}
B\spin\times BSO_{3} &\to BO \\
\intertext{via either of the maps} \\
(W, V_{3}) & \mapsto -W + (V_{3}-3) \quad\text{or} \\
(W, V_{3}) & \mapsto -W - (V_{3}-3).
\end{align*}
This leads to the identifications 
\begin{equation}
\label{eq:m5}
\begin{aligned}
MT\tH(s) &\approx \Sigma^{-s}M\spin\wedge MO_{s} \\
MT\tH(-s) &\approx \Sigma^{s}M\spin \wedge MTO_{s} \\
MTH(4) &\approx \Sigma^{-3}M\spin \wedge MSO(3) \\ 
&\approx  \Sigma^{3}M\spin \wedge MTSO(3).
\end{aligned}
\end{equation}

We define $B\tH(\pm s)_{n}\to BO_{n}$ by the pullback square
\begin{equation}
\label{eq:m8}
\xymatrix{
B\tH(\pm s)_{n}  \ar[r]\ar[d]  & B\tH(\pm s)
\ar[d] \\
BO_{n}  \ar[r]        & BO\mathrlap{\ .}
}
\end{equation}
The space $B\tH_{n}(s)$ classifies a pair $(V_{n},V_{s})$ consisting
of vector bundles of dimension $n$ and $s$ and a $\spin$ structure on
$-V_{n}-V_{s}$ (or, equivalently on $V_{n}+V_{s}$), while
$B\tH(-s)_{n} $ classifies pairs $(V_{n},V_{s})$ a $\spin$ structure
on $-V_{n}+V_{s}$.  For $s\ge 0$ there is therefore a pullback square
\begin{equation}
\label{eq:m4}
\xymatrix{
B\tH_{n}(s)  \ar[r]\ar[d]  & B\spin_{n+s}
\ar[d] \\
BO_{n}\times BO_{s}  \ar[r]        & BO_{n+s}\mathrlap{\ .}
}
\end{equation}

\begin{proposition}
\label{thm:m2}
The space $B\tH(\pm s)_{n}$ is the classifying space of a compact Lie
group $\tH(\pm s)_{n}$.   The group $\tH_{n}(s)$ is the stabilizer in $\spin_{n+s}$
of a $s$-plane in $\R^{n+s}$.   
\end{proposition}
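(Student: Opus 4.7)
The plan is to construct $\tH_n(s)$ explicitly as a stabilizer subgroup for $s\ge 0$, verify that its classifying space satisfies~\eqref{eq:m4}, and handle $\tH_n(-s)$ by an analogous pullback of Lie groups. For $s\ge 0$, define $\tH_n(s)\subset \spin_{n+s}$ to be the preimage under the double cover $\spin_{n+s}\twoheadrightarrow SO_{n+s}$ of the setwise stabilizer of the standard $s$-plane $\{0\}\oplus \R^s\subset \R^{n+s}$. That stabilizer is the block-diagonal subgroup $S(O_n\times O_s) = (O_n\times O_s)\cap SO_{n+s}$, so $\tH_n(s)$ is a closed subgroup of the compact Lie group $\spin_{n+s}$, hence itself a compact Lie group, and by construction is exactly the stabilizer in $\spin_{n+s}$ of an $s$-plane.

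My next step is to observe that, at the level of topological groups,
\[
\xymatrix{\tH_n(s)\ar[r]\ar[d] & \spin_{n+s}\ar[d] \\ O_n\times O_s\ar[r] & O_{n+s}}
\]
is a pullback: the image of $\spin_{n+s}\to O_{n+s}$ is $SO_{n+s}$, and $(O_n\times O_s)\cap SO_{n+s}=S(O_n\times O_s)$, so the set-theoretic fiber product is the preimage of $S(O_n\times O_s)$ in $\spin_{n+s}$, which is $\tH_n(s)$. To recover the defining pullback~\eqref{eq:m4} I would then argue at the level of groupoids of principal bundles: a principal $\tH_n(s)$-bundle on a space $X$ is the same data as a compatible pair consisting of a principal $\spin_{n+s}$-bundle and a principal $(O_n\times O_s)$-bundle on $X$ together with an identification of their underlying $O_{n+s}$-bundles, precisely because $\tH_n(s)$ is the above Lie-group pullback. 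This identifies $B\tH_n(s)$ with the homotopy pullback $BO_n\times BO_s \times^h_{BO_{n+s}} B\spin_{n+s}$, which is exactly~\eqref{eq:m4}.

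For $\tH_n(-s)$ the defining pullback~\eqref{eq:m8}, combined with~\eqref{eq:m3}, classifies $(V_n,V_s)$ together with a stable spin structure on the virtual bundle $V_n-V_s+\underline s$ rather than on $V_n\oplus V_s$. The same strategy applies: exhibit $\tH_n(-s)$ as a suitable pullback of Lie groups—naturally guided by the finite-level version of the stable identification $MT\tH(-s)\simeq \Sigma^s M\spin\wedge MTO_s$ from~\eqref{eq:m5}—and repeat the universal-property argument. The main obstacle in both cases is the step from a pullback of Lie groups to a homotopy pullback of classifying spaces; a direct point-set argument would demand a Serre-fibration input on $\spin_{n+s}\to O_{n+s}$ that is awkward since this map is not surjective, so I would lean on the bundle-theoretic universal-property formulation to sidestep this technicality.
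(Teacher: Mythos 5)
Your explicit construction of $\tH_n(s)$ for $s\ge 0$ as the preimage in $\spin_{n+s}$ of $S(O_n\times O_s)$ is exactly the stabilizer asserted, and the Lie-group pullback square you write down is correct. Two points need attention, one small and one serious.

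The small one: passing from a pullback of Lie groups to a homotopy pullback of classifying spaces is genuinely not automatic. Take $G_1=G_2=\{e\}$ and $G_0=G$ nontrivial: the group pullback is trivial, but $\ast\times^h_{BG}\ast\simeq\Omega BG\simeq G$. The bundle-theoretic reformulation does not sidestep this. The functor from principal $\tH_n(s)$-bundles to triples (a $\spin_{n+s}$-bundle, an $(O_n\times O_s)$-bundle, an identification of the induced $O_{n+s}$-bundles) is fully faithful for formal reasons, but essential surjectivity requires that $\rho(\spin_{n+s})\cdot(O_n\times O_s)=SO_{n+s}\cdot(O_n\times O_s)$ be all of $O_{n+s}$, i.e.\ that $O_n\times O_s$ meet the non-identity component of $O_{n+s}$. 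This holds for $n$ or $s\ge 1$ and is the real hypothesis; you should state it rather than appeal to the pullback of groups alone. Equivalently, one can compute the homotopy pullback of~\eqref{eq:m4} directly as $E\spin_{n+s}\times_{\spin_{n+s}}\bigl(O_{n+s}/(O_n\times O_s)\bigr)$, which is $B$ of the stabilizer because $\spin_{n+s}$ acts transitively on the Grassmannian; that is the paper's implicit route.

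The serious one: you leave the $-s$ case as a plan, not an argument. The proposition asserts that both $B\tH(s)_n$ and $B\tH(-s)_n$ are classifying spaces of compact Lie groups, and offers a stabilizer description only for the $+s$ sign; there is no analogous explicit subgroup of $\spin_{n+s}$ supplied (or readily available) for $-s$, so you cannot just ``repeat the universal-property argument.'' The paper's proof of the first assertion is deliberately sign-uniform: by the pullback square~\eqref{eq:m8} together with Remark~\ref{rem:m1}, the map $B\tH(\pm s)_n\to BO_n$ is a fibration with fiber $B\spin_s$, pulled back from the presentations~\eqref{eq:m166} and~\eqref{eq:m168}, and this exhibits $B\tH(\pm s)_n$ as the classifying space of a compact extension of $O_n$ by $\spin_s$. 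You should use that argument for the first assertion and reserve your explicit stabilizer construction for the second.
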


\begin{proof}
The first assertion is a consequence of the pullback
square~\eqref{eq:m8} and Remark~\ref{rem:m1}.  The second is immediate
from~\eqref{eq:m4}
\end{proof}

The construction of~\S\ref{subsubsec:8.4.2} leads to maps
\begin{align*}
MT\tH(s) &\to  \Sigma^{-s}KO \\
\end{align*}
and so, by~\eqref{eq:m5}, to
\begin{align*}
M\spin\wedge MTO_{s} &\to KO \\
M\spin\wedge MO_{s} &\to KO \\
\Sigma M\spin\wedge MSO_{3} &\to KO.
\end{align*}

These are maps of $M\spin$-modules, so to describe them it suffices
the restricted maps
\begin{align*}
MO_{s} &\to  KO \\
MTO_{s} &\to KO \\
\Sigma MSO_{3} &\to KO\mathrlap{\ .}
\end{align*}

\begin{proposition}
\label{thm:m1} Let $V\to BO_{s}$ be the universal vector bundle.  The
map $MO_{s}\to KO$ corresponds to the element of $KO(V,V-0)$
given by applying the difference bundle construction to
\begin{align*}
V\times \Lambda^{\ast}(V) &\to 
\Lambda^{\ast}(V) \\
(v,\omega)&\mapsto v\wedge \omega.
\end{align*}
\end{proposition}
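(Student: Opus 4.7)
The plan is to unwind the construction of~\S\ref{subsubsec:8.4.2} in the case where the ``stable spin part'' of the tangential data is trivial, leaving only the rank~$s$ bundle to contribute, and then to identify the resulting universal $KO$-class with the Koszul construction on~$\Lambda^{\ast}V$.

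First I would recall that the map $\phi\: MT\tH(s)\to\Sigma^{-s}KO$ is built from the spinor bundle $\sS=E\tH_n(s)\times_{\tH_n(s)}A(s)\op\to B\tH_n(s)$ together with left Clifford multiplication by vectors of the universal rank-$n$ bundle $V_n$: this family is odd, skew-adjoint, $A(s)\op$-linear, and invertible off the zero section of~$V_n$, so by the Morita equivalence $A(s)\approx\Clp{(n+s)}$ fixed in~\eqref{eq:138} one obtains a class in $KO^{n+s}(B\tH_n(s)^{V_n})$, stably a map $MT\tH(s)\to\Sigma^{-s}KO$. Under the identification $MT\tH(s)\approx\Sigma^{-s}M\Spin\wedge MO_s$ of~\eqref{eq:m5} (coming from the pullback square~\eqref{eq:m4}), this is an $M\Spin$-module map, so to compute the restriction $MO_s\to KO$ it suffices to smash with the unit $S\to M\Spin$, which corresponds to setting the stable $\Spin$-bundle $W$ in the factorization $V_n=-W+(V_s-s)$ (up to signs) to zero. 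Concretely, this reduces us to the case $n=s$ with $V_n=V_s=V$ the universal bundle on~$BO_s$, with the full structure group being~$O_s\subset A(s)$.

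Next I would identify, under the Morita equivalence $\Clp s\otimes D(s)\approx \Clp{2s}$, a preferred $\Clp{2s}$-module structure on~$\Lambda^{\ast}(\RR^s)$. The point is that $\Lambda^{\ast}(\RR^s)$ carries the canonical $\Clp s$-action $v\cdot\omega=v\wedge\omega+\iota_{v^*}\omega$, which is $O_s$-equivariant, and together with the standard action of the $D(s)$-factor on itself makes $\Lambda^{\ast}(\RR^s)\otimes D(s)^{\,\text{irr}}$ into a faithful module for $\Clp s\otimes D(s)$; this pins down a Morita equivalence that agrees with the one in~\S\ref{subsubsec:8.4.2} up to a sign that can be absorbed into the convention for~$M$ (or~$N$ when $s=4$). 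Thus the bundle $\sS$ restricted to~$BO_s$ is, as an $O_s$-equivariant $\Clp s$-module bundle, the associated bundle $\Lambda^{\ast}(V)$ with fibrewise Clifford action. The class in $KO(V,V\setminus 0)$ built from~$\phi$ is therefore represented by the $\ZZ/2$-graded bundle $(\Lambda^{\text{ev}}V,\Lambda^{\text{odd}}V)$ equipped, on the complement of the zero section, with the Clifford-multiplication isomorphism $v\cdot$.

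Finally, to match this with the assertion of the proposition one uses that Clifford multiplication deforms through odd isomorphisms to the Koszul differential: rescale the inner product by $t\in[0,1]$ and consider $v\wedge+t\,\iota_{v^*}$, which is invertible for all $v\ne0$ at every $t$ and interpolates between Clifford multiplication $(t=1)$ and exterior multiplication $(t=0)$. Hence the two difference-bundle classes in $KO(V,V\setminus0)$ coincide, and the $KO$-theory element produced by~$\phi\res{MO_s}$ is exactly the one given by applying the difference bundle construction to $(v,\omega)\mapsto v\wedge\omega$, as claimed.

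The main obstacle is the bookkeeping in the second step: making sure that the Morita equivalence $A(s)\approx\Clp{(n+s)}$ of~\eqref{eq:138}, the ABS normalization fixed by the choice of $\Cliff_{\pm8}$-module~$M$ in~\S\ref{subsubsec:8.4.2}, and the canonical $\Clp s$-action on~$\Lambda^{\ast}(\RR^s)$ are all matched on the nose, so that the resulting class is $[\Lambda^{\ast}V,\,v\wedge]$ and not a twist of it by a $\ZZ/2$-graded line. The remaining steps (the $M\Spin$-module reduction and the Clifford-to-exterior homotopy) are formal.
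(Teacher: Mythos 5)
Your proposal follows essentially the same path as the paper's proof: reduce to $BO_s$ via the $M\Spin$-module structure, identify the universal spinor bundle with $\Lambda^{\ast}V$ carrying fibrewise Clifford multiplication, and then pass from Clifford multiplication to the Koszul differential $v\wedge$. The paper simply carries out in full the ``bookkeeping'' you flag at the end — it writes down $A^{\mathrm{op}}=\cliff_{+s}\otimes\cliff_{-s}$, the embedding $O_s\to A^{\mathrm{op}}$ sending reflection perpendicular to $v$ to $v\otimes v$, the bimodule $M=\cliff_s$, and computes the associated-bundle action as $(v,\omega)\mapsto(-1)^{|\omega|}\omega v$, which it converts to $v\omega$ via the canonical anti-automorphism — and then cites \cite[Prop.~11.6]{ABS} for the step you describe as a deformation. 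One small slip: $v\wedge+t\,\iota_{v^*}$ is \emph{not} invertible for $v\neq0$ at $t=0$ (exterior multiplication by $v$ has kernel $v\wedge\Lambda^{\ast-1}$); the homotopy runs over $t\in(0,1]$, and the identification with the Koszul-complex class at $t=0$ is the separate exactness/``wrapping up'' argument of ABS, which is what you actually intend.
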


\begin{proof}
In the notation of~\ref{thm:90}, the algebra $A(s)$ is
$\cliff_{+s}\otimes \cliff_{-s}$, so that $A^{\text{op}}$ is also
$\cliff_{+s}\otimes \cliff_{-s}$, but with left Clifford
multiplication by $v\in \R^{s}$ sending $x\otimes y$ to
$(-1)^{|x|}x\otimes vy$.  The composed embedding $O_{s}\to
H_{s}\to A^{\text{op}}$ is the map 
  \begin{equation}\label{eq:m172}
     O_{s}\to \cliff_{+s}\otimes \cliff_{-s} 
  \end{equation}
sending reflection through the hyperplane perpendicular to $v\in
\R^{s}$ to $v\otimes v$.  

Let $P\to BO_{s}$ be the universal principal $O_{s}$ bundle.  The
$K$-theory class described in~\ref{subsubsec:8.4.2} is the difference
bundle on $(V,V-0)$ associated to the $O_{s}$-equivariant ``Clifford
multiplication'' map 
  \begin{equation}\label{eq:m173}
     \R^{s}\times (A^{\text{op}}\underset{A^{\text{op}}}{\otimes}M)\to
     (A^{\text{op}}\underset{A^{\text{op}}}{\otimes}M) 
  \end{equation}
in which $M=\cliff_{s}$ is the left $A^{\text{op}}$ bimodule specified
in \S\ref{subsubsec:8.4.2}, and giving the Morita equivalence of $A^{\text{op}}$ with
$\R$.  Passing to associated bundles, this works out to be
\begin{align*}
V\times \cliff(V) &\to \cliff(V) \\
(v,\omega) &\mapsto (-1)^{|\omega|}\omega v.
\end{align*}
The anti-automorphism of $\cliff(V)$ extending the identity map of $V$
gives an isomorphism of this with 
\begin{align*}
V\times \cliff(V) &\to \cliff(V) \\
(v,\omega) &\mapsto v\omega.
\end{align*}
The claim now follows from the standard way of ``wrapping up'' the
complex $V\times \Lambda(V)\to \Lambda(V)$ using $v\pm\iota_{v}$
(see~\cite[Proposition~11.6]{ABS} and the surrounding discussion for the complex case).
\end{proof}

\begin{proposition}
\label{thm:m3}
The map $MTO_{s}\to KO$ factors as  
  \begin{equation}\label{eq:m174}
     MTO_{s}\to (BO_{s})_{+}\to KO 
  \end{equation}
in which the first map is the map  
  \begin{equation}\label{eq:m175}
     \thom(BO_{s},-V)\to \thom\big(BO_{s},(-V)\oplus V\big) 
  \end{equation}
and the second corresponds to the trivial line bundle $1\in KO^{0}(BO_{s})$.
\end{proposition}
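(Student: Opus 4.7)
The strategy parallels the proof of Proposition~\ref{thm:m1}: I unpack the twisted ABS construction of~\S\ref{subsubsec:8.4.2} at the level of representing Fredholm families and identify the resulting $KO$-class, after passing to the $MTO_s$-summand, with the Thom diagonal followed by the trivial line bundle class.

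By~\eqref{eq:m5}, the map $MTO_s \to KO$ in question is extracted from the ABS map $MT\tH(-s) \to \Sigma^s KO$ by restricting the associated $\MSpin$-module map $\MSpin \wedge MTO_s \to KO$ along the unit $S^0 \to \MSpin$. In the difference-bundle model over the universal bundle $P \to B\tH(-s)_n$, the spinor bundle is $\mathcal{S} = P \times_{\tH(-s)_n} A(-s)^{\textnormal{op}}$ with $A(-s) = \Cliff_{+n} \otimes \Cliff_{-s}$, and the Fredholm family is left Clifford multiplication by $V_n$, which only touches the first tensor factor.

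The crucial observation is that this operator couples only to the $\Spin$-part of the tangential data; the $\Cliff_{-s}$ factor of $A(-s)^{\textnormal{op}}$ enters purely through the pointwise-constant Morita equivalence fixed in~\eqref{eq:138}. When we restrict along $S^0 \to \MSpin$---freezing the stable spin bundle $W$ to be trivial and letting only $V_s$ over $BO_s$ vary---the Clifford family $V_n \times \mathcal{S} \to \mathcal{S}$ becomes independent of the point of $BO_s$. The resulting $KO$-class on $(BO_s)^{V_n}$ is then the external product of the constant fiberwise class $S^{V_n} \to \Sigma^n KO$ (which, after stabilization $n\to\infty$ and the shift by $\Sigma^s$, recovers the unit $S^0 \to KO$) with the constant rank-one class on $BO_s$. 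This factors the map through $(BO_s)_+$, and the first map in the factorization is manifestly the Thom diagonal $\thom(BO_s,-V) \to \thom(BO_s,-V \oplus V)$ induced by the inclusion $-V \hookrightarrow -V \oplus V$ of the first summand.

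The main obstacle will be the careful bookkeeping of Morita equivalences and grading signs in~\eqref{eq:138}--\eqref{eq:139}: I need to verify that after all the identifications the $\Cliff_{-s}$ factor contributes exactly the trivial line bundle class $1 \in KO^0(BO_s)$, rather than a twisted or oppositely graded variant. This is entirely analogous to the sign bookkeeping at the end of the proof of Proposition~\ref{thm:m1}, and will rely on the specific choices of $\Cliff_{\pm 8}$- and $\Cliff_{\pm 4}$-module made in~\S\ref{subsubsec:8.4.2}.
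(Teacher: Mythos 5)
Your overall scaffolding is right — you work in the finite Grassmannian/difference-bundle model, recognize the first map in~\eqref{eq:m175} as the zero-section extension by~$V$, and aim to show the resulting class is pulled back from a constant class. But the crucial step is papered over, and the way you state it is actually false. You claim that after restricting along $S^{0}\to M\spin$ ``the Clifford family $V_{n}\times \mathcal{S}\to\mathcal{S}$ becomes independent of the point of $BO_{s}$.'' It does not: over $\grass_{s}(\R^{n+s})=\spin_{n+s}/H_{n}$ the bundle $V_{n}$ (and hence the spinor bundle $\mathcal{S}$) genuinely varies, because $V_{n}$ is tied to $V_{s}$ through the trivialization $V_{s}\oplus V_{n}\cong\grass_{s}(\R^{n+s})\times\R^{n+s}$, and the Fredholm family is left Clifford multiplication by the varying $V_{n}$.

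What actually makes the argument work is an \emph{extension}, not an invariance. The $H_{n}$-equivariant Clifford multiplication $\R^{n}\times\cliff_{-s}\otimes\cliff_{-n}\to\cliff_{-s}\otimes\cliff_{-n}$ extends to the $\spin_{n+s}$-equivariant Clifford multiplication by all of $\R^{s}\oplus\R^{n}$ (the point being that $\cliff_{-s}\otimes\cliff_{-n}\cong\cliff_{-(s+n)}$, so left multiplication by the full $\R^{s+n}$ is defined and commutes with the right module structure). Passing to associated bundles, this says the difference-bundle family over $\thom(\grass_{s}(\R^{n+s});V_{n})$ is the pullback, along the zero-section map $\thom(\grass_{s}(\R^{n+s});V_{n})\to\thom(\grass_{s}(\R^{n+s});V_{s}\oplus V_{n})\simeq S^{n+s}\wedge\grass_{s}(\R^{n+s})_{+}\to S^{n+s}$, of the constant class represented by Clifford multiplication on $\R^{s+n}\times\cliff_{-(s+n)}\to\cliff_{-(s+n)}$, i.e.\ the suspension of $1\in KO^{0}(\pt)$. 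That extension step is the content of the proposition, and it is precisely what you need to supply; ``freezing $W$'' along $S^{0}\to M\spin$ does not do it for you. Your framing also risks confusing the zero-section map $\thom(BO_{s},-V)\to\thom(BO_{s},(-V)\oplus V)$ (which is what~\eqref{eq:m175} is, after fixing a trivialization of $(-V)\oplus V$) with a ``Thom diagonal'' in the usual sense, though that is a smaller issue of nomenclature.
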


\begin{proof}
Write $\grass_{s}(\R^{n+s})$ for the Grassmannian of $s$-planes in
$(n+s)$-space, and let $V_{n}$ and $V_{s}$ be the universal $n$-plane
and $s$-plane bundles.   These bundles come equipped with a
trivialization  
  \begin{equation}\label{eq:m176}
     V_{s}\oplus V_{n}\approx \grass_{s}(\R^{n+s})\times \R^{n+s}. 
  \end{equation}
From the identification $\grass_{s}(\R^{n+s})=\spin_{n+s}/H_{n}$ of
Proposition~\ref{thm:m2} it
follows that the bundle $V_{n}$ comes equipped with an
$H_{n}$-structure.   The construction of~\ref{subsubsec:8.4.2} gives
an element $U\in KO^{n+s}(\thom(\grass_{s}(\R^{n+s}),V_{n}))$.   The
assertion is that this pulled back from the canonical generator (the
suspension of $1\in KO^{0}(\text{pt})$) of $\tilde KO^{n+s}(S^{n+s})$
along the map 
\begin{align*}
\thom(\grass_{s}(\R^{n+s});V_{n})&\to
\thom(\grass_{s}(\R^{n+s});V_{s}\oplus V_{n}) \\ 
&\approx
S^{n+s}\wedge \grass_{s}(\R^{n+s})_{+}\to S^{n+s}.
\end{align*}
This is immediate from the construction.  The algebra $A(s)^{\text{op}}$
is $\cliff_{-s}\otimes \cliff_{-n}$.  The class $U$ is the complex of
left $A$-modules (which come as right $A^{\text{op}}$-modules)
obtained by applying 
  \begin{equation}\label{eq:m177}
     \spin_{s+n}\underset{H_{n}}{\times}(\slot) 
  \end{equation}
to the the $H_{n}$-equivariant Clifford
multiplication map 
  \begin{equation}\label{eq:m178}
     \R^{n}\times \cliff_{-s}\otimes \cliff_{-n} \to \cliff_{-s}\otimes
     \cliff_{-n}. 
  \end{equation}
This map evidently extends to the $\spin_{s+n}$ equivariant Clifford
multiplication map
\begin{equation}
\label{eq:m1}
\R^{s}\oplus \R^{n}\times \cliff_{-s}\otimes \cliff_{-n} \to \cliff_{-s}\otimes \cliff_{-n}
\end{equation}
so the class $U$ is pulled back from the bundle of left $A$-modules on $(\R^{s+n},\R^{s+n}-\{0
\})$ obtained by applying 
  \begin{equation}\label{eq:m179}
     \spin_{n+s}\underset{\spin_{n+s}}{\times}(\slot) 
  \end{equation}
to~\eqref{eq:m1}.   This class represents the suspension of
$1$. 
\end{proof}

For the case $s=4$ what we require is the following
\begin{proposition}
\label{thm:m4}
The restriction of the map 
\[
S^{1}\wedge MSO_{3}\to KO
\]
to $S^{4}\to KO$ is the generator of $\tilde KO^{0}(S^{4})$.
\end{proposition}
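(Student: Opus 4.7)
The plan is to trace the ABS construction of \S\ref{subsubsec:8.4.2} for the group $H(4)$ down to the bottom cell and identify the resulting class explicitly.

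First I would rewrite the assertion. Under the identification $\Sigma M\Spin\wedge MSO_3\simeq \Sigma^4 MTH(4)$ from~\eqref{eq:m5}, the map in the statement is the $4$-fold suspension of the ABS map $\phi\:MTH(4)\to\Sigma^{-4}KO$, restricted along the unit $S^0\to M\Spin$ of the Thom spectrum. The bottom cell $S^4\hookrightarrow S^1\wedge MSO_3$ is $S^1\wedge S^0\wedge S^3$ (bottom of the suspension, unit of $M\Spin$, bottom of $MSO_3$) and under the above equivalence corresponds to the $4$-fold suspension of the bottom cell $S^0\hookrightarrow MTH(4)$. By Pontrjagin-Thom this bottom cell represents the generator of $\pi_0 MTH(4)\cong \Z$ (Theorem~\ref{thm:67}), namely a single positively oriented point carrying the trivial $H(4)_0=SU_2$-structure. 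The claim thus becomes that $\phi$ sends this generator to the generator of $\pi_{-4}KO=\pi_4 KO\cong\Z$.

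Next I would evaluate the construction of \S\ref{subsubsec:8.4.2} at $n=0$. The universal bundle $V_0=0$, the spinor bundle $\sS=ESU_2\times_{SU_2}A(4)^{\textnormal{op}}$ restricts over the basepoint of $BSU_2$ to $\HH^{\textnormal{op}}$ regarded as a left $A(4)^{\textnormal{op}}=\HH^{\textnormal{op}}$-module, and Clifford multiplication by $\RR^0$ is vacuous, so no Koszul complex is involved. The ABS element over the basepoint therefore reduces to the class of $\HH$ as a free $A(4)^{\textnormal{op}}$-module of rank one. Under the Morita equivalence $A(4)\approx\Cliff_{+4}$ set up in \S\ref{subsubsec:8.4.2} via the quaternionic $\Cliff_{\pm 4}$-module $N$ of quaternionic dimension $1|1$ with $N^0$ in the $+1$-eigenspace of the volume form, the free module $\HH$ corresponds to the irreducible $\Cliff_{+4}$-module $N$ itself.

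Finally I would invoke the original Atiyah-Bott-Shapiro theorem~\cite[\S11]{ABS}: the group of $\Z/2$-graded $\Cliff_{+k}$-modules modulo those extending to $\Cliff_{+(k+1)}$-modules is $KO^{-k}(\pt)$, and for $k=4$ this group is infinite cyclic generated by the irreducible quaternionic module $N$. Thus $\phi$ sends the generator of $\pi_0 MTH(4)$ to the generator of $\pi_4 KO=\tilde KO^0(S^4)$, which is the asserted statement.

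The main obstacle will be the orientation bookkeeping: one must verify that the particular Morita equivalence chosen in \S\ref{subsubsec:8.4.2} (via $N$ with the specified volume-form normalization) genuinely matches the free module $\HH$ with the positive generator of $\pi_4 KO$ rather than its negative, and that the tangential-vs-normal perp maneuver (harmless here since $H(4)$ has no pin factor, so $H(4)^\perp=H(4)$) introduces no sign. These amount to direct orientation checks on the chosen Clifford module, which can be done by comparing with the usual generator $[\HH]$ of $\pi_4 KO$ represented by the quaternionic Hopf line bundle on $S^4$.
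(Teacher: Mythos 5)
Your proposal is correct, and it reaches the same answer by a genuinely different route than the paper. You evaluate the generalized ABS map directly on the bottom cell $S^0\to MTH(4)$, reading it off via Pontrjagin--Thom as a point with trivial $SU_2$-structure; at this degenerate stage ($n=0$) there is no Koszul differential and the class is just the free rank-one $A(4)^{\op}\cong\HH^{\op}$-module, which under the fixed Morita equivalence $A(4)\approx\Cliff_{+4}$ becomes the irreducible graded module $N$, hence a generator of $KO^{-4}(\pt)\cong\pi_4 KO$ by~\cite[\S11]{ABS}. The paper instead constructs a map $BSO_4\to BH(4)$ (using the description of $BH(4)$ from~\eqref{eq:m170} as a stable bundle $V$, an oriented $3$-plane bundle $V_3$, and a spin structure on $V\oplus V_3$), pulls the ABS class back to the Clifford multiplication complex $\RR^4\times N\to N$ on $MSO_4$, and restricts to the Thom fiber $\RR^4/(\RR^4-0)=S^4$, invoking~\cite[Theorem~11.5]{ABS} in its ``Koszul-complex on $\RR^k$ generates $\widetilde{KO}^0(S^k)$'' form. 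Your approach buys a shorter argument that avoids the auxiliary map through $BSO_4$ entirely; the paper's buys an explicit vector-bundle complex over $\RR^4$ that matches the form of ABS Theorem~11.5 on the nose and makes the orientation visible. Two points worth tightening in your write-up: (i) justify that evaluating ``at $n=0$'' computes the restriction of the \emph{stable} map $\phi$ to the bottom cell, e.g.\ by noting that for large $n$ the restriction of the stage-$n$ ABS class to $S^n\subset BH_n^{V_n}$ is the $n$-fold Bott suspension of the module class, compatibly with stabilization; and (ii) your worry about the sign of the generator is moot for the statement as phrased, since ``the generator'' of an infinite cyclic group is determined only up to $\pm1$ and the proposition asserts no more.
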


\begin{proof}
From the diagram~\eqref{eq:m170} a map to $BH(4)$ can be thought of as
consisting of a stable vector
bundle $V$, an oriented $3$-plane bundle $V_{3}$ and a
$\spin$-structure on $V\oplus V_{3}$.   We map $BSO(4)\to BH(4)$ by
taking $V$ to corresponding to the defining representation and $V_{3}$
to be one of the two irreducible representations of dimension $3$.
The construction of \S\ref{subsubsec:8.4.2} then leads to the bundle on $MSO(4)$
corresponding to the $SO(4)$-equivariant map 
\[
\R^{4}\times N \to N
\]
where $N$ is the irreducible quaternionic $\cliff_{4}$-module specified in
\ref{subsubsec:8.4.2} with $SO(4)$-action from the embedding above.
This restricts to the generator of $KO(\R^{4},\R^{4}-\{0 \})$, by~\cite[Theorem~11.5]{ABS}.
\end{proof}

The two complex cases are handled similarly, using either the pullback
squares 
  \begin{equation}\label{eq:m180}
     \xymatrix@C=54pt{ BH^{c}(s) \ar[r]\ar[d] & BO_{s} \ar[d]^{(w_{1},
     \beta w_{2})} \\ BO
     \ar[r]_-{(w_{1},\beta w_{2})} & K(\Z/2,1)\times K(\Z,3) } 
  \end{equation}
for the identification  
  \begin{equation}\label{eq:m181}
     MTH^{c}(s)\approx \Sigma^{-s}M\spinc\wedge MO_{s} 
  \end{equation}
or  
  \begin{equation}\label{eq:m182}
     \xymatrix@C=54pt{ HB \ar[r]\ar[d] & BO_{s}\times BU(1) \ar[d]^{(w_{1},
     w_{2}+c_{1})} \\ BO
     \ar[r]_-{(w_{1},w_{2})} & \K } 
  \end{equation}
for the identification  
  \begin{equation}\label{eq:m183}
     MTH^{c}(s)\approx \Sigma^{-s-2}M\spin\wedge MU_{1}\wedge MO_{s}. 
  \end{equation}

   \section{A topological spin-statistics theorem}\label{sec:9}

In a relativistic quantum field theory the spin-statistics theorem states
that the central element of the Lorentz spin group acts on the Hilbert space
of the theory as~$(-1)^F$, where $F$~is the $\zt$-valued grading
operator;\footnote{$F$~vanishes on bosonic states and is the identity on
fermionic states.  In a \emph{free} theory there is a dense Fock space of
states with a finite number of particles on which $F$~counts the number of
fermionic particles modulo two.  In any theory $(-1)^F$~is the grading
operator on the $\zt$-graded Hilbert space of states.} see~\cite{SW,GJ,Kaz}
for proofs in the framework of Wightman quantum field theory.  In this
section we prove the analog for reflection positive \emph{non-extended}
invertible topological theories.  We do not know a version for fully extended
theories.  See~\cite{J-F} for another account of spin-statistics in
topological field theory, but without positivity.  A topological version of
spin-statistics also enters into~\cite{GK} in the context of fermionic
lattice models.
 
To formulate the statement we Wick rotate the central element of the Lorentz
spin group to the central element of the Euclidean spin group.  On a curved
Riemannian spin manifold~$M$, it acts as the \emph{spin flip}: the identity
diffeomorphism of~$M$ covered by the action of~$-1$ on the spin frames.  For
a general symmetry group~$H_n$ it is the action of the distinguished central
element~$k_0\in K$ in the internal symmetry group; see
Corollary~\ref{thm:113}.  Let $\sVect$~be the symmetric monoidal category of
super vector spaces; the symmetry incorporates the Koszul sign rule.  Recall
the notation (Remark~\ref{thm:142}) for the domain of a not necessarily
topological field theory.

  \begin{definition}[]\label{thm:114}
 Let $F\:\Bord_{\langle n-1,n \rangle}^\nabla (H_n)\to \sVect$ be a field
theory.  We say \emph{$F$~satisfies spin-statistics} if it maps the spin flip
on every $(n-1)$-manifold~$Y$ to the exponentiated grading operator~$(-1)^F$
on the super vector space~$F(Y)$.
  \end{definition}

  \begin{example}[]\label{thm:112}
 The spin-statistics connection fails without reflection positivity.
Consider a 1-dimensional invertible topological theory~$F$ of spin manifolds
with values in the category of $\zt$-graded complex lines.  There are
4~theories up to isomorphism:\footnote{We compute using
Theorem~\ref{thm:56}: $[\Sigma ^1MT\!\Spin_1,\Sigma ^1\ICx]\cong \Hom(\pi
_1\Sigma ^1MT\!\Spin_1,\Cx)$, the Thom spectrum~$\Sigma ^1MT\!\Spin_1$ is the
suspension spectrum of~$\RP^{\infty}_+$, and $\pi _1\RP^{\infty}_+\cong
\zt\times \zt$.  By contrast, $\pi _1MT\!\Spin\cong \zt$, hence
$[MT\!\Spin,\Sigma ^1\ICx]\cong \zt$, and so by Theorem~\ref{thm:110} there
are only two reflection positive theories.} $F(\pt_+)$~is either even or odd,
the spin flip acts as either~$+1$ or~$-1$, and these choices are independent.
Half of these theories satisfy spin statistics, and they are precisely the
ones for which $F(S^1_{\textnormal{bounding}})=+1$, which by
Theorem~\ref{thm:78} is the condition for stability, and so for reflection
positivity.
  \end{example}

  \begin{theorem}[]\label{thm:115}
 Let $F\:\Bord_{\langle n-1,n \rangle}(H_n)\to\sLine$ be a reflection
positive invertible topological field theory.  Then $F$~satisfies
spin-statistics.
  \end{theorem}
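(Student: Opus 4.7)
The plan is to compute the value of~$F$ on a specific closed $n$-manifold in two ways.  Since $F$ is invertible, $L:=F(Y)$ is a super line of parity~$|L|\in\zt$.  The spin flip~$\phi _Y$ is the identity diffeomorphism of~$Y$ covered by the action of the distinguished central element~$k_0\in K$ of Corollary~\ref{thm:113}; it is an involutive automorphism of~$Y$ in~$\bne$, so $F(\phi _Y)$ is scalar multiplication on the line~$L$ by some $\lambda (Y)\in\{\pm1\}$.  The spin-statistics assertion to prove is $\lambda (Y)=(-1)^{|L|}$.

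First, close the $\phi _Y$-twisted cylinder $[0,1]\times Y$ into a mapping torus~$M_Y$.  As an $H_n$-manifold, $M_Y$ is diffeomorphic to $Y\times S^1_{\mathrm b}$, where $S^1_{\mathrm b}$ carries the ``bounding'' $H_n$-structure, i.e.\ the one that extends over the disk~$D^2$.  For $H_n=\Spin_n$ this is the usual bounding spin structure, and the general case follows by pushing forward along the canonical homomorphism $\Spin_n\to H_n$ of Corollary~\ref{thm:113}: the $k_0$-twist built into~$\phi _Y$ is exactly the image of the monodromy of the bounding spin circle.  Closing a bordism $X\:Y\to Y$ into a mapping torus computes the categorical trace of~$F(X)$ in the target symmetric monoidal category, and the Koszul-signed symmetry of~$\sLine$ realizes this as the super trace; therefore
\begin{equation*}
F(M_Y) \;=\; \operatorname{str}_L\!\bigl(F(\phi _Y)\bigr) \;=\; \lambda (Y)\cdot(-1)^{|L|}\;\in\;\{\pm1\}.
\end{equation*}

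Second, recognize~$M_Y$ as a double.  Take the anti-involution $\sigma =\id_Y\times \tau $ on $Y\times S^1$, where $\tau \:z\mapsto \bar z$ on the $S^1$~factor; its fixed set is $Y\times\{+1,-1\}$.  Let $N=Y\times\{z\in S^1\mid \Im z\ge0\}$, a compact $n$-manifold whose boundary is this fixed set.  Then $M_Y=N\cup _{\partial N}\sigma N$, and because $\sigma $ reverses the $S^1$-orientation it acts on the $H_n$-structure by a hyperplane reflection in the direction normal to the fixed set; this identifies $\sigma N\cong \beta N$ and supplies the hyperplane-reflection isomorphism required by condition~(ii) of Proposition~\ref{thm:35}.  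Hence $M_Y\cong \Delta N$, and Proposition~\ref{thm:29} yields $F(M_Y)\ge 0$.  Combining the two evaluations, $\lambda (Y)(-1)^{|L|}\in\{\pm1\}$ is non-negative, hence equals~$+1$, which is the spin-statistics relation.

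The main obstacle is the $H_n$-structure bookkeeping in the second paragraph: one must check that the $\phi _Y$-twist in the mapping-torus gluing matches precisely the hyperplane-reflection twist built into the double construction, so that $\sigma $ is an anti-involution of $H_n$-manifolds and not merely of underlying smooth manifolds.  For $H_n=\Spin_n$ this reduces to the standard fact that the bounding spin circle is $\Delta [0,\pi ]$ (compare Example~\ref{thm:38} and Corollary~\ref{thm:37}), and the general case transports along the canonical homomorphism $\Spin_n\to H_n$.  Example~\ref{thm:112} provides a sanity check: when reflection positivity is dropped one can have $F(S^1_{\mathrm b})=-1$ and spin-statistics fails, exactly matching the mechanism above.
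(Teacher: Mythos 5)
Your proof is correct and takes essentially the same approach as the paper: both compute $F(S^1_{\mathrm b}\times Y)$ in two ways---as the supertrace $(-1)^{|L|}\lambda (Y)$ via the Koszul-signed symmetry of~$\sLine$, and as a double that Proposition~\ref{thm:29} forces to be nonnegative, hence~$+1$. The paper arrives at the supertrace by first evaluating $F$ on the nonbounding circle via $e_Y\circ \tau \circ c_Y$ and then taking a ratio, and it cites Example~\ref{thm:38} for the double identification, whereas you work directly with the bounding circle and spell out the anti-involution giving $M_Y\cong \Delta N$ via Proposition~\ref{thm:35}; the underlying content is the same.
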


  \begin{figure}[ht]
  \centering
  \includegraphics[scale=1.1]{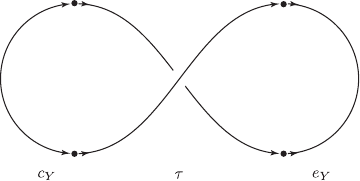}
  \caption{The composition $e_Y\circ \tau \circ c_Y$}\label{fig:7}
  \end{figure}

  \begin{proof}
 We first treat the case $H_n=\Spin_n$.  Let $Y$~be a closed $H_n$-manifold
and set~$L=F(Y)$.  Recall from~\S\ref{subsec:3.2} and Definition~\ref{thm:19}
the coevaluation $c_Y\:\emptyset ^{n-1}\to Y\amalg Y\dual$ and the evaluation
$e_Y\:Y\dual\amalg Y\to\emptyset ^{n-1}$.  Let $\tau \:Y\amalg Y\dual\to
Y\dual\amalg Y$ be the symmetry map.  The composition $e_Y\circ \tau \circ
c_Y$ is $S^1_{\textnormal{nonbounding}}\times Y$ (see~Figure~\ref{fig:7}),
and under~$F$ it maps to the composition $\CC\to L\otimes L^*\to L^*\otimes
L\to\CC$.  The Koszul sign rule in the symmetry gives
  \begin{equation}\label{eq:154}
     F(S^1_{\textnormal{nonbounding}}\times Y) = \tr_s\id_L = \tr
     (-1)^F=\begin{cases}
     +1,&\text{$L$~even},\\-1,&\text{$L$~{odd}},\end{cases}  
  \end{equation}
where $\tr_s$ is the supertrace.  The nonbounding circle is obtained by
cutting the bounding circle at two points and regluing using the spin flip
diffeomorphism of one of the points and the identity of the other.  In other
words, it is a triple composition of coevaluation, the indicated
diffeomorphism, and evaluation.  Take Cartesian product with~$Y$ and
apply~$F$ to conclude that the ratio of~\eqref{eq:154} with
$F(S^1_{\textnormal{bounding}}\times Y)$ is the supertrace of the spin flip
on~$Y$, and since the spin flip has order two this ratio equals~$\pm1$.  But
$S^1_{\textnormal{bounding}}\times Y$ is the spin double of~$c_Y$ (see
Example~\ref{thm:38}), so by reflection positivity we conclude from
Proposition~\ref{thm:29} that $F(S^1_{\textnormal{bounding}}\times Y)=1$,
hence the spin flip acts as~$(-1)^F$.
 
In the general case we use Corollary~\ref{thm:113} to construct an $H_{k+\ell
}$-structure on the Cartesian product of a $\Spin_k$-manifold and an $H_{\ell
}$-manifold.  Then the argument in the preceding paragraph goes through for
$Y$~an $H_{n-1}$-manifold and the same spin circles.
  \end{proof}

\appendix

   \section{The CRT theorem for general symmetry types}\label{sec:10}

In~\S\ref{subsec:a3.2} we take as our starting point a relativistic quantum
field theory in Minkowski spacetime.  Positivity of energy gives analytic
correlation functions for which the Minkowski correlation functions are
boundary values; Euclidean correlation functions are the restriction to a
suitable subdomain.  This leads to the CRT theorem
(Theorem~\ref{thm:172}),\footnote{It is usually called the CPT theorem, but
we follow the nomenclature in~\cite{W1}, which is more appropriate for
arbitrary dimensions: the~`P' in~`CPT' is understood to be the parity
transformation that acts as~$-1$ on space and so is orientation-preserving
if the dimension of spacetime is odd; by contrast, the~`R' in~`CRT' denotes
reflection in a single spatial direction and is orientation-reversing in all
dimensions.  The `C'~is best read as `complex conjugation'.}  and we outline
Jost's proof~\cite{J}, extended to general symmetry types.  Recall that the
symmetry group~$H\ono$ of a relativistic quantum field theory acts by
time-orientation preserving transformations; see~\eqref{eq:12}.  The CRT
theorem asserts that a larger symmetry group, including time-orientation
reversing transformations, also acts; the time-reversing elements act
antilinearly.  There is a subtlety in the Lorentz spin central extensions,
flagged in~\cite{GT},\footnote{The setting of~\cite{GT} is ``formal field
theory'' as opposed to that in the Wightman axioms.} which we elucidate and
generalize to arbitrary symmetry types in~\S\ref{subsec:10.1}.  This subtlety
is present even in the spin case without time-reversal symmetry.  It implies,
for example, that the ten Lorentz signature symmetry groups for free fermion
theories~(\S\ref{sec:8}) embed in Clifford algebras, a fact which is implicit
in~\S\ref{subsubsec:8.4.4}.  In this appendix we work in the framework of
Wightman quantum field theory.  One consequence of our discussion
(Remark~\ref{thm:174}) is a justification of the correspondence between the
alternatives
  \begin{equation}\label{eq:a48}
     \textnormal{$\pinp$-structure \quad vs.\ \quad  $\pinm$-structure} 
  \end{equation}
in Wick-rotated field theory and the alternatives 
  \begin{equation}\label{eq:a49}
     \textnormal{$T^2=(-1)^F$ \quad vs.\ \quad $T^2=1$ \phantom{MM}} 
  \end{equation}
for the action of time-reversal~$T$ on the Hilbert space~$\sH$ of states.  We
begin in~\S\ref{subsec:a3.1} with a review of pin groups and pin manifolds,
which also serves to fix some conventions about Clifford algebras..

We assume the dimension of spacetime is~$n\ge3$.

  \subsection{Pin groups and pin manifolds}\label{subsec:a3.1}

References for this section include~\cite{ABS,BDGK,KT1}.  While we assume the
dimension~$n\ge3$, with minor modifications the discussion goes through for~
$n=1,2$ as well.

\medskip
 \subsubsection{Pin groups and Clifford algebras}\label{subsubsec:a.1.1} 
 We take Lorentz signature as our starting point.  Let $\RR^{1,n-1}$ be the
standard vector space with basis~$e_0,e_1,\dots ,e_{n-1}$ and the standard
inner product: $\langle e_0,e_0 \rangle=1$, $\langle e_i,e_i
\rangle=-1,\;i=1,\dots ,n-1$, and $\langle e_\mu ,e_\nu \rangle=0,\;\mu \not=
\nu $.  Its isometry group is the orthogonal group~$O_{1,n-1}$.  The group of
components of~$O_{1,n-1}$ is isomorphic to~$\pmo\times \pmo$; an orthogonal
transformation either preserves or exchanges the two components of timelike
vectors~$\xi $ (vectors with $\langle \xi ,\xi \rangle>0$), and it either
preserves or reverses the orientation of any spacelike codimension~1
subspace.  In terms of the block matrix~$\left(\begin{smallmatrix} a&\alpha
\\\eta &A \end{smallmatrix}\right)\in O_{1,n-1}$ the first question is the
sign of the real number~$a$ and the second the determinant of the
$(n-1)\times (n-1)$ matrix~$A$.  The identity component of~$O_{1,n-1}$ has a
unique (up to isomorphism) connected double covering group~$\Spin_{1,n-1}$.
It is contained in the even subalgebra of a real Clifford algebra, and there
are two equally good choices for the signs:
  \begin{equation}\label{eq:a9}
     \begin{aligned} \Cliff_{1,n-1}:&\quad e_0^2=+1, \quad e_i^2=-1,\quad
     i=1,\dots 
      ,n-1, \\ \Cliff_{n-1,1}:&\quad e_0^2=-1, \quad e_i^2=+1,\quad i=1,\dots
     ,n-1. \\ 
      \end{aligned} 
  \end{equation}

The Lorentz orthogonal group~$O_{1,n-1}$ has a complexification~$O_n(\CC)$
consisting of complex $n\times n$~orthogonal matrices.  This complex group
has two components distinguished by the determinant, which is~$\pm1$.  The
identity component~$SO_n(\CC)$ has a subgroup that is the union of the two
components of~$O_{1,n-1}$ of matrices with determinant~1.  Also,
$SO_n(\CC)$~has a unique connected double covering group~$\Spin_n(\CC)$,
which contains~$\Spin_{1,n-1}$ as a subgroup.  The complex Lie
group~$O_n(\CC)$ deformation retracts onto its maximal compact
subgroup~$O_n$, which is the group of orthogonal symmetries of the real
vector space spanned by
  \begin{equation}\label{eq:a11}
     \te_0=i\,e_0,\;\te_1=e_1,\;\dots ,\;\te_{n-1}=e_{n-1} 
  \end{equation}
with its inherited negative definite inner product.  Here $i$~is a choice of
complex number with~$i^2=-1$.  The identity component~$SO_{n}$ has a unique
connected double covering group~$\Spin_n$, which is the maximal compact
subgroup of~$\Spin_n(\CC)$.  It is contained in the even subalgebra of a real
Clifford algebra, and again there are two equally good choices for the signs:
  \begin{equation}\label{eq:a10}
     \begin{aligned} \Clm n:&\quad \te_\mu ^2=-1,\quad
     \mu =0,\dots ,n-1, \\ \Clp n:&\quad \te_\mu^2=+1,\quad
     \mu =0,\dots ,n-1. \\ \end{aligned} 
  \end{equation}

The four-component orthogonal group~$O_{1,n-1}$ has many double cover groups
with identity component $\Spin_{1,n-1}$; we discuss two of them
in~\S\ref{subsec:10.1}.  In the remainder of this subsection we focus on the
two-component compact orthogonal group~$O_n$, which has two double
covers~$\Ppm_n$ with identity component~$\Spin_n$.  Each is a subgroup of
invertible elements in a real Clifford algebra: $\Ppm_n\subset \Cliff_{\pm
n}$.  They are group extensions
  \begin{equation}\label{eq:a12}
     1\longrightarrow \pmo\longrightarrow \Ppm_n\longrightarrow
     O_n\longrightarrow 1 
  \end{equation}
Observe that $\Pp_1\cong \zt\times \zt$ and $\Pm_1\cong \zmod4$.

\medskip 
 \subsubsection{Pin manifolds} A Riemannian manifold~$X$ has a principal
$O_n$-bundle of frames $\sBO(X)\to X$ whose points represent orthonormal
bases of the tangent spaces to~$X$.  The following is a special case of
Definition~\ref{thm:45}.

  \begin{definition}[]\label{thm:a5}
 A \emph{$\ppm$-structure} on~$X$ is a pair~$(P,\theta )$ consisting of a
principal $\Ppm_n$-bundle $P\to X$ and an isomorphism
$\sBO(X)\xrightarrow{\;\theta \;}P/\pmo$ of principal $O_n$-bundles. 
  \end{definition}

\noindent
 Pin structures, as spin structures, do not necessarily exist.  The
obstructions are given by Stiefel-Whitney classes: a $\pinp$-structure exists
on~$X$ if and only if\footnote{These are Stiefel-Whitney classes of the
\emph{tangent} bundle: $w_q(X)=w_q(TX)$.  There is a potential confusion with
Stiefel-Whitney classes of the \emph{stable normal} bundle, which is what
appears naturally in bordism theory.} $w_2(X)=0$ and a $\pinm$-structure
exists if and only if $(w_1^2+w_2)(X)=0$.  Double covers of~$X$ act on pin
structures as follows.  If $Q\to X$ is a double cover, viewed as a principal
$\pmo$-bundle, and $(P,\theta )$~is a $\ppm$-structure, then $Q\times \mstrut
_XP\to X$ is a principal $(\pmo\times \Ppm_n)$-bundle.  The $\Ppm_n$-bundle
$(Q\times \mstrut _XP)\,/\,\pmo\to X$ associated to the homomorphism
$\pmo\times \Ppm_n\to \Ppm_n$ (multiplication in~$\Ppm_n$ with first argument
restricted to the central subgroup in~\eqref{eq:a12}), along with a canonical
isomorphism of underlying $O_n$-bundles obtained from~$\theta $, is a
$\ppm$-structure.  The set of isomorphism classes of $\ppm$-structures, if
nonempty, is a torsor over the abelian group~$H^1(X;\zt)$; that is, this
group acts freely and transitively on the set of isomorphism classes.  There
is a canonical double cover of~$X$, the orientation double cover, whose
points represent orientations of the tangent spaces to~$X$.

  \begin{definition}[]\label{thm:a21}
 The \emph{$w_1$-involution} is the action of the orientation double cover on
pin structures. 
  \end{definition}

\noindent
 Recall that the equivalence class of the orientation double cover is
$w_1(X)\in H^1(X;\zt)$.

  \begin{remark}[]\label{thm:a17}
 Let $\ha $~be the automorphism of $\Ppm_n$ that is the identity
on~$\Spin_n$ and multiplication by the central element~$-1$ on the
complement; it covers the identity automorphism of~$O_n$.  An alternative
description of the $w_1$-transform~$(P^\sim,\theta )$ of a pin-structure
$(P,\theta )$ is the same manifold~$P$ with the same map~ $\theta $, but with
the $\Ppm_n$-action altered by precomposition with~$\ha $.  (To see this,
write the orientation double cover as $P/\Spin_n$ and construct the
isomorphism of $\Ppm_n$-bundles
  \begin{equation}\label{eq:a50}
     P/\Spin_n\times P\longrightarrow P^\sim
  \end{equation}
which maps $(\mathfrak{o},p)\mapsto p$ if $p\in \mathfrak{o}$ and
$(\mathfrak{o},p)\mapsto p\cdot (-1)$ if $p\notin\mathfrak{o}$.  Here
$\mathfrak{o}\subset P$ is a $\Spin_n$-orbit.)
  \end{remark}

  \subsection{Lorentz signature symmetry groups}\label{subsec:10.1}
 
This section is an exposition and elaboration of ideas in~\cite{GT}.  We
continue with the hypothesis~$n\ge3$, largely for convenience of exposition;
with minor modifications the discussion goes through for~ $n=1,2$ as well.
 
\medskip
 \subsubsection{Complex pin groups} The complex orthogonal group~$O_n(\CC)$
has two components.  The identity component~$SO_n(\CC)\subset O_n(\CC)$ has a
unique isomorphism class of nontrivial double cover groups, any
representative of which is called~$\Spin_n(\CC)$.
 
  \begin{proposition}[]\label{thm:a3}
 There are unique complex Lie groups~$\Ppm_n(\CC)$ with identity
component~$\Spin_n(\CC)$, which double cover~$O_n(\CC)$, and which
contain~$\Ppm_n$ as maximal compact subgroups.  Furthermore, any complex Lie
group that double covers~$O_n(\CC)$ and has identity component isomorphic
to~$\Spin_n(\CC)$ is isomorphic to either~$\Pp_n(\CC)$ or~$\Pm_n(\CC)$.
  \end{proposition}

  \begin{remark}[]\label{thm:a4}
 We warn that $\Ppm_n(\CC)$ are complex Lie groups, whereas the
group~`$\Pin^c_n$', which is defined in~\cite[\S3]{ABS} as a subgroup of the
complex Clifford algebra, is a compact real Lie group; it and twisted
variants appear in~\S\ref{sec:8}.
  \end{remark}

  \begin{proof}
 Up to isomorphism there is a unique double covering space $X\to O_n(\CC)$
whose inverse image over each component of~$O_n(\CC)$ is connected.  The
restriction over~$O_n\subset O_n(\CC)$ is isomorphic as a double covering
space to $\Ppm_n\to O_n$.  Choose an isomorphism of double covers and
transport the group structure, then extend the group structure on the
identity component~$\Spin_n$ to that of~$\Spin_n(\CC)$ on the entire
component~$X_+\subset X$ containing~$\Spin_n$.  Now use covering space theory
to extend the group structure to all of~$X$.  For example, setting
$X_-=X\setminus X_+$, lift the map $ X_+\times X_-\to O_n(\CC)_-$ to a map
$X_+\times X_-\to X_-$ using basepoints in the compact pin group.  In fact,
the extension of the group structure is determined by the square of a lift of
a single hyperplane reflection, for which there are two choices, and this
implies the last assertion.
  \end{proof}
 
\medskip
 \subsubsection{Double covers of Lorentz isometry
groups}\label{subsubsec:a.2.2} The two-component group $SO\ono\subset O\ono$
consists of isometries that preserve the overall orientation of~$\RM$.  Let
$\mu _m\subset \Cx$ be the group of $m^{\textnormal{th}}$~roots of unity.
Using the diagram
  \begin{equation}\label{eq:250}
     \begin{gathered}
     \xymatrix@C-20pt{\Spin_n(\CC)\ar@{^{(}->}[rr]\ar[dr]_{\pi
     _2}\;&&\Spin_n(\CC)\times \mstrut _{\mu _2}\mu _4\ar[dl]^{\pi _4}\\ 
     &SO_n(\CC)} \end{gathered} 
  \end{equation}
set $\tSO\ono^\alpha =\pi _2\inv (SO\ono)$, and let $\tSO\ono^\beta \subset
\Spin_n(\CC)\times \mstrut _{\mu _2}\mu _4$ be the union of~$\Spin\ono$ and
the complement of~$\pi _2\inv (\SOmt)$ in~$\pi _4\inv (\SOmt)$, where
$\SOmt$~is the non-identity component of~$SO\ono$.  For the pin groups let
$\tO\noo^\alpha $ and $\tO\ono^\alpha $ be the inverse image of $O\ono\subset
O_n(\CC)$ under the double cover homomorphisms $\Pp_n(\CC)\to O_n(\CC)$ and
$\Pm_n(\CC)\to O_n(\CC)$, respectively.  Finally, using the diagram 
  \begin{equation}\label{eq:251}
     \begin{gathered}
     \xymatrix@C-20pt{\Ppm_n(\CC)\ar@{^{(}->}[rr]\ar[dr]_{\pi
     _2}\;&&\Ppm_n(\CC)\times \mstrut _{\mu _2}\mu _4\ar[dl]^{\pi _4}\\
     &O_n(\CC)} \end{gathered} 
  \end{equation}
let $\tO\noo^\beta $~and $\tO\ono^\beta $~be the union of $\pi _2\inv (\OMt)$
and the complement of $\pi _2\inv (\OMmt)$ in $\pi _4\inv (\OMmt)$, where we
use the $+$~and $-$~pin groups, respectively.  Here $\OMmt$~is the complement
of~$\OMt\subset O\ono$, the components of time-reversing linear isometries.

  \begin{proposition}[]\label{thm:169}
 \ 
 \begin{enumerate}[{\textnormal(}1{\textnormal)}]

 \item Every double cover group of~$SO\ono$ whose identity component is
isomorphic to~$\Spin\ono$ is isomorphic to either~$\tSO\ono^\alpha $
or~$\tSO\ono^\beta $.

 \item The double cover group~$\tSO\ono^\beta $ of $SO\mstrut \ono$ is a
subgroup of the even subalgebras of~$\Cliff\noo$ and~$\Cliff\ono$.

 \item The double cover groups~$\tO\noo^\beta $ and~$\tO\ono^\beta $
of~$O\mstrut \ono$ are subgroups of~$\Cliff\mstrut \noo$ and~$\Cliff\mstrut
\ono$, respectively.

 \end{enumerate}
  \end{proposition}

\noindent 
 Summary: the $\alpha $-double covers are subgroups of complex (s)pin groups;
the $\beta $-double covers are subgroups of Lorentz signature Clifford
algebras.

  \begin{proof}
 For~(1), let $g\in S\OMmt$ be the diagonal matrix $\diag(-1,-1,+1,\dots
,+1)$.  Then the square of a lift of~$g$ to a double cover of~$SO\ono$ has
square the identity~$+1$ or the central element~$-1$ of~$\Spin\ono$.  By
covering space theory, as in the proof of Proposition~\ref{thm:a3}, we can
deduce that this dichotomy determines the group structure on the double
cover.  

The element~$e_0e_1$ in the Clifford algebra (of either signature~$(n-1,1)$
or~$(1,n-1)$) acts on~$\RM$ as~$g$ and squares to~$+1$.  On the other hand,
$g$~lies in $SO\ono\cap SO_n\subset SO_n(\CC)$, so a lift of~$g$ to~
$\Spin_n(\CC)$ lies in the compact spin group~$\Spin_n$ where it squares
to~$-1$, as we compute in the Clifford algebra~$\Cliff_{\pm n}$.  This is the
essential point in the proof of~(2).
 
As for~(3) there are double covers $\Pin\noo\subset \Cliff\noo$ and
$\Pin\ono\subset \Cliff\ono$ of~$O\ono$, as defined in~\cite{ABS},
\cite[\S1.2]{LM}.  By~(2) the restriction over~$SO\ono$ is isomorphic
to~$\tSO\ono^\beta $.  The element $\diag(-1,+1,\dots ,+1)\in \OMmt$ lifts
to~$e_0$ in the Clifford algebra, and its square is given in~\eqref{eq:a9}.
Arguing as above with the compact pin groups we deduce that this is opposite
the square of a lift in the corresponding complex pin group.  This is the
new step in proving the isomorphisms
  \begin{equation}\label{eq:252}
     \begin{aligned} \Pin\mstrut \noo&\cong \tO\noo^\beta \\ \Pin\mstrut
     \ono&\cong \tO\ono^\beta \\ \end{aligned} 
  \end{equation}
  \vskip-2.8em
  \end{proof}

\medskip 
 \subsubsection{General Lorentz signature symmetry groups} There are analogs
of the $\alpha $~and $\beta $-extensions of the Lorentz signature vector
symmetry group~$H\ono$ for an arbitrary symmetry type, which as
in~\S\ref{subsec:12.1} is the quotient of the full symmetry group of a
relativistic quantum field theory by translations.  It comes equipped with a
homomorphism $\rho _n\:H\ono\to\OMt$.  We use the Structure
Theorem~\ref{thm:5}, and in particular~\eqref{eq:16}, \eqref{eq:226},
and~\eqref{eq:227} to define the $\alpha $~and $\beta $-extensions~$H\ab\ono$
of~$H\mstrut \ono$ simultaneously.  Set
  \begin{equation}\label{eq:253}
     SH\ab\ono\cong \tSO\ono\ab\times K\bigm / \langle(-1,k_0)\rangle.
  \end{equation}
If the image of~ $\rho _n$ is~$S\OMt$,  set $H\ab\ono=SH\ab\ono$.  If
$\rho _n$~is surjective,  define~$\tH\ono\ab$ by pullback
  \begin{equation}\label{eq:254}
     \begin{gathered} \xymatrix{
     1\ar[r]&K\ar[r]\ar@{=}[d]&\tH\ono\ab\ar@{->>}[d]\ar[r]^{}&
     \tO\noo\ab\ar@{->>}[d] \ar[r] & 1\\
     1\ar[r]&K\ar[r]&J\ar[r]^{}&
     \pmo\ar[r] & 1} \end{gathered} 
  \end{equation}
where the right vertical map is the determinant homomorphism.  Then let 
  \begin{equation}\label{eq:255}
     H\ono\ab\cong \tH\ono\ab \bigm / \langle(-1,k_0)\rangle. 
  \end{equation}
We observe that $H\onoa $~is a real subgroup of the complex Lie
group~$H_n(\CC)$, the inverse image of~$O\ono$ under the homomorphism $\rho
_n\:H_n(\CC)\to O_n(\CC)$ in~\eqref{eq:13}.  Also, our notation is set up so
that $\Spin\ono\ab\cong \tSO\ono\ab $.

\medskip 
 \subsubsection{Extensions of real representations} As just remarked, the
$\alpha $-extension sits as a subgroup of the complex symmetry group.  One
key feature of the $\beta $-extension is the following. 

  \begin{proposition}[]\label{thm:170}
 Let $R=R^0\oplus R^1$ be a $\zt$-graded real representation of~$H\ono$ such
that $k_0\in K\subset H\ono$ acts as the grading operator.  Let
$R_{\CC}:=R\otimes \mstrut _{\RR}\CC$ denote the complexification, which
carries an action of the complex Lie group~$H_n(\CC)$, hence of the
subgroup~$H\onoa $.

 \begin{enumerate}[{\textnormal(}1{\textnormal)}]

 \item If $h\in H\onoa \setminus H\mstrut \ono $, then $h(R^0)=R^0$ and
$h(R^1)=\sqmo R^1$.

 \item There is a canonical extension of the action of~$H\ono$ on~$R$ to an
action of~$H\ono^\beta $.

 \end{enumerate}

  \end{proposition}

\noindent
 All Lie groups that appear are ungraded, so act by even transformations
of~$R$.  The conclusion is that the $\beta $-extension acts on \emph{real}
representations of~$H\ono$.

  \begin{proof}
 For~(1) it suffices to check for a single element $h\in H\onoa \setminus
H\mstrut \ono$.  By Corollary~\ref{thm:113}, anti-Wick rotated to Lorentz
signature, we choose~$h$ to be the image in~$H\onoa $ of a lift of
  \begin{equation}\label{eq:257}
     \begin{pmatrix} -1&0\\0&-1 \end{pmatrix} \in SO_{1,1}\cap SO_2\subset
     SO_2(\CC)\subset SO_n(\CC) 
  \end{equation}
to~$\Spin_n(\CC)$.  In the compact spin group $\Spin_2\subset \Spin_2(\CC)$
the element~$h$ is represented as~$f_0f_1$ and is connected to the identity
by the curve $\cos t/2 + \sin t/2 \,f_0f_1$, $0\le t\le\pi $, where we embed
$\Spin_2\subset \Cliff_{-2}$; see~\eqref{eq:a10}.  Complex conjugation,
defined so that $\Spin_{1,1}\subset \Spin_2(\CC)$ is real, takes this curve
to the curve $\cos t/2 - \sin t/2\,f_0f_1$, $0\le t\le\pi $
in~$\Spin_2\subset \Spin_2(\CC)$.  In particular, the complex conjugate
of~$f_0f_1$ is~$-f_0f_1$.  Since $-1$~maps to~$k_0$ and acts as the grading
operator, $f_0f_1$~is a real operator on~$R^0_{\CC}$ and a purely imaginary
operator on~$R^1_{\CC}$.  This proves~(1).
 
Consider the diagram 
  \begin{equation}\label{eq:258}
     \begin{gathered}
     \xymatrix@C-20pt{H\onoa \ar@{^{(}->}[rr]\ar[dr]_{\pi
     _2}\;&&H\ono^{\alpha\phantom{\beta }} \times \mstrut _{\mu _2}\mu
     _4\ar[dl]^{\pi _4}\\ &\quad O\ono} \end{gathered} 
  \end{equation}
in which $\mu _2\subset H\onoa $ is generated by~$k_0$.  Then $H\ono^\beta
\subset H\ono^{\alpha\phantom{\beta }} \times \mstrut _{\mu _2}\,\mu _4$ is the
union of~$H\ono\mstrut $ and the complement of $\pi _2\inv (\OMmt)$ in $\pi
_4\inv (\OMmt)$.  Let $\mu _4\subset \Cx$ act on~$R^1_{\CC}$ via scalar
multiplication and on~$R^0_{\CC}$ trivially.  Then by~(1) the restriction to
$H\ono^\beta \subset H\ono^{\alpha\phantom{\beta }} \times \mstrut _{\mu
_2}\,\mu _4$ is real, i.e., preserves~$R\subset R_{\CC}$.  This proves~(2).
  \end{proof}

  \subsection{Wick rotation and the CRT theorem}\label{subsec:a3.2}

In this section we sketch a rigorous argument for the CRT theorem in
relativistic quantum field theory.  We use the analytic continuation of
correlation functions, working in the framework of Wightman quantum field
theory~\cite{SW,GJ,Kaz}.  Our purpose is to treat general symmetry types.
Even for theories with Lorentz symmetry group~$H\ono=\Spin\ono $ there is a
subtlety: the group~$\tSO\ono^\alpha $ acts on the holomorphic correlation
functions, whereas the group~$\tSO\ono^\beta $ acts on the Minkowski
spacetime correlation functions.  (See~\S\ref{subsubsec:a.2.2} for the
definitions of these Lie groups.)  This argument also demonstrates why only
the ``Cliffordian''~\cite{BDGK} Lorentz signature pin groups~$\Pin_{n-1,1}$
and~$\Pin_{1,n-1}$ can be symmetries of a relativistic quantum field theory
instead of more general possible double covers of~$O_{1,n-1}$; see
Remark~\ref{thm:174}.  We assume~$n\ge3$.
 
Recall from~\S\ref{subsec:12.1} that Minkowski spacetime~$M^n$ is an
$n$-dimensional affine space whose vector space~$V=\RM$ of translations is
equipped with an inner product of signature~$(1,n-1)$ and a choice of
component~$V_+$ of the space $\{\xi :\langle \xi ,\xi \rangle>0\}$ of
timelike vectors.\footnote{The latter choice is required in order to
formulate the positivity of energy.}  To Wick rotate to imaginary time, fix
an orthogonal splitting $V= U\oplus \Up$ with $U$~a 1-dimensional timelike
subspace.  Then the Euclidean translation group is $V_E=\sqmo\,U\oplus \Up$
and the corresponding Euclidean space is $E=M\times \mstrut _VV_E$, an affine
space over~$V_E$.  Complexified Minkowski spacetime is $M_{\CC}=M\times
\mstrut _VV_{\CC}$, where $V_{\CC}$~is the complexification of~$V$.  The
symmetry group~$H\ono$ of a relativistic quantum field theory acts on~$M^n$
by time-orientation preserving transformations via a homomorphism $\rho
_n\:H\mstrut \ono\to\OMt$, as in~\eqref{eq:12}.

  \begin{theorem}[CRT Theorem]\label{thm:172}
 Let $\sQQ$ denote a relativistic quantum field theory with symmetry
group~$H\ono$.  Then the symmetry extends to $H\ono^\beta $; elements of
$H\ono^\beta \setminus H\mstrut \ono$ act antilinearly.
  \end{theorem}

\noindent
 Here $\sQQ$~is a quantum field theory in the Wightman axiomatic framework.
It is determined by its \emph{correlation functions}, called \emph{Wightman
functions}; see~\cite[\S1.3]{Kaz}.  For simplicity of notation we only
discuss 2-point functions in this account.  A precise version of
Theorem~\ref{thm:172} is~\eqref{eq:275} below.

The \emph{fields} in~$\sQQ$ are defined by a finite dimensional $\zt$-graded
real representation
  \begin{equation}\label{eq:259}
     \sigma \:H\ono\longrightarrow \Aut(R).
  \end{equation}
We write~$R=R^0\oplus R^1 $ according to the grading; elements of~$H\ono$
preserve the grading.  The spin-statistics theorem, which we assume in this
account, asserts that the special element $k_0\in K\subset H\ono$ defined in
Theorem~\ref{thm:5}(2) acts as the grading operator on~$R$.  Write
$\RC=R\otimes \mstrut _{\RR}\CC$ for the complexification.  Classical fields
are functions $M^n\to R$.  Quantum fields are $R$-valued operator-valued
distributions~$\Phi=\Phi ^0+\Phi ^1$ on~$M^n$.  The 2-point ``function'' is a
complex distribution whose value on Schwartz functions $f_i\:M^n\to R^*$ is
written
  \begin{equation}\label{eq:260}
     \langle \Phi (f_1)\Phi (f_2) \rangle = \int_{M^2}dp_1dp_2\,
     f_1(p_1)f_2(p_2)\,\langle \Phi (p_1)\Phi (p_2) \rangle,
  \end{equation}
where $\langle \Phi (p_1)\Phi (p_2) \rangle$ denotes the kernel of the
$\RC^{\otimes 2}$-valued distribution on~$M^{\times 2}$.  The theory~$\sQQ$ has
a $\zt$-graded Hilbert space~$\sH=\sH^0\oplus \sH^1$ of states, constructed
from the correlation functions, and a distinguished vacuum vector~$\Omega \in
\sH^0$.  The \emph{field operators}~$\Phi (f)$ act as unbounded operators
on~$\sH$, and the 2-point function is the vacuum expectation value of the
product of the field operators:
  \begin{equation}\label{eq:261}
     \langle \Phi (f_1)\Phi (f_2) \rangle = \langle \Omega \,,\,\Phi
     (f_1)\Phi (f_2)\Omega \rangle_{\sH}\mstrut . 
  \end{equation}
There is a unitary representation of the affine extension of~$H\ono$
on~$\sH$---all symmetries preserve the $\zt$-grading.  The vacuum vector and
2-point function are invariant under that action, in particular under
translations.  Hence there is an $\RC^{\otimes 2}$-valued distribution on~$V$
with kernel
  \begin{equation}\label{eq:262}
     W(\xi ):=\langle \Phi (p)\Phi (p+\xi ) \rangle,\qquad p\in M^n,\quad \xi
     \in V, 
  \end{equation}
which is independent of~$p$.

The important step in Jost's proof is the construction of holomorphic
correlation functions from which the Wightman functions are recovered as
boundary values~\cite[\S2.1]{Kaz}.  This is a consequence of the positivity
of energy and geometric arguments.  The holomorphic 2-point function 
  \begin{equation}\label{eq:263}
     \WC\:\sD\longrightarrow \RC^{\otimes 2} 
  \end{equation}
has domain $\sD\subset V_{\CC}$ that is connected and $H_n(\CC)$-invariant.
Define the \emph{backward tube} $\sT = V - iV_+\subset V_{\CC}$, where $i$~is
a choice of square root of~$-1$.  Then\footnote{Note $SO_n(\CC)(\sT) =
O_n(\CC)(\sT)$.}
  \begin{equation}\label{eq:264}
      \sD=SO_n(\CC)(\sT)\cup -SO_n(\CC)(\sT). 
  \end{equation}
An important feature of~$\sD$ is that it contains \emph{Jost
points},\footnote{Here we use~$n\ge3$.}  which in this case of 2-point
functions are the real spacelike vectors~$\xi \in V\subset V_{\CC}$ that
satisfy $\langle \xi ,\xi \rangle>0$.  From~\eqref{eq:264} we see $\sT\subset
\sD$, and as stated $W$~is a boundary value of~$\WC$:
  \begin{equation}\label{eq:265}
     W(\xi ) = \lim\limits_{\;\;\epsilon \to 0^+}\WC(\xi -\epsilon i
     \eta ),\qquad \xi \in V,\quad \eta \in V_+, 
  \end{equation}
and the limit is independent of~$\eta $.  We also have $V_E\setminus
\{0\}\subset \sD$, and the Wick-rotated Euclidean 2-point function is the
restriction of~$\WC$ to~$V_E\setminus \{0\}$.
 
We collect some properties of the holomorphic correlation functions.  First,
since the inner product on~$\sH$ is even, it follows that 
  \begin{equation}\label{eq:266}
     W_{\CC}^{\vphantom{0}} = \WC^0+\WC^1 
  \end{equation}
where $\WC^q$~takes values in~$(\RC^q)^{\otimes 2}$, $q=0,1$.  Note that both
$\WC^0$~and $\WC^1$~are even.  Next, as already stated, $\WC$~ is
$H_n(\CC)$-invariant, hence invariant under the subgroup $H\ono^\alpha
\subset H_n(\CC)$:
  \begin{equation}\label{eq:267}
     \WC(\zeta ) = \sigma (h^\alpha )^{\otimes 2}\,\WC\bigl(\rho _n(h^\alpha )\zeta
     \bigr),\qquad h^\alpha \in H\ono^\alpha ,\quad \zeta \in \sD. 
  \end{equation}
Now if $\xi $~is real and spacelike, then since field operators at
spacelike separated points commute (in the graded sense), and since real
spacelike (Jost) points are in the domain~$\sD$, we have 
  \begin{equation}\label{eq:268}
     \begin{aligned} \WC^0(-\xi )&= \phantom{-}\WC^0(\xi ) \\ \WC^1(-\xi )&=
      -\WC^1(\xi ) \\ \end{aligned} 
  \end{equation}
Continuing with $\xi $~real and spacelike, we claim 
  \begin{equation}\label{eq:269}
     \begin{aligned} \overline{\WC^0(\xi )}&= \phantom{-}\WC^0(\xi ) \\
     \overline{\WC^1(\xi )}&= -\WC^1(\xi ) \\ \end{aligned} 
  \end{equation}
Since such~$\xi $ lie in~$\sD$, and $\sD$ is connected, we deduce a Schwarz
reflection formula valid for all~$\zeta \in \sD$:
  \begin{equation}\label{eq:273}
     \begin{aligned} \overline{\WC^0(\zeta )}&= \phantom{-}\WC^0(\bz ) \\
     \overline{\WC^1(\zeta )}&= -\WC^1(\bz ) \\ \end{aligned} 
  \end{equation}
The manipulation that justifies~\eqref{eq:269} is, for any~$p\in M^n$ and
$\xi \in V$, 
  \begin{equation}\label{eq:270}
     \overline{\WC(\xi )} = \langle \Phi (p)\Phi (p+\xi
      )\Omega \,,\,\Omega \rangle = \langle \Omega \,,\,\Phi (p+\xi )\Phi
      (p)\Omega \rangle = \WC(-\xi );
  \end{equation}
then we apply~\eqref{eq:268}.  The middle step is straightforward in the even
case: $\Phi^0 (q)$~is self-adjoint for $q$~real.  The corresponding
manipulation in the odd case uses the adjoint of the odd operator~$\Phi
^1(q)$, which involves a tricky sign\footnote{We thank Greg Moore for help
straightening this out.} as we explain in the following remark.

  \begin{remark}[]\label{thm:173}
 The usual physics conventions are: the norm square of an odd vector in~$\sH$
is real and positive; for any two operators~$A,B$ we have
$(AB)^*=B^*A^*$---there is no sign even if both $A$~and $B$~are odd; and the
odd field operator~$\Phi ^1(q)$ is self-adjoint in the usual sense.  However,
the Koszul sign rule demands that the first two of these be modified to: the
norm square of an odd vector in~$\sH$ is purely imaginary and lies on one of
the two rays of nonzero purely imaginary numbers, the choice of which is a
convention (Example~\ref{thm:159}); if $A,B$ are operators that have
definite parities~$|A|,|B|$, then~\cite[\S4.4]{DM}
  \begin{equation}\label{eq:271}
     (AB)^*=(-1)^{|A||B|}B^*A^*. 
  \end{equation}
If we use these conventions, then the odd field operator~$\Phi ^1(q)$ is not
self-adjoint, but rather 
  \begin{equation}\label{eq:272}
     \Phi ^1(q)^*=i\,\Phi ^1(q) 
  \end{equation}
One justification for~\eqref{eq:272} is to consider the $*$-structure on the
complex operator algebra, and to note that \eqref{eq:271}~implies that the
square of an odd self-adjoint operator is even skew-adjoint, and so if $\Phi
^1(q)$~were self-adjoint we would contradict expectations for the
quantization of real fields.  We remark that the factor~$i$ in~\eqref{eq:272}
already occurs in quantum mechanics; see~\cite[(4.10)]{FM1}.  The middle step
in~\eqref{eq:270} is valid with either the standard physics conventions or
the Koszul-compatible notion of adjointness supplemented with~\eqref{eq:272}.
  \end{remark}

  \begin{proof}[Proof of Theorem~\ref{thm:172}]
 Fix~$h^\alpha \in H\ono^\alpha \setminus H\mstrut \ono$.  Then $h^\alpha
$~reverses the time orientation, in other words, $H^\alpha (V_+)=-V_+$.
Hence for~$\xi \in V$ we use~\eqref{eq:265}, \eqref{eq:267},
and~\eqref{eq:273} to deduce that for~$\xi \in V$ and~$q=0,1$ we have 
  \begin{equation}\label{eq:274}
     \begin{aligned} \overline{W^q(\xi )} &= \lim\limits_{\;\;\epsilon \to0^+}
      \overline{\WC^q(\xi -\epsilon i\eta )} \\ &=
      \lim\limits_{\;\;\epsilon \to 0^+} \overline{\sigma (h^\alpha
      )^{\otimes 2}\,\WC^q\bigl(\rho _n(h^\alpha )\xi -\epsilon i\rho
     _n(h^\alpha 
      )\eta \bigr)} \\ &= \lim\limits_{\;\;\epsilon \to 0^+}(-1)^q\sigma
      (h^\alpha )^{\otimes 2}\,\WC\bigl(\rho _n(h^\alpha )\xi + \epsilon i\rho
      _n(h^\alpha )\eta \bigr) \\ &= \; (-1)^q\sigma (h^\alpha )^{\otimes
     2}\,W\bigl(\rho      _n(h^\alpha )\xi 
      \bigr).\end{aligned} 
  \end{equation}
To pass to the third equation we use the fact that $\sigma (h^\alpha )$~is
real on even vectors (Proposition~\ref{thm:170}(1)).  The construction that
proves Proposition~\ref{thm:170}(2) combines with~\eqref{eq:274} to yield
  \begin{equation}\label{eq:275}
     \overline{W^q(\xi )} = \sigma (h^\beta )^{\otimes 2}\,W\bigl(\rho
     _n(h^\beta )\xi \bigr),\qquad  h^\beta \in H\ono^\beta
\setminus H\ono\mstrut, \quad \xi \in V.
  \end{equation}
This is the precise statement that the Minkowski spacetime 2-point function
is antilinear-invariant under elements of~$H\ono^\beta \setminus H\ono\mstrut
$.
  \end{proof}

  \begin{remark}[]\label{thm:174}
 If $\sQQ$~is a relativistic quantum field theory with fermionic states and
time-reversal symmetry, and no other internal symmetries, then $H\ono$ is a
double cover of~$S\OMt$ whose identity component is isomorphic
to~$\Spin\ono$.  The complex Lie group~$H_n(\CC)$ is then a double cover
of~$O_n(\CC)$ whose identity component is isomorphic to~$\Spin_n(\CC)$.
Proposition~\ref{thm:a3} implies that $H_n(\CC)$~is isomorphic
to~$\Pp_n(\CC)$ or~$\Pm_n(\CC)$.  The construction with~\eqref{eq:251}
and~\eqref{eq:252} tells that the group~$H\ono^\beta $ is~$\Pin\mstrut \noo$
and~$\Pin\mstrut \ono$, respectively.  Recalling the sign
convention~\eqref{eq:a9} for Clifford algebras, this proves the
correspondence between~\eqref{eq:a48} and~\eqref{eq:a49} and also limits the
possible symmetry groups on relativistic quantum field theories to the
Cliffordian pin groups.
  \end{remark}

   \section{Involutions on categories and duality}\label{sec:11}

  \begin{definition}[]\label{thm:15}
 Let $\sC$~be a category.

      \begin{enumerate}[{\textnormal(}1{\textnormal)}]

      \item An \emph{involution} of~$\sC$ is a pair~$(\tau
,\eta )$ of a functor $\tau \:\sC\to\sC$ and a natural isomorphism $\eta
\:\id_{\sC}\to \tau ^2$ such that for any~$x\in \sC$ we have $\tau \eta
_x=\eta _{\tau x}$ as morphisms $\tau x\to \tau ^3x$.

 \item A \emph{fixed point} of~$\tau $ is a pair~$(x,\theta )$ of an
object~$x\in \sC$ and an isomorphism $x\xrightarrow{\;\theta \;}\tau x$ such
that $\tau \theta \circ \theta =\eta _x$ as morphisms $x\to \tau ^2x$.

 \end{enumerate}
  \end{definition}

\noindent
 If $\sC$~is a symmetric monoidal category, then the involution~$\tau $ is
required to be a symmetric monoidal functor: for $x,y\in \sC$ there is given
an isomorphism $\tau x\otimes \tau y\xrightarrow{\;\cong \;}\tau (x\otimes
y)$ and these isomorphisms are compatible with the symmetry and with~$\eta $.

  \begin{example}[]\label{thm:16}
 Let $\sC=\Vect_{\CC}$ be the category of complex vector spaces and linear
maps.  Define $\tau \:\sC\to\sC$ to be the functor that takes complex vector
spaces and linear maps to their complex conjugates.  (The complex conjugate
vector space is the same underlying real vector space with the sign of
multiplication by $\sqrt{-1}\in \CC$ reversed; the complex conjugate of a
linear map is the same map of sets.)  Then there is a canonical
identification of $\tau ^2$ with $\id_{\sC}$.  A fixed point is a complex
vector space with a real structure.  As a variation, if $\sC=s\!\Vect_{\CC}$
is the category of super ($\zt$-graded) vector spaces and $\tau $~complex
conjugation as above, but now $\eta $~is composed with the exponentiated
grading automorphism (denoted `$(-1)^F$' in the physics literature), then a
fixed point is a super vector space with a real structure on its even part
and a quaternionic structure on its odd part.  If we restrict to the
subgroupoid~$\sC^\times $ of super lines and isomorphisms, then all fixed
points are even.
  \end{example}

  \begin{definition}[]\label{thm:18}
 Let $(\tau ,\eta )$~be an involution on a category~$\sC$.  The \emph{fixed
point category}~$\sC^\tau $ has as objects fixed points~$(x,\theta )$, and a
morphism $(x,\theta )\to(x',\theta ')$ in~$\sC^\tau $ is a morphism
$(x\xrightarrow{f}x')\in \sC$ such that the diagram
  \begin{equation}\label{eq:39}
     \begin{gathered} \xymatrix{x\ar[r]^{f} \ar[d]_{\theta } &
     x'\ar[d]^{\theta '} \\ \tau x\ar[r]^{\tau f} & \tau x'} \end{gathered} 
  \end{equation}
commutes.  There is a \emph{forgetful functor} $\sC^\tau \to\sC$ that maps
$(x,\theta )\mapsto x$. 
  \end{definition}

  \begin{example}[]\label{thm:83}
 Let $\sC$~be the \emph{groupoid} of $\Zo$-torsors:\footnote{Recall that
$\ZZ(1)=2\pi \sqmo\ZZ\subset \CC$.} an object~$T$ is a set with a simply
transitive action of the additive group~$\Zo$ and a morphism $T\to T'$ is an
isomorphism that commutes with the $\Zo$-actions.  Let $\tau $~be the
involution that sends a torsor~$T$ to its dual $\Hom\mstrut _{\Zo}(T,\Zo)$
and sends a morphism to its inverse transpose.  The dual of~$T$ may be
identified with~$T$ as a set; the dual $\Zo$~action by~$\zeta \in \Zo$ is the
original action by~$\bar\zeta $.  The fixed point category~$\sC^\tau $ is
equivalent to the set~$\zt$: there are two isomorphism classes of objects and
no nontrivial automorphisms.  The first, which we call `Type~P', is the
torsor~$\Zo$ with complex conjugation~$\theta $ as a map to the dual torsor.
The second, which we call `Type~N', is the torsor $\pi \sqrt{-1}+\Zo$ with
complex conjugation~$\theta $.  Observe that in the Type~P case the
involution~$\theta $ has a fixed point whereas in the Type~N case it does
not.  Also, $\Zo$-torsors form a Picard groupoid, as do torsors for any
abelian group, and the fixed point category is a Picard groupoid as well.
The Type~P torsor is the tensor unit; the square of a Type~N torsor has
Type~P.  The names derive from the family $\exp\:\CC\to\Cx$ of $\Zo$-torsors
with complex conjugation acting.  There are two components~$\RR^{>0}$
and~$\RR^{<0}$ of fixed points in the base.  The fiber of~$\exp$ has Type~P
over positive real numbers and Type~N over negative real numbers; the
representatives described above  are $\exp\inv (+1)$ and $\exp\inv (-1)$,
respectively.
  \end{example}

  \begin{definition}[]\label{thm:17}
 Let $\sB,\sC$~be categories with involutions and $F\:\sB\to\sC$ a functor.
Then \emph{equivariance data} for~$F$ is an isomorphism $\phi \:F\tau
_{\sB}\xrightarrow{\;\cong \;}\tau _{\sC}F$ of functors $\sB\to\sC$ such that
for every object $x\in \sB$ the
diagram
  \begin{equation}\label{eq:38}
     \begin{gathered} \xymatrix@C+2em@R+.7em{Fx\ar[r]^{F\eta _{\sB}}\ar[dr]_{\eta
     _{\sC}}&F\tau ^2_{\sB}x\ar[d]^{\phi ^2} \\  &\tau _{\sC}^2Fx}
     \end{gathered} 
  \end{equation}
commutes. 
  \end{definition}

\noindent
 There are additional compatibilities for a symmetric monoidal functor
between symmetric monoidal categories; we do not spell them out.  We often
loosely say that ``$F$~is an equivariant functor'', but it is important to
remember that equivariance is data+condition, not simply a condition.
 
Next, we review duality in a symmetric monoidal category.  Let $\sC$~be a
symmetric monoidal category and $x\in \sC$.  Denote the tensor unit
by~$1\in \sC$.  (The tensor unit in~$\bne$ is the empty set as an
$(n-1)$-dimensional manifold; the tensor unit in~$\Vect_{\CC}$ is the trivial
1-dimensional vector space~$\CC$.)   

  \begin{definition}[]\label{thm:19}
 Let $x$~be an object in a symmetric monoidal category~$\sC$.  \emph{Duality
data} for~$x$ is a triple~$(x^\vee,c,e)$ consisting of an object~$x^\vee\in
\sC$ together with morphisms $c\:1\to x\otimes x^\vee$ and $e\:x^\vee\otimes
x\to 1$ such that the compositions
  \begin{equation}\label{eq:27}
     \begin{aligned} &x\xrightarrow{\;\;c\otimes \id\;\;}\phantom{j}x\otimes
     x^\vee\otimes 
      x\xrightarrow{\;\;id\otimes e\;\;}\phantom{^\vee}x \\
     &x^\vee\xrightarrow{\;\;\id\otimes 
      c\;\;}x^\vee\otimes x\otimes x^\vee\xrightarrow{\;\;e\otimes
      \id\;\;}x^\vee\end{aligned} 
  \end{equation}
are identity maps.  If $x_0\xrightarrow{\;f\;}x_1$ is a morphism, then the
dual morphism is the composition
  \begin{equation}\label{eq:28}
     f\dual\:x_1^\vee\xrightarrow{\;\;\id\otimes c_{x_0}\;\;}x_1^\vee\otimes x_0
     \otimes x_0^\vee \xrightarrow{\;\;\id\otimes f\otimes \id\;\;}
     x_1^\vee\otimes x_1\otimes x_0^\vee \xrightarrow{\;\;e_{x_1}\otimes
     \id\;\;} x_0\dual  
  \end{equation}
  \end{definition}

\noindent
 The morphism~$c$ is called \emph{coevaluation} and $e$~is called
\emph{evaluation}.  We say that $x^\vee$~is ``the'' dual to~$x$ since any two
triples of duality data are uniquely isomorphic.  Assuming all objects have
duals, we can make choices of duality data for all objects at once and so
obtain a duality involution~$\delta $ on~$\sC$, but $\delta $~ does not
satisfy Definition~\ref{thm:15} since the direction of morphisms is
reversed~\eqref{eq:28}; in other words, $\delta $~is a functor to the
\emph{opposite} category.

  \begin{definition}[]\label{thm:20}
 Let $\sC$~be a category. 

 \begin{enumerate}[{\textnormal(}1{\textnormal)}]

\item A \emph{twisted involution} of~$\sC$ is a pair~$(\delta ,\eta )$ of a
functor $\delta \:\sC\to\sC\op$ and a natural isomorphism $\eta \:\id_{\sC}\to
\delta \op\circ \delta $ such that for any~$x\in \sC$ we have $\delta \eta
_x\circ \eta _{\delta x}=\id_{\delta x}$. 

 \item A \emph{fixed point} of~$\delta $ is a pair~$(x,\theta )$ of an
object~$x\in \sC$ and an isomorphism $x\xrightarrow{\;\theta \;}\delta x$ such
that $\delta \theta\circ  \eta_x = \theta$ as morphisms $x\to \delta x$.

 \end{enumerate}  
  \end{definition}

\noindent
 Definition~\ref{thm:18} applies with a single change: the direction of
the bottom arrow in~\eqref{eq:39} is reversed. 

  \begin{example}[]\label{thm:21}
 For $\sC=\fVC$ the category of finite dimensional complex vector spaces, the
duality involution $\delta \:\sC\to\sC\op$ maps a vector space~$V$ to its
dual~$V^*$ and a linear map $f\:V\to W$ to $f^*\:W^*\to V^*$.  A fixed point
of~$\delta $ is a vector space~$V$ equipped with a nondegenerate symmetric
bilinear form; a linear map $f\:V\to W$ in~$\sC^\delta $ preserves the
bilinear forms.  A fixed point for the composite of duality and complex
conjugation (Example~\ref{thm:16}) is a complex vector space~$V$ with a
nondegenerate hermitian form; a linear map $f\:V\to W$ in the fixed point
category is a partial isometry---an injective map that preserves the
hermitian forms.
  \end{example}

  \begin{remark}[]\label{thm:22}
 There is a higher categorical context for Definition~\ref{thm:20}.  Let
$\Cat$~denote the 2-category of categories.  There is an involution $\alpha
\:\Cat\to\Cat$ that sends a category~$\sC$ to its opposite~$\sC\op$.  (There
is an extra categorical layer over Definition~\ref{thm:15}: there is a
triple~$(\alpha ,\eta _1,\eta _2)$ of data and a single condition.)  A twisted
involution in the sense of Definition~\ref{thm:20} is fixed point data
for~$\alpha $.  
  \end{remark}

  \begin{definition}[]\label{thm:59}
 Let $(\tau ,\eta )$~be an involution on a symmetric monoidal category~$\sC$.
A \emph{hermitian structure} on an object $x\in \sC$ is an isomorphism $h\:\tau
x\to x\dual$ such that the composition 
  \begin{equation}\label{eq:323}
     \tau x\cong \tau \bigl((x\dual)\dual \bigr)\xrightarrow{\;\;\tau
     (h\dual)\;\;} 
     \tau \bigl((\tau x)\dual \bigr)\cong \tau ^2(x
     \dual)\xrightarrow{\;\;\eta \inv \;\;}x\dual 
  \end{equation}
is equal to~$h$.
  \end{definition}

\noindent 
 Proposition~\ref{thm:44} asserts that every object in a bordism category
carries a hermitian structure.  Observe that if $F\:\sB\to\sC$ is an
equivariant symmetric monoidal functor between symmetric monoidal categories
with involution, as in Definition~\ref{thm:17}, then the image of a hermitian
structure on an object $b\in \sB$ is a hermitian structure on~$Fb$.

   \section{Noncompact Wick-rotated vector symmetry groups}\label{sec:14}
 
Let $(H_n,\rho _n)$ be a symmetry type, as in Definition~\ref{thm:153}. 

  \begin{proposition}[]\label{thm:161}
 Assume~$n\ge 3$.

 \begin{enumerate}[{\textnormal(}1{\textnormal)}]

 \item There exist a canonical noncompact Lie group~$\uH_n$, a homomorphism
$\uH_n\to GL_n\RR$ with kernel~$K$, and an inclusion $H_n\hookrightarrow
\uH_n$ such that (i)~$H_n\subset \uH_n$ is a maximal compact Lie subgroup,
(ii)~the inclusion induces an isomorphism on~$\pi _0$, and (iii)~the diagram
  \begin{equation}\label{eq:241}
     \begin{gathered} \xymatrix{H_n\ar@{^{(}->}[r]^{} \ar[d]_{\rho _n} &
     \uH_n\ar[d]^{} \\ 
     O_n\ar@{^{(}->}[r]^{} & GL_n\RR} \end{gathered} 
  \end{equation}
commutes. 

 \item There exists a canonical Lie group~$\uhH_n$ that fits into the
diagram
  \begin{equation}\label{eq:242}
     \begin{gathered} \xymatrix{1\ar[r]&H_n\ar[r]^{j_n}\ar@{^{(}->}[d]^{}
     &\hH_{n}\ar[r]\ar@{^{(}->}[d]^{} & \pmo
     \ar@{=}[d]\ar[r] & 1\\ 1 \ar[r]& \uH_{n} \ar[r]^{}& \uhH_{n}
     \ar[r]& \pmo\ar[r]&1} \end{gathered} 
  \end{equation}
of group extensions, as well as a canonical homomorphism $\uhH_n\to
\pmo\times GL_n\RR$ that fits into a pullback square 
  \begin{equation}\label{eq:243}
     \begin{gathered} \xymatrix{\uH_n\ar[r]^{} \ar[d]_{} & \uhH_n\ar[d]^{} \\
     GL_n\RR\ar[r]^{} & \pmo\times GL_n\RR} \end{gathered} 
  \end{equation}
and a commutative cube built from~\eqref{eq:35} and \eqref{eq:243}. 
 \end{enumerate} 
   \end{proposition}

These noncompact groups are used to define topological bordism categories
(\S\ref{subsec:12.2}). 

  \begin{proof}
 First define~$\uSpin_n$ and $\uPp_n$ as follows.  Choose a lift
$P\xrightarrow{\;\rho \;}GL_n\RR\xrightarrow{\;\pi \;}GL_n\RR/O_n$ of the
homogeneous principal bundle~$\pi $ to a principal $\Pp_n$-bundle~$\pi \circ
\rho $; it is unique up to isomorphism since $GL_n\RR/O_n$~is contractible.
Define $\uPp_n$ as the group of automorphism of~$\rho $ that cover the
action of left multiplication of~$GL_n\RR=\underline{O}_n$, and $\uSpin_n\in
\uPp_n$ the subgroup covering left multiplication by
$GL_n^+\RR=\underline{SO}_n$.  Then set 
  \begin{equation}\label{eq:244}
     \underline{SH}_n=\uSpin_n\times K\bigm / \langle(-1,k_0)\rangle, 
  \end{equation}
analogous to~\eqref{eq:16}.  If $\rho _n(H_n)=SO_n$, set
$\uH_n=\underline{SH}_n$.  If $\rho _n$~is surjective,
define~$\widetilde{\uH}_n$ as the pullback (see~\eqref{eq:226})
  \begin{equation}\label{eq:245}
     \begin{gathered} \xymatrix{
     1\ar[r]&K\ar[r]\ar@{=}[d]&\widetilde{\uH}_n\ar@{->>}[d]\ar[r]^{}&
     \uPp_n\ar@{->>}[d] \ar[r] & 1\\ 1\ar[r]&K\ar[r]&J\ar[r]&
     \pmo\ar[r] & 1} \end{gathered} 
  \end{equation}
and then 
  \begin{equation}\label{eq:246}
     \uH_n\cong \widetilde{\uH}_n \bigm / \langle(-1,k_0)\rangle. 
  \end{equation}
It is straightforward to check the properties in~(1).

For~(2) imitate the proof of Proposition~\ref{thm:10} with $\uSpin_n$
and~$\uPp_n$ replacing $\Spin_n$ and~$\Pp_n$, respectively. 
  \end{proof}

   \section{Computations with $A_1$-modules}\label{sec:15}
 
The computations described in \S\ref{sec:13} depend on knowledge of the mod $2$
cohomology   of the spectra
\[
\begin{aligned}
MTO_{|d|}& &\qquad  0\le &d \le 3 &\\
MO_{|d|}& &\qquad  -3 \le &d \le 0& \\
MSO_{3}& &\qquad &&
\end{aligned}
\]
as modules over the subalgebra $\aone$ of the mod $2$ Steenrod algebra
generated by $\sq^{1}$ and $\sq^{2}$.    The purpose of this appendix
is to describe these computations and the methods for arriving at
them.

We thank Meng Guo for her careful reading and astute corrections.

\subsection{Cell diagrams}
\label{sec:hadcell-diagrams}

It is common practice to depict an $\aone$ module $M$ as a graph with
nodes corresponding to a chosen homogeneous basis for $M$, at a height
corresponding to grading, and with an edge drawn with a straight line
between $e$ and $e'$ if the coefficient of $e'$ in $\sq^{1}(e)$ is
non-zero, and an edge drawn with a curved line if they are analogously
related by $\sq^{2}$.  This works best when a basis can be chosen
so that the operations $\sq^{1}$ and $\sq^{2}$ send basis elements to
basis elements.  This is the case with all of the $\aone$ modules
needed in this paper.   Here are three examples:

\begin{center}
\includegraphics{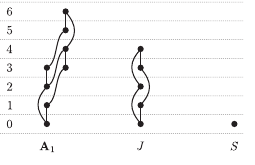}
\end{center}
For clarity the degrees of the basis elements have been indicated in
this example, though we will not usually do this.  Topologists call these graphs
``cell diagrams.''  The one on the left is the free $\aone$ module on
one generator (of degree $0$) and the one on the right is just
$\Z/2=H^{\ast}(S^{0})$, concentrated in degree $0$.  The one in the
middle right comes up frequently and was deemed the {\em Joker} by
Adams.  It is the cohomology of a spectrum also called $J$.

As explained in \S\ref{sec:13} the mod $2$ cohomology $H^{\ast}\MSpin$ was
show by Anderson, Brown and Peterson~\cite{ABP1} to have the form \[
A\underset{\aone}{\otimes}N \] for some $\aone$ module $N$ (which they
determined).  Figure~\ref{fig:9} is a cell diagram of $N$ through
dimension~ $28$.  The modules to the right (in gray) are free, and the modules
to the left (in black) are either $S$ or $J$.

\begin{figure}[ht]
\centering
\includegraphics[scale=.85]{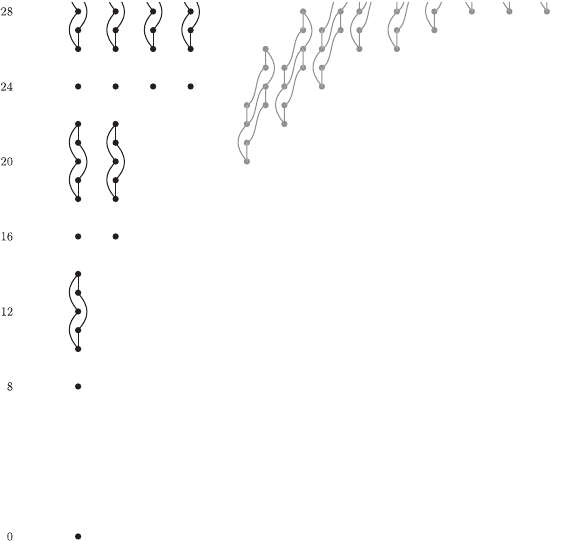}
\caption{The cell diagram for $\MSpin$}\label{fig:9}
\end{figure}

How does one use this in practice?    Suppose $X$ is a connective
spectrum of finite type and one wishes to determine the localization
at $2$ of $\pi_{\ast}\MSpin\wedge X$.   One makes three computations,
(in which the abutments, though not indicated, have been completed at $2$)
\begin{align*}
\ext_{\aone}^{s,t}(H^{\ast}X,\Z/2)&\Rightarrow \pi_{t-s}\ko\wedge X \\
\ext_{\aone}^{s,t}(J\otimes H^{\ast}X,\Z/2)& \Rightarrow \pi_{t-s} \ko\wedge J\wedge X =: M_{J}(X)\\
\ext_{\aone}^{s,t}(\aone\otimes H^{\ast}X,\Z/2)& = H_{\ast}X\ . 
\end{align*}
The two spectral sequences often
collapse (they do in the cases studied in this paper).   Write 
\begin{align*}
M_{S}(X) &=\pi_{\ast}\ko\wedge X  \\
M_{J}(X) &= \pi_{\ast}\ko\wedge J\wedge X\ .
\end{align*}
The result of Anderson-Brown-Peterson~\cite{ABP1} is that
after localizing at $2$, $\pi_{\ast}\MSpin\wedge X$ is isomorphic to a
sum of copies of $M_{S}(X)$, $M_{J}(X)$ and $H_{\ast}X$, shifted
according to the location of the corresponding summands in the cell
diagram of $X$:
\[
\pi_{\ast}\MSpin\wedge X = M_{S}(X) \oplus \Sigma^{8}M_{S}(X)\oplus
\Sigma^{10}M_{J}(X)\oplus \cdots \oplus\Sigma^{20}H_{\ast}X
\oplus\cdots .
\]

One further comment about the spectral sequences above.   If $M$ is a
free $\aone$-module then 
\begin{align*}
\ext_{\aone}^{s,t}(M,\Z/2) &= \ext_{\aone}^{s,t}(J\otimes M,\Z/2) = 0
\qquad s>0\\
\ext_{\aone}^{0,t}(M,\Z/2) &= \Hom_{\aone}(M,\Z/2)  \\
\ext_{\aone}^{0,t}(J\otimes M,\Z/2) &= \Hom_{\aone}(J\otimes M,\Z/2) 
\end{align*}
In these cases the display of the spectral sequences are all on the line
$s=0$, and the spectral sequences collapse.    

More generally if $M$
is of the form $M'\oplus F$ with $F$ a free $\aone$ module, then 
\[
\ext_{\aone}^{s,t}(M,\Z/2) \approx 
\ext_{\aone}^{s,t}(M',\Z/2) \oplus
\ext_{\aone}^{s,t}(F,\Z/2) 
\]
and the spectral sequence is the sum of two spectral sequences, one of
which collapses for trivial reasons.  The analogous statement holds
for the second spectral sequence.  For this reason it is useful to
omit free summands from the cell diagrams and keep track of them in
some other way.

\subsection{The charts}

We can now explain in more detail what is shown in Figure~\ref{fig:m1}.  In
each case we are interested in $\pi_{\ast}\MSpin\wedge X$ for some
appropriate spectrum $X$.  A cell diagram for $X$, modulo free $\aone$
summands is shown on the left, with $X$ labeled below it.  The chart to the
right depicts $\ext_{\aone}^{s,t}(H^{\ast}(X);\Z/2)$ as a module over
$\ext_{\aone}^{s,t}(\Z/2,\Z/2)$.  Following standard convention the
horizontal axis is the $(t-s)$-axis and the vertical axis is the $s$-axis.
Each dot represents a basis element.  The contributions from the free
summands contribute only to $\ext^{0,t}$ and to keep the picture uncluttered
they are indicated below the table.  For example in the case $s=3$, in
dimension $(t-s)=8$, there is a $\Z/2$ not indicated in graphical notation,
but only by the $+1$.  The group in that case is the sum of that $\Z/2$ and
$\Z/2\oplus \Z/8\oplus\Z/32$.

The color coding allows one to read off the effect of the twisted Dirac
operators of \S\ref{subsec:8.4} as described in homotopy theoretic terms in
\S\ref{sec:13}.  Consider, for example, the case $s=3$.  One needs to know
the effect of the map \[ \pi_{\ast}\MSpin\wedge S^{-3}\wedge MO_{3}\to
S^{-3}\wedge KO.  \] The $(-1)$-connected cover of $S^{-3}\wedge KO$ is
equivalent to $\ko\wedge W$, in which $W$ is the finite spectrum whose cell
diagram is depicted below \begin{center}
\includegraphics{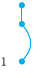}
\end{center}
The effect in cohomology of the twisted Dirac operator corresponds to
the inclusion of the blue cells, and the cokernel of this map, in the
relevant summand, is displayed in green.    The $\ext$ charts are
correspondingly color coded and the red line indicates the connecting
homomorphism in the long exact sequence.   The $\ext$ computation of
interest is  built from the kernel and cokernel of this connecting
homomorphism.   For example the connecting homomorphism is a
monomorphism from the column $(t-s)=1$ to the column $(t-s)=0$, and
the only non-zero $\ext$ group in this range is 
\[
\ext^{0,0}_{\aone}(H^{\ast}S^{-3}MO_{3},\Z/2) = \Z/2.
\]
In dimension $6$, the group is the sum of $(\Z/2)^{2}$ (coming from the
free summands) and another $\Z/2\oplus \Z/2$.   The fact that the dot
in filtration $s=2$ is blue indicates that the corresponding basis
element maps non trivially under the map to $\pi_{6}\Sigma^{-3}KO$.

\subsection{The cases $s=\pm1$}
\label{sec:hadcases-s=pm1}

The cell diagrams for  $\Sigma^{-1} MO(1)$ and $\Sigma^{1}MTO(1)$ are
easily derived from the Thom isomorphism and Wu formula 
\[
\sq^{n}(U)=w_{n}\cdot U
\]
for the action of the Steenrod operations on the Thom class of a
(virtual) vector bundle.  The diagrams work out to be
\begin{center}
\includegraphics{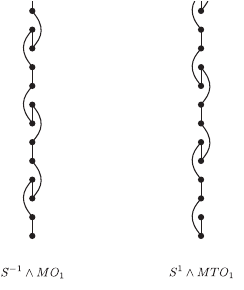}
\end{center}
and continue infinitely far upward, repeating the evident pattern of
Steenrod operations.   There are no additional free summands in these cases.

\subsection{The case $s=4$}
\numberwithin{equation}{section}
\label{sec:hadcohomology-bso3}

The next easiest case to understand is the case $s=4$.  To derive it
requires a useful technique introduced by Adams and
Margolis~\cite{AM}, and developed considerably further by
Margolis~\cite{Ma}.  The subalgebra $\aone$ contains two of the
Milnor operators
\begin{align*}
Q_{0} & = \sq^{1} \\
Q_{1} &= [\sq^{2},\sq^{1}]
\end{align*}
and together they generate an exterior algebra
\[
E[Q_{0},Q_{1}]\subset \aone.
\]

\begin{definition}
\label{def:1b}
Suppose that $M$ is an $\aone$ module.   For $i=0,1$ the
$i^{\text{th}}$ Margolis homology of $M$ is 
\[
H_{\ast}(M;Q_{i})= \ker Q_{i}/\image Q_{i}.
\]
The Margolis homology of a space or spectrum $X$ is the Margolis
homology of $H^{\ast}X$
\[
H_{\ast}(X;Q_{i}) = H_{\ast}(H^{\ast}(X);Q_{i}).
\]
\end{definition}

\begin{remark}
\label{rem:2}
The Milnor elements are primitive, and the Kunneth isomorphism holds:
\[
H_{\ast}(M\otimes N;Q_{i}) \approx
H_{\ast}(M;Q_{i}) \otimes
H_{\ast}(N;Q_{i}).
\]
\end{remark}

The following theorem of Adams and Margolis~\cite[Theorem~3.1]{AM}
(attributed by Adams and Margolis to Wall, in this particular case) is
one reason the Margolis homology groups are important.

\begin{theorem}[Adams-Margolis]
\label{thm:had1} A  connected $\aone$-module $M$ is free if and only if 
\[
H_{\ast}(M;Q_{0}) = H_{\ast}(M;Q_{1}) = 0.
\]
\end{theorem}

The action of the Milnor
operators on
\[
H^{\ast}(BSO_{3};\Z/2) = \Z/2[w_{2},w_{3}].
\]
is given by
\begin{align*}
Q_{0}(w_{2}) &=w_{3} \\
Q_{0}(w_{3}) &=0.
\end{align*}
This implies that the Margolis homology with respect to  $Q_{0}$ is
\[
H_{\ast}(BSO_{3};Q_{0}) \approx  \Z/2[w_{2}^{2}].
\]

Write $U$ for the Thom class in $H^{\ast}MO_{3}$.   
Since $Q_{0}(U)=w_{1}U=0$ the Thom isomorphism commutes with $Q_{0}$,
and the Margolis homology of $MSO_{3}$ with respect
to $Q_{0}$ is 
\[
U\cdot\Z/2[w_{2}^{2}].
\]

For the $Q_{1}$ homology note that
\begin{align*}
Q_{1}(w_{2}) &= w_{2}w_{3} \\
Q_{1}(w_{3}) &= w_{3}^{2} \\
Q_{1}(U) &=  U w_{3}.
\end{align*}
It follows that $H^{\ast}MSO(3)$, as a module over the exterior
algebra $E[Q_{1}]$, is a sum of 
\[
UF_{j} =\{U w_{2}^{j}, U w_{2}^{j} w_{3}, U w_{2}^{j}w_{3}^{2},U w_{2}^{j}w_{3}^{3},\dots\}.
\]
Using this one sees that the Margolis homology with respect to $Q_{1}$ of
$MSO(3)$ has basis $\{U w_{2}^{2j+1}  \}$.  

Now let $M$ and $N$ be the $\aone$-modules 
\begin{center}
\includegraphics{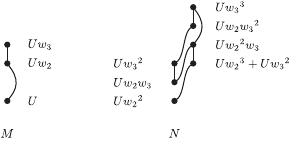}
\end{center}
and consider the map 
\begin{equation}
\label{eq:had1}
(M\oplus N)\otimes \Z/2[w_{2}^{4}] \to H^{\ast}(MSO_{3}).
\end{equation}
The map~\eqref{eq:had1} is an inclusion.  Together with the Kunneth
formula, the computation just described implies that it induces an
isomorphism of Margolis homology with respect to both $Q_{0}$ and
$Q_{1}$.  By the Theorem of Adams and Margolis its cokernel is free,
and there is an isomorphism
\[
H^{\ast}(MSO_{3})\approx (M\oplus N)\otimes \Z/2[w_{2}^{4}]\oplus\text{ free modules}.
\]
The cell diagram in box $s=4$ in Figure~\ref{fig:m1} depicts $(M\oplus
N)\otimes \Z/2[w_{2}^{4}]$. 

One can work out the disposition of the free modules by computing
Poincar\'e series.   The Poincar\'e series for the indecomposables of the free modules (with
$U$ placed in degree $0$) is the quotient of 
\[
\frac{1}{(1-t^{2})(1-t^{3})}-\frac{(1+t^{2}+t^{3}+t^{4}(1+t+2t^{2}+t^{3}+t^{4}+t^{5}))}{(1-t^{8})}
\]
by the Poincar\'e series $(1+t)(1+t^{2})(1+t^{3})$ of $\aone$.   This
works out to be 
\[
\frac{t^{9}}{(1-t^{6})(1-t^{8})} = t^{9}+t^{15}+t^{17}+O[t]^{21}.
\]

Most of the time this is enough information.  However for some
purposes it is useful to have a basis for the generators of the free
modules.  In this case one can work out that the summand of 
free modules is
\[
\aone[w_{3}^{2},w_{2}^{4}]\cdot U w_{2}^{3}w_{3},
\]
and that
\begin{equation}
\label{eq:had2}
(M\oplus N)\otimes Z/2[w_{2}^{4}]  \oplus \aone[w_{3}^{2},w_{2}^{4}]\otimes U
w_{2}^{3}w_{3} \to H^{\ast}(MSO_{3})
\end{equation}
is an isomorphism.   We now digress to describe a technique for
verifying this.   The technique applies to modules over any connected
graded Hopf algebra and exploits the fact that such an algebra is a
Frobenius algebra.   We will describe it explicitly for $\aone$.

Let $b(x)=\sq^{2}\sq^{2}\sq^{2}(x)$ (this is the operation that goes
from the bottom dot to the top dot in the cell diagram for $\aone$).
If $F$ is a free $\aone$-module, and $x\in F$ there are elements
$a\in\aone$ and $y\in F$ with $a\cdot x = b(y)\ne 0$.  This proved by
reducing to the case $F=\aone$ and either checking directly or
appealing to the fact that $\aone$ is a Frobenius algebra.

\begin{lemma}
\label{thm:had3} 
Suppose that $F$ and $M$ are $\aone$ modules and that
$F$ is free.  A map $F\to M$ is a monomorphism if and only if the
induced map $b(F)\to b(M)$ is a monomorphism.
\end{lemma}

\begin{proof}
The only if statement is clear.   For the converse, suppose that
$b(F)\to b(M)$ is a monomorphism and $x\in F$.   By the
above there are $a\in \aone$ and $y\in F$ with $a \cdot x = b(y)\ne 0$.
Since $b(F)\to b(M)$ is a monomorphism the image of $b(y)$ is
non-zero, hence so is the image of $a(x)$ and hence so is the image of $x$.
\end{proof}

\begin{remark}
\label{rem:1} Since $\aone$ is a finite dimensional Hopf algebra, it
is also injective as a module over itself.  This means that if
$F\subset M$ is a free submodule of finite type (finite rank in each
degree) then there is a decomposition $M\approx M'\oplus F$.  This
leads to a fairly quick way of locating the free summands in an
$\aone$-module $M$.  They are generated by any subset $B\subset M$
with the property that $b(B)\subset b(M)$ is a basis.
\end{remark}

\begin{lemma}
\label{thm:had4}
For an $\aone$ module $N$ the following are equivalent 
\begin{thmList}
\item If $F$ is a free module and $F\subset N$ then $F=0$.
\item $b(x)=0$ for all $x\in N$.   
\end{thmList}
\end{lemma}

\begin{proof}
Suppose that $F\subset N$ is a free submodule.  If $F$ is non-zero
then there is an $x\in F$ with $b(x)\ne 0$, so $b(N)\ne 0$.
Conversely if there is an $x\in N$ with $b(x)\ne0$ then the map
\begin{align*}
\Sigma^{|x|}\aone &\to N \\
a &\mapsto a\cdot x
\end{align*}
is a monomorphism by Lemma~\ref{thm:had3}.
\end{proof}

\begin{definition}
\label{def:2}
An $\aone$ module $N$ {\em has no free submodules} if it has the
equivalent properties above.
\end{definition}

By Remark~\ref{rem:1} having a free submodule is equivalent to having
a free summand.

\begin{lemma}
\label{thm:had2} 
Suppose that $H$ is an $\aone$-module, and $N\subset H$
a summand having no free submodules.  If $F$ is a free module and
$F\to H$ is a monomorphism, then $F\to H/N$ is a monomorphism.
\end{lemma}

\begin{proof}
By Lemma~\ref{thm:had3} it suffices to show that $b(F)\to b(H/N)$ is a
monomorphism.   Since $b(N)=0$ and $N$ is a summand, the map $b(H)\to
b(H/N)$ is an isomorphism.
\end{proof}

Returning to the cohomology of $MSO_{3}$, we now use these ideas to
show that~\eqref{eq:had2} is an isomorphism of $\aone$ modules.  Both
sides have the same Poincar\'e series so it suffices to show that the
map is a monomorphism, or equivalently that the map
\[
\aone[w_{3}^{2},w_{2}^{4}]\otimes U
w_{2}^{3}w_{3} \to H^{\ast}(MSO_{3})/\big((M\oplus N)\otimes Z/2[w_{2}^{4}]\big)
\]
is a monomorphism.  Since $M$ and $N$ visibly have no free submodules,
neither does $(M\oplus N)\otimes Z/2[w_{2}^{4}]$, so by
Lemma~\ref{thm:had2} it suffices to show that 
\[
\aone[w_{3}^{2},w_{2}^{4}]\otimes U
w_{2}^{3}w_{3} \to H^{\ast}(MSO_{3})
\]
is a monomorphism.   This is done with the aid of Lemma~\ref{thm:had3}.
Since
\begin{align*}
\sq^{1}(w_{2}^{4}) &= \sq^{2}(w_{2}^{4})= 0 \\
\sq^{1}(w_{3}^{2}) &= \sq^{2}(w_{3}^{2})= 0 \\
\end{align*}
and 
\[
\sq^{2}\sq^{2}\sq^{2}(U w_{2}^{3}w_{3}) = U w_{3}^{5}
\]
the assertion comes down to checking that 
\[
\{U w_{3}^{5} w_{2}^{4k}\, w_{3}^{2\ell} \},
\]
is linearly independent, which is easy.

\subsection{The case $s=\pm2$}
\label{sec:hadcase-s=pm2}

We begin with the formulas
\begin{align*}
Q_{0}(w_{1}) &=w_{1}^{2} \\ 
Q_{0}(w_{2}) &=w_{1} w_{2} \\
Q_{1}(w_{1}) &=w_{1}^{4} \\
Q_{1}(w_{2}) &=w_{1}^{3} w_{2}+w_{1}w_{2}^{2}.
\end{align*}
For both $MO_{2}$  and $MTO_{2}$
\begin{align*}
Q_{0}(U) &= w_{1}U \\
Q_{1}(U) &= (w_{1}^{3}+w_{1}w_{2})U,
\end{align*}
so the Thom isomorphism 
\[
H^{\ast}(MO_{2})\approx H^{\ast}(MTO_{2})
\]
induces an isomorphism of Margolis homology.   

Restricting attention to $MO_{2}$, let
\[
F_{n}\subset H^{\ast}MO_{2}
\]
be the subspace with basis 
\[
\{U\, w_{1}^{i}\, w_{2}^{j}\mid j\le n \}
\]
and $\bar F_{n}$ the subspace with basis 
\[
\{U\, w_{1}^{i}\, w_{2}^{n}\},
\]
so that there is a vector space isomorphism
\[
F_{n}\approx\bigoplus_{j\le n}\bar F_{j}.
\]

The Milnor operator $Q_{0}$ preserves the decomposition into the spaces
$\bar F_{j}$ and from the formulas above one concludes that 
\[
H_{\ast}(\bar F_{2n};Q_{0})=0
\]
and
\[
H_{\ast}(\bar F_{2n+1};Q_{0}) = \Z/2\{U\,w_{2}^{2n+1} \}.
\]
This shows that the $Q_{0}$ Margolis homology of $H_{\ast}MO_{2}$ has
basis $\{U\, w_{2}^{2n+1} \}$.

The Milnor operator $Q_{1}$ maps $F_{n-1}$ to $F_{n}$.   We can
determine the Margolis homology from the associated spectral
sequence.   Identifying $F_{n}/F_{n-1}\approx \bar F_{n}$ and using
the formulas above, one easily checks that the first differential in
this spectral sequence is the $\Z/2[w_{1}]$-linear map 
\begin{gather*}
\bar F_{2n} \xrightarrow{\cdot w_{1}w_{2}}{} \bar F_{2n+1} \\
\bar F_{2n+1}\xrightarrow{0}{} \bar F_{2n+2}.
\end{gather*}
It follows that the $Q_{1}$ Margolis homology of $H^{\ast}(MO_{2})$
also has basis $\{U\, w_{2}^{2n+1} \}$.

Let $M$ and $N$ be the $\aone$
modules below
\begin{center}
\includegraphics{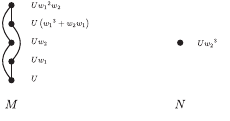}
\end{center}
The map 
\[
\Z/2[w_{2}^{4}]\otimes \big(M\oplus N\big) \to H^{\ast}(MO_{2})
\]
is then an inclusion and induces an isomorphism of Margolis homology.  If follows that 
\[
H^{\ast}MO_{2}\approx \Z/2[w_{2}^{4}]\otimes \big(M\oplus N\big)
\oplus\text{free}.
\]

The location of the free modules can be determined from the 
Poincar\'e series.  The Poincar\'e series for the generators is the
quotient of
\[
\frac{1}{(1-t)(1-t^{2})}-\frac{(1+t+t^{2}+t^{3}+t^{4}+t^{6})}{(1-t^{8})}
\]
by the Poincar\'e series $(1+t)(1+t^{2})(1+t^{3})$ of $\aone$.   This
works out to be 
\[
\frac{t^{2}}{(1-t^{2})(1-t^{8})} = 
\frac{t^{2}+t^{4}}{(1-t^{4})(1-t^{8})}.
\]

In fact the subspace of free modules is a free module over
$\aone[w_{1}^{4},w_{2}^{4}]$ and has 
\[
\{U w_{1}^{2}, U w_{2}^{2} \}
\]
as a basis.  As before, it suffices from the Poincar\'e series above
to check that the map
\[
\aone[w_{1}^{4},w_{2}^{4}]\{U w_{1}^{2},U w_{2}^{2} \} \to
H^{\ast}(MO_{3}) 
\]
is a monomorphism, and for this to check that the set
\[
\{\sq^{2}\sq^{2}\sq^{2} \big(U w_{1}^{2}w_{1}^{4 k}w_{2}^{4\ell}\big),
\sq^{2}\sq^{2}\sq^{2} \big(U w_{2}^{2}w_{1}^{4 k}w_{2}^{4\ell}\big)\}
\]
is linearly independent.  This is easily deduced from the fact that
$\sq^{2}\sq^{2}\sq^{2}$ is linear over $\Z/2[w_{1}^{4},w_{2}^{4}]$ and
\begin{align*}
\sq^{2}\sq^{2}\sq^{2}(U w_{1}^{2}) &=  U w_{1}^{6}w_{2} \\
\sq^{2}\sq^{2}\sq^{2}(U w_{2}^{2}) &=  U w_{1}^{4}w_{2}^{3}.
\end{align*}

The situation with $MTO_{2}$ is similar, the variations being the
use of the modules
\begin{center}
\includegraphics{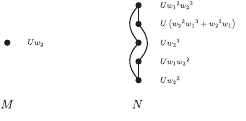}
\end{center}
and the Poincar\'e series 
\[
\frac{1+t^{6}}{(1-t^{4})(1-t^{8})}
\]
for the generators of the free modules, from which one can
conclude that the subspace of free modules is the sub
$\aone[w_{1}^{4},w_{2}^{4}]$-module with basis 
\[
\{U, U w_{1}^{2} w_{2}^{2} \}
\]
on which the operator $\sq^{2}\sq^{2}\sq^{2}$ takes the value
\[
U w_{1}^{4}w_{2}, U w_{1}^{6}w_{2}^{3}.
\]

\subsection{The case $s=\pm3$}
\label{sec:hadcase-s=pm3}

We now turn to the case of $MO_{3}$.   This is the most complicated of
the cases and the specific determination of the free summands was
carried out with the aid of Mathematica.   

It will be helpful to use the
equivalence 
\[
BO_{1}\times BSO_{3}\to BO_{3}
\]
classifying the tensor product of the defining vector bundles.   Write
\begin{align*}
w_{i} & \in H^{i}(BO_{3})\\
v_{i} & \in H^{i}BSO_{3} \\
v_{1} & \in H^{1}BO_{1} \\
\end{align*}
for the corresponding Stiefel-Whitney classes, so that under the
equivalence above we have 
\begin{align*}
w_{1}  &= v_{1} \\
w_{2} &= v_{2}+ v_{1}^{2} \\
w_{3} &= v_{3}+ v_{2}v_{1} + v_{1}^{3}.
\end{align*}
and
\begin{align*}
v_{1} &= w_{1} \\
v_{2} &= w_{1}^{2}+w_{2} \\
v_{3} &= w_{1} w_{2}+w_{3}.
\end{align*}

Now note that 
\begin{align*}
Q_{0}U &=U(v_{1}) \\
Q_{1}U &=U(v_{3}+v_{1}^{3}) \\
\end{align*}
so that as far as the Minor operators are concerned there is an
isomorphism
\[
H^{\ast}(MO(3))\approx H^{\ast}(MSO_{3})\otimes H^{\ast}(MO_{1}).
\]
From this one concludes that 
\[
H^{\ast}(MO_{3};Q_{0}) = 0
\]
and that the Margolis homology $H^{\ast}(MO_{3};Q_{1})$ has basis $\{U
v_{1} v_{2}^{2j+1}\}$.  

As in the case of $MSO(3)$ let $M$ and $N$ be the $A_{1}$-modules
depicted below (in which the blue dot indications the location of the
Margolis homology group)
\begin{center}
\includegraphics{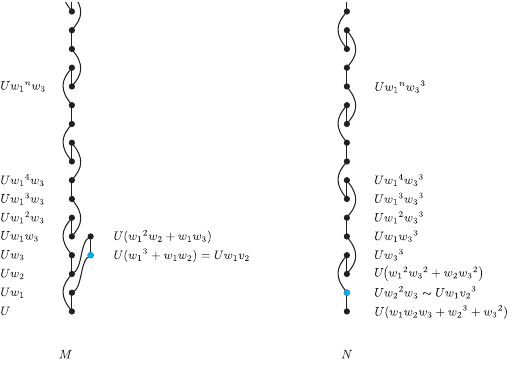}
\end{center}
Then the map 
\[
(M\oplus N) \otimes \Z/2[v_{2}^{4}] \to H^{\ast}(MO_{3})
\]
is a monomorphism and induces an isomorphism of Margolis homology
groups.   It follows that
\[
H^{\ast}(MO_{3}) \approx (M\oplus N) \otimes \Z/2[v_{2}^{4}] \oplus\text{free}.
\]
The Poincar\'e series for the indecomposables of the free modules (with
$U$ placed in degree $0$) is the quotient of 
\[
\frac{1}{(1-t)(1-t^{2})(1-t^{3})}-\frac{(1-t)^{{-1}}+t^{3}+t^{4}+t^{6}(1-t)^{-1}}{(1-t^{8})}
\]
by the Poincar\'e series $(1+t)(1+t^{2})(1+t^{3})$ of $\aone$.   It
works out to be 
\[
\frac{t^2}{\left(1-t^4\right)
   \left(1-t^8\right)}+\frac{t^4+t^5+t^6+t^9+t^{10}+t^{11}+t^{12}+t^{15}}{\left(1-t^4\right)
   \left(1-t^8\right) \left(1-t^{12}\right)}\ .
\]

The free modules correspond to the sum of 
\[
\aone[w_{1}^{4}, w_{2}^{4}]\{U w_{1}^{2} \}
\]
and the free $\aone[w_{1}^{4}, w_{2}^{4}, w_{3}^{4}]$-module on
\[
\big\{U w_2^2,U w_2 w_3,U w_3^2,U w_2^3 w_3,U w_2^2 w_3^2,U w_1^2 w_2^3 w_3,U
   w_1^2 w_2^2 w_3^2,U w_2^3 w_3^3\big\}
\]
To see that these are linearly independent, one applies
$\sq^{2}\sq^{2}\sq^{2}$ to reduce the problem to showing that the
union of 
\[
\left\{U \left(w_1^6w_{2}+w_1^5 w_{3}\right) w_{1}^{4k}w_{2}^{4\ell} \right\}
\]
and the set consisting of the products of
$w_{1}^{4k}w_{2}^{4\ell}w_{3}^{4m}$ with the elements of 
\begin{multline*}
\big\{U\left( w_1^4 w_2^3+w_1^3 w_2^2
   w_3+w_1^2 w_2 w_3^2+w_1 w_3^3\right), U\left(
   w_1^4 w_2^2 w_3+w_1^2
   w_3^3\right), \\ U\left( w_1^4 w_2
   w_3^2+w_1^3 w_3^3\right), U\left(
   w_1^2 w_2^2
   w_3^3+w_3^5\right),  U\left( w_1^2 w_2
   w_3^4+w_1 w_3^5\right), U\left( w_1^6
   w_2^4 w_3+w_1^2
   w_3^5\right),  \\U\left(
   w_1^6 w_2^3 w_3^2+w_1^5
   w_2^2 w_3^3+w_1^4 w_2
   w_3^4+w_1^3
   w_3^5\right), \\ U\left(
   w_1^4 w_2^4
   w_3^3+w_3^7\right)\big\}
\end{multline*}
is linearly independent.   A couple of maneuvers will make this
obvious.   First of all, let's apply the Thom isomorphism to get rid
of the appearance of $U$.   Next regard everything as a module over
$\Z/2[w_{1}^{4},w_{2}^{4}]$ and look at the associated graded of the
increasing filtration by powers of $w_{3}$.  Doing so reduces the
problem to showing that the map from the free
$\Z/2[w_{1}^{4},w_{2}^{4}]$-module on 
\[
\big\{w_1^5 w_{3} , w_1 w_3^{3+4k},  w_1^2
   w_3^{3+4k},   w_1^3 w_3^{3+4k}, 
   w_3^{5+4k},  w_1 w_3^{5+4k},
  w_1^2 w_3^{5+4k},   w_1^3
   w_3^{5+4k},  w_3^{7+4k}\big\}
\]
to $H^{\ast}(BO_{3})$ is a monomorphism, which is easy.

The analysis is similar for $MTO_{3}$.  The Margolis
homology is the same as that for $MO_{3}$ since the ratio of the two
Thom classes is $w_{3}^{2}$ which is annihilated by the Milnor
operators.  The basic modules for $MTO_{3}$ are as below.
\begin{center}
\includegraphics{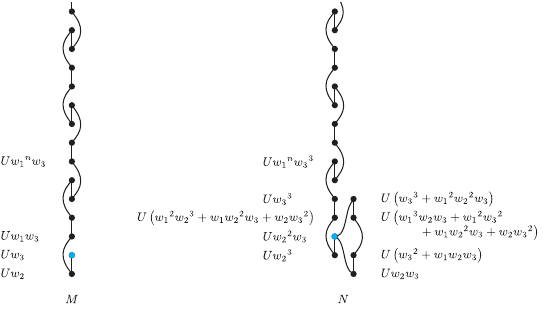}
\end{center}

The Poincar\'e series for the free modules as
the quotient of
\[
\frac{1}{(1-t)(1-t^{2})(1-t^{3})}-\frac{t^{2}(1-t)^{{-1}}+t^{6}(1-t)^{-1}+t^{5}+t^{6}+t^{8}+t^{9}}{(1-t^{8})}
\]
by the Poincar\'e series $(1+t)(1+t^{2})(1+t^{3})$ of $\aone$.   This
can be written as
\[
\frac{t^7}{\left(1-t^4\right)
   \left(1-t^8\right)}+\frac{1+t^4+t^6+t^9+t^{10}+t^{11
   }+t^{15}+t^{17}}{\left(1-t^4\right)
   \left(1-t^8\right) \left(1-t^{12}\right)}
\]

The inclusion of the free summands turns out to be the sum of the
$\aone[w_{1}^{4},w_{2}^{4},w_{3}^{4}]$ module map
\begin{multline*}
\aone[w_{1}^{4},w_{2}^{4},w_{3}^{4}]\big\{U,U w_2^2,U w_1^2 w_2^2,U w_2^3 w_3,U
   w_2^2 w_3^2, \\ U w_2 w_3^3,U w_2^3 w_3^3,U
   w_1^2 w_2^3 w_3^3\big\}
\to H^{\ast}(MTO_{3})
\end{multline*}
and the $\aone[w_{1}^{4},w_{2}^{4}]$-module map 
\[
\aone[w_{1}^{4},w_{2}^{4}]\{U w_{1}^{2}w_{2}w_{3} \}\to H^{\ast}(MTO_{3}).
\]
As above, to check this it suffices to apply
$\sq^{2}\sq^{2}\sq^{2}$ to the generators above and show that the map
from the sum of the free $\Z/2[w_{1}^{4},
w_{2}^{4},w_{3}^{4}]$-module on
\begin{multline*}
\big\{U \left(w_1^4 w_2+w_1^3 w_3\right),U \left(w_1^2 w_2 w_3^2+w_1
w_3^3\right), \\ U \left(w_1^6 w_2^3+w_1^5 w_2^2 w_3+w_1^4 w_2
w_3^2+w_1^3 w_3^3\right), U \left(w_1^4 w_2^4 w_3+w_3^5\right), \\ U
\left(w_1^4 w_2^3 w_3^2+w_1^3 w_2^2 w_3^3+w_1^2 w_2 w_3^4+w_1
w_3^5\right), \\ U \left(w_1^4 w_2^2 w_3^3+w_1^2 w_3^5\right),U
\left(w_1^2 w_2^2 w_3^5+w_3^7\right),U \left(w_1^6 w_2^4 w_3^3+w_1^2
w_3^7\right)\big\}
\end{multline*}
and the free $\Z/2[w_{1}^{4},w_{2}^{4}]$-module on 
\[
U \left(w_1^6 w_2^2 w_3+w_1^4 w_3^3\right)
\]
to $H^{\ast}(MTO_{3})$ is a monomorphism.  Again, by filtering by
powers of $w_{3}$, using the Thom isomorphism, and looking at the
associated graded, it suffices to check that the map from
\[
\Z/2[w_{1}^{4}, w_{2}^{4}]\{w_1^4 w_3^3, w_1^3 w_3^{1+4k}, w_1 w_3^{3+4k},  w_1^3 w_3^{3+4k},  w_3^{5+4k},  w_1^{2}
w_3^{5+4k},  w_3^{7+4k}, w_1^2 w_3^{7+4k}  \}
\]
to $H^{\ast}(BO_{3})$ is a monomorphism, which
is obvious.

   \section{Construction of~$\hH_n$ via classifying spaces}\label{sec:16}

 For the analysis in Section~\ref{sec:13} it is useful to describe the
construction of the spaces $B\widehat{H}_{n}$ from the point of view of
homotopy theory.  This constitutes an alternative proof of
Theorem~\ref{thm:10}.

  \subsection{Preliminary}\label{subsec:e0.9}

We begin with a left, pointed $BSpin$-module $BH$ and a
$BSpin$-module map $p:BH\to BO$.  We will write the base point as $0\in BH$,
and the action
\[
BSpin\times BH\to BH
\]
as
\[
(V,W) \mapsto V\oplus W.
\]

We define $BH_{n}$ and maps $BSpin_{n}\to BH_{n}\to BO_{n}$ by the pullback diagram 
\[
\xymatrix{
BSpin_{n}  \ar[r]\ar[d]  & BSpin
\ar[d] \\
BH_{n}  \ar[r]\ar[d]  & BH
\ar[d] \\
BO_{n}  \ar[r]        & BO
}
\]

\begin{remark}
\label{rem:e1} We will be interested in the case in which the spaces
$BH_{n}$ are the classifying spaces of a family of compact Lie groups
$H_{n}$.
\end{remark}

  \subsection{A family of spaces~$\bhath nm$}\label{subsec:e0.10}

Our aim is to define extensions
\[
H_{n} \to \widehat H_{n}\to \Z/2
\]
equipped with a splitting for each choice of a hyperplane reflection $\sigma\in
O_{n}$, and normalized so that 
\begin{equation}
\label{eq:e2}
B \widehat Spin_{n}= BPin_{n}^{+}.
\end{equation}
We will actually construct extensions 
\[
H_{n} \to \widehat H_{n}^{(m)}\to \Z/2
\]
for every $m\in \Z$, and show that these extensions are $4$-fold periodic in
$m$.   The case $m\equiv 1\mod 4$ is satisfies the normalization
condition~\eqref{eq:e2} above.  

We first construct what will end up being the classifying
spaces of~$\widehat H_{n}^{(m)}$.  Namely,
for $m\in \Z$ define
$\bhath{n}{m}$ by the pullback square 
\begin{equation}
\label{eq:e1}
\begin{gathered}
\xymatrix{
\bhath{n}{m}  \ar[r]\ar[d]  & BH\times B\Z/2
\ar[d]^-{p\,\oplus \,m\overline{\O(1)}} \\
BO_{n}  \ar[r]        & BO
}
\end{gathered}
\end{equation}
In the above, $\O(1)$ is the tautological line bundle on $B\Z/2$ and for a vector space~ $V$ we write 
\[
\bar{V} = V-\dim V.
\]

\begin{example}
\label{eg:e1} Suppose that $BH=BSpin$ so that $BH_{n}=BSpin_{n}$.  If
$m\equiv 1\mod 4$, then the pullback square then fits into a diagram
\[
\xymatrix{
B\widehat{Spin}^{(m)}_{n}  \ar[r]\ar[d]  & BSpin\times B\Z/2
\ar[d]^{p\,\oplus\,m\overline{\O(1)}}& 
 \\
BO_{n}  \ar[r]        & BO \ar[r]_-{w_{2}} & K(\Z/2,2)
}
\]
in which the corner of maps on the right is a fibration sequence.  It follows
that $B\widehat{Spin}^{(m)}_{n}\simeq B\!\Pp_n$.
\end{example}

We next show that the maps $\bhath{n}{m}\to BO_{n}$
essentially depend only on $m$ mod $4$, so we may take any convenient
value of $m\equiv 1\mod 4$ as our definition of $B\widehat{H}_{n}\to
BO_{n}$.  To do this choose a lift
\[
\xymatrix@C=6em{
  &  BSpin
\ar[d] \\
B\Z/2 \ar@{-->}[ur]^-{W}\ar[r]_-{4\overline{\O(1)}} & BO\mathrlap{\ .}  }
\]
Then for any integer~$\ell$ the map 
\begin{align*}
\rho_{\ell}:B\Z/2\times BH &\to B\Z/2\times BH\\
(x,y) &\mapsto (x,\ell W(x)\oplus y)
\end{align*}
is a homotopy equivalence, and fits into a diagram
\[
\xymatrix{
BH\times B\Z/2  \ar[rr]^{\rho_{\ell}}\ar[dr]_{p\,\oplus\, ((4\ell+k)\overline{\O(1)})\;\;\;\;}  && BH\times B\Z/2 \ar[dl]^{p\,\oplus \,k\overline{\O(1)}} 
 \\
& BO\mathrlap{ .}&
}
\]
for any integer~$k$.
Pulling back along $BO_{n}\to BO$ gives 
\[
\xymatrix{
B\widehat{H}_{n}^{(k+4\ell)}  \ar[rr]^{\rho_{\ell}}\ar[dr]  && B\widehat{H}_{n}^{(k)} \ar[dl] \\
 &BO_{n}&
}
\]
in which the top map is an equivalence.   

\begin{remark}
\label{rem:e2}
The maps $\rho_{\ell}$ depend on the choice of $W$ which is not unique.  In fact there are two lifts of $4\overline{\O(1)}$.    In terms of formulas, a choice of lift corresponds to choosing an element of $Spin_{4}$ lying over $-I_{4}\in SO_{4}$.   The two choices are 
\[
\pm e_{1}e_{2}e_{3}e_{4}.
\]
To be definite we choose the lift given by 
\[
 e_{1}e_{2}e_{3}e_{4}.
\]
\end{remark}

  \subsection{The Lie groups~$\widehat H_{n}^{(m)}$}\label{subsec:e0.11}

The pullback~\eqref{eq:e1} can be rearranged in many ways.   Note that if 
\[
\xymatrix{
F  \ar[r]^{j}\ar[d]_{i}  & Y
\ar[d]^-{g} \\
X  \ar[r]_-{f}        & BO
}
\]
is a pullback diagram, then so is 
\[
\xymatrix{
F  \ar[r]^-{(i,j)}\ar[d]  & X\times Y
\ar[d]^-{-f\oplus g} \\
\ast  \ar[r]        &  BO
}
\]
Using this \eqref{eq:e1}~ can be rewritten as 
\[
\xymatrix@C=6em{
\bhath{n}{m}  \ar[r]\ar[d]  & BH \ar[d] \\
BO_{n}\times B\Z/2  \ar[r]_-{\overline{V}_{n}-m\, \overline{\O(1)}}\ar[d]  &   BO \\
BO_{n}       &
}
\]
where $\overline{V}_n\to BO_n$ is the universal bundle.
When  $m=-k$, with $k\ge 0$, the above pullback can be factored as 
\[
\xymatrix@C=6em{
\bhath{n}{m}  \ar[r]\ar[d]  & BH_{n+k}  \ar[r]\ar[d]   &  BH \ar[d]\\
BO_{n}\times B\Z/2 \ar[r]_-{id\oplus k\,\O(1)}        &   \ar[r] BO_{n+k}        & BO{\mathrlap\ .}
}
\]
This has the advantage of showing that if for all $n$, $BH_{n}$ is the
classifying space of a compact Lie group $H_{n}$ then for $m\le 0$,
$\bhath{n}{m}$ is the classifying space of a compact Lie group
$\widehat{H}_{n}^{(m)}$.  By the $4$-fold periodicity described above
this is actually true for all $m$.

In~\S\ref{subsec:2.3} we fix the value of $m=-3$ and define \[
\widehat{H}_{n}=\widehat{H}_{n}^{(-3)}.  \] This matches the construction
in~\eqref{eq:e12}.  For the bordism computations in~\S\ref{sec:13} it is
useful to use the equivalence \[ \rho_{1}:\bhath{n}{-3}\to \bhath{n}{1} \]
and obtain the pullback square \[ \xymatrix{ B\widehat{H}_{n} \ar[r]\ar[d] &
B\Z/2\times BH \ar[d]^-{\overline{\O(1)}\oplus p} \\ BO_{n} \ar[r] & BO } \]

 \bigskip\bigskip
\providecommand{\bysame}{\leavevmode\hbox to3em{\hrulefill}\thinspace}
\providecommand{\MR}{\relax\ifhmode\unskip\space\fi MR }
\providecommand{\MRhref}[2]{%
  \href{http://www.ams.org/mathscinet-getitem?mr=#1}{#2}
}
\providecommand{\href}[2]{#2}


\begin{thebibliography}{BDMGK01}

\bibitem[A1]{A1}
M.~F. Atiyah, \emph{Topological quantum field theories}, Inst. Hautes \'Etudes
  Sci. Publ. Math. (1988), no.~68, 175--186 (1989).

\bibitem[A2]{A2}
\bysame, \emph{{$K$}-theory and reality}, Quart. J. Math. Oxford Ser. (2)
  \textbf{17} (1966), 367--386.

\bibitem[ABP1]{ABP1}
D.~W. Anderson, E.~H. Brown, Jr., and F.~P. Peterson, \emph{The structure of
  the {S}pin cobordism ring}, Ann. of Math. (2) \textbf{86} (1967), 271--298.

\bibitem[ABP2]{ABP2}
\bysame, \emph{Pin cobordism and related topics}, Comment. Math. Helv.
  \textbf{44} (1969), 462--468.

\bibitem[ABS]{ABS}
M.~F. Atiyah, R.~Bott, and A.~Shapiro, \emph{Clifford modules}, Topology
  \textbf{3} (1964), no.~suppl. 1, 3--38.

\bibitem[Ad]{Ad}
J.~F. Adams, \href{http://dx.doi.org/10.1007/BFb0075584}{\emph{Prerequisites
  (on equivariant stable homotopy) for {C}arlsson's lecture}}, Algebraic
  topology, {A}arhus 1982 ({A}arhus, 1982), Lecture Notes in Math., vol. 1051,
  Springer, Berlin, 1984, pp.~483--532.

\bibitem[AF]{AF}
David Ayala and John Francis, \emph{The cobordism hypothesis},
  \href{http://arxiv.org/abs/1705.02240}{{\tt 1705.02240}}.

\bibitem[AM]{AM}
J.~F. Adams and H.~R. Margolis, \emph{Modules over the {Steenrod} algebra},
  Topology \textbf{10} (1971), 271--282.

\bibitem[APS]{APS}
M.~F. Atiyah, V.~K. Patodi, and I.~M. Singer, \emph{Spectral asymmetry and
  {R}iemannian geometry. {I}}, Math. Proc. Cambridge Philos. Soc. \textbf{77}
  (1975), 43--69.

\bibitem[AS]{AS}
M.~F. Atiyah and I.~M. Singer, \emph{Index theory for skew-adjoint {F}redholm
  operators}, Inst. Hautes {\'E}tudes Sci. Publ. Math. (1969), no.~37, 5--26.

\bibitem[AZ]{AZ}
Alexander Altland and Martin~R. Zirnbauer, \emph{Nonstandard symmetry classes
  in mesoscopic normal-superconducting hybrid structures},
  \href{http://dx.doi.org/10.1103/PhysRevB.55.1142}{Phys. Rev. B \textbf{55}
  (1997)}, 1142--1161.

\bibitem[BB]{BB}
Garrett Birkhoff and M.~K. Bennett, \emph{Felix {K}lein and his ``{E}rlanger
  {P}rogramm''}, History and philosophy of modern mathematics ({M}inneapolis,
  {MN}, 1985), Minnesota Stud. Philos. Sci., XI, Univ. Minnesota Press,
  Minneapolis, MN, 1988, pp.~145--176.

\bibitem[BC]{BC}
E.~H. Brown and M.~Comenetz, \emph{Pontrjagin duality for generalized homology
  and cohomology theories}, Amer. J. Math. \textbf{98} (1976), 1--27.

\bibitem[BD]{BD}
John~C. Baez and James Dolan, \emph{Higher-dimensional algebra and topological
  quantum field theory}, \href{http://dx.doi.org/10.1063/1.531236}{J. Math.
  Phys. \textbf{36} (1995)}, no.~11, 6073--6105,
  \href{http://arxiv.org/abs/arXiv:q-alg/9503002}{{\tt arXiv:q-alg/9503002}}.

\bibitem[BDGK]{BDGK}
Marcus Berg, C{\'e}cile DeWitt-Morette, Shangjr Gwo, and Eric Kramer, \emph{The
  pin groups in physics: C, P and T}, Reviews in Mathematical Physics
  \textbf{13} (2001), no.~08, 953--1034,
  \href{http://arxiv.org/abs/arXiv:math-ph/0012006}{{\tt
  arXiv:math-ph/0012006}}.

\bibitem[BeC]{BeC}
Agn{\`e}s Beaudry and Jonathan~A. Campbell,
  \href{http://dx.doi.org/10.1090/conm/718/14476}{\emph{A guide for computing
  stable homotopy groups}}, Topology and quantum theory in interaction,
  Contemp. Math., vol. 718, Amer. Math. Soc., Providence, RI, 2018,
  pp.~89--136. \href{http://arxiv.org/abs/arXiv:1801.07530}{{\tt
  arXiv:1801.07530}}.

\bibitem[BG]{BG}
Anthony Bahri and Peter Gilkey, \emph{The eta invariant, {${\rm Pin}^c$}
  bordism, and equivariant {${\rm Spin}^c$} bordism for cyclic {$2$}-groups},
  Pacific J. Math. \textbf{128} (1987), no.~1, 1--24.

\bibitem[BM]{BM}
Marcel B\"okstedt and Ib~Madsen, \emph{The cobordism category and
  {W}aldhausen's {$K$}-theory}, An alpine expedition through algebraic
  topology, Contemp. Math., vol. 617, Amer. Math. Soc., Providence, RI, 2014,
  pp.~39--80. \href{http://arxiv.org/abs/arXiv:1102.4155}{{\tt
  arXiv:1102.4155}}.

\bibitem[BrMo]{BrMo}
Greg Brumfiel and John Morgan, \emph{The Pontrjagin Dual of 3-Dimensional Spin
  Bordism}, \href{http://arxiv.org/abs/arXiv:1612.02860}{{\tt
  arXiv:1612.02860}}.

\bibitem[BS]{BS}
Clark Barwick and Christopher Schommer-Pries, \emph{On the Unicity of the
  Homotopy Theory of Higher Categories},
  \href{http://arxiv.org/abs/1112.0040}{{\tt 1112.0040}}.

\bibitem[Bu]{Bu}
Ulrich Bunke, \emph{Transgression of the index gerbe}, Manuscripta Math.
  \textbf{109} (2002), no.~3, 263--287,
  \href{http://arxiv.org/abs/arXiv:math/0109052}{{\tt arXiv:math/0109052}}.

\bibitem[BuS]{BuS}
Ulrich Bunke and Thomas Schick, \emph{Smooth {$K$}-theory}, Ast{\'e}risque
  (2009), no.~328, 45--135 (2010),
  \href{http://arxiv.org/abs/arXiv:0707.0046}{{\tt arXiv:0707.0046}}.

\bibitem[C]{C}
Jonathan~A. Campbell, \emph{Homotopy theoretic classification of symmetry
  protected phases}, \href{http://arxiv.org/abs/arXiv:1708.04264}{{\tt
  arXiv:1708.04264}}.

\bibitem[Ca]{Ca}
F.~Catanese, \emph{On the moduli spaces of surfaces of general type}, J.
  Differential Geom. \textbf{19} (1984), no.~2, 483--515.

\bibitem[CFLS]{CFLS}
Clay Cordova, Daniel~S. Freed, Ho~Tat Lam, and Nathan Seiberg, \emph{Anomalies
  in the Space of Coupling Constants and Their Dynamical Applications I},
  SciPost Phys., to appear, \href{http://arxiv.org/abs/arXiv:1905.09315}{{\tt
  arXiv:1905.09315}}.

\bibitem[CGW]{CGW}
Xie Chen, Zheng-Cheng Gu, and Xiao-Gang Wen, \emph{Local unitary
  transformation, long-range quantum entanglement, wave function
  renormalization, and topological order}, Phys. Rev. B \textbf{82} (2010),
  155138, \href{http://arxiv.org/abs/arXiv:1004.3835}{{\tt arXiv:1004.3835}}.

\bibitem[CM]{CM}
Sidney Coleman and Jeffrey Mandula, \emph{All possible symmetries of the S
  matrix}, Physical Review \textbf{159} (1967), no.~5, 1251--56.

\bibitem[CS]{CS}
Damien Calaque and Claudia Scheimbauer, \emph{A note on the
  {$(\infty,n)$}-category of cobordisms},
  \href{http://dx.doi.org/10.2140/agt.2019.19.533}{Algebr. Geom. Topol.
  \textbf{19} (2019)}, no.~2, 533--655,
  \href{http://arxiv.org/abs/arXiv:1509.08906}{{\tt arXiv:1509.08906}}.

\bibitem[D]{D}
Freeman~J. Dyson, \emph{The threefold way. {A}lgebraic structure of symmetry
  groups and ensembles in quantum mechanics}, J. Mathematical Phys. \textbf{3}
  (1962), 1199--1215.

\bibitem[De]{De}
Pierre Deligne, \emph{Notes on spinors}, Quantum Fields and Strings: a course
  for mathematicians, {V}ol. 1, 2 ({P}rinceton, {NJ}, 1996/1997), Amer. Math.
  Soc., Providence, RI, 1999, pp.~99--135.

\bibitem[DF]{DF}
Xianzhe Dai and Daniel~S. Freed, \emph{{$\eta$}-invariants and determinant
  lines}, C. R. Acad. Sci. Paris S\'er. I Math. \textbf{320} (1995), no.~5,
  585--591, \href{http://arxiv.org/abs/arXiv:hep-th/9405012}{{\tt
  arXiv:hep-th/9405012}}.

\bibitem[DM]{DM}
Pierre Deligne and John~W. Morgan, \emph{Notes on supersymmetry (following
  {J}oseph {B}ernstein)}, Quantum fields and strings: a course for
  mathematicians, {V}ol. 1, 2 ({P}rinceton, {NJ}, 1996/1997), Amer. Math. Soc.,
  Providence, RI, 1999, pp.~41--97.

\bibitem[E]{E}
Johannes Ebert, \emph{A vanishing theorem for characteristic classes of
  odd-dimensional manifold bundles}, J. Reine Angew. Math. \textbf{684} (2013),
  1--29, \href{http://arxiv.org/abs/arXiv:0902.4719}{{\tt arXiv:0902.4719}}.

\bibitem[F1]{F1}
Daniel~S. Freed, \emph{Higher algebraic structures and quantization}, Comm.
  Math. Phys. \textbf{159} (1994), no.~2, 343--398,
  \href{http://arxiv.org/abs/arXiv:hep-th/9212115}{{\tt arXiv:hep-th/9212115}}.

\bibitem[F2]{F2}
\bysame, \emph{The cobordism hypothesis},
  \href{http://dx.doi.org/10.1090/S0273-0979-2012-01393-9}{Bull. Amer. Math.
  Soc. (N.S.) \textbf{50} (2013)}, no.~1, 57--92,
  \href{http://arxiv.org/abs/arXiv:1210.5100}{{\tt arXiv:1210.5100}}.

\bibitem[F3]{F3}
\bysame, \emph{Anomalies and invertible field theories}, String-{M}ath 2013,
  Proc. Sympos. Pure Math., vol.~88, Amer. Math. Soc., Providence, RI, 2014,
  pp.~25--45. \href{http://arxiv.org/abs/arXiv:1404.7224}{{\tt
  arXiv:1404.7224}}.

\bibitem[F4]{F4}
\bysame, \emph{Lectures on Field Theory and Topology}, CBMS Regional Conference
  Series in Mathematics, vol. 133, American Mathematical Society, 2019.

\bibitem[FCV]{FCV}
Lukasz Fidkowski, Xie Chen, and Ashvin Vishwanath, \emph{{Non-Abelian
  Topological Order on the Surface of a 3D Topological Superconductor from an
  Exactly Solved Model}},
  \href{http://dx.doi.org/10.1103/PhysRevX.3.041016}{Phys. Rev. \textbf{X3}
  (2013)}, no.~4, 041016,
\href{http://arxiv.org/abs/1305.5851}{{\tt arXiv:1305.5851 [cond-mat.str-el]}}.

\bibitem[FH1]{FH1}
Daniel~S. Freed and Michael~J. Hopkins, \emph{Chern-{W}eil forms and abstract
  homotopy theory},
  \href{http://dx.doi.org/10.1090/S0273-0979-2013-01415-0}{Bull. Amer. Math.
  Soc. (N.S.) \textbf{50} (2013)}, no.~3, 431--468,
  \href{http://arxiv.org/abs/arXiv:1301.5959}{{\tt arXiv:1301.5959}}.

\bibitem[FH2]{FH2}
\bysame, \emph{Consistency of M-theory on unorientable manifolds},
  \href{http://arxiv.org/abs/arXiv:1908.09916}{{\tt arXiv:1908.09916}}.

\bibitem[FHT1]{FHT1}
Daniel~S. Freed, Michael~J. Hopkins, and Constantin Teleman,
  \href{http://dx.doi.org/10.1093/acprof:oso/9780199534920.003.0019}{\emph{Consistent
  orientation of moduli spaces}}, The many facets of geometry, Oxford Univ.
  Press, Oxford, 2010, pp.~395--419.
  \href{http://arxiv.org/abs/arXiv:0711.1909}{{\tt arXiv:0711.1909}}.

\bibitem[FHT2]{FHT2}
\bysame, \emph{Loop groups and twisted $K$-theory III}, Ann. of Math.
  \textbf{174} (2011), no.~2, 974--1007,
  \href{http://arxiv.org/abs/arXiv:math/0511232}{{\tt arXiv:math/0511232}}.

\bibitem[FK1]{FK1}
Lukasz Fidkowski and Alexei Kitaev, \emph{Topological phases of fermions in one
  dimension}, \href{http://dx.doi.org/10.1103/PhysRevB.83.075103}{Phys. Rev. B
  \textbf{83} (2011)}, 075103, \href{http://arxiv.org/abs/arXiv:1008.4138}{{\tt
  arXiv:1008.4138}}.

\bibitem[FK2]{FK2}
\bysame, \emph{Effects of interactions on the topological classification of
  free fermion systems},
  \href{http://dx.doi.org/10.1103/PhysRevB.81.134509}{Phys. Rev. B \textbf{81}
  (2010)}, 134509, \href{http://arxiv.org/abs/arXiv:0904.2197}{{\tt
  arXiv:0904.2197}}.

\bibitem[FKM]{FKM}
Liang Fu, C.~L. Kane, and E.~J. Mele, \emph{Topological Insulators in Three
  Dimensions}, \href{http://dx.doi.org/10.1103/PhysRevLett.98.106803}{Phys.
  Rev. Lett. \textbf{98} (2007)}, 106803,
  \href{http://arxiv.org/abs/arXiv:cond-mat/0607699}{{\tt
  arXiv:cond-mat/0607699}}.

\bibitem[FL]{FL}
Daniel~S. Freed and John Lott, \emph{An index theorem in differential
  {$K$}-theory}, \href{http://dx.doi.org/10.2140/gt.2010.14.903}{Geom. Topol.
  \textbf{14} (2010)}, no.~2, 903--966,
  \href{http://arxiv.org/abs/arXiv:0907.3508}{{\tt arXiv:0907.3508}}.

\bibitem[FM1]{FM1}
Daniel~S. Freed and Gregory~W. Moore, \emph{Twisted equivariant matter},
  \href{http://dx.doi.org/10.1007/s00023-013-0236-x}{Ann. Henri Poincar{\'e}
  \textbf{14} (2013)}, no.~8, 1927--2023,
  \href{http://arxiv.org/abs/arXiv:1208.5055}{{\tt arXiv:1208.5055}}.

\bibitem[FM2]{FM2}
\bysame, \emph{Setting the quantum integrand of {M}-theory},
  \href{http://dx.doi.org/10.1007/s00220-005-1482-7}{Commun. Math. Phys.
  \textbf{263} (2006)}, 89--132,
\href{http://arxiv.org/abs/hep-th/0409135}{{\tt arXiv:hep-th/0409135}}.

\bibitem[GJ]{GJ}
James Glimm and Arthur Jaffe, \emph{Quantum physics}, second ed.,
  Springer-Verlag, New York, 1987. A functional integral point of view.

\bibitem[GK]{GK}
Davide Gaiotto and Anton Kapustin, \emph{Spin TQFTs and fermionic phases of
  matter}, International Journal of Modern Physics A \textbf{31} (2016),
  no.~28n29, 1645044, \href{http://arxiv.org/abs/arXiv:1505.05856}{{\tt
  arXiv:1505.05856}}.

\bibitem[GKKS]{GKKS}
Davide Gaiotto, Anton Kapustin, Zohar Komargodski, and Nathan Seiberg,
  \emph{Theta, time reversal and temperature}, Journal of High Energy Physics
  \textbf{2017} (2017), no.~5, 91,
  \href{http://arxiv.org/abs/arXiv:1703.00501}{{\tt arXiv:1703.00501}}.

\bibitem[GM]{GM}
J.~P.~C. Greenlees and J.~P. May, \emph{Equivariant stable homotopy theory},
  Handbook of algebraic topology, North-Holland, Amsterdam, 1995, pp.~277--323.

\bibitem[GMTW]{GMTW}
S{\o}ren Galatius, Ulrike Tillmann, Ib~Madsen, and Michael Weiss, \emph{The
  homotopy type of the cobordism category},
  \href{http://dx.doi.org/10.1007/s11511-009-0036-9}{Acta Math. \textbf{202}
  (2009)}, no.~2, 195--239, \href{http://arxiv.org/abs/arXiv:math/0605249}{{\tt
  arXiv:math/0605249}}.

\bibitem[GPW]{GPW}
Meng Guo, Pavel Putrov, and Juven Wang, \emph{Time Reversal, SU(N) Yang-Mills
  and Cobordisms: Interacting Topological Superconductors/Insulators and
  Quantum Spin Liquids in 3+1D},
  \href{http://arxiv.org/abs/arXiv:1711.11587}{{\tt arXiv:1711.11587}}.

\bibitem[GT]{GT}
Hilary Greaves and Teruji Thomas, \emph{On the {CPT} theorem}, Stud. Hist.
  Philos. Sci. B Stud. Hist. Philos. Modern Phys. \textbf{45} (2014), 46--65,
  \href{http://arxiv.org/abs/arXiv:1204.4674}{{\tt arXiv:1204.4674}}.

\bibitem[GW]{GW}
Zheng-Cheng Gu and Xiao-Gang Wen, \emph{Symmetry-protected topological orders
  for interacting fermions: Fermionic topological nonlinear $\sigma$ models and
  a special group supercohomology theory}, Physical Review B \textbf{90}
  (2014), no.~11, 115141, \href{http://arxiv.org/abs/arXiv:1201.2648}{{\tt
  arXiv:1201.2648}}.

\bibitem[H]{H}
M.~J. Hopkins, \emph{Algebraic topology and modular forms}, Proceedings of the
  International Congress of Mathematicians, Vol. I (Beijing, 2002) (Beijing),
  Higher Ed. Press, 2002, pp.~291--317.
  \href{http://arxiv.org/abs/arXiv:math/0212397}{{\tt arXiv:math/0212397}}.

\bibitem[HHR]{HHR}
M.~A. Hill, M.~J. Hopkins, and D.~C. Ravenel, \emph{On the nonexistence of
  elements of {K}ervaire invariant one},
  \href{http://dx.doi.org/10.4007/annals.2016.184.1.1}{Ann. of Math. (2)
  \textbf{184} (2016)}, no.~1, 1--262.

\bibitem[HHZ]{HHZ}
P.~Heinzner, A.~Huckleberry, and M.R. Zirnbauer, \emph{Symmetry Classes of
  Disordered Fermions},
  \href{http://dx.doi.org/10.1007/s00220-005-1330-9}{Communications in
  Mathematical Physics \textbf{257} (2005)}, no.~3, 725--771,
  \href{http://arxiv.org/abs/arXiv:math-ph/0411040}{{\tt
  arXiv:math-ph/0411040}}.

\bibitem[HS]{HS}
M.~J. Hopkins and I.~M. Singer, \emph{Quadratic functions in geometry,
  topology, and {M}-theory}, J. Diff. Geom. \textbf{70} (2005), 329--452,
\href{http://arxiv.org/abs/math/0211216}{{\tt arXiv:math/0211216}}.

\bibitem[J]{J}
R~Jost, \emph{A remark on the CTP theorem}, Helv. Phys. Acta \textbf{30}
  (1957), 409--416.

\bibitem[J-F]{J-F}
Theo Johnson-Freyd, \emph{Spin, statistics, orientations, unitarity},
  \href{http://dx.doi.org/10.2140/agt.2017.17.917}{Algebr. Geom. Topol.
  \textbf{17} (2017)}, no.~2, 917--956,
  \href{http://arxiv.org/abs/arXiv:1507.06297}{{\tt arXiv:1507.06297}}.

\bibitem[K1]{K1}
Alexei Kitaev, \emph{Toward Topological Classification of Phases with
  Short-range Entanglement}, 2011.
  \url{http://online.kitp.ucsb.edu/online/topomat11/kitaev/}. Lecture at KITP.

\bibitem[K2]{K2}
\bysame, \emph{On the Classification of Short-Range Entangled States}, June,
  2013. \url{http://scgp.stonybrook.edu/archives/7874}. Lecture at SCGP.

\bibitem[K3]{K3}
\bysame, \emph{Short range entangled quantum states}, May, 2014. lecture at
  MSRI.

\bibitem[K4]{K4}
\bysame, \emph{Homotopy-theoretic approach to SPT phases in action: $\zmod{16}$
  classification of three-dimensional superconductors}, January 2015.
  \url{http://www.ipam.ucla.edu/abstract/?tid=12389&pcode=STQ2015}. talk at
  \emph{Symmetry and Topology in Quantum Matter}, Institute for Pure and
  Applied Mathematics.

\bibitem[K5]{K5}
\bysame, \emph{Anyons in an exactly solved model and beyond}, Annals of Physics
  \textbf{321} (2006), no.~1, 2--111,
  \href{http://arxiv.org/abs/arXiv:cond-mat/0506438}{{\tt
  arXiv:cond-mat/0506438}}.

\bibitem[K6]{K6}
\bysame, \emph{{Periodic table for topological insulators and
  superconductors}}, \href{http://dx.doi.org/10.1063/1.3149495}{AIP Conf.Proc.
  \textbf{1134} (2009)}, 22--30, \href{http://arxiv.org/abs/0901.2686}{{\tt
  arXiv:0901.2686 [cond-mat.mes-hall]}}.

\bibitem[Ka1]{Ka1}
Anton Kapustin, \emph{{Symmetry Protected Topological Phases, Anomalies, and
  Cobordisms: Beyond Group Cohomology}},
\href{http://arxiv.org/abs/1403.1467}{{\tt arXiv:1403.1467 [cond-mat.str-el]}}.

\bibitem[Ka2]{Ka2}
\bysame, \emph{Topological field theory, higher categories, and their
  applications}, Proceedings of the {I}nternational {C}ongress of
  {M}athematicians. {V}olume {III} (New Delhi), Hindustan Book Agency, 2010,
  pp.~2021--2043. \href{http://arxiv.org/abs/arXiv:1004:2307}{{\tt
  arXiv:1004:2307}}.

\bibitem[Kap]{Kap}
Mikhail Kapranov, \emph{Supergeometry in mathematics and physics},
  \href{http://arxiv.org/abs/arXiv:1512.07042}{{\tt arXiv:1512.07042}}.

\bibitem[Kaz]{Kaz}
David Kazhdan, \emph{Introduction to QFT}, Quantum fields and strings: a course
  for mathematicians. {V}ol. 1, pp.~377--418, American Mathematical Society,
  Providence, RI, 1999. Material from the Special Year on Quantum Field Theory
  held at the Institute for Advanced Study, Princeton, NJ, 1996--1997.

\bibitem[Klo]{Klo}
K.~R. Klonoff, \emph{An index theorem in differential $K$-theory}, 2008.
  \url{http://repositories.lib.utexas.edu/bitstream/handle/2152/3912/klonoffk16802.pdf?sequence=2}.
  University of Texas Ph.D. thesis.

\bibitem[KM]{KM}
C.~L. Kane and E.~J. Mele, \emph{$Z_2$ Topological Order and the Quantum Spin
  Hall Effect}, \href{http://dx.doi.org/10.1103/PhysRevLett.95.146802}{Phys.
  Rev. Lett. \textbf{95} (2005)}, 146802,
  \href{http://arxiv.org/abs/cond-mat/0506581}{{\tt cond-mat/0506581}}.

\bibitem[KS]{KS}
M.~Kontsevich and G.~B. Segal, \emph{Wick rotation and the positivity of energy
  in quantum field theory}. in preparation.

\bibitem[KT1]{KT1}
R.~C. Kirby and L.~R. Taylor, \emph{Pin structures on low-dimensional
  manifolds}, Geometry of Low-Dimensional Manifolds, 2 ({D}urham, 1989), London
  Math. Soc. Lecture Note Ser., vol. 151, Cambridge Univ. Press, Cambridge,
  1990, pp.~177--242.

\bibitem[KT2]{KT2}
\bysame, \emph{A calculation of {${\rm Pin}^+$} bordism groups},
  \href{http://dx.doi.org/10.1007/BF02566617}{Comment. Math. Helv. \textbf{65}
  (1990)}, no.~3, 434--447.

\bibitem[KTTW]{KTTW}
Anton Kapustin, Ryan Thorngren, Alex Turzillo, and Zitao Wang, \emph{{Fermionic
  Symmetry Protected Topological Phases and Cobordisms}},
  \href{http://dx.doi.org/10.1007/JHEP12(2015)052}{JHEP \textbf{12} (2015)},
  052, \href{http://arxiv.org/abs/1406.7329}{{\tt arXiv:1406.7329
  [cond-mat.str-el]}}.
[JHEP12,052(2015)].

\bibitem[KZ]{KZ}
Ricardo Kennedy and Martin~R. Zirnbauer, \emph{{Bott Periodicity for
  ${\mathbb{Z}_2}$ Symmetric Ground States of Gapped Free-Fermion Systems}},
  \href{http://dx.doi.org/10.1007/s00220-015-2512-8}{Commun. Math. Phys.
  \textbf{342} (2016)}, no.~3, 909--963,
\href{http://arxiv.org/abs/1409.2537}{{\tt arXiv:1409.2537 [math-ph]}}.

\bibitem[L]{L}
Jacob Lurie, \emph{On the classification of topological field theories},
  Current developments in mathematics, 2008, Int. Press, Somerville, MA, 2009,
  pp.~129--280. \href{http://arxiv.org/abs/arXiv:0905.0465}{{\tt
  arXiv:0905.0465}}.

\bibitem[La]{La}
R.~J. Lawrence, \emph{Triangulations, categories and extended topological field
  theories}, Quantum topology, Ser. Knots Everything, vol.~3, World Sci. Publ.,
  River Edge, NJ, 1993, pp.~191--208.

\bibitem[LM]{LM}
H.~Blaine Lawson, Jr. and Marie-Louise Michelsohn, \emph{Spin geometry},
  Princeton Mathematical Series, vol.~38, Princeton University Press,
  Princeton, NJ, 1989.

\bibitem[Lo]{Lo}
John Lott, \emph{Higher-degree analogs of the determinant line bundle}, Comm.
  Math. Phys. \textbf{230} (2002), no.~1, 41--69,
  \href{http://arxiv.org/abs/arXiv:math/0106177}{{\tt arXiv:math/0106177}}.

\bibitem[LV]{LV}
Yuan-Ming Lu and Ashvin Vishwanath, \emph{Theory and classification of
  interacting integer topological phases in two dimensions: A Chern-Simons
  approach}, \href{http://dx.doi.org/10.1103/PhysRevB.86.125119}{Phys. Rev. B
  \textbf{86} (2012)}, 125119, \href{http://arxiv.org/abs/arXiv:1205.3156}{{\tt
  arXiv:1205.3156}}.

\bibitem[M]{M}
Max~A. Metlitski, \emph{$S$-duality of $u(1)$ gauge theory with $\theta=\pi$ on
  non-orientable manifolds: Applications to topological insulators and
  superconductors}, \href{http://arxiv.org/abs/arXiv:1510.05663}{{\tt
  arXiv:1510.05663}}.

\bibitem[Ma]{Ma}
H.~R. Margolis, \emph{Spectra and the Steenrod Algebra: Modules over the
  Steenrod Algebra and the Stable Homotopy Category}, North--Holland, New York,
  1983.

\bibitem[MFCV]{MFCV}
Max~A. Metlitski, Lukasz Fidkowski, Xie Chen, and Ashvin Vishwanath,
  \emph{{Interaction effects on 3D topological superconductors: surface
  topological order from vortex condensation, the 16 fold way and fermionic
  Kramers doublets}},
\href{http://arxiv.org/abs/1406.3032}{{\tt arXiv:1406.3032 [cond-mat.str-el]}}.

\bibitem[MS]{MS}
G.~W. Moore and G.~B. Segal, \emph{D-branes and $K$-theory in 2D topological
  field theory}, Dirichlet branes and mirror symmetry (Paul~S. Aspinwall, Tom
  Bridgeland, Alastair Craw, Michael~R. Douglas, Mark Gross, Anton Kapustin,
  Gregory~W. Moore, Graeme Segal, Bal{{\'a}}zs Szendr{\H{o}}i, and P.~M.~H.
  Wilson, eds.), Clay Mathematics Monographs, vol.~4, American Mathematical
  Society, Providence, RI, 2009, pp.~x+681.
  \href{http://arxiv.org/abs/arXiv:hep-th/0609042}{{\tt arXiv:hep-th/0609042}}.

\bibitem[MW]{MW}
Scott Morrison and Kevin Walker, \emph{Blob homology},
  \href{http://dx.doi.org/10.2140/gt.2012.16.1481}{Geom. Topol. \textbf{16}
  (2012)}, no.~3, 1481--1607, \href{http://arxiv.org/abs/arXiv:1009.5025}{{\tt
  arXiv:1009.5025}}.

\bibitem[Ng]{Ng}
Hoang~Kim Nguyen, \emph{Higher bordism categories}. Master's thesis,
  Universitat Bonn.

\bibitem[NH]{NH}
Rahul~M. Nandkishore and Michael Hermele, \emph{Fractons},
  \href{http://dx.doi.org/10.1146/annurev-conmatphys-031218-013604}{Annual
  Review of Condensed Matter Physics \textbf{10} (2019)}, no.~1, 295--313,
  \href{http://arxiv.org/abs/arXiv:1803.11196}{{\tt arXiv:1803.11196}}.

\bibitem[O]{O}
Michael~Luis Ortiz, \emph{Differential equivariant {K}-theory}, ProQuest LLC,
  Ann Arbor, MI, 2009. \href{http://arxiv.org/abs/arXiv:0905.0476}{{\tt
  arXiv:0905.0476}}. Thesis (Ph.D.)--The University of Texas at Austin.

\bibitem[OS]{OS}
Konrad Osterwalder and Robert Schrader, \emph{Axioms for {E}uclidean {G}reen's
  functions. {II}}, Comm. Math. Phys. \textbf{42} (1975), 281--305. With an
  appendix by Stephen Summers.

\bibitem[QHZ]{QHZ}
Xiao-Liang Qi, Taylor~L Hughes, and Shou-Cheng Zhang, \emph{Topological field
  theory of time-reversal invariant insulators}, Physical Review B \textbf{78}
  (2008), no.~19, 195424, \href{http://arxiv.org/abs/arXiv:0802.3537}{{\tt
  arXiv:0802.3537}}.

\bibitem[R]{R}
Bruce~L. Reinhart, \emph{Cobordism and the {E}uler number}, Topology \textbf{2}
  (1963), 173--177.

\bibitem[Re]{Re}
Miles Reid, \emph{The moduli space of {$3$}-folds with {$K=0$} may nevertheless
  be irreducible}, Math. Ann. \textbf{278} (1987), no.~1-4, 329--334.

\bibitem[S]{S}
Shlomo Sternberg, \emph{Lectures on differential geometry}, second ed., Chelsea
  Publishing Co., New York, 1983. With an appendix by Sternberg and Victor W.
  Guillemin.

\bibitem[S-P]{S-P}
C.~Schommer-Pries, \emph{Invertible field theories},
  \href{http://arxiv.org/abs/arXiv:1712.08029}{{\tt arXiv:1712.08029}}.

\bibitem[Sch]{Sch}
Stefan Schwede, \emph{Lecture notes on equivariant stable homotopy theory},
  Available at http://www.math.uni-bonn.de/people/schwede/.

\bibitem[Se1]{Se1}
Graeme Segal, \emph{The definition of conformal field theory}, Topology,
  geometry and quantum field theory, London Math. Soc. Lecture Note Ser., vol.
  308, Cambridge Univ. Press, Cambridge, 2004, pp.~421--577. (1988 preprint).

\bibitem[Se2]{Se2}
\bysame, \emph{Felix Klein Lectures 2011}.
  \url{http://www.mpim-bonn.mpg.de/node/3372/abstracts}.

\bibitem[Se3]{Se3}
\bysame, \emph{Classifying spaces and spectral sequences}, Inst. Hautes
  {\'E}tudes Sci. Publ. Math. (1968), no.~34, 105--112.

\bibitem[SeWi]{SeWi}
Nathan Seiberg and Edward Witten, \emph{{Gapped Boundary Phases of Topological
  Insulators via Weak Coupling}},
  \href{http://dx.doi.org/10.1093/ptep/ptw083}{PTEP \textbf{2016} (2016)},
  no.~12, 12C101,
\href{http://arxiv.org/abs/1602.04251}{{\tt arXiv:1602.04251
  [cond-mat.str-el]}}.

\bibitem[SRFL]{SRFL}
Shinsei Ryu, Andreas~P. Schnyder, Akira Furusaki, and Andreas~W.W. Ludwig,
  \emph{{Topological insulators and superconductors: Tenfold way and
  dimensional hierarchy}},
  \href{http://dx.doi.org/10.1088/1367-2630/12/6/065010}{arXiv:0912.2157; New
  J.Phys. \textbf{12} (2010)}, 065010,
\href{http://arxiv.org/abs/arXiv:0912.2157}{{\tt arXiv:0912.2157}}.

\bibitem[ST]{ST}
Stephan Stolz and Peter Teichner,
  \href{http://dx.doi.org/10.1090/pspum/083/2742432}{\emph{Supersymmetric field
  theories and generalized cohomology}}, Mathematical foundations of quantum
  field theory and perturbative string theory, Proc. Sympos. Pure Math.,
  vol.~83, Amer. Math. Soc., Providence, RI, 2011, pp.~279--340.
  \href{http://arxiv.org/abs/arXiv:1108.0189}{{\tt arXiv:1108.0189}}.

\bibitem[Sto]{Sto}
Robert~E. Stong, \emph{Notes on cobordism theory}, Mathematical notes,
  Princeton University Press, Princeton, N.J.; University of Tokyo Press,
  Tokyo, 1968.

\bibitem[SW]{SW}
Raymond~F Streater and Arthur~S Wightman, \emph{PCT, spin and statistics, and
  all that}, Princeton University Press, 2000.

\bibitem[T]{T}
Ren{{\'e}} Thom, \emph{Quelques propri{\'e}t{\'e}s globales des
  vari{\'e}t{\'e}s diff{\'e}rentiables}, Comment. Math. Helv. \textbf{28}
  (1954), 17--86.

\bibitem[tD]{tD}
Tammo tom Dieck, \emph{Transformation groups and representation theory},
  Lecture Notes in Mathematics, vol. 766, Springer, Berlin, 1979.

\bibitem[TPB]{TPB}
Ari~M Turner, Frank Pollmann, and Erez Berg, \emph{Topological phases of
  one-dimensional fermions: An entanglement point of view}, Physical Review B
  \textbf{83} (2011), no.~7, 075102,
  \href{http://arxiv.org/abs/arXiv:1008.4346}{{\tt arXiv:1008.4346}}.

\bibitem[W1]{W1}
Edward Witten, \emph{{Fermion Path Integrals And Topological Phases}},
  \href{http://dx.doi.org/10.1103/RevModPhys.88.035001,
  10.1103/RevModPhys.88.35001}{Rev. Mod. Phys. \textbf{88} (2016)}, no.~3,
  035001,
\href{http://arxiv.org/abs/1508.04715}{{\tt arXiv:1508.04715
  [cond-mat.mes-hall]}}.

\bibitem[W2]{W2}
\bysame, \emph{Quantum field theory and the {J}ones polynomial}, Comm. Math.
  Phys. \textbf{121} (1989), no.~3, 351--399.

\bibitem[W3]{W3}
\bysame, \emph{What one can hope to prove about three-dimensional gauge
  theory}. \url{http://scgp.stonybrook.edu/video_portal/video.php?id=563}. Talk
  at \emph{Mathematical Foundations of Quantum Field Theory Workshop}, Simons
  Center for Geometry and Physics, January 2012.

\bibitem[Wen]{Wen}
Xiao-Gang Wen, \emph{SPT order and algebraic topology}.
  \url{http://helper.ipam.ucla.edu/publications/stq2015/stq2015_12402.pdf}.
  talk at \emph{Symmetry and Topology in Quantum Matter}.

\bibitem[WPS]{WPS}
Chong Wang, Andrew~C Potter, and T~Senthil, \emph{Classification of interacting
  electronic topological insulators in three dimensions}, Science \textbf{343}
  (2014), no.~6171, 629--631, \href{http://arxiv.org/abs/arXiv:1306.3238}{{\tt
  arXiv:1306.3238}}.

\bibitem[WS]{WS}
Chong Wang and T~Senthil, \emph{Interacting fermionic topological
  insulators/superconductors in 3D}, Physical Review B \textbf{89} (2014),
  no.~19, 195124, \href{http://arxiv.org/abs/arXiv:1401.1142}{{\tt
  arXiv:1401.1142}}.

\bibitem[YWOX]{YWOX}
Yi-Zhuang You, Zhong Wang, Jeremy Oon, and Cenke Xu, \emph{Topological number
  and fermion Green's function for strongly interacting topological
  superconductors}, Physical Review B \textbf{90} (2014), no.~6, 060502,
  \href{http://arxiv.org/abs/arXiv:1403.4938}{{\tt arXiv:1403.4938}}.

\end{thebibliography}
  \end{document}